\newcommand{\xmark}{\ding{55}}%
\definecolor{joshcolor}{rgb}{0.2, 0.8, 0.6}
\definecolor{yacoubcolor}{rgb}{0.98, 0.27, 0.62}
\definecolor{hancolor}{rgb}{0.0 0.0, 1.0}
\definecolor{hancolor1}{rgb}{1, 0, 0.0}
\definecolor{hancolor1}{rgb}{1, 0, 0.0}
\definecolor{hanRMKcolor}{rgb}{1.0 0.0, 0.0}
\newcolumntype{M}[1]{>{\centering\arraybackslash}m{#1}}
\newcolumntype{N}{@{}m{0pt}@{}}
\newcommand{\thickhline}{%
	\noalign {\ifnum 0=`}\fi \hrule height 1pt
	\futurelet \reserved@a \@xhline
}
\newcolumntype{"}{@{\hskip\tabcolsep\vrule width 1pt\hskip\tabcolsep}}
\newcommand*{\inlineequation}[2][]{%
	\begingroup
	% Put~\refstepcounter at the beginning, because
	% package `hyperref' sets the anchor here.
	\refstepcounter{equation}%
	\ifx\\#1\\%
	\else
	\label{#1}%
	\fi
	% prevent line breaks inside equation
	\relpenalty=10000 %
	\binoppenalty=10000 %
	\ensuremath{%
		% \displaystyle % larger fractions, ...
		#2%
	}%
	~\@eqnnum
	\endgroup
}
\newtheorem{theorem}{Theorem}[section]
\newtheorem{prop}[theorem]{Proposition}
\newtheorem{remark}[theorem]{Remark}
\def\one{\mathbb{1}}
\def\R{\mathbb{R}}
\def\E{\mathbb{E}}
\def\P{\mathbb{P}}
\def\x{\mathbf{x}}
\def\y{\mathbf{y}}
\def\z{\mathbf{z}}
\def\G{\mathcal{G}}
\def\param{\boldsymbol{{\mathbf{v}}}}
\def\Param{\boldsymbol{\Theta}}
\newcommand\dataset[1]{\textsc{\texttt{#1}}}
\DeclareTextFontCommand{\textbfit}{\bfseries\itshape}
\newcommand{\tr}{\textup{tr}}
\DeclareMathOperator*{\argmin}{arg\,min}
\newcommand{\JD}{\textup{JD}}
\let\orgdescriptionlabel\descriptionlabel
\renewcommand*{\descriptionlabel}[1]{%
	\let\orglabel\label
	\let\label\@gobble
	\phantomsection
	\edef\@currentlabel{#1}%
	\let\label\orglabel
	\orgdescriptionlabel{#1}%
}
\renewcommand{\descriptionlabel}[1]{%
	\hspace\labelsep \upshape\bfseries #1%
}
\newcommand{\addresseshere}{%
	\enddoc@text\let\enddoc@text\relax
}
\begin{document}

	\title[Learning low-rank latent mesoscale structures in networks]{Learning low-rank latent mesoscale structures in networks}

	\author{Hanbaek Lyu$^{\dagger}$}
	\address{$\dagger$ Department of Mathematics, University of Wisconsin-Madison, WI 53706, USA}
	\author{Yacoub H. Kureh$^{*}$}
	\author{Joshua Vendrow$^{*}$}
	\author{Mason A. Porter$^{*,**}$}
	
	\address{$*$ Department of Mathematics, University of California, Los Angeles, CA 90095, USA}
	\address{$**$ Santa Fe Institute, Sante FE, NM 87501, USA}
	
	\email{ \texttt{hlyu@math.wisc.edu}, \texttt{\{ykureh, jvendrow, mason\}@math.ucla.edu}}

	\thanks{Our code for our algorithms and simulations is available at \url{https://github.com/HanbaekLyu/NDL_paper}. At \url{https://github.com/jvendrow/Network-Dictionary-Learning}, we provide a user-friendly version as a {\sc Python} package {\textsc{ndlearn}}.}

	%%%%

	%%TC:ignore
	\begin{abstract}	
		
		It is common to use networks to encode the architecture of interactions between entities in complex systems {in} the physical, biological, social, and information sciences. To study the large-scale behavior of complex systems, it is useful to {examine} mesoscale structures in networks as building blocks that influence such behavior~\cite{onnela2012taxonomies, khambhati2018modeling}. 
		We present a new approach for describing low-rank mesoscale structures in networks, and we illustrate our approach using several synthetic network models and empirical friendship, collaboration, and protein--protein interaction (PPI) networks. We find that these networks possess 
		a relatively small number of `latent motifs' that together can successfully approximate most subgraphs of a network at a fixed mesoscale. We use an algorithm for `network dictionary learning' (NDL)~\cite{lyu2020online}, which combines a network-sampling method~\cite{lyu2023sampling} and 
		nonnegative matrix factorization~\cite{lee1999learning, lyu2020online}, to learn the latent motifs of a given network. 
		The ability to encode a network using a set of
		latent motifs has a wide variety of applications to network-analysis tasks, such as comparison, denoising, and edge inference.  
		Additionally, using a new network denoising and reconstruction (NDR) algorithm, we demonstrate how to denoise a 
		corrupted network by using only the latent motifs that one learns directly from the corrupted network. 
	\end{abstract}
	%%TC:endignore

	%%%%
	
	\maketitle

	%%%%%
	
	%{\bf map: some notes:
	%\begin{enumerate}
	%\item{In the new text (and occasionally in some older text, perhaps?), I see a bunch of inconsistent terminology. This is something that has arisen before, and it's something that we need to be exceptionally careful about. Even terms that are slightly different can cause significant confusion, and indeed you can see that it has caused me significant confusion when going through the new text; please go through this very carefully to make sure that the terminology is fully consistent; somehow, some of the words in the new text are different from before, and this needs to be addressed very carefully}
	%\item{MAP will need to search through all of my 'bf MAP' comments}
	%\end{enumerate}
	%}

	%%%%%

	It is often insightful to examine structures in networks~\cite{newman2018} at intermediate scales (i.e., at `mesoscales') that lie between the microscale of nodes and edges {and the} macroscale distributions of local network properties. Researchers have considered subgraph patterns (i.e., the connection patterns of subsets of nodes) as building blocks of network structure at various mesoscales~\cite{schwarze2020motifs}. 
	In many studies of networks, researchers identify $k$-node (where $k$ is typically between $3$ and $5$) subgraph patterns of a network that are unexpectedly common in comparison to some random-graph null model as `motifs' of that network~\cite{milo2002network}. In the past two decades, the study of motifs has been important for the analysis of
	networked systems in many areas, including biology~\cite{conant2003convergent, rip2010experimental, sporns2004motifs, ristl2014complex, alon2007network}, sociology~\cite{hong2014social, juszczyszyn2009temporal}, and economics~\cite{ohnishi2010network, takes2018multiplex}. However, to the best of our knowledge, researchers have not examined how to use such motifs (or related mesoscale structures), after their discovery, as building blocks to reconstruct a network. In {the present} paper, we provide this missing computational framework to bridge inferred subgraph-based mesoscale structures and the global structure of networks. To do this, we propose (1) a `network dictionary learning' (NDL) algorithm that learns `latent motifs' from samples of certain random $k$-node subgraphs and (2) a complementary algorithm for `network denoising and reconstruction' (NDR) that constructs a best `mesoscale linear approximation' of a given network using the learned latent motifs. We also provide a rigorous theoretical analysis of the proposed algorithms. This analysis includes a novel result in which we prove that one can accurately reconstruct an entire network if one has a dictionary of latent motifs that can accurately approximate mesoscale structures of the network. We compare {our approach} to related prior work~\cite{lyu2020online} in the `Methods' section and in our Supplementary Information (SI).

	Using our approach, we find that various real-world networks (such as Facebook friendship networks, Coronavirus and \textit{Homo sapiens} protein--protein interaction (PPI) networks, and an arXiv collaboration network) have low-rank subgraph patterns, in the sense that one can successfully approximate their $k$-node subgraph patterns by a weighted sum of a small number of latent motifs. The latent motifs of these networks thereby reveal low-rank mesoscale structures of these networks. Our claim of the low-rank nature of such mesoscale structures
	concerns the space of certain subgraph patterns, rather than the embedding of an entire network into a low-dimensional Euclidean space (as considered in spectral-embedding and graph-embedding methods~\cite{perozzi2014deepwalk, grover2016node2vec}). It is impossible to obtain such a low-dimensional graph embedding for networks with small mean degrees and large clustering coefficients~\cite{seshadhri2020impossibility}. Additionally, as we demonstrate in this paper, the ability to encode a network using a set of latent motifs has a wide variety of applications in network analysis. {These} applications include network comparison, denoising, and edge inference.
	
	%%%%
	
	\subsection*{Motivating application: Anomalous-subgraph detection}

	A common problem 
	%of interest 
	in network analysis is the detection of anomalous subgraphs of a network (see Figure \ref{fig:anomaly_detection}) \cite{akoglu2015graph}. The connection pattern of an anomalous subgraph distinguishes it from the rest of a network.
	This anomalous-subgraph-detection problem has numerous high-impact applications, including in security, finance, healthcare, and law enforcement \cite{noble2003graph,  miller2015spectral}. Various approaches, including both classical techniques \cite{akoglu2015graph} and modern deep-neural-network techniques \cite{ma2021comprehensive}, have been proposed to detect anomalous subgraphs.

	%Our study of low-rank mesoscale structures in networks is motivated by the following application in anomalous subgraph detection problems.
		
		%A common approach in anomalous subgraph detection is to first detect patterns that describe 
		
		 %in the observed network and then seek to detect subgraphs that deviate significantly from them \cite{eberle2007anomaly}. 

	\begin{figure*}[h]
		\centering
		\includegraphics[width=1 \linewidth]{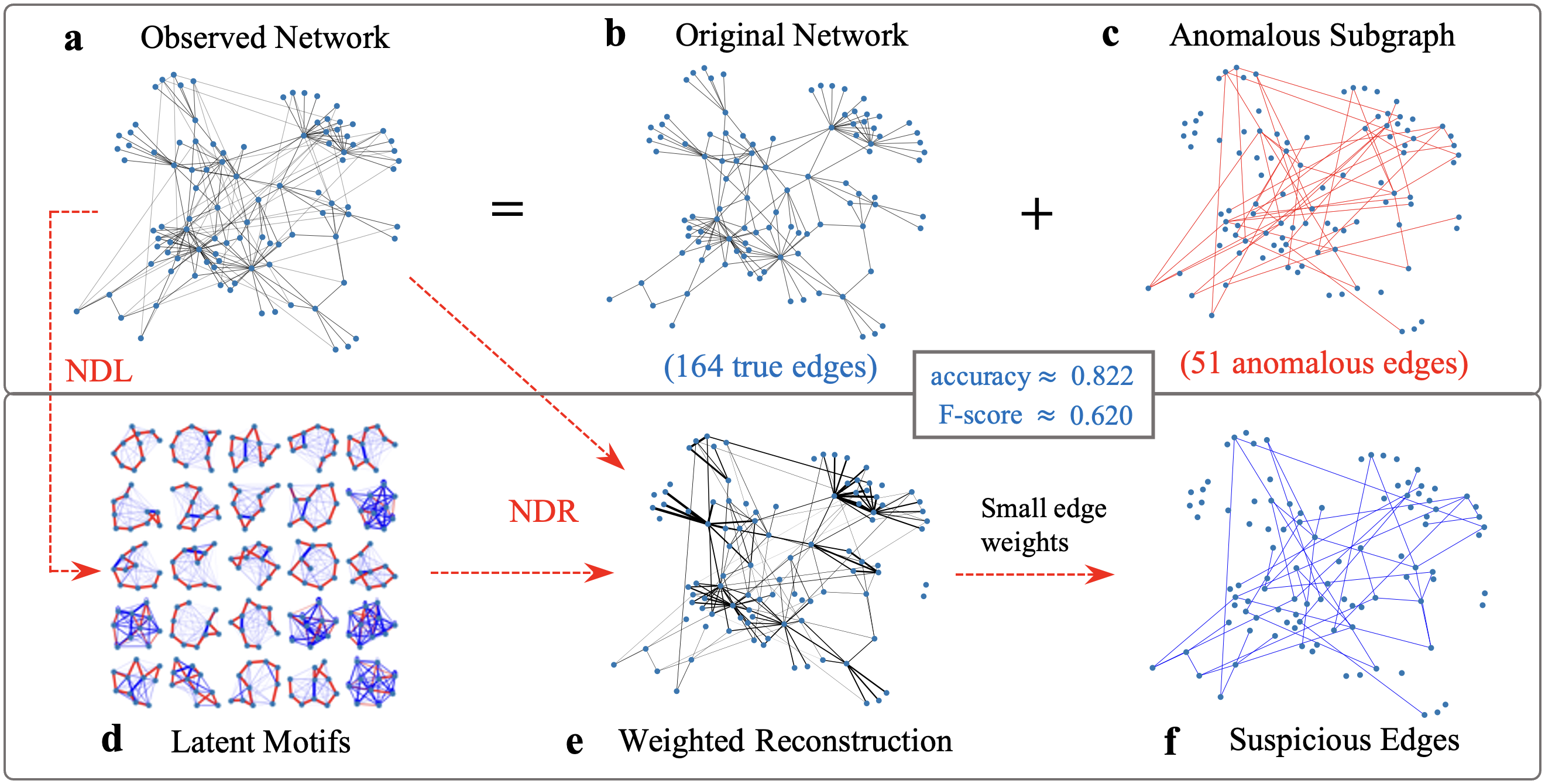}
		\caption{Illustration of anomalous-subgraph detection using 
		%mesoscale 
		network reconstruction. 
		(\textbf{a}) The observed network consists of (\textbf{b}) the original network and (\textbf{c}) anomalous edges, and we seek to detect the anomalous edges in the observed network. In our approach, we first 
		%compute
		(\textbf{d}) determine a set of latent motifs and then (\textbf{e}) use them to reconstruct the observed network. (\textbf{f}) In the weighted reconstruction of the network, we identify the edges with positive but small weights as suspicious edges. 
		We compute the accuracy and the F-score for inferring the anomalous edges in \textbf{c} as the suspicious edges in \textbf{f}, 
		where the F-score is the harmonic mean of the precision and recall scores.
		%{\bf map: what do you mean by the word "by" above? Do you mean "as", or instead do you mean something like "using"? Probably 'as" ? I'm not sure, but I don't know exactly what "by" means in this context}
		}
		\label{fig:anomaly_detection}
	\end{figure*}

		Consider the following simple conceptual framework for anomalous-subgraph detection. 
		\begin{itemize}
		\item{We learn ``normal subgraph patterns'' in an observed network and then seek to detect subgraphs in the observed network that deviate significantly from them.}
		\end{itemize}
		 By studying low-rank mesoscale structures in networks, we can turn this high-level idea for anomalous-subgraph detection into a concrete approach, which we now briefly summarize.
		First, we compute latent motifs (see Figure \ref{fig:anomaly_detection}\textbf{d}) of an observed network (see Figure \ref{fig:anomaly_detection}\textbf{a}) that can successfully approximate the $k$-node subgraphs of the observed network. A key observation is that these subgraphs should also describe the normal subgraph patterns of the observed
		%original 
		network (see Figure \ref{fig:anomaly_detection}\textbf{b}). The rationale that underlies this observation is that the $k$-node subgraphs of the observed
		%original 
		network likely form a low-rank space, so we expect the latent motifs to be robust with respect to the addition of
		anomalous edges (see Figure \ref{fig:anomaly_detection}\textbf{c}). 
		Consequently, reconstructing the observed network using its latent motifs yields a weighted network (see Figure \ref{fig:anomaly_detection}\textbf{e}) in which edges with positive and small weights deviate significantly from the normal subgraph patterns, which are  
		captured 
		by the latent motifs. 
		Therefore, such edges are likely to be anomalous.
		% edges. 
		The suspicious edges (see Figure \ref{fig:anomaly_detection}\textbf{f}) are the edges in the weighted reconstruction that have positive weights that are less than a threshold. One can determine the threshold using a small set of known true edges and known anomalous edges. 
		The suspicious edges match well with the
		%true 
		anomalous edges in Figure \ref{fig:anomaly_detection}\textbf{c}. 
		See the SI for more details.

	In the remainder of our paper, we carefully develop the three key components of our approach:
	 %that we illustrated above: 
	 (1) effective sampling of $k$-node subgraphs; (2) reconstructing observed networks using candidate
	 latent motifs; and (3) computing latent motifs from observed networks. 
	 The key idea of our work is to approximate sampled subgraphs by latent motifs and then combine these approximations to construct a weighted reconstructed network. We illustrate this procedure in Figure \ref{fig:recons_illustration}.
	 We also present a variety of supporting numerical experiences using several synthetic and real-world networks.
	
	%%%%%
	
	\subsection*{$k$-path motif sampling and latent motifs}

		Computing all $k$-node subgraphs of a network is computationally expensive and is the main computational bottleneck of traditional motif analysis~\cite{milo2002network}. Our approach, which bypasses this issue, is to learn latent motifs by drawing random samples of a particular class of $k$-node connected subgraphs. We consider random $k$-node subgraphs that we obtain by uniformly randomly sampling a `$k$-path' from a network and including all edges between the sampled nodes of the network. A sequence $\x=(x_{1},\ldots,x_{k})$ of $k$ (not necessarily distinct) nodes is a \textit{$k$-walk} if $x_{i}$ and $x_{i+1}$ are adjacent for all $i \in \{1, \ldots, k-1\}$. A $k$-walk is a \textit{$k$-path} if all nodes in the walk are distinct (see Figure~\ref{fig:subgraphs}). 
		Sampling a $k$-path serves two purposes: (1) {it} ensures that the sampled $k$-node induced subgraph is connected with the minimum number of imposed edges; and (2) it induces a natural node ordering of the $k$-node induced subgraph. (Such an ordering is important for computations that involve subgraphs.) 
		By using the `$k$-walk' motif-sampling algorithm in~\cite{lyu2023sampling} in conjunction with rejection sampling, {one} can sample a large number of $k$-paths and obtain their associated induced subgraphs.

		\begin{figure*}[h]
			\centering
			\includegraphics[width=1 \linewidth]{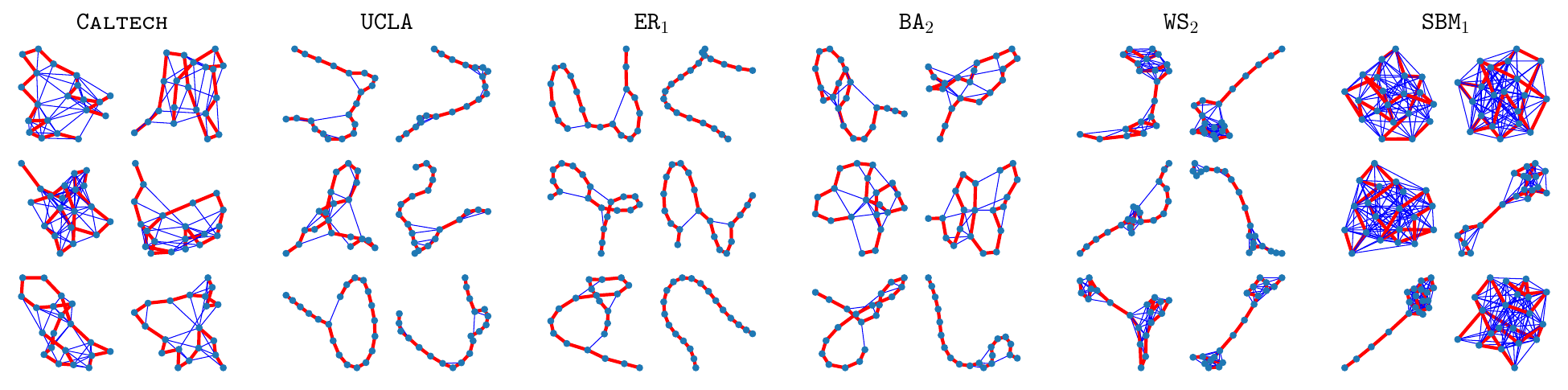}
			\vspace{-0.5cm}
			\caption{Six subgraphs that are induced by uniformly sampled $k$-paths with $k = 20$ (red edges) from {the} \dataset{Caltech} and \dataset{UCLA} Facebook networks, an Erd\H{o}s--R\'{e}nyi (ER) random graph (\dataset{ER}$_{1}$), a Barab\'{a}si--Albert (BA) random graph (\dataset{BA}$_{2}$), a Watts--Strogatz (WS) small-world network (\dataset{WS}$_{2}$), and a stochastic-block-model (SBM) network (\dataset{SBM}$_{1}$). See the `Methods' section for more details about these networks.}
			\label{fig:subgraphs}
		\end{figure*}

		The $k$-node subgraphs that are induced by uniformly randomly sampling $k$-paths from a network
		%samples of a uniformly random $k$-path 
		are the mesoscale structures that we consider in the present paper. 
		We use the term `on-chain edges' for the edges of these subgraphs between nodes $x_{i}$ and $x_{i+1}$ for $i\in \{1,\ldots,k-1\}$, and we use the term `off-chain edges' for all other edges. It is the off-chain edges that can differ across subgraphs and hence encode meaningful information about the network. For $k=2$, the subgraphs are all isomorphic to a 2-path and hence have no off-chain edges. For $k=3$, the subgraphs can have a single off-chain edge, so they are isomorphic either to a 2-path or to a 3-clique (i.e., a graph with three nodes and all three possible edges between them).
		For larger values of $k$, the subgraphs can have diverse connection patterns (see Figure~\ref{fig:subgraphs}), depending on the architecture of the original network).

		We study the connection patterns of {a} random $k$-node subgraph by decomposing it as a weighted sum of more elementary subgraph patterns (possibly with continuous-valued edge weights), which we call \textit{latent motifs} (see Figures \ref{fig:recons_illustration}\textbf{a1}--\textbf{a3}).
		To study mesoscale structures in networks, we investigate several questions. How many distinct latent motifs does one need to successfully approximate all {of these} $k$-node subgraph patterns? What do they look like? How do these latent motifs differ for different networks?

		\begin{figure*}[h]
			\centering
			\includegraphics[width=1 \linewidth]{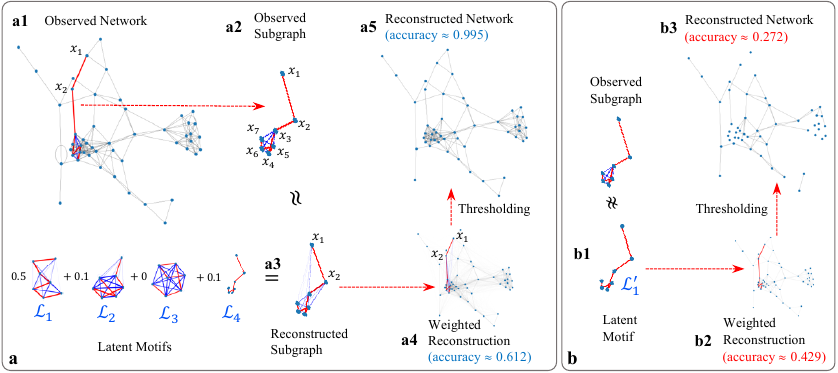}
			\caption{An illustration of our low-rank network-reconstruction process using latent motifs. Given (\textbf{a1}) an observed network and a set of latent motifs $\mathcal{L}_{1},\ldots,\mathcal{L}_{r}$, we repeatedly sample a $k$-path and approximate the induced subgraph {(\textbf{a2})} of the nodes in that path by (\textbf{a3}) a nonnegative linear combination of the latent motifs $\mathcal{L}_{1},\ldots,\mathcal{L}_{r}$.
			{We then compute the weighted reconstruction in \textbf{a4} by taking the edge weight between each unordered node pair $\{x,y\}$ to be the mean of the reconstructed weights of $\{x,y\}$ from all sampled subgraphs that include both $x$ and $y$.}
			 We measure the accuracy of the reconstruction of the weighted network by calculating $1$ minus the Jaccard distance in \eqref{eq:JD} in the SI. We then obtain (\textbf{a5}) an unweighted (i.e., binary) reconstructed network by thresholding the edge weights in \textbf{a4} with a threshold 0.5. That is, we retain edges whose weights are at least
			 0.5, and we remove all other edges.
			  We measure the accuracy of the reconstruction of the binary network by calculating the Jaccard index between the original network's edge set and the associated reconstructed network's edge set. The same
			  network-reconstruction process using (\textbf{b1}) a single latent motif $\mathcal{L}_{1}'$, which is the $k$-path, yields (\textbf{b2}) weighted and (\textbf{b3}) binary reconstructions with lower accuracies than those in \textbf{a4} and \textbf{a5}, respectively.
			}
			\label{fig:recons_illustration}
		\end{figure*}

	%%%%%%%%
		
		\subsection*{Low-rank network reconstruction using latent motifs}

	Suppose that we have a network $G = (V,E)$ and two collections,
		 %of latent motifs 
		 $W=\{ \mathcal{L}_{1},\dots,\mathcal{L}_{r} \}$ and $W'=\{ \mathcal{L}_{1}',\dots,\mathcal{L}_{r'}' \}$, of latent motifs. 
		 %The high-level idea is the following. 
		 How can one determine which of the collections better describes the mesoscale structure of $G$? One can sample a large number of $k$-node subgraphs $A_i$ of $G$ and, for each $A_i$, independently, determine the 
		 %best 
		 nonnegative linear combination of latent motifs that yields the closest approximation $\hat{A}_i$. 
		 By comparing the subgraphs $A_i$ with their corresponding approximations $\hat{A}_i$, one can demonstrate how well the
		  %given 
		  latent motifs 
		  %\map{of}
		  in the `network dictionary'
		  $W$ 
		  %represent 
		  approximates
		  %the mesoscale structure of
		  the $k$-node subgraph patterns of
		  $G$ (see Figures \ref{fig:recons_illustration}\textbf{a1}--\textbf{a3}).

		For applications such as anomalous-subgraph detection, it is helpful to construct a weighted network $G_{\textup{recons}}$ with the same node set $V$ that gives the `best approximation' of $G$ using the network dictionary $W$. 
		%In fact, by 'replacing' each observed subgraph $A_{i}$ with its rank-$r$ approximation $\hat{A}{i}$, we can 
	We regard the network $G_{\textup{recons}}$ as a `rank-$r$ mesoscale reconstruction' of $G$. If $G_{\textup{recons}}$ is close to $G$, we conclude that the latent motifs in $W$ successfully capture the structure of $k$-node subgraphs of $G$ and that $G$ has a rank-$r$ subgraph %structure 
	patterns that is prescribed by the latent motifs in $W$. 
	We interpret the edge weights in $G_{\textup{recons}}$ as measures of confidence in the corresponding edges in $G$ with respect to $W$. For example, if an edge $e$ has the smallest weight in $G_{\textup{recons}}$, we interpret it as the most `outlying' edge with respect to the latent motifs in $W$ (see Figures \ref{fig:anomaly_detection}\textbf{e},\textbf{f}). We can threshold the weighted edges of $G_{\textup{recons}}$ at some fixed value $\theta \in [0,1]$ to obtain an undirected reconstructed network $G_{\textup{recons}}(\theta)$ with binary edge weights (which are either $0$ or $1$). We can then directly compare $G_{\textup{recons}}(\theta)$ to the original unweighted network $G$. %We regard $W$ as effective at describing $G$ at scale $k$ if $G_{\text{recons}}(\theta)$ is close to $G$ for some $\theta$. We will quantify our notion of 'closeness' in the next section.

		Our \textit{network denoising and reconstruction (NDR) algorithm (see Algorithm~\ref{alg:network_reconstruction} in the SI) works as follows.} We seek to build a weighted network $G_{\textup{recons}}$ using the node set $V$ and a weighted adjacency matrix $A_{\textup{recons}}: V^{2}\rightarrow \R$. This network best approximates the observed network $G$, whose subgraphs are generated by the latent motifs in $W$. First, we uniformly randomly sample a large number $T$ of $k$-paths $\x_{1},\ldots,\x_{T}:{1,\ldots,k} \rightarrow V$ in $G$. We then determine the $k\times k$ unweighted matrices $A_{\x_{1}},\ldots,A_{\x_{T}}$ with entries $A_{\x_{t}}(i,j)=A(\x_{t}(i), \x_{t}(j))$, which equals 1 if nodes $\x_{t}(i)$ and $\x_{t}(j)$ are adjacent in the network and equals 0 otherwise. These are the adjacency matrices of the induced subgraphs of the
		%\map{whose nodes are the 
		nodes of the $k$-paths that we sampled (see Figure \ref{fig:recons_illustration}\textbf{a2}). 
		We then approximate each $A_{\x_{t}}$ by a nonnegative linear combination $\hat{A}_{\x_{t}}$ of the latent motifs in $W$ (see Figure \ref{fig:recons_illustration}\textbf{a3}). 
		We then we compute $A_{\textup{recons}}(x,y)$ for each $x,y\in V$ as the mean of $\hat{A}_{\x_{t}}(a,b)$ over all $t\in {1,\ldots, T}$ and all $a,b\in {1,\ldots,k}$ such that $\x_{t}(a) = x$ and $\x_{t}(b) = y$ (see Figure \ref{fig:recons_illustration}\textbf{a4}). {We also provide theoretical guarantees and error bounds for our NDR algorithm in the SI {(see Algorithm~\ref{alg:network_reconstruction})}.

		Consider reconstructing a network $G$ using a single latent motif $\mathcal{L}_{1}'$ that is a $k$-path. We begin with the case $k=2$, such that each subgraph that we sample is a $2$-path. 
		%{map: below: I am unable to tell what the word "they" refers to; please replace it with the more precise object that you mean }
		%[by the way, in formal scholarly writing, sentences should not being with the word "So"; additionally, one should not use constructions of the form "[math symbol], [math symbol]" for two separate symbols that are not part of the same mathematical object; there should be at least one word between them; this is very important for parsing the sentences]}
		{The sampled $2$-paths} are approximated perfectly by
		%So  they are perfectly approximated by 
		$\mathcal{L}_{1}'$ (see Figure \ref{fig:recons_illustration}\textbf{b1}). A $2$-path that one chooses uniformly at random has an equal probability of sampling each edge of $G$,
		%The edges of $G$ have an equal probability of being sampled by a uniformly random $2$-path, 
		so $G_{\textup{recons}} = G$.
		 %in this case is identical to $G$. 
		 %Hence, 
		 {Therefore, we conclude that, at scale $k=2$, one can perfectly reconstruct $G$ by using the $2$-path latent motif $\mathcal{L}_{1}'$.}
			However, for $k\ge 3$, the graph $G_{\textup{recons}}$ can differ significantly from $G$, as approximating the observed subgraphs by a single $k$-path misses all of the off-chain edges (see Figures \ref{fig:recons_illustration}\textbf{b1}--\textbf{b3}). 
			Therefore, to properly describe the $k$-node subgraph  
			patterns of $G$, one may need more than one latent motif with  
			off-chain edges (see Figure \ref{fig:recons_illustration}\textbf{a3}). 
			%{map: "more than one latent motif with nontrivial off-chain edges ": I can't decide if this is one condition or two conditions; please rephrase it for clarity; also, what does "nontrivial" mean in this context? Please use a clearer word for this, if possible}
			We give more details in Appendix~\ref{subsection:NR} of the SI.

		% \textit{network denoising and reconstruction (NDR) algorithm} (see Algorithm~\ref{alg:network_reconstruction}).We obtain a weighted network $G_{\textup{recons}}$ using the node set $V$ and 	a weighted adjacency matrix $A_{\textup{recons}}: V^{2}\rightarrow \R$. 

		%%%

		\subsection*{Dictionary learning and latent motifs}
		
		{How does one compute latent motifs from a given network?}
		\textit{Dictionary-learning} algorithms are machine-learning techniques that learn interpretable latent structures of complex data sets. They are employed regularly in the data analysis of text and images~\cite{elad2006image, mairal2007sparse, peyre2009sparse}. Dictionary-learning algorithms usually consist of two steps. First, one samples a large number of structured subsets of a data set (e.g., square patches of an image or collections of a few sentences of a text); we refer to such a subset as a \textit{mesoscale patch} of a data set. Second, one finds a set of basis elements such that taking a nonnegative linear combination of them can successfully approximate each of the sampled mesoscale patches. Such a set of basis elements is called a \textit{dictionary}, and one can interpret each basis element as a latent structure of the data set.

		\begin{figure*}[h]
			\centering
			\includegraphics[width=1 \linewidth]{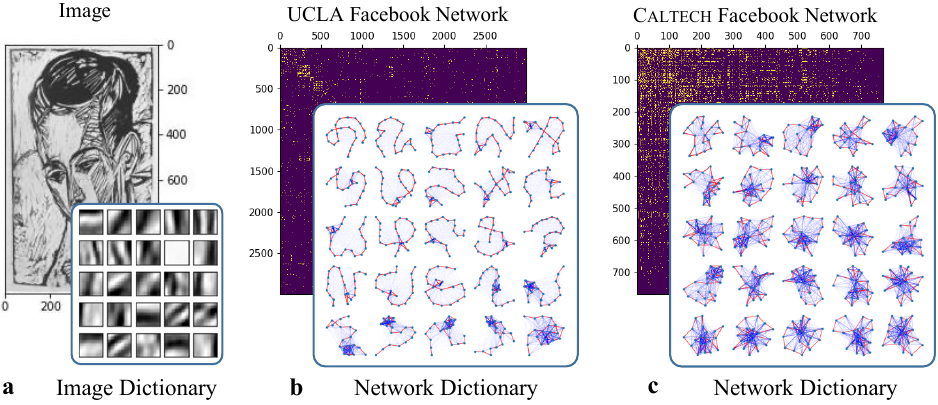}
			\vspace{-0.5cm}
			\caption{Illustration of mesoscale structures that we learn from (\textbf{a}) images and (\textbf{b},\textbf{c}) networks. In each experiment in this figure, we form a matrix $X$ of size $d \times n$ by sampling $n$ mesoscale patches of size $d = 21 \times 21$ from the corresponding object. 
				For the image in panel {\textbf{a}}, the columns of $X$ are square patches of $21\times 21$ pixels. In panels \textbf{b} and \textbf{c}, we show portions of the associated adjacency matrices of the two networks. We take the columns of $X$ to be the $k\times k$ adjacency matrices of the connected subgraphs that are induced by a path of $k = 21$ nodes, where a $k$-node path 
				consists of $k$ distinct nodes $x_{1}, \ldots, x_{k}$ such that $x_{i}$ and $x_{i+1}$ are adjacent for all $i \in \{1, \ldots, k-1\}$.
				Using nonnegative matrix factorization (NMF), we compute an approximate factorization $X \approx WH$ into nonnegative matrices $W$ and $H$,
				where $W$ is called a `dictionary' and has $r = 25$ columns. Because of this factorization, we can approximate any sampled mesoscale patches (i.e., the columns of $X$) of 
				an object by a nonnegative linear combination of the columns of $W$, which we interpret as latent shapes for the images and latent motifs (i.e., subgraphs) for the networks, respectively.  The columns of $H$ give the coefficients in these linear combinations. The network dictionaries of latent motifs that we learn from the (\textbf{b}) \dataset{UCLA} and (\textbf{c}) \dataset{Caltech} Facebook networks have distinctive social structures. In the adjacency matrix of the UCLA network, we show only the first 3000 nodes (according to the node labeling in the data set). The image in panel \textbf{a} is from the collection \dataset{Die Graphik Ernst Ludwig Kirchners bis 1924, von Gustav Schiefler Band I bis 1916} (Accession Number 2007.141.9, Ernst Ludwig Kirchner, 1926). We use the image with permission from the National Gallery of Art in Washington, DC, USA.]
			}
			\label{fig:img_ntwk_dict}
		\end{figure*}

		As an example, consider the artwork image in Figure~\ref{fig:img_ntwk_dict}\textbf{a}. We first uniformly randomly sample 10,000 square patches of $21 \times 21$ pixels and vectorize them to {obtain} a $21^2 \times \text{10,000}$ matrix $X$. 
		The choice of vectorization $\R^{k\times k}\rightarrow \R^{k^{2}}$ is arbitrary; we use the column-wise vectorization in Algorithm~\ref{alg:vectorize} {in the SI}.	We then use a nonnegative matrix factorization (NMF)~\cite{lee1999learning} algorithm to find an approximate factorization $X \approx WH$, where $W$ and $H$ are nonnegative matrices of sizes $21^2 \times 25$ and $25 \times \text{10,000}$, respectively. Reshaping the columns of $W$ into $21 \times 21$ square images yields an image dictionary that describes `latent shapes' {of} the image. 
		
		Our \textit{network dictionary learning} (NDL) algorithm to compute
		a `network dictionary' that consists of latent motifs is based on a similar idea. As mesoscale patches of a network, we use the $k \times k$ binary (i.e., unweighted) matrices that encode connection
		patterns between the nodes that form a uniformly random $k$-path. After obtaining sufficiently many mesoscale patches of a network (e.g., by using a motif-sampling algorithm~\cite{lyu2023sampling} with rejection sampling), we apply a dictionary-learning algorithm (e.g., NMF~\cite{lee1999learning}) to obtain latent motifs of the network. A latent motif is a $k$-node weighted network with nodes
		$\{1,\ldots,k\}$ and edges that have weights between $0$ and $1$. 
		We use the term `on-chain edges' for the edges of a latent motif between nodes $i$ and $i+1$ for $i\in \{1,\ldots,k-1\}$; we use the term `off-chain edges' for all other edges. We give more background about our NDL algorithm in the `Methods' section and provide a complete implementation of our approach in Algorithm~\ref{alg:NDL} {in} the SI. We give theoretical guarantees for Algorithm~\ref{alg:NDL} in Theorems~\ref{thm:NDL} and~\ref{thm:NDL2} in the SI.

		{In Figure~\ref{fig:img_ntwk_dict}, we compare 25 latent motifs with $k = 21$ nodes of Facebook friendship networks (which were collected on one day in fall 2005) from UCLA (`\dataset{UCLA}') and Caltech (`\dataset{Caltech}')~\cite{Traud2011,traud2012social}. Each node in one of these networks is a Facebook account of an individual, and each edge encodes a Facebook friendship between two individuals. The latent motifs reveal striking differences between these networks in the connection patterns of the subgraphs that are induced by $k$-paths with $k = 21$.
			For example, the latent motifs in \dataset{UCLA}'s dictionary (see Figure~\ref{fig:img_ntwk_dict}\textbf{b}) have sparse off-chain connections with a few clusters, whereas \dataset{Caltech}'s dictionary (see Figure~\ref{fig:img_ntwk_dict}\textbf{c}) has relatively dense off-chain connections. 
			Most of \dataset{Caltech}'s latent motifs have `hub' nodes (which are adjacent to many other nodes in the latent motif) or communities~\cite{porter2009communities,fortunato2016} 
			with six or more nodes. (See Figure~\ref{fig:Figure_boxcompare} in the SI for community-size statistics.)  An important property of $k$-node latent motifs is that any network structure (e.g., hub nodes, communities, and so on) in the latent motifs must also exist in actual $k$-node subgraphs. 
			We observe both hubs and communities in the subgraphs samples from \dataset{Caltech} in Figure~\ref{fig:subgraphs}. By contrast, most of \dataset{UCLA}'s latent motifs do not have such structures, as is also the case for the subgraph samples from \dataset{UCLA} in Figure~\ref{fig:subgraphs}. 
			
			Because $k$-node latent motifs encode basic 
			{connection}
			patterns of $k$ nodes that are at most $k - 1$ edges apart, one can interpret $k$ as a scale parameter. Latent motifs {that one learns} from the same network {for} different values of $k$ {reveal} different mesoscale structures. See Figure \ref{fig:latent_motifs_multiscale_1} in the SI for more details.

		\subsection*{Example networks}
		
		We demonstrate our approach using 16 example networks; 8 of them are real-world networks and 8 of them synthetic networks. The 8 real-world networks are \dataset{Coronavirus PPI} (for which we use the shorthand \dataset{Coronavirus})~\cite{oughtred2019biogrid, oughtred2019biogrid_covid, gordon2020sars} and \dataset{Homo sapiens PPI} (for which we use the shorthand \dataset{H. sapiens})~\cite{oughtred2019biogrid, grover2016node2vec}; Facebook networks from \dataset{Caltech}, \dataset{UCLA}, \dataset{Harvard}, and \dataset{MIT}~\cite{Traud2011,traud2012social}; \dataset{SNAP Facebook} (for which we use the shorthand \dataset{SNAP FB})
		\cite{leskovec2012learning, grover2016node2vec}; and \dataset{arXiv ASTRO-PH} (for which we use the shorthand \dataset{arXiv})~\cite{Leskovec2014SNAP,grover2016node2vec}. The first network is a protein--protein interaction (PPI) network of proteins that are related to the coronaviruses that cause Coronavirus disease 2019 (COVID-19), Severe Acute Respiratory Syndrome (SARS), and Middle Eastern Respiratory Syndrome (MERS)~\cite{oughtred2019biogrid_covid}. The second network is a PPI network of proteins that are related to \emph{Homo sapiens}~\cite{oughtred2019biogrid}. The third network is a 2012 Facebook network that was collected from participants in a survey~\cite{leskovec2012learning}. 
		The fourth network is a collaboration network {from coauthorships} of preprints that were posted in the astrophysics category of the arXiv preprint server. The last four real-world networks are 2005 Facebook networks from four universities from the {\sc Facebook100} data set~\cite{traud2012social}. In each Facebook network, nodes represent accounts and edges encode Facebook `friendships' between these accounts.
		
		For the eight synthetic networks, we generate two instantiations each of Erd\H{o}s--R\'{e}nyi (ER) $G(N,p)$ networks~\cite{erdds1959random}, Watts--Strogatz (WS) networks~\cite{watts1998collective},  Barab\'{a}si--Albert (BA) networks~\cite{barabasi1999emergence}, and stochastic-block-model (SBM) networks~\cite{holland1983stochastic}.	
		These four random-graph models are well-studied and are common choices for testing new network methods and models~\cite{newman2018}. Each of the ER networks has 5,000 nodes, and we independently connect each pair of nodes with probabilities of $p = 0.01$ (in the network that we call \dataset{ER$_{1}$}) and $p = 0.02$ (in \dataset{ER$_{2}$}). For the WS networks, we use rewiring probabilities of $p = 0.05$ (in \dataset{WS$_{1}$}) and $p = 0.1$ (in \dataset{WS$_{2}$}) starting from a 5,000-node ring network in which each node is adjacent to its $50$ nearest neighbors. For the BA networks, we use $m = 25$ (in \dataset{BA$_{1}$}) and $m = 50$ (in \dataset{BA$_{2}$}), where $m$ denotes the number of edges of each new node when it connects (via linear preferential attachment) to the existing network, which we grow from an initial network of $m$ isolated nodes (i.e., none of them are adjacent to any other node) until it has 5,000 nodes. The SBM networks \dataset{SBM}$_{1}$ and \dataset{SBM}$_{2}$ have three planted 1,000-node communities{; two} nodes in the $i_{0}$th and the $j_{0}$th communities are connected by an edge independently with probability $0.5$ if $i_{0} = j_{0}$ (i.e., if they are in the same community) and $0.001$ for \dataset{SBM}$_{1}$ and $0.1$ for \dataset{SBM}$_{2}$ if $i_{0} \neq j_{0}$ (i.e., if they are in different communities). See the `Methods' section for more details.

		 %%%%%

		\subsection*{Network-reconstruction experiments}
	
		An important observation is that one can reconstruct a given network using an arbitrary network dictionary, including ones that one learns from an entirely different network.
		%which one can even learn from a different network. 
		Such a `cross-reconstruction' allows one to quantitatively compare the learned mesoscale structures of different networks.
		In Figure~\ref{fig:network_recons}, we show the results of several network-reconstruction experiments using a variety of real-world networks and synthetic networks. 
		We label each subplot of Figure~\ref{fig:network_recons} with  
		$X \leftarrow Y$
		to indicate that we are reconstructing network  
		$X$
		by approximating mesoscale patches of  
		$X$
		using a network dictionary that we learn from network 
		$Y$.
		We perform these experiments for various values of the edge threshold $\theta \in [0,1]$ and $r \in \{9,16,25,36,49,81,100\}$ latent motifs in a single dictionary. Each network dictionary in Figure~\ref{fig:network_recons} has $k = 21$ nodes, for which the dimension of the space of all possible mesoscale patches (i.e., the adjacency matrices of the induced subgraphs) is $\binom{21}{2} - 20 = 190$. 
		We measure the reconstruction accuracy by calculating the Jaccard index between the original network's edge set and the reconstructed network's edge set. That is, to measure the similarity of two edge sets, we calculate the number of edges in the intersection of these sets divided by the number of edges in the union of these sets. This gives a 
		measure of reconstruction accuracy; 
		if the Jaccard index equals $1$, the
		reconstructed network is precisely the same as	the original {network}. 
		We obtain the same qualitative results as in Figure~\ref{fig:network_recons} if we instead measure similarity using the Rand index~\cite{rand1971objective}).

		\begin{figure*}[h]
			\centering
			\includegraphics[width=1 \linewidth]{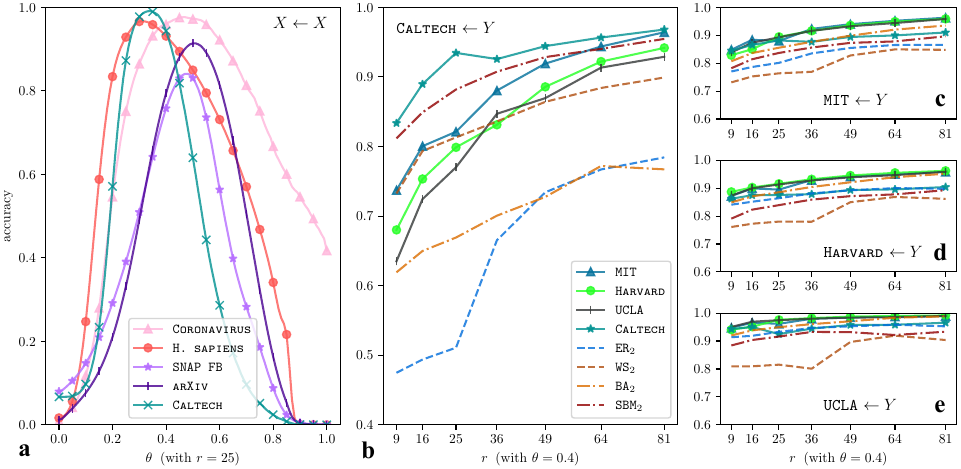}
			\caption{The self-reconstruction and cross-reconstruction accuracies of several real-world and synthetic networks 
				versus the edge threshold $\theta$ and the number $r$ of latent motifs in a network dictionary. The label $X \leftarrow Y$ indicates that we reconstruct network $X$ using a network dictionary that we learn from network $Y$. The reconstruction process produces a weighted network that we turn into an unweighted network
				by thresholding the edge weights at a threshold value $\theta$; we keep only edges whose weights are strictly larger than $\theta$.
				We measure reconstruction accuracy by calculating the Jaccard index of an original network's edge set and an associated reconstructed network's edge set. 
				In panel \textbf{a}, we plot accuracies versus $\theta$ (with the number of latent motifs fixed at $r = 25$), where $X$ is one of five real-world networks (two PPI networks, two Facebook networks, and one collaboration network). 
				In panels \textbf{b}--\textbf{e}, we reconstruct each of the four Facebook networks using network dictionaries with $r \in \{9,16,25,36,64,81,100\}$ latent motifs that we learn from one of eight networks (with the threshold value fixed at $\theta = 0.4$). 
			}
			\label{fig:network_recons}
		\end{figure*}

		In Figure~\ref{fig:network_recons}\textbf{a}, we plot the accuracy of the `self-reconstruction' $X \leftarrow X$ versus the threshold $\theta$ (with $r = 25$ latent motifs), where $X$ is one of the real-world networks \dataset{Coronavirus}, \dataset{H. sapiens}, \dataset{SNAP FB}, \dataset{Caltech}, and \dataset{arXiv}.
		 The accuracies for {\dataset{Coronavirus}}, \dataset{H. sapiens}, and \dataset{Caltech} peak above $95\%$ when $\theta \approx 0.4$; the accuracies for \dataset{arXiv} and \dataset{SNAP FB} peak above $88\%$ and $70\%$, respectively, for $\theta \approx 0.6$. We choose $\theta = 0.4$ for the cross-reconstruction experiments for the Facebook networks \dataset{Caltech}, \dataset{Harvard}, \dataset{MIT}, and \dataset{UCLA} in Figures~\ref{fig:network_recons}\textbf{b},\textbf{c}. These four Facebook networks have self-reconstruction accuracies above $80\%$ for $r = 25$ motifs with a threshold of $\theta = 0.4$. The total number of dimensions when using mesoscale patches at scale $k = 21$ is $190$, so this result suggests that all eight of these real-world networks have low-rank mesoscale structures at scale $k = 21$.

		\begin{figure*}[h]
			\centering
			\includegraphics[width=1 \linewidth]{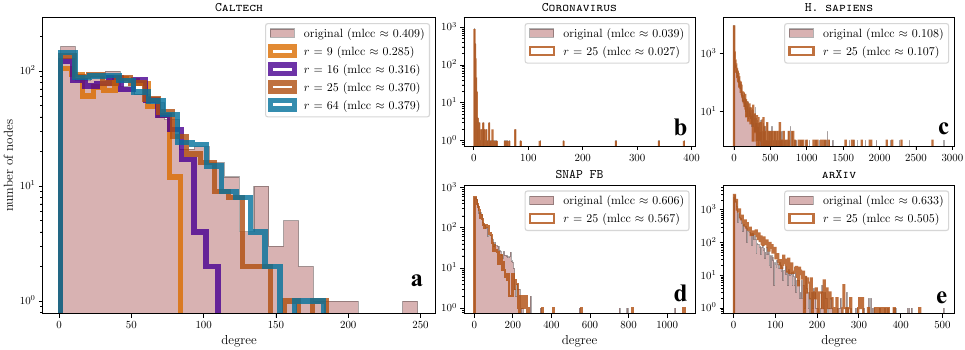}
			\caption{A comparison of the degree distributions (which we show as histograms) and the mean local clustering coefficients (which we write in the legends as `mlcc') of the original networks and the reconstructed networks using $r$ latent motifs at scale $k = 21$ {for the five networks in Figure~\ref{fig:network_recons}\textbf{a}. 
					In panel \textbf{a}, we use the four unweighted reconstructed networks for \dataset{Caltech} with $r \in \{9,16,25,64\}$ latent motifs that we used to compute the self-reconstruction accuracies in Figure~\ref{fig:network_recons}\textbf{b}. As we increase $r$, 
					the mean local clustering coefficient increase towards its value in the original networks and map{the} degree distributions of the reconstructed networks converge to that of the original network. 
					By increasing $r$, we are able to include nodes with progressively larger degrees in the latent motifs.
					In panels \textbf{b}--\textbf{e}, we show the mean local clustering coefficients and degree distributions of the unweighted reconstructed networks for \dataset{Coronavirus}, \dataset{H. sapiens}, \dataset{SNAP FB}, and \dataset{arXiv} with $r = 25$ latent motifs that we used to compute the self-reconstruction accuracies in Figures~\ref{fig:network_recons}\textbf{b}--\textbf{e}.}
			}
			\label{fig:deg_dist_recons_plot}
		\end{figure*}

		We gain further insights into our self-reconstruction experiments by comparing the degree distributions and the mean local clustering coefficients of the original and the unweighted reconstructed networks with threshold $\theta=0.4$ (see Figure~\ref{fig:deg_dist_recons_plot}). The mean local clustering coefficients of all reconstructed networks are similar to those of the corresponding original networks. In Figure~\ref{fig:deg_dist_recons_plot}\textbf{a}, we show that the degree distributions of the reconstructed networks for \dataset{Caltech} with $r\in \{9,16,25,64\}$ latent motifs converge toward that of the original network as we increase $r$. Reconstructing \dataset{Caltech} with larger values of $r$ appears to increase accuracy by including nodes with a larger degree than is possible for smaller values of $r$. In other words, `low-rank'  reconstructions (i.e., those with small values of $r$) of \dataset{Caltech} seem to recover only a small number of the edges of each node, even though it is able to achieve a large reconstruction accuracy (e.g., over $81$\% for $r = 9$).

		We now consider cross-reconstruction accuracies 
		{$X\leftarrow Y$}
		 in Figures~\ref{fig:network_recons}\textbf{b},\textbf{c}, where $Y$ is one of the Facebook networks \dataset{Caltech}, \dataset{Harvard}, \dataset{MIT}, and \dataset{UCLA} and $X$ (with $X \neq Y$) is one of the these four networks or one of the four synthetic networks \dataset{ER$_2$}, \dataset{WS$_2$}, \dataset{BA$_2$}, and \dataset{SBM}$_{2}$. From the cross-reconstruction accuracies and examining the network structures of the the latent motifs (see {Appendix}~\ref{subsection:NDL_formulation} of the SI) in Figures~\ref{fig:img_ntwk_dict} and~\ref{fig:latent_motifs_multiscale_1} (also see Figures~\ref{fig:all_dictionaries_1},~\ref{fig:all_dictionaries_3}, and~\ref{fig:all_dictionaries_4} in the SI), we draw a few conclusions at scale $k = 21$. First, the mesoscale {structure} of \dataset{Caltech} is distinct from those of \dataset{Harvard}, \dataset{UCLA}, and \dataset{MIT}. This is consistent with prior studies of these networks (see, e.g.,~\cite{traud2012social,jeub2015think}). Second, \dataset{Caltech}'s mesoscale structures at scale $k = 21$ are higher-dimensional
		than those of the other three universities' Facebook networks. 	 
		Third, \dataset{Caltech} has a lot more communities with at least $10$ nodes than the other three universities' Facebook networks (also see Figure~\ref{fig:Figure_boxcompare}). 
		Fourth, the BA network \dataset{BA$_2$} captures the mesoscale {structure} of \dataset{MIT}, \dataset{Harvard}, and \dataset{UCLA} at scale $k = 21$ better than the synthetic networks that we generate from the ER, WS, and SBM models. However, for all $r \in \{ 9,15,25,49\}$, the network \dataset{SBM}$_{2}$ captures the mesoscale structures of \dataset{Caltech} better than all other networks in Figure~\ref{fig:network_recons}\textbf{b} except for \dataset{Caltech} itself.
		%{\bf map: I am a bit confused by how we want the "$r \in \{ 9,15,25,49\}$" parsed; are we trying to distinguish these values from other values of $r$, or (much more likely) it just happens to be that these are the values that we used? We need further rephrasing in the sentence above to make this clear. [also, does anything like this occur elsewhere? If we do need other similar changes (which seems likely), make sure to comment those in another color as usual]; also, I wonder if it will be helpful to discuss this for me to explain the confusion I am having here?}
		%\commHLnew{It's of course the letter; we are using all values of r in that set and we do not consider any other values or $r$ in our experiments.} 
		% \commHLnew{We used similar phrasing throughout, e.g., at the end of the caption of Figure 7. Should I make rephrasing globally? Before I'd like to discuss with you to make sure I rephrase it in a way that is clear to you.}
		See Appendix~\ref{subsection:Fig3_SI} of the SI for further discussion.  
		
		%{\bf map: I adjusted the phrasing above slightly; let's use the version that I wrote above (with the "all" and also with the "$\in$"); I also added "the network" to separate the network name from the mathematical symbol; we can use phrasing like that; please mark changes in color, as usual}
		%\commHLnew{I am not sure what you meant by ".. to separate the network name from the mathematical symbol; we cona use phrasing like that .." Do you mean separating \dataset{SBM}$_{2}$ from $.. 49\}$? Also, did you want me to make similar changes throughout? I checked throughout the paper and there were no other instance that something $\in \{..\}$ immediately followed by a symbol.}

		\begin{figure*}[h]
			\centering
			\includegraphics[width=1 \linewidth]{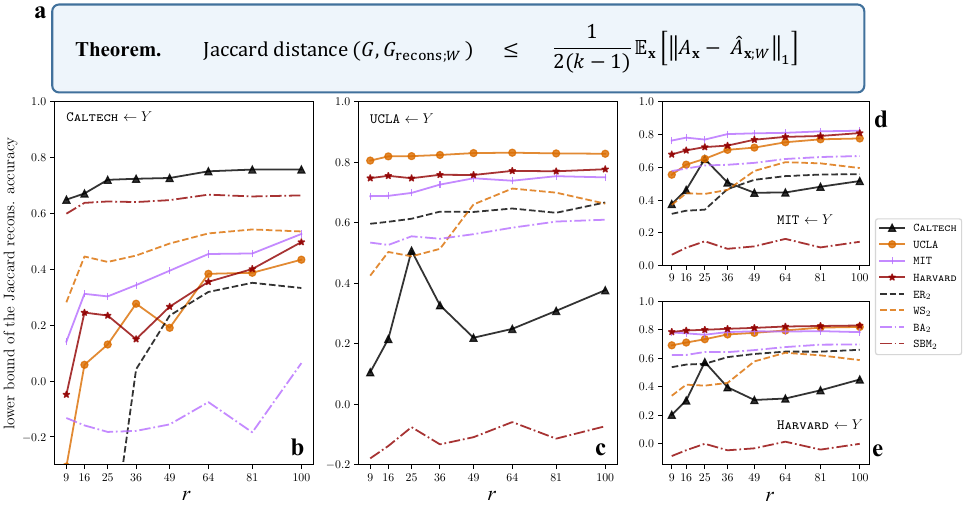}
			\caption{ 
					We prove that the Jaccard reconstruction error of the weighted reconstructed networks (i.e., without thresholding edge weights as in Figure~\ref{fig:network_recons}) is upper-bounded by the mean approximation error of {the} $k$-node subgraphs by $k$-path latent motifs divided by $2(k-1)$. {In (\textbf{a}), we state this result concisely.} Given a network $G$ and a network dictionary $W$ of $k$-path latent motifs, let $G_{\textup{recons};W}$ denote the weighted reconstructed network that we obtain using our NDR algorithm{. (See Algorithm~\ref{alg:network_reconstruction} in the SI.)} We define the Jaccard distance in \eqref{eq:JD} in the SI. On the right-hand side of the inequality in \textbf{a}, 
					the subgraph $\x$ is a uniformly random $k$-path of $G$, the matrix $A_{\x}$ is the $k\times k$ adjacency matrix of the subgraph that is induced by the node set of $\x$, and $\hat{A}_{\x;W}$ is the best nonnegative linear approximation of 
					$A_{\x}$ that we obtain using the latent motifs in the network dictionary $W$.
					See {Appendix}~\ref{sec:SI_NDR} and Theorem~\ref{thm:NDL} in the SI for precise statements of relevant definitions and the mathematical result.
					Subtracting both sides of the  
					inequality in \textbf{a} from $1$ yields a lower bound for the  
					Jaccard reconstruction accuracy. This quantity also measures the accuracy of reconstructing mesoscale patches of $G$ using latent motifs in $W$. {In (\textbf{b})--(\textbf{e}), we plot this lower bound} for the same parameters as in Figures~\ref{fig:network_recons}\textbf{b}--\textbf{e} (i.e., $k = 21$ and $r \in \{6,16,25,36,49,64,81,100 \}$).
			}
			\label{fig:recons_bd_plot}
		\end{figure*}

			We also comment briefly about the cross-reconstruction experiments in Figure~\ref{fig:network_recons} that use latent motifs that we learn from ER networks. For instance, when reconstructing \dataset{MIT}, \dataset{Harvard}, and \dataset{UCLA} using latent motifs that we learn from \dataset{ER$_2$}, we obtain a reconstruction accuracy of at least 72\%.
			This may seem unreasonable at first glance because the latent motifs that we learn from \dataset{ER$_2$} should not have any information about the Facebook networks. However, all of these networks are sparse (with edge densities of at most $0.02$) and we are sampling subgraphs using $k$-paths. The $k$-node subgraphs that are induced by uniformly random $k$-paths in these sparse networks have only a few off-chain edges (see Figure~\ref{fig:subgraphs}). For example, the $k$-node subgraphs that we sample from the sparse ER network \dataset{ER}$_{2}$ tend to have $k$-node paths and a few extra off-chain edges (see Figure~\ref{fig:subgraphs}). A similarly sparse or sparser network, such as \dataset{UCLA} (whose edge density is about $0.0036$) has similar subgraph patterns (despite the fact that, unlike the subgraphs of \dataset{ER}$_{2}$, the off-chain edges are not independent). This is the reason that we can reconstruct some networks with high accuracy by using latent motifs that we learn from a completely unrelated network.

			One learns latent motifs by maximizing the accuracy of reconstructions of mesoscale patches using them,
			rather than
			by maximizing the
			network-reconstruction accuracy. 
			In Figure \ref{fig:recons_bd_plot}\textbf{b}--\textbf{e}, we 	illustrate that self-reconstruction of mesoscale patches is more accurate than cross-reconstructions of mesoscale patches. However, because network reconstruction involves taking the mean of the reconstructed weights of an edge
			from multiple mesoscale patches that include that edge, an accurate reconstruction of mesoscale patches need not always entail accurate network reconstruction.
			In Figure \ref{fig:network_recons}, we see that  the self-reconstruction $X\leftarrow X$ is more accurate than the cross-reconstructions $X\leftarrow Y$ for $Y\ne X$ for almost all choices of networks $X$ and $Y$ and the parameter $r$. The two exceptions are $(X,Y,r) = (\dataset{Harvard}, \dataset{MIT}, 25)$ and $(X,Y,r) = (\dataset{Harvard}, \dataset{Caltech}, 25)$,
			although the cross-reconstruction accuracies in these cases are at most $2\%$ larger than the self-reconstruction accuracy.

			The above discussion suggests an important question: 
			%\textit
			If a network dictionary is effective at approximating the mesoscale patches of a network, what reconstruction accuracy does one expect? In the present paper, we state and prove a novel theorem that answers this question. Specifically, we prove mathematically that a {Jaccard reconstruction error} of a weighted reconstructed network (i.e., without thresholding edge weights as in Figure~\ref{fig:network_recons}) is upper-bounded by the mean approximation error of the mesoscale patches (i.e., $k$-node subgraphs) of a network by the $k$-path latent motifs divided by $2(k - 1)$. 
			See {Appendix}~\ref{sec:SI_NDR} and Theorem~\ref{thm:NDL} in the SI for precise statements of the relevant definitions and the mathematical result.

			%The latent motifs are learned so that they maximize accuracy of reconstructing mesoscale patches ($k$-node subgraphs), not the entire network. Indeed, Fig. 7\textbf{b}-\textbf{e} shows that self-reconstruction of mesoscale patches is more accuracte than all cross-reconstruction of mesoscale patches in all instances. However, more accuracte reconstruction of mesoscale patches does not necessarily lead to more accurate network reconstruction, as the latter involves additional averaging over mutiple reconstruction of the same edge from several mesoscale patches containing that edge. Our theoretical guarantee for network reconstruction (Fig. 7\textbf{a}) assures that, still, an accurate mesoscale patch reconstruction yields an accurate network reconstruction. 
	
			%Our discussion in the previous paragraph suggests an important question. When reconstructing a sparse network using a network dictionary, how much reconstruction accuracy should we expect as a baseline? More generally, 

	Our NDL algorithm (see Algorithm~\ref{alg:NDL}) finds a network dictionary that approximately minimizes the upper bound in the inequality in Figure~\ref{fig:recons_bd_plot}\textbf{a}. {(See Theorem~\ref{thm:NDL} in the SI.)} Using an arbitrary network dictionary is likely to yield larger values of the upper bound. Therefore, according to the theorem in Figure~\ref{fig:recons_bd_plot}\textbf{a}, it is likely to yield a less accurate weighted reconstructed network. For instance, {for} a network dictionary that consists of a single $k$-path, the aforementioned upper bound is the mean number of off-chain edges in {the} mesoscale patches of a network divided by $k - 1$. 
	(See the inequality in Figure~\ref{fig:recons_bd_plot}\textbf{a}.) For an ER network with expected edge density $p$, this upper bound
	equals $kp/2$ in expectation. At scale $k = 20$ for \dataset{ER}$_{2}$, this value is $0.5$. Consequently, we expect a reconstruction accuracy of at least $50\%$ when reconstructing \dataset{ER}$_{2}$ using latent motifs (such as the ones for \dataset{UCLA} in Figure~\ref{fig:img_ntwk_dict}) that have large on-chain entries and small off-chain entries. Substituting the edge density of \dataset{UCLA} for $p$, we expect a reconstruction accuracy of at least 94\%. However, according to the first inequality in Figure~\ref{fig:recons_bd_plot}\textbf{a}, the lower bound of the reconstruction accuracy that we obtain using latent motifs of \dataset{UCLA} is about 80\%. Therefore, for \dataset{UCLA}, we expect to obtain many more off-chain edges in mesoscale patches than what we expect from an ER network with the same edge density. We plot the lower bound of the reconstruction accuracy in Figures~\ref{fig:recons_bd_plot}\textbf{b}--\textbf{e} for the {parameters in Figures~\ref{fig:network_recons}\textbf{b}--\textbf{e} (i.e., $k = 21$ and $r \in  \{6,16,25,36,49,64,81,100 \}$).} The lower bounds for the self-reconstructions is not too far from the actual reconstruction accuracies for unweighted reconstructed networks in Figures~\ref{fig:network_recons}\textbf{b}--\textbf{e} (it is within $20$\% for \dataset{Caltech} and \dataset{UCLA} and within $10$\% for \dataset{MIT} and \dataset{Harvard} for all $r$), but we {observe}
	%obtain	
	much larger accuracy gaps for the cross-reconstruction experiments. (For example, there is at least a $50$\% difference for \dataset{UCLA}$\leftarrow$\dataset{Caltech}.) This indicates that, even if one uses latent motifs that are not very efficient {at approximating mesoscale patches}, one can obtain unweighted 	{reconstructions} that are significantly more accurate than what is guaranteed by the theoretically proven bounds.

	%%%%%%

		\subsection*{Network-denoising experiments}

		We consider the following `network-denoising' problem (which is closely related to the anomalous-subgraph-detection problem in Figure \ref{fig:anomaly_detection}). Suppose that we are given an observed network $G_{\textup{obs}}=(V,E_{\textup{obs}})$ with a node set $V$ and edge set $E_{\textup{obs}}$ and that we are asked to find an unknown %true 
		{network} $G_{\textup{true}}=(V,E_{\textup{true}})$ with the same node set $V$ but a possibly different edge set $E_{\textup{true}}$. We interpret $G_{\textup{obs}}$ as a corrupted version of a `true network' $G_{\textup{true}}$ that we observe with some uncertainty. To simplify the setting, we consider two types of network denoising. In the first type of network denoising, we consider \textit{additive noise~\cite{noble2003graph, parikh2008understanding, miller2015spectral, correia2019handling}. We} suppose that $G_{\textup{obs}}$ is a corrupted version of $G_{\textup{true}}$ that includes false edges (i.e., $E_{\textup{obs}}\supseteq E_{\textup{true}}$), and we seek to classify all edges in $G_{\textup{obs}}$ into `positives' (i.e., edges in $G_{\textup{true}}$) and `negatives' (i.e., {false edges in $G_{\textup{obs}}$ or equivalently} nonedges in $G_{\textup{true}}$). 
		 	 This network-denoising setting is identical to the anomalous-subgraph-detection problem in Figure \ref{fig:anomaly_detection}, except that now we label the false edges 
		 %are labeled 
		 as %negative examples. 
		 negatives. 
		 We interpreted them 
		  %here but they were considered 
		  as %positive examples 
		  positives 
		  when we computed the F-score (i.e., the harmonic mean of the precision and recall scores) in Figure \ref{fig:anomaly_detection}\textbf{f}.
		  In the second type of network denoising, we consider \textit{subtractive noise} (which is often called edge `prediction'~\cite{zhou2021, liben2007link, menon2011link, kovacs2019network, guimera2020one}). We assume that $G_{\textup{obs}}$ is a partially observed version of $G_{\textup{true}}$ (i.e., $E_{\textup{obs}}\subsetneq E_{\textup{true}}$), and we seek to classify nonedges in $G_{\textup{obs}}$ into {positives} (i.e., nonedges in $G_{\textup{true}}$) and {negatives} (i.e., edges in $G_{\textup{true}}$). There are many more {positives} than {negatives} %examples
		  %the number of all positive examples far exceeds that of negative examples due the 
		  because 
		  %of the edge sparsity of
		  %edge the sparsity of the edges in 
		  $G_{\textup{true}}$ is sparse (i.e., the edge density is low), so we restrict the classification task to a subset $E_{\textup{nonedge}}$ {of $E_{\textup{obs}}$} that includes all negatives  
		  and an equal number of positives.
		  We will discuss shortly 
		  how we choose $E_{\textup{nonedge}}$.
	
		Given a true network $G_{\textup{true}} = (V, E_{\textup{true}})$, we generate an observed {(i.e., corrupted)} network $G_{\textup{obs}} = (V, E_{\textup{obs}})$ as follows.  In the additive-noise setting, we create two types of corrupted networks. We create the first type of corrupted network by adding false edges in a structured way by generating them using the WS model. We select 100 nodes for four of the networks (the exception is that we use 500 nodes for \dataset{H. sapiens}) uniformly at random and generate 1,000 new edges (we generate 30,000 new edges for \dataset{H. sapiens}) according to the WS model. In this corrupting WS network, each node in a ring of 100 nodes is adjacent to its 20 nearest neighbors and we uniformly randomly choose 30\% of the edges to rewire. When rewiring an edge, we choose between its two ends with equal probability of each, and we attach this end to a node in the network that we choose uniformly at random. We then add these newly generated edges to the original network. We refer to this noise type as `${+}\textup{WS}$'. We create the second type of corrupted network by first choosing 5\% of the nodes uniformly at random and adding an edge between each pair of chosen nodes with independent probability $0.3$. We refer to this noise type as `${+}\textup{ER}$'.} {In the subtractive-noise setting, we obtain $G_{\textup{obs}}$ from $G_{\textup{true}}$ by removing half of the existing edges, which we choose uniformly at random, such that the remaining network is connected. We refer to this noise type as `$-\textup{ER}$'. 
	
				For each observed network $G_{\textup{obs}}$, we apply NDL {at scale $k = 21$ with $r \in \{2, 25\}$} to learn a network dictionary $W_{\textup{obs}}$. We construct another network dictionary $\bar{W}_{\textup{obs}}$ by removing the on-chain edges from all of the latent motifs in $W_{\textup{obs}}$. (See the `Methods' section for further discussion.) This gives a total of four network dictionaries, corresponding to the two values of $r$ and whether or not we keep the on-chain edges of the latent motifs. With each of the network dictionaries, use NDR to reconstruct a network $G_{\textup{recons}}$ by approximating mesoscale patches of $G_{\textup{obs}}$ using latent motifs in $W_{\textup{obs}}$. (We compute $G_{\textup{recons}}$ without using any information on $G_{\textup{true}}$.) We anticipate that the reconstructed network $G_{\textup{recons}}$ is similar to its corresponding original (i.e., uncorrupted) network $G_{\textup{true}}$. The reconstruction algorithms output a weighted network $G_{\textup{recons}}$, where the weight of each edge is our confidence that the edge is a true edge of that network. For denoising subtractive (respectively, additive) noise, we classify each nonedge (respectively, each edge) in a corrupted network as `positive' if its weight in $G_{\textup{recons}}$ is strictly larger than some threshold $\theta$ and as `negative' otherwise. By varying $\theta$, we construct a receiver-operating characteristic (ROC) curve that consists of points whose horizontal and vertical coordinates are the false-positive rates and true-positive rates, respectively. For denoising the ${-}\textup{ER}$ (respectively, ${+}\textup{ER}$ and ${+}\textup{WS}$) noise, one can also infer an optimal value of $\theta$ for a 50\% training set of nonedges (respectively, edges) of $G$ with known labels and then use this value of $\theta$ to compute classification measures such as accuracy and precision.

			\begin{figure*}[h]
				\centering
				\includegraphics[width=1 \linewidth]{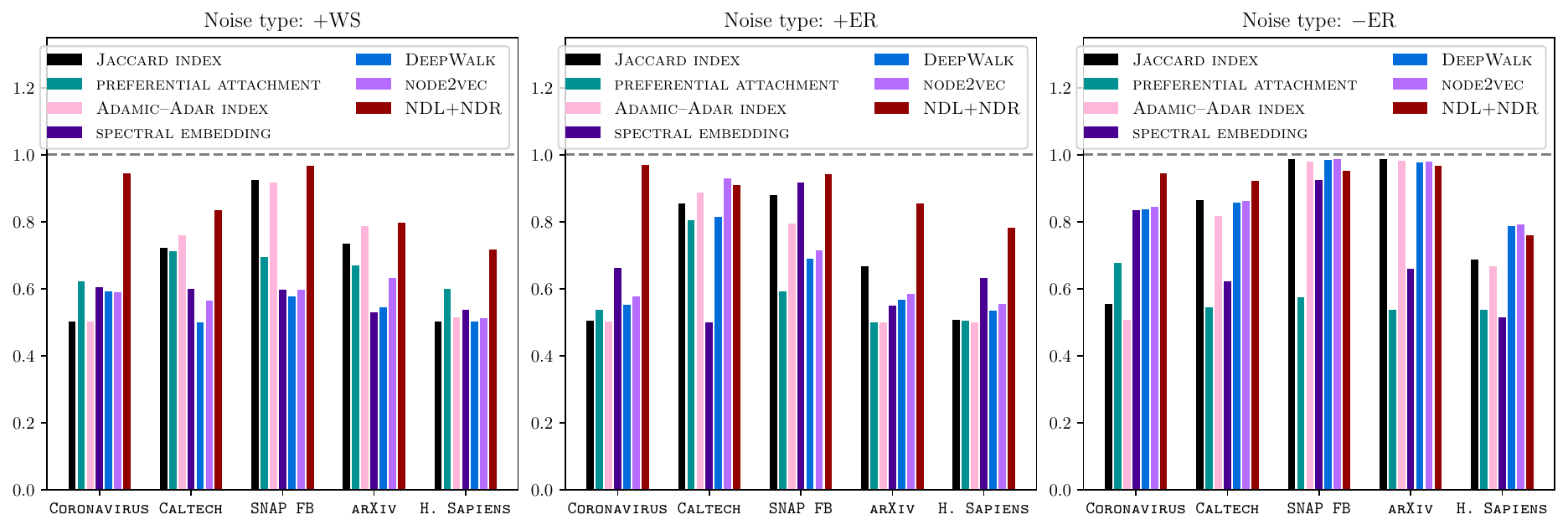}
				\caption{Applications of our NDL and NDR algorithms to network denoising with additive and subtractive noise on a variety of real-world networks. In our experiments with subtractive noise, we corrupt a network by removing $50$\% of its edges uniformly at random. We seek to classify 
					the nonedges in the corrupted network as true edges (i.e., removed edges) and false edges (i.e., nonedges in the original network), respectively. In our experiments with additive noise, we corrupt a network by uniformly randomly adding $50$\% of the number of its edges (i.e., 1,000 random edges) for all but one network (we add 30,000 random edges for \dataset{H. sapiens}) that we generate using the WS model. We seek to classify the edges in the resulting corrupted network as true edges (i.e., original edges) and false edges (i.e., added edges).
					To perform classification in a network, we first use NDL to learn latent motifs from a corrupted network and then reconstruct the networks using NDR to assign a confidence value to each potential edge. We then use these confidence values to infer the correct labeling of potential edges in the uncorrupted network. 
					Importantly, we never use information from the original networks to denoise the corrupted networks.
					For each network, we report the areas under the curves (AUCs) of the receiver-operating character (ROC) curves, which plot false-positive rates on the horizontal axis and true-positive rates on the vertical axis.
			%		which consist of points whose horizontal and vertical coordinates are the false-positive rates and true-positive rates, respectively.
					%{\bf map: I elaborated in the supplement why I asked a question here; I think I need more reconciliation why those use the extra description (there are two different ones) and this one does not}
					%\commHLnew{To me the reason is that, when we first introduced ROC in the main text, we defined it to be a curve consisting of points (true positive rates, false positive rates). I've looked up the definition of ROC and by default it uses true positive rates vs. false positive rates. For other similar curves (specifically in Fig 16, 17), we do not refer them as ROC; we call them "curves of precision versus recall" and "curves of NPV versus specificity".  }
					%\commHLnew{I now agree that in the experimental details section for figures 16,17, we do use ROC in a different way. I've modified the captions in figures 16,17 to use consistent terms.}
					%{map: ROC of what? precision--recall?} 
					%{\bf map: need to discuss ROC terminology, including whether or not that term should be used for the curves of precision vs recall (and for the other analogous example in the supplement)}
					%\commHLnew{(By definition of ROC it plots true positive rate vs. false positive rate. We refer to the analogous curve for other metrics as "curves of precision versus recall" etc.) }
					See Figures~\ref{fig:Figure4__PRC_REC}, \ref{fig:Figure_PR_curve}, and \ref{fig:Figure_PR_curve_flipped} in the SI for the values of other binary-classification measures. 
				}
				\label{fig:Figure4}
			\end{figure*}

				In Figure~\ref{fig:Figure4}, we compare the performance of our network-denoising approach to the performance of several existing approaches using the real-world networks \dataset{Caltech}, \dataset{SNAP FB}, \dataset{arXiv}, \dataset{Coronavirus}, and \dataset{H. sapiens}. We use four classical approaches (the {\sc Jaccard index}, {\sc preferential attachment}, the {\sc Adamic--Adar index}, and a {\sc spectral embedding})~\cite{liben2007link, hasan2011survey} and two more recent methods ({\sc DeepWalk}~\cite{perozzi2014deepwalk} and {\sc node2vec}~\cite{grover2016node2vec}) that are based on network embeddings. Let $N(x)$ denote the set of neighbors of node $x$ of a network. For the {\sc Jaccard index}, {\sc preferential attachment}, and the {\sc Adamic--Adar index}, the confidence score (which plays the same role as an edge weight in a reconstructed network) that the nodes $x$ and $y$ are adjacent via a true edge is $|N(x)\cap N(y)|/ | N(x)\cup N(y)|$, $ |N(x)| \cdot |N(y)| $, and $\sum_{z\in N(x)\cap N(y)} 1 / \ln | N(z) |$, respectively. 

				{We now discuss how we choose the set $E_{\textup{nonedge}}$ of nonedges of $G_{\textup{obs}}$ for our subtractive-noise experiments.	First, we note that it is unlikely that many deleted edges in $E_{\textup{deleted}}$ are between two small-degree nodes.
				If we simply choose $E_{\textup{nonedge}}$ as a uniformly random subset of the set of all nonedges of $G_{\textup{obs}}$ with a given size $|E_{\textup{deleted}}|$, then it is likely that we will choose many nonedges between small-degree nodes. Consequently, the resulting classification problem is easy
				for existing methods, such as the {\sc Jaccard index} and {\sc preferential attachment}, that are based on node degrees. (For example, consider a star network with five leaves (i.e., degree-1 nodes). In this network, a uniformly randomly chosen nonedge is always attached to two degree-1 nodes, but a uniformly randomly chosen edge is always attached to one degree-5 node (i.e., the center node) and one degree-1 node.) 
				To reduce the size-biasing of node degrees, we choose each nonedge of $E_{\textup{nonedge}}$ with a probability that is proportional to the product of the degrees of the two associated nodes.}

				%{\bf map: is the ROC below supposed to be for precision--recall or for something else? It looks like this issue may still be present in the paper} 
				%\commHLnew{Yes. I have not specified it again because in two paragraphs ago we defined ROC to be the one that uses TPR vs. FPS. Let's discuss on zoom to address this issue.} 
				%We need to specify which curve, and it's possible that we need to specify it earlier than here; new text from the current revision appears to add more of this later, so here we are in a situation of first giving some small amount of detail and later giving more detail; we need to adjust it so that we are giving the most detail (including defining all acronyms) the first time we use it; I noticed that we also seem to be first using acronyms and later using the words in some cases; this needs to be checked carefully throughout the entire manuscript; I noticed issues not just in new text, and I doubt I found all of them, so the whole paper needs to be checked for this [7/01/23: I brought this comment back, as this issue still appears to be present; perhaps the ROC is one that we should discuss in a conversation? Please also double-check that all acronyms are again defined before we use them (the new revision last time broke that)]}				
				%{What specific acronyms do you have in mind? I check them and I am not sure.}
				We show results in the form of means of the areas under the curves (AUCs) of the ROC curves
				 for five independent runs of each approach.
				%For our method, we apply NDL to $G_{\textup{obs}}$ at scale $k = 21$ with $r \in \{2, 25\}$ to learn latent motifs from the corrupted networks with the chain-edge thinning parameter $\xi \in \{0,1\}$.  A thinning parameter of ${\xi}=1$ corresponds to network reconstruction with unchanged latent motifs and mesoscale patches (see Figure~\ref{fig:denoising_caltech_comparison}\textbf{h}), whereas $\xi = 0$ entails removing all of the on-chain edges (see Figure~\ref{fig:denoising_caltech_comparison}\textbf{j}).  
				In Figure \ref{fig:Figure4}, we see that our approach performs competitively in all of our experiments,
				particularly for denoising additive noise (i.e., anomalous-subgraph detection). For example, when we add 1,000 false edges that we generate from the WS model to \dataset{Coronavirus} (which has 2,463 true edges), our approach yields an AUC of $0.94$. We obtain the second best AUC (of only $0.61$) using {\sc preferential attachment}. For noise of type $+\textup{ER}$, we add $804$ false edges to \dataset{Coronavirus}; our approach achieves the best AUC (of $0.97$) and {\sc spectral embedding} achieves the second best AUC (of $0.66$).
				
{In Figure~\ref{fig:network_recons}\textbf{a}}, we saw that we can use a small number of latent motifs to reconstruct the social and PPI networks that we use for our denoising experiments in Figure~\ref{fig:Figure4}. Because NDL learns a small number of latent motifs that are able to successfully give an approximate basis for all mesoscale patches, they should not be affected significantly by false edges between nodes in a small subset of the entire node set. Consequently, the latent motifs in $W_{\textup{obs}}$ that we learn from the observed network $G_{\textup{obs}}$ may still be effective at approximating mesoscale patches of the true network $G_{\textup{true}}$, so the  network $G_{\textup{recons}}$ that we reconstruct using $G_{\textup{obs}}$ and $W_{\textup{obs}}$ may be similar to $G_{\textup{true}}$.

			%%%%%
			
			%\subsection*{{Conclusion}}
			
			\subsection*{Conclusions and outlook}
			
			We introduced a mesoscale network structure, which we call \textit{latent motifs}, that consists of $k$-node subgraphs that are building blocks of all connected $k$-node subgraphs of a network. In contrast to ordinary motifs~\cite{milo2002network}, which refer to
			overrepresented $k$-node subgraphs (especially for small $k$) of a network, nonnegative linear combinations of our latent motifs approximate $k$-node subgraphs that are induced by uniformly random $k$-paths in a network. We also established algorithmically and theoretically that one can approximate a network accurately if one has a dictionary of latent motifs that can accurately approximate mesoscale structures in the network.
			
			Our computational experiments in Figures~\ref{fig:img_ntwk_dict} 
			and~\ref{fig:latent_motifs_multiscale_1} {demonstrated} 
			that latent motifs can have distinctive network structures.
			 Our computational experiments in Figures~\ref{fig:network_recons},~\ref{fig:denoising_caltech_comparison}, and~\ref{fig:Figure4} illustrated that various social, collaboration, and PPI networks have low-rank~\cite{markovsky2012low} mesoscale structures, in the sense that a few latent motifs (e.g., $r = 25$ of them, but see Figure~\ref{fig:network_recons} for other choices of $r$) that we learn using NDL are able to reconstruct, infer, and denoise the edges of a network using our NDR algorithm.
			We hypothesize that such low-rank mesoscale structures are a common feature of networks beyond the social, collaboration, and PPI networks that we examined. As we have illustrated in this paper, one can leverage mesoscale structures to perform important tasks like network denoising, so it is important in future studies to explore the level of generality of our insights.

			In our work, we examined latent motifs in ordinary graphs.
			However, notions of motifs have been developed for several more general types of networks, including temporal networks (in which nodes, edges, and edge weights can change with time) \cite{paranjape2017motifs} and multilayer networks (in which, e.g., nodes can be adjacent via multiple types of relationships) \cite{battiston2017multilayer}.
			%, that generalize ordinary graphs. 
			We did not consider latent motifs in such
			%generalized 
			network structures, and it is worthwhile to extend our approach and algorithms to these situations.

		%%%%

			%\subsection*{Discussion and limitations}
			
			\subsection*{Limitations and further discussion}
			
			In the next few paragraphs, we briefly discuss several salient points about our work.

			First, it is possible for two sets of latent motifs to be equally effective at reconstructing the same network. Therefore, although one can interpret the structures 
			%visible 
			in latent motifs 
			%can be interpreted 
			as mesoscale structures of a network, one cannot conclude that other mesoscale structures (which not in a given set of latent motifs) do not also occur in the network.
%			will also not exist in networks.} 

			Second, our NDL algorithm approximately computes a `best' network dictionary to reconstruct the mesoscale patches of a network, rather than one to reconstruct the network itself.
			%not for reconstructing the network. 
			Although our theoretical bound on the reconstruction error (see Figure \ref{fig:recons_bd_plot}\textbf{a}) implies that such a network dictionary should also be effective at reconstructing a network, it is still necessary to empirically verify the actual efficacy of doing so.
			%		the actual efficacy should be verified experimentally.} 

			Third, the same theoretical bound on the reconstruction error illustrates that it is possible to successfully reconstruct a very sparse network using latent motifs that one learns from a radically different but similarly sparse network at a given scale $k$ (see Figure \ref{fig:network_recons}\textbf{e}). To better distinguish distinct 
			%very 
			sparse networks from each other, one can use a scale $k$ that is large enough so that $k$-node mesoscale patches have many off-chain edges and latent motifs at that scale are sufficiently different in different networks. For example, see the latent motifs at scale $k = 51$ in Figures~\ref{fig:all_dictionaries_1}--\ref{fig:all_dictionaries_4} in the SI. Naturally, using a larger scale $k$ increases the computational cost of our approach.

			Fourth, although our method for network denoising is competitive --- especially for the anomalous-subgraph-detection problem --- it does not always outperform all existing methods, and some of those methods are much simpler than ours. For instance, for edge-prediction tasks, it seems that our method is often more conservative than the other examined methods at detecting unobserved edges. 
			 (See Figures~\ref{fig:Figure4__PRC_REC}, \ref{fig:Figure_PR_curve}, and \ref{fig:Figure_PR_curve_flipped} in the SI.) Therefore, we recommend using our method in conjunction with 
			 %other 
			 existing methods for such tasks.
			 % such as edge prediction. }

		%%%%%%
			
			\section*{Acknowledgements}
			
			HL was supported by the National Science Foundation through grants 2206296 and 2010035. JV was supported by the National Science Foundation through grant 1740325.
		
		%%%%%

			\vspace{0.3cm}

			\pagestyle{myheadings}
			\markright{\uppercase{Learning low-rank latent mesoscale structures in networks}}
			\markleft{\uppercase{Learning low-rank latent mesoscale structures in networks}}
			
			\printbibliography[keyword = {main}]
			
			%%%
			
			\section*{Methods}\label{sec:methods}

			{We briefly discuss our algorithms for network dictionary learning (NDL) and network denoising and reconstruction (NDR). We also provide a detailed description of our real-world and synthetic networks.
				
%				In this section, w
				We restrict our present discussion to networks that one can represent as a graph $G=(V,E)$ with a node set $V$ and an edge set $E$ without directed edges or multi-edges (but possibly with {self-edges}).
				 In the SI, we give an extended discussion that applies to more general types of networks. Specifically, in that discussion, we no longer restrict edges to have binary weights; instead, the weights {can have continuous nonnegative values}. 
				 See {Appendix}~\ref{subsection:Definition} of the SI.
				
				%%%%%
				
				\subsection*{Motif sampling and mesoscale patches of networks}
				
				The connected $k$-node subgraphs of a network are natural candidates for the network's mesoscale patches.
				These subgraphs have $k$ nodes that inherit their adjacency structures from the original networks from which we obtain them. It is convenient to consider the $k \times k$ adjacency matrices of these subgraphs, as it allows us to perform computations on the space of subgraphs. However, to do this, we need to address two issues. First, because the same $k$-node subgraph can have multiple (specifically, $k!$) different representations as an adjacency matrix (depending on the ordering of its nodes), we need an unambiguous way to choose an ordering of its nodes. Second, because most real-world networks are sparse~\cite{newman2018}, independently choosing a set of $k$ nodes from a network may yield only a few edges and may thus often result in a disconnected subgraph. Therefore, we need an efficient sampling algorithm to guarantee that we obtain connected $k$-node subgraphs when we sample from sparse networks.}
						
			We employ an approach that is based on \textit{motif sampling}~\cite{lyu2023sampling} both to choose an ordering of the nodes of a $k$-node subgraph and to ensure that we sample connected subgraphs from sparse networks. The key idea is to consider the random $k$-node subgraph that we obtain by sampling a copy of a `template' subgraph uniformly at random from a network. (We sample the nodes uniformly at random and include all of the network's edges between those sampled nodes.)
			We take such a template to be a `$k$-path'. A sequence $\x = (x_{1},\ldots,x_{k})$ of $k$ (not necessarily distinct) nodes is a \textit{$k$-walk} if $x_{i}$ and $x_{i+1}$ are adjacent for all $i \in \{1, \ldots, k-1\}$. A $k$-walk is a \textit{$k$-path} if all nodes in the walk are distinct (see Figure~\ref{fig:subgraphs}). For each $k$-path $\x=(x_{1},\ldots,x_{k})$, we define the corresponding mesoscale patch of a network to be the $k \times k$ matrix $A_{\x}$ such that $A_{\x}(i,j) = 1$ if nodes $x_{i}$ and $x_{j}$ are adjacent and $A_{\x}(i,j) = 0$ if they are not adjacent. This is the adjacency matrix of the $k$-node subgraph of the network with nodes $x_{1},\ldots,x_{k}$.   {One can use one of the} Markov-chain Monte Carlo (MCMC) algorithms for motif sampling from~\cite{lyu2023sampling} {to efficiently and uniformly randomly sample a} $k$-walk from a sparse network. By only accepting samples in which the $k$-walk has $k$ distinct nodes (i.e., so that it is $k$-path), we efficiently sample a uniformly random $k$-path from a network, as long as $k$ is not too large. If $k$ is too large, one has to `reject' too many samples of $k$-walks that are not $k$-paths. The expected number of rejected samples is approximately the number of $k$-walks divided by the number of $k$-paths. The number of $k$-walks in a network grows monotonically with $k$, but the number of $k$-paths can decrease with $k$. 
			(See {Appendix}~\ref{section:MCMC} of the SI for a detailed discussion.) Consequently, by repeatedly sampling $k$-{paths} $\x$, we obtain a data set of mesoscale patches $A_{\x}$ of a network.

			%%%%%%%%

			\subsection*{Algorithm for network dictionary learning (NDL)}
					
			We now present the basic structure of the algorithm that we employ for \textit{network dictionary learning} (NDL)~\cite{lyu2020online}. Suppose that we compute all possible $k$-paths $\x_{1},\dots,\x_{M}$ and their corresponding mesoscale patches $A_{\x_{t}}$ (which are $k \times k$ binary matrices), with $t \in \{1, \ldots, M\}$, of a network. We column-wise vectorize (i.e., we place the second column underneath the first column and so on; see Algorithm~\ref{alg:vectorize} in the SI) each of these $k \times k$ mesoscale patches to obtain a $k^{2} \times M$ data matrix $X$. We then apply nonnegative matrix factorization (NMF)~\cite{lee1999learning} to obtain a $k^{2} \times r$ nonnegative matrix $W$ for some fixed integer $r \ge 1$ to yield an approximate factorization $X \approx WH$ for some nonnegative matrix $H$. From this procedure, we approximate each column of $X$ by the nonnegative linear combination of the $r$ columns of $W$; its coefficients are the entries of the corresponding column of $H$. If we let $\mathcal{L}_{i}$ be the $k \times k$ matrix that we obtain by reshaping the $i^{\textup{th}}$ column of $W$ (using Algorithm~\ref{alg:reshape} in the SI), then $\mathcal{L}_{1},\ldots,\mathcal{L}_{r}$ are the learned latent motifs; they form a network dictionary. The set of these latent motifs is an approximate basis (but not a subset) of the set $\{ A_{\x_{1}},\dots,A_{\x_{M}} \}$ 
			of mesoscale patches. For instance, latent motifs have entries that take continuous values between $0$ and $1$, but mesoscale patches have binary entries. We can regard each $\mathcal{L}_{i}$ as the $k$-node weighted network with node set $\{1,\ldots,k\}$ and weighted adjacency matrix $\mathcal{L}_{i}$. See Figure~\ref{fig:img_ntwk_dict} for an illustration of latent motifs as weighted networks.

			The scheme in the paragraph above requires us to store all possible mesoscale patches of a network, entailing a memory requirement that is at least of order $k^{2}M$, where $M$ denotes the total number of mesoscale patches of a network. Because $M$ scales with the size (i.e., the number of nodes) of the network from which we sample subgraphs, we need unbounded memory to handle arbitrarily large networks. To address this issue, Algorithm~\ref{alg:NDL} implements the above scheme in the setting of `online learning', where subsets (so-called `minibatches') of data arrive in a sequential manner and one does not store previous subsets of the data before processing new subsets. Specifically, at each iteration $t \in \{1,2,\ldots,T\}$, we  process a sample matrix $X_{t}$ that is smaller than the full matrix $X$ and includes only $N \ll M$ mesoscale patches, where one can take $N$ to be independent of the network size. 
			Instead of using a standard NMF algorithm for a fixed matrix~\cite{lee2001algorithms}, we employ an `online' NMF algorithm~\cite{mairal2010online, lyu2020online} that one can use on sequences of matrices, where the intermediate dictionary matrices $W_{t}$ that we obtain by factoring the sample matrix $X_{t}$ typically improve as we iterate
			(see~\cite{mairal2010online, lyu2020online}).
			In Algorithm~\ref{alg:NDL} in the SI, we give a complete implementation of the NDL algorithm.	
			
			%%%%
			
			\subsection*{Algorithm for network denoising and reconstruction (NDR)}

				Suppose that we have an image $\gamma$ of size $k \times k$ pixels and a set of `basis images'  {$\beta_{1},\ldots,\beta_{r}$} of the same size. {We} can `reconstruct' the image {$\gamma$} using the basis images {$\beta_{1},\ldots,\beta_{r}$} by finding nonnegative coefficients $a_{1},\ldots,a_{r}$ such that the linear combination {$\hat{\gamma} = a_{1}\beta_{1} + \cdots + a_{r}\beta_{r}$} is as close as possible to {$\gamma$}. The basis images determine what shapes and colors of the original image to capture in the reconstruction {$\hat{\gamma}$}. In the standard pipeline for image denoising and reconstruction~\cite{elad2006image, mairal2008sparse, mairal2009non}, one assumes that the size $k \times k$ of the square `patches' is much smaller than the size of the full image $\gamma$. One can then sample a large number of $k \times k$ overlapping patches $\gamma_{1},\ldots,\gamma_{M}$ of the image $\gamma$ and obtain the best linear approximations $\hat{\gamma}_{1},\ldots,\hat{\gamma}_{M}$ of them using the basis images $\beta_{1},\ldots,\beta_{r}$. Because the $k \times k$ patches $\gamma_{1},\ldots,\gamma_{M}$ overlap, each pixel $(I,J)$ of $\gamma$ can occur in multiple instances of $\gamma_{1},\ldots,\gamma_{M}$. Therefore, we take the mean of the corresponding values in the mesoscale reconstructions $\hat{\gamma}_{1},\dots,\hat{\gamma}_{M}$ as the value of the pixel $(I,J)$ in the 
				{reconstruction} $\hat{\gamma}$.

			\begin{figure}[h]
				\centering
				\vspace{-0.3cm}
				\includegraphics[width=1 \linewidth]{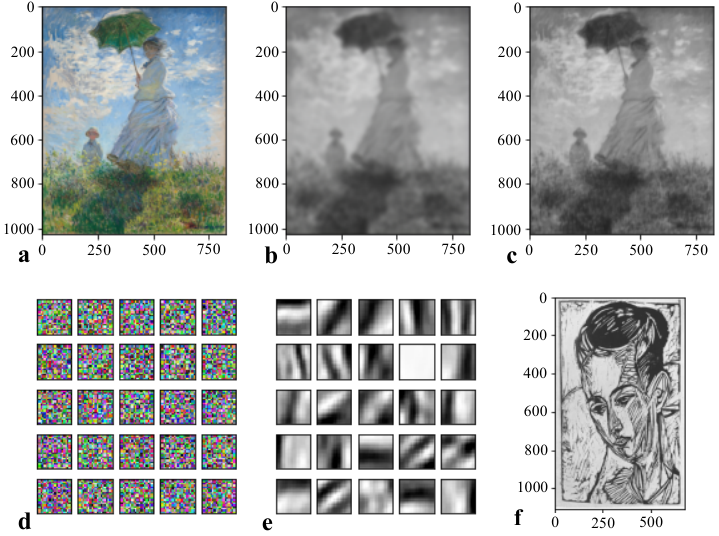}
				\vspace{-0.8cm}
				\caption{(\textbf{a}) The image \dataset{Woman with a Parasol - Madame Monet and Her Son} (Claude Monet, 1875). In (\textbf{b},\textbf{c}), we show reconstructions of the image.
				The image in \textbf{b} is a reconstruction of the image in \textbf{a} using the dictionary with 25 basis images of size $21 \times 21$ pixels in panel \textbf{d}. We uniformly randomly choose the color of each pixel from all possible colors (which we represent as vectors in $[0,256]^{3}$ for red--green--blue (RGB) weights). The image in \textbf{c} is a reconstruction of the image in panel \textbf{a} using the dictionary with 25 basis images of size $21 \times 21$ pixels in panel \textbf{e}. We learn this basis from the image in panel \textbf{f} using NMF. The image in \textbf{f} is from the collection \dataset{Die Graphik Ernst Ludwig Kirchners bis 1924, von Gustav Schiefler Band I bis 1916} (Accession Number 2007.141.9, Ernst Ludwig Kirchner, 1926). [We use the images in panels \textbf{a} and \textbf{f} with permission from the National Gallery of Art in Washington, DC, USA.]
				}
				\label{fig:img_recons_ex}
			\end{figure}

				As an illustration, we reconstruct the color image in Figure~\ref{fig:img_recons_ex}\textbf{a} in two ways, which yield the images in Figures~\ref{fig:img_recons_ex}\textbf{b},\textbf{c}. In Figure~\ref{fig:img_recons_ex}\textbf{d}, we show a dictionary with 25 basis images of size $21 \times 21$ pixels. We uniformly randomly choose the color of each pixel from all possible colors (which we represent as vectors in $[0,256]^{3}$ for red--green--blue (RGB) weights). The basis images do not include any information about the original image in Figure~\ref{fig:img_recons_ex}\textbf{a}, so the linear approximation of the $21 \times 21$ mesoscale patches of the image in
				Figure~\ref{fig:img_recons_ex}\textbf{a} using the basis images in Figure~\ref{fig:img_recons_ex}\textbf{d} may be inaccurate. However, when we reconstruct the entire image from Figure~\ref{fig:img_recons_ex}\textbf{a} using the basis images in Figure~\ref{fig:img_recons_ex}\textbf{d}, we do observe some basic geometric information from the original image. In Figure~\ref{fig:img_recons_ex}\textbf{b}, we show the image that results from this reconstruction.
				Importantly, the image reconstruction in Figure~\ref{fig:img_recons_ex}\textbf{b} uses both the basis images and the original image that one seeks to reconstruct. Unfortunately, the colors have averaged out to be neutral, so the reconstructed image is monochrome. Using a smaller (e.g., $5 \times 5$) randomly generated (with each pixel again taking an independently and uniformly chosen color) basis-image set for reconstruction results in a monochrome (but sharper) reconstructed image. 
				Notably, one can learn the basis images in Figure~\ref{fig:img_recons_ex}\textbf{e} from the image in Figure~\ref{fig:img_recons_ex}\textbf{f} using nonnegative matrix factorization (NMF)~\cite{lee1999learning}. The image in Figure~\ref{fig:img_recons_ex}\textbf{f} is in black and white, so the images in Figure~\ref{fig:img_recons_ex}\textbf{e} are also in black and white. The corresponding reconstruction in Figure~\ref{fig:img_recons_ex}\textbf{c} has nicely captured shapes from the original image, although we have lost the color information in the original image in Figure~\ref{fig:img_recons_ex}\textbf{a} and the reconstruction in Figure~\ref{fig:img_recons_ex}\textbf{c} is thus in black and white.

			A network analog of the above patch-based image reconstruction proceeds as follows. (See Figure \ref{fig:recons_illustration} of the main manuscript for an illustration.)
			Given a network $G = (V,E)$ and latent motifs $(\mathcal{L}_{1},\ldots,\mathcal{L}_{r})$ (which we do not necessarily compute from $G$; see Figure~\ref{fig:img_recons_ex}), we 
			obtain a weighted network $G_{\textup{recons}}$ using the same node set $V$ and 
			a weighted adjacency matrix $A_{\textup{recons}}: V^{2}\rightarrow \R$. To do this, {we first use the MCMC motif-sampling algorithm from~\cite{lyu2023sampling} with rejection sampling to} sample a large number $T$ of $k$-paths $\x_{1},\ldots,\x_{T}:\{1,\ldots,k\} \rightarrow V$ of $G$. (For details, see Algorithm \ref{alg:motif_inj} in the SI.) We then determine the corresponding mesoscale patches $A_{\x_{1}},\ldots,A_{\x_{T}}$ of $G$. We then approximate each mesoscale patch $A_{\x_{t}}$, which is a  $k\times k$ unweighted matrix, by a nonnegative linear combination $\hat{A}_{\x_{t}}$ of the latent motifs $\mathcal{L}_{i}$. We seek
			to `replace' each $A_{\x_{t}}$ by $\hat{A}_{\x_{t}}$ to construct the weighted adjacency matrix $A_{\textup{recons}}$. To do this, we define
			$A_{\textup{recons}}(x,y)$ for each $x,y\in V$ as the mean of $\hat{A}_{\x_{t}}(a,b)$ over all $t\in \{1,\ldots, T\}$ and all $a,b\in \{1,\ldots,k\}$ such that $\x_{t}(a)={x}$ and $\x_{t}(b)={y}$. We state this network-reconstruction algorithm precisely in Algorithm~\ref{alg:network_reconstruction}.
			See Appendix~\ref{subsection:NR} of the SI for more details.

			%%%%%%

			\subsection*{A comparison of our work with the prior research in~\cite{lyu2020online}}

				Recently, Lyu et al.~\cite{lyu2020online} proposed a preliminary approach for the algorithms that we study in the present work --- the NDL algorithm with $k$-walk sampling and the {NDR algorithm for network-reconstruction tasks} --- as an application to showcase a theoretical result about the convergence of online NMF for  
				data samples that are not independently and identically distributed (IID)~\cite[Thm. 1]{lyu2020online}. 
				A notable limitation of the NDL algorithm in~\cite {lyu2020online} is that one cannot interpret the elements of a network dictionary as latent motifs and one thus cannot associate them directly with mesoscale structures in a network.
				Additionally, Lyu et al.~\cite{lyu2020online} did not include any theoretical analysis of either the convergence or the correctness of network reconstruction, so it is unclear from that work whether or not one can reconstruct a network using the low-rank mesoscale {structures} that are encoded in a network dictionary. 
				Moreover, one cannot use the {NDR} algorithm that was proposed in~\cite{lyu2020online} to denoise additive noise unless one knows in advance that the noise is additive (rather than subtractive) before denoising (see~\cite[Rmk. 4.]{lyu2020online}). In the present paper, we build substantially on the research in~\cite{lyu2020online} and provide a much more complete computational and theoretical framework to analyze low-rank mesoscale structures in networks. In particular, we overcome all of the aforementioned limitations. In Table~\ref{table:NDL_comparison}, we summarize the key differences between the present work and \cite{lyu2020online}.

			\begin{table}[htbp]
				\centering
				\begin{tabular}{c|cccc}
					\thickhline 
					\textbf{NDL} & Sampling & Latent motifs & Convergence & Efficient MCMC \\ 
					\hline
					\rule{0pt}{1.1\normalbaselineskip} Lyu et al.~\cite{lyu2020online} & $k$-walks & \xmark & Non-bipartite networks & \xmark \\[0.2cm]
					\rule{0pt}{1.1\normalbaselineskip} The present work & $\begin{matrix} \textup{$k$-paths} \\ \textup{(Alg.~\ref{alg:motif_inj}) } \end{matrix}$& \checkmark & $\begin{matrix} \textup{Non-bipartite and} \\ \textup{{bipartite networks}}  \\ \textup{(Thm.~\ref{thm:NDL}, Thm.~\ref{thm:NDL2}) } \end{matrix}$ & $\begin{matrix} \checkmark\\ \textup{(Prop.~\ref{prop:approximate_pivot}) } \end{matrix}$ \\[0.1cm]
					\hline
					\thickhline 
				\end{tabular}% 
				\\[0.2cm]
				\begin{tabular}{c|ccccc}
					\thickhline 
					\textbf{NDR} & Sampling & Reconstruction & Denoising & Convergence & Error bound \\ 
					\hline 
					\rule{0pt}{1.2\normalbaselineskip} Lyu et al.~\cite{lyu2020online} & $k$-walks & \checkmark & \xmark & \xmark & \xmark \\[0.2cm]
					The present work & $\begin{matrix} \textup{$k$-walks} \\ \textup{$k$-paths} \end{matrix}$  & \checkmark  & \checkmark & $\begin{matrix} \checkmark\\ \textup{ (Thm.~\ref{thm:NR}) } \end{matrix}$ & $\begin{matrix} \checkmark\\ \textup{ (Thm.~\ref{thm:NR}) } \end{matrix}$  \\ 
					\hline
					\thickhline 
				\end{tabular}%
				\caption{A comparison of the contributions of the present work with those of Lyu et al.~\cite{lyu2020online}. In the SI, we give the statements and proofs of the proposition and theorems in this table.}
				\label{table:NDL_comparison}
			\end{table}

		The most significant theoretical advance of the present paper concerns the relationship between the reconstruction error and the error from approximating mesoscale patches by latent motifs, with an explicit dependence on the number $k$ of nodes in subgraphs at the mesoscale.
		We state this result in Theorem~\ref{thm:NR}\textbf{(iii)} in the SI. Informally, Theorem~\ref{thm:NR}\textbf{(iii)} states that one can accurately reconstruct a network
				if one has a dictionary of latent motifs that can accurately approximate the mesoscale patches of a network. In Figure~\ref{fig:recons_bd_plot}, we illustrate this theoretical result with supporting experiments. A crucial part of our proof of Theorem~{\ref{thm:NR}\textbf{(iii)}} is that the sequence of weighted adjacency matrices
				%weight matrices 
				of the reconstructed networks converges as the number of iterations that one uses for network reconstruction tends to infinity and that this limiting weighted adjacency matrix
				%weight matrix 
				has an explicit formula. We state these results, which are also novel contributions of our paper, in Theorem~\ref{thm:NR}{\textbf{(i)}},\textbf{(ii)}.

				\begin{figure*}[h]
					\centering
					\includegraphics[width=1 \linewidth]{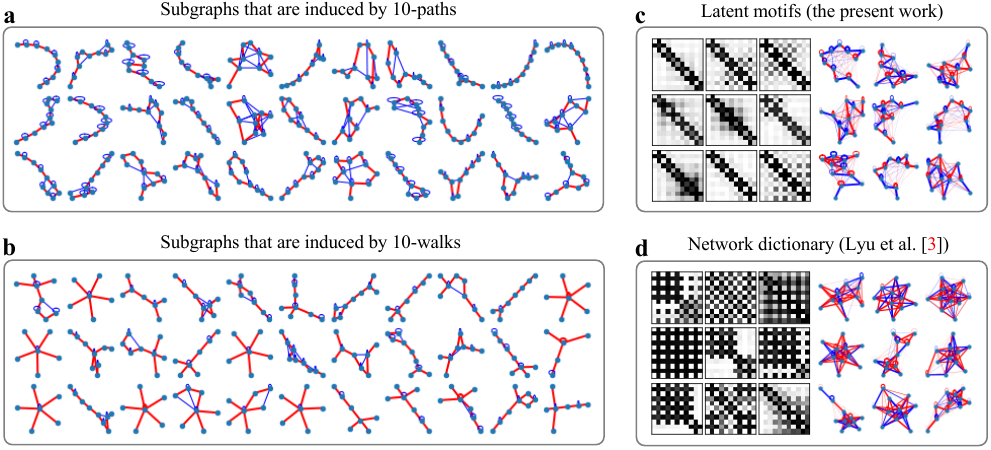}
					\caption{A comparison of subgraphs of \dataset{Coronavirus PPI} that are induced by node sets that we sample using (\textbf{a}) uniformly random $k$-paths and (\textbf{b}) uniformly random $k$-walks with $k = 10$. We also compare the network dictionary with $r = 9$ latent motifs of \dataset{Coronavirus PPI} that we determine using (\textbf{c}) the NDL algorithm (see Algorithm~\ref{alg:NDL}) in the present work to (\textbf{d}) the network dictionary that we determine using the NDL algorithm from~\cite{lyu2020online}. We also show the weighted adjacency matrices of the latent motifs. The $10$-walks in the network tend to visit the same nodes many times. Consequently, one cannot regard the $10 \times 10$ mesoscale patches that correspond to those walks as the adjacency matrices of $k$-node subgraphs of the network. Additionally, the networks in the network dictionary in \textbf{d} have clusters of several large-degree nodes, even though the original network does not possess such mesoscale structures.
					}
					\label{fig:covid_dict_comparison}
				\end{figure*}

				We now elaborate on the use of $k$-path sampling in our new NDL algorithm to ensure that one can interpret the network-dictionary elements as latent motifs. The NDL algorithm in~\cite{lyu2020online} used the $k$-walk motif-sampling algorithm of~\cite{lyu2023sampling}. That algorithm samples a sequence of $k$ nodes (which are not necessarily distinct) in which the $i$th node is adjacent to the $(i+1)$th node for all $i\in \{1,\ldots, k-1\}$. The $k$-walks that sample $k \times k$ subgraph adjacency matrices can have overlapping nodes, so some of the $k \times k$ adjacency matrices can correspond to subgraphs with fewer than $k$ nodes. If a network has a large number of such subgraphs, then the $k$-node latent motifs that one learns from the set of subgraph adjacency matrices can have misleading patterns that may not exist in any $k$-node subgraph of the network. This situation occurs in the network \dataset{CORONAVIRUS PPI}, where one obtains clusters of large-degree nodes from the learned latent motifs if one uses $k$-walk sampling. This misleading result arises from the $k$-walk visiting the same large-degree node many times, rather than because $k$ distinct nodes of the network actually have this type of subgraph patterns 
				(see Figure~\ref{fig:covid_dict_comparison}). 
				To resolve this issue,
				 during the dictionary-learning phase, we combine 
				MCMC $k$-walk sampling with rejection sampling so that we use only $k$-walks with $k$ distinct nodes (i.e., we use $k$-paths). Consequently, we now learn $k$-node latent motifs only from $k \times k$ adjacency matrices that correspond to $k$-node subgraphs of a network. This guarantees that any network structure (e.g., large-degree nodes, communities, and so on) in the latent motifs must also exist in the network at scale $k$.

			\begin{figure*}[h]
				\centering
				\includegraphics[width=1 \linewidth]{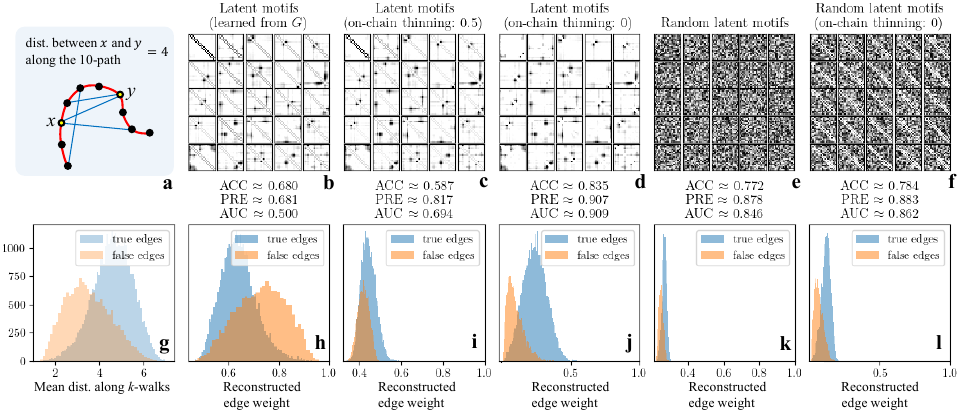}
				\caption{Denoising \dataset{Caltech} corrupted by $50\%$ additive noise of type ER. There are 16,656 true (i.e., original) edges and 7,854 false (i.e., added) edges to correctly classify. As we illustrate in (\textbf{a}), when a $k$-path connects two nodes $x$ and $y$, we define the distance between $x$ and $y$ along the $k$-path to be the shortest-path distance between $x$ and $y$. During reconstruction, we sample a sequence of $k$-paths in a network using a Markov-chain Monte Carlo (MCMC) algorithm (see Algorithm~\ref{alg:pivot} in the SI). Suppose that this sequence is $\x_{1},\ldots,\x_{T}$. We compute the mean of the distances between $x$ and $y$ along the $k$-path $\x_{t}$ for all $t\in \{1,\ldots, T\}$ such that $\x_{t}$ connects $x$ and $y$. In (\textbf{b})--(\textbf{f}), we show the weighted adjacency matrices of five sets of $21$-node latent motifs.
					We learn the 25 latent motifs in panel {\textbf{b}} from the corrupted network. We multiply the on-chain edge weights of these latent motifs by a thinning parameter $\xi = 0.5$ and $\xi = 0$ to obtain the matrices in panels {\textbf{c}} and {\textbf{d}}, respectively. We randomly choose the latent motifs in panel {\textbf{e}} by drawing each entry of its $k \times k$ weighted adjacency matrix independently and uniformly from $[0,1]$. Setting the on-chain entries {of these random latent motifs to $0$} gives the matrices in panel \textbf{f}. 
					In (\textbf{g}), we show histograms of the mean distances along the $k$-paths $\x_{1},\ldots,\x_{T}$ between the two ends of true edges and the two ends of false edges. In ({\textbf{h}})--(\textbf{l}), we show histograms of the edge weights for various reconstructions of the corrupted network using latent motifs in panels \textbf{b}--\textbf{f}.
					The classification accuracy (ACC) and precision (PRE) use the {best} threshold for truncating weighted edges in the reconstruction that we compute from a uniformly randomly chosen training set of edges (with $50\%$ of the edges of the observed network). 
					The AUC refers to the area under the ROC curve, which consists of points whose horizontal and vertical coordinates are the false-positive rates and true-positive rates, respectively.
					%{\bf map: bottom row of panels: singular "weight" on the axis, rather than plural "weights"}
				}
				\label{fig:denoising_caltech_comparison}
			\end{figure*}

		%{\bf map: below: it seems like we don't cite the panels of Figure 11 in order of their appearance; perhaps the simplest way to fix this is to add one sentence where we cite Figure 11 overall on its own before we cite any of the individual panels}
			
			Based on our experiments, the network-reconstruction algorithm that was proposed in~\cite{lyu2020online} 
			seems to be effective at denoising subtractive noise. However, when denoising additive noise (see, e.g., the ${+}\textup{ER}$ and ${+}\textup{WS}$ results in Figure \ref{fig:Figure4} of the main manuscript), the edge weights in a reconstructed network can result in ROC curves with an AUC that is as small as 0.5. We demonstrate this issue more concretely in Figure
				 \ref{fig:denoising_caltech_comparison}. As we show in the histogram in Figure~\ref{fig:denoising_caltech_comparison}{\textbf{h}},
			 when we denoise \dataset{Caltech} with ${+}\textup{ER}$ noise, the false edges are assigned weights that are significantly larger than those of the true edges when we reconstruct the observed network using latent motifs that we learn from the corrupted network (see Figure~\ref{fig:denoising_caltech_comparison}{\textbf{b}}). 
			Consequently, we obtain an AUC of 0.5 for our classification. There is a simple explanation of this outcome. 
			A uniformly random $k$-path,
			which we use throughout the denoising process, tends to connect false edges using a smaller number of edges than it uses to connect true edges. In other words, if there is an edge between nodes $x$ and $y$ in an additively corrupted network and we uniformly randomly sample a $k$-path that uses both $x$ and $y$, then the number of edges between these two nodes along the sampled $k$-path tends to be small if the edge between $x$ and $y$ is false and tends to be large if it is true (see Figure~\ref{fig:denoising_caltech_comparison}{\textbf{g}}). 
			This indicates that there are not many ways to connect the two ends of a false edge using a $k$-path that avoids using that false edge. Consequently, of the edges between the nodes in a uniformly sampled $k$-path of an additively corrupted network, false edges are more likely to appear as on-chain edges than as off-chain edges. Consequently, as we observe in Figures~\ref{fig:denoising_caltech_comparison}\textbf{k},\textbf{l}, we can reasonably successfully denoise the network \dataset{Caltech} with additive noise of type {${+}\textup{ER}$} using randomized latent motifs in which each we draw each entry of their associated weighted adjacency matrices independently and uniformly from $[0,1]$.
			When we consider subtractive noise, an analogous observation holds for true nonedges and false nonedges.

			To ensure effective network denoising for both additive and subtractive noise, we modify our network-reconstruction algorithm
			by thinning out the on-chain edge weights of latent motifs in $W_{\textup{obs}}$ prior to network reconstruction. To do this, we multiply the weights of on-chain edges in the latent motifs and in all sampled mesoscale patches by a scalar chain-edge `thinning parameter' $\xi \in [0,1]$. For instance, the latent motifs in Figures~\ref{fig:denoising_caltech_comparison}\textbf{i},\textbf{j}
			%panels \textbf{i} and \textbf{j} in Figure~\ref{fig:denoising_caltech_comparison} 
			use $\xi = 0.5$ and $\xi = 0$, respectively. As we see in the histograms in 
			%panels 
			Figures~\ref{fig:denoising_caltech_comparison}\textbf{c},\textbf{d},
			%\textbf{c} and \textbf{d}, 
			the negative edges in the resulting reconstruction have significantly smaller weights than the positive edges. With $\xi = 0$, for example, this results in a classification AUC of 0.91. 
			Although the thinning parameter $\xi$ can take any value in $[0,1]$, in all of our experiments except the one in Figure \ref{fig:denoising_caltech_comparison}, we use only the extreme values $\xi = 0$ and
			%or 
			$\xi = 1$. It seems to be unnecessary to use
			%Using 
			%intermediate 
			values of $\xi$ in $(0,1)$.
			% seems to be unnecessary.

			%{ map: (1) "additional tuning": I don't understand the meaning of this sentence. What do you have in mind? This sentence needs to be clarified.; (2) the word "all" is directly contradicted by a statement earlier in the paragraph, in which we state explicitly that we have used $\xi = 0.5$ in a certain case. This makes that new sentence factually false as currently written. Therefore, what is written cannot be exactly what is correct. Therefore, this also needs some adjustment to more precisely convey the correct situation.}
			
			%%%%
			
			\subsection*{Data sets}\label{subsection:Datasets}

			We use the following eight real-world networks:
			\begin{enumerate}[itemsep=0.1cm]
				\item \dataset{Caltech}: This connected network, which is part of the {\sc Facebook100} data set~\cite{traud2012social} (and which was studied previously as part of the {\sc Facebook5} data set~\cite{Traud2011}), has 762 nodes and 16,651 edges. The nodes represent user accounts in the Facebook network of Caltech on one day in fall 2005, and the edges encode Facebook `friendships' between these accounts. 
				
				\item \dataset{MIT}: This connected network, which is part of the {\sc Facebook100} data set~\cite{traud2012social}, has 6,402 nodes and 251,230 edges. The nodes represent user accounts in the Facebook network of MIT on one day in fall 2005, and the edges encode Facebook `friendships' between these accounts.
				
				\item \dataset{UCLA}: This connected network, which is part of the {\sc Facebook100} data set~\cite{traud2012social}, has 20,453 nodes and 747,604 edges. The nodes represent user accounts in the Facebook network of UCLA on one day in fall 2005, and the edges encode Facebook `friendships' between these accounts. 
				
				\item \dataset{Harvard}: This connected network, which is part of the {\sc Facebook100} data set~\cite{traud2012social}, has 15,086 nodes and 824,595 edges. The nodes represent user accounts in the Facebook network of Harvard on one day in fall 2005, and the edges represent Facebook `friendships' between these accounts.
				
				\item \dataset{SNAP Facebook} (with the shorthand \dataset{SNAP FB})~\cite{leskovec2012learning}: This connected network has 4,039 nodes and 88,234 edges. This network is a Facebook network that has been used as an example in a study of edge inference~\cite{grover2016node2vec}. The nodes represent user accounts in the Facebook network on one day in 2012, and the edges represent Facebook `friendships' between these accounts.
				
				\item \dataset{arXiv ASTRO-PH} (with the shorthand \dataset{arXiv})~\cite{Leskovec2014SNAP,grover2016node2vec}: This network has 18,722 nodes and 198,110 edges. Its largest connected component has 17,903 nodes and 197,031 edges. We use the complete network in our experiments. This network is a collaboration network between authors of astrophysics papers that were posted to the arXiv preprint server. The nodes represent scientists and the edges indicate coauthorship relationships. This network has 60 self-edges; these edges encode single-author papers.
				
				\item \dataset{Coronavirus PPI} (with the shorthand \dataset{Coronavirus}): This connected network is curated by \url{theBiogrid.org}~\cite{oughtred2019biogrid, oughtred2019biogrid_covid, gordon2020sars} from 142 publications and preprints. It has 1,536 proteins that are related to coronaviruses and 2,463 protein--protein interactions (in the form of physical contacts) between them. This network is the largest connected component of the Coronavirus PPI network that we downloaded on 24 July 2020; in total, there are 1,555 proteins and 2,481 interactions. Of the 2,481 interactions, 1,536 of them are for SARS-CoV-2 and were reported by 44 publications and preprints; the rest are related to coronaviruses that cause Severe Acute Respiratory Syndrome (SARS) or Middle Eastern Respiratory Syndrome (MERS).

				\item \dataset{Homo sapiens PPI} (with the shorthand \dataset{H. sapiens})~\cite{oughtred2019biogrid, oughtred2019biogrid_homo, grover2016node2vec}: This network has 24,407 nodes and 390,420 edges. Its largest connected component has 24,379 nodes and 390,397 edges. We use the complete network in our experiments. The nodes represent proteins in the organism \emph{Homo sapiens}, and the edges encode physical interactions between these proteins.  
			\end{enumerate}

			We now describe our eight synthetic networks: 
			\begin{enumerate}[resume, itemsep=0.1cm]
				\item \dataset{ER$_1$} and \dataset{ER$_2$}: An Erd\H{o}s--R\'{e}nyi (ER) network~\cite{erdds1959random,newman2018}, which we denote by $\textup{ER}(n,p)$, is a random-graph model. The parameter $n$ is the number of nodes and the parameter $p$ is the independent, homogeneous probability that each pair of distinct nodes has an edge between them. The network \dataset{ER$_1$} is an individual graph that we draw from $\textup{ER}(5000,0.01)$, and \dataset{ER$_2$} is an individual graph that we draw from $\textup{ER}(5000,0.02)$. 				
				
				\item \dataset{WS$_1$} and \dataset{WS$_2$}: A Watts--Strogatz (WS) network, which we denote by $\textup{WS}(n,k,p)$, is a random-graph model to study the small-world phenomenon~\cite{watts1998collective,newman2018}. In the version of WS networks that we use, we start with an $n$-node ring network in which each node is adjacent to its $k$ nearest neighbors. With independent probability $p$, we then remove and rewire each edge so that it connects a pair of distinct nodes that we choose uniformly at random. The network \dataset{WS$_1$} is an individual graph that we draw from {$\textup{WS}(5000,50,0.05)$}, and \dataset{WS$_2$} is an individual graph that we draw from $\textup{WS}(5000,50, 0.10)$.
				
				\item \dataset{BA$_1$} and \dataset{BA$_2$}: A Barab\'{a}si--Albert (BA) network, which we denote by $\textup{BA}(n,n_{0})$, is a random-graph model with a linear preferential-attachment mechanism~\cite{barabasi1999emergence,newman2018}. In the version of BA networks that we use, we start with $n_{0}$ isolated nodes and we introduce new nodes with $n_{0}$ new edges each that attach preferentially (with a probability that is proportional to node degree) to existing nodes until we obtain a network with 
				$n$ nodes. The network \dataset{BA$_1$} is an individual graph that we draw from $\textup{BA}(5000,25)$, and \dataset{BA$_2$} is an individual graph that we draw from $\textup{BA}(5000,50)$. 
				
				\item{\dataset{SBM}$_{1}$ and \dataset{SBM}$_{2}$: We use stochastic-block-model (SBM) networks in which each block is an ER network~\cite{holland1983stochastic}.
					Fix disjoint finite sets $C_{1} \cup \cdots \cup C_{k_{0}}$ and a $k_{0} \times k_{0}$ matrix $B$ whose entries are real numbers between $0$ and $1$. An SBM 
					network, which we denote by $\textup{SBM}(C_{1},\ldots, C_{k_{0}}, B)$, has the node set $V = C_{1} \cup \cdots \cup C_{k_{0}}$. For each unordered node pair $\{x,y\}$, there is an edge between $x$ and $y$ with independent probabilities $B[i_{0},j_{0}]$, with indices $i_{0},j_{0} \in \{1,\ldots,k_{0}\}$ such that $x \in C_{i_{0}}$ and $y \in C_{j_{0}}$. If $k_{0} = 1$ and $B$ has a constant $p$ in all entries, this SBM specializes to the Erd\H{o}s--R\'{e}nyi (ER) random-graph model $\textup{ER}(n,p)$ with $n = |C_{1}|$. The networks \dataset{SBM}$_{1}$ and \dataset{SBM}$_{2}$ are individual graphs that we draw from $\textup{SBM}(C_{1},\ldots,C_{k_{0}}, B)$ with $|C_{1}| = |C_{2}| = |C_{3}| =  \text{1,000}$, where $B$ is the $3 \times 3$ matrix whose diagonal entries are $0.5$ in both cases and whose off-diagonal entries are $0.001$ for \dataset{SBM}$_{1}$ and $0.1$ for \dataset{SBM}$_{2}$. Both networks have 3,000 nodes; \dataset{SBM}$_{1}$ has 752,450 edges and \dataset{SBM}$_{2}$ has 1,049,365 edges.}

			\end{enumerate}

			%%%%%%

			\subsection*{Types of noise}
			{\color{black}
				We now describe the three types of noise
				in our network-denoising experiments. (See Figures~\ref{fig:denoising_caltech_comparison} and~\ref{fig:Figure4}.) These noise types are as follows:
				\begin{enumerate}
					\item{(Noise type: {$-\textup{ER}$})  Given a network $G = (V,E)$, we choose a spanning tree $T$ of $G$ (such a tree includes all nodes of $G$) uniformly at random from all possible spanning trees. Let $E_{0}$ denote the set of all edges of $G$ that are not in the edge set of $T$. 
						We then obtain a corrupted network $G'$ by uniformly randomly removing half of the edges in $E_{0}$ from $G$. 
						Note that $G'$ is guaranteed to be connected.} 
					
					\item{(Noise type: {$+\textup{ER}$}) Given a network $G = (V,E)$, we uniformly randomly choose a set $E_{2}$ of pairs of nonadjacent nodes of $G$ of size $|E_{2}| = \lfloor |E|/2 \rfloor$. The corrupted network is $G' = (V, E\cup E_{2})$; we note that $50\%$ of the edges of $G'$ are new.}
					
					\item{(Noise type: $+\textup{WS}$) Given a network $G = (V,E)$, fix integers $n_{0} \in \{1,\dots,|V|\}$ and $k \in \{1,\dots, n_{0} \}$, and fix a real number $p \in [0,1]$. We uniformly randomly choose a subset $V_{0} \subseteq V$ (with $|V_{0}| = n_{0}$) of the nodes of $G$. We generate a network $H = (V_{0}, E_{3})$ from the Watts--Strogatz model $\textup{WS}(n_{0},k_{0},p)$ using the node set $V_{0}$. We then obtain the corrupted network $G' = (V, E\cup E_{3})$, which has $|E_{3}| = n_{0}\lfloor k/2 \rfloor$ new edges. When $G$ is \dataset{Caltech}, \dataset{SNAP FB}, \dataset{arXiv}, or \dataset{Coronavirus}, we use the parameters $n_{0} = 100$, $k = 20$, and $p=0.3$. In this case, $G'$ has 1,000 new edges. When $G$ is \dataset{H. sapiens}, we use the parameters $n_{0}=500$, $k = 120$, and $p=0.3$. In this case, $G'$ has 30,000 new edges.}
				\end{enumerate}
			}

			\pagestyle{myheadings}
			\markright{\uppercase{Learning low-rank latent mesoscale structures in networks}}
			\markleft{\uppercase{Learning low-rank latent mesoscale structures in networks}}

			%%%%%
			
			\subsection*{Data availability}

			The data sets that we generated in the present study are available in the repository \url{https://github.com/HanbaekLyu/NDL_paper}. In the `Data sets' subsection of the `Methods' section, we give references for the real-world networks that we examine.

			%The data sets that we generated in the present study are available in the repository https://github.com/HanbaekLyu/NDL_paper. In the `Data sets' subsection of the `Methods' section, we give references for the real-world networks that we examine.
			
		%%%%%
			
			\subsection*{Code availability}

			Our code for our 
			algorithms and simulations is publicly available in the repository \url{https://github.com/HanbaekLyu/NDL_paper}. We also provide a user-friendly version 
			at
			\url{https://github.com/jvendrow/Network-Dictionary-Learning}
			{as a {\sc Python} package \textsc{ndlearn}.}

			%Our code for our algorithms and simulations is publicly available in the repository https://github.com/HanbaekLyu/NDL_paper. We also provide a user-friendly version at https://github.com/jvendrow/Network-Dictionary-Learning Python package ndlearn.

			\printbibliography[keyword={methods}]
			
			%%TC:ignore
			
			\appendix

			%%%%%

			\newpage
			${}$
			\vspace{0.5cm}
			
			\pagestyle{myheadings}
			\markright{\uppercase{Learning low-rank latent mesoscale structures in networks}}
			\markleft{\uppercase{Learning low-rank latent mesoscale structures  in networks}}

			\begin{center}
				\textbf{\large Supplementary Information:}
			\end{center}
			\vspace{0.1cm}
			\begin{center}
				\textbf{\large LEARNING LOW-RANK LATENT MESOSCALE STRUCTURES IN NETWORKS}
			\end{center}
			
			\vspace{1cm}

			%%%
			
			%\section{Overview}

			In this supplement, we present our algorithms for network dictionary learning (NDL) and network denoising and reconstruction (NDR), and we prove theoretical results about their convergence and error bounds. In {Appendix}~\ref{section:NDL_problem}, we define a variety of technical terms and overview our theoretical results. In {Appendix}~\ref{section:MCMC}, we discuss Markov-chain Monte Carlo (MCMC) motif-sampling algorithms. We give the complete NDL algorithm (see Algorithm~\ref{alg:NDL}) in {Appendix}~\ref{section:NDL}. We introduce the notion of `latent-motif dominance' in {Appendix}~\ref{subsection:LMD} to measure the significance of each latent motif that we learn from a network.  In {Appendix}~\ref{subsection:various_mesoscales}, we show that various mesoscale structures of the networks we study in the present paper emerge in the latent motifs at various scales $k\in \{6,11,21,51\}$. In {Appendix}~\ref{subsection:NR}, we give the complete NDR algorithm (see Algorithm~\ref{alg:network_reconstruction}). {We give experimental details in {Appendix}~\ref{section:experimental_details}.} In Appendix~\ref{section:convergence}, we present a rigorous analysis of the NDL and NDR algorithms. In {Appendix}~\ref{subsection:aux_alg}, we state auxiliary algorithms that we use in the present paper. In {Appendix}~\ref{subsection:additional_figures}, we show additional figures.

			%%%%
			
			\section{Problem formulation and overview of theoretical results}
			\label{section:NDL_problem}
			
			%%%%	
			
			\subsection{Definitions and notation}
			\label{subsection:Definition}
			
			%{\bf map: the notation for node pairs is not consistent in this paper; sometimes we use "$(x,y)$" and other times we use "$\{x,y\}$" ; we need to make our notation for node pairs in a consistent way; I suspect that we need to use "$\{x,y\}$" throughout, but I am not sure. In any event, we're not consistent about it}
			%\commHLnew{I think we are consistent in our terminology. Note that we distinguish a pair of nodes $(x,y)$ and an edge $\{x,y\}$ in a network. The nodes in a given pair $(x,y)$ may not be adjacenct, in which case there is no edge $\{x,y\}$. We use the unordered pair $\{x,y\}$ only when we are refering to an edge or a nonedge.}
			To facilitate our discussions, we use terminology and notation from~\cite[Ch. 3]{lovasz2012large}. In the main manuscript, we described a network as a graph $G = (V,E)$ with a node set $V$ and an edge set $E$ without directed or multi-edges, but possibly with self-edges. An unordered pair $\{x,y\}$ of nodes in $G$ is an \textit{edge} in $G$ if $\{x,y\}\in E$; it
			is a \textit{self-edge} at $x$ if $\{x\}\in E$. 
			One can characterize the edge set $E$ of $\G$ using an \textit{adjacency matrix} $A_{G}:V^{2}\rightarrow \{0,1\}$, where $A(x,y) = \one(\{x,y\}\in E)$ for each $x,y\in V$. The function $\one(S)$ denotes the indicator of the event $S$; it takes the value $1$ if $S$ occurs and takes the value $0$ if $S$ does not occur. In this supplementary information, we formulate our NDL framework in the more general setting in which the edges of a network can have weights. Although one can extend the above definition of networks to include weighted edges by adjoining an additional object
			to $G = (V,E)$ for edge weights, it is convenient to instead extend the range of adjacency matrices from $\{0,1\}$ to the interval $[0,\infty)$. 
			
			We define a \textit{network} as a pair $\G=(V,A_{\G})$ with a node set $V$ and a \textit{weight matrix} (which is also often called a `weighted adjacency matrix') $A_{\G}:V^{2}\rightarrow [0,\infty)$ that encodes the weights of the edges between nodes. For simplicity, we often drop the subscript $\G$ in $A_{\G}$ and denote it by $A$.
			A graph $G = (V,E)$ determines a unique network $\G=(V, A_{G})$, where $A_{G}$ is the adjacency matrix of $G$. The set $V(\G)$ is the node set of the network $\G$, which has \textit{size} $|V(\G)|$, where $|S|$ is the number of elements in the set $S$. {An unordered pair $\{x,y\}$ of nodes of $\G$ is an \textit{edge} if $A(x,y) > 0$ or $A(y,x) > 0$; it is a \textit{nonedge if $A(x,y)=0$;} it is a \textit{self-edge} if $x = y$ and $A(x,x) > 0$. An ordered} pair $(x,y)$ of nodes of $\G$ is called a \textit{directed edge} if $A(x,y) > 0$.

			We say that a network $\G = (V,A)$ is \textit{symmetric} if its weight matrix is symmetric (i.e., $A(x,y)=A(y,x)$ for all $x,y\in V$), and we say that it is \textit{binary} (i.e., unweighted) if $A(x,y)\in \{0,1\}$ for all $x,y\in V$. The network $\G$ is \textit{bipartite} if it admits a `bipartition', which is a partition $V = V_{1} \cup V_{2}$ of the node set $V$ such that $V = V_{1}\cup V_{2}$ and $A(x,y) = 0$ if $x,y \in V_{1}$ or $x,y \in V_{2}$ for each $x,y \in V$. If two networks $\G = (V, A)$ and $\G' = (V',A')$ satisfy $V' \subseteq V$ and $A'(x,y) \le A(x,y)$ for all $x,y \in V'$, then we say that $\G'$ is a \textit{subgraph} of $\G$ and write $\G'\subseteq \G$. If $A'(x,y) = A(x,y)$ for all $x,y \in V'$, then we say that $\G'$ is an \textit{induced subgraph of $\G$} that is induced by the node set $V'$.

				For an integer $k \ge 2$ and nodes $x,y \in V$, we refer to a sequence $(x_{1},\ldots,x_{k})$ of (not necessarily distinct) nodes of $\G$ as a \textit{$k$-walk from $x$ to $y$} if $A(x_{i},x_{i+1}) > 0$ for all $i \in \{1, \ldots, k-1\}$ and $(x_{1},x_{k})=(x,y)$. A $k$-walk $(x_{1},\ldots,x_{k})$ is a \textit{$k$-path} if all nodes $x_{1},\ldots,x_{k}$ are distinct. We say that a network $\G$ is \textit{connected} if for any nodes $x,y \in V$, there exists a $k$-path from $x$ to $y$ for some $k \ge 1$. If $\G$ is connected, then for any two distinct nodes $x,y \in V$, we define $d_{\G}(x,y)$ to be the smallest 
				integer $k\ge 1$ such that there exists a $k$-walk from $x$ to $y$. The quantity $d_{\G}(x,y)$ is called the \textit{shortest-path distance} between $x$ and $y$. The maximum of $d_{\G}(x,y)$ over all node pairs $(x,y)$ is the \textit{shortest-path diameter} $\textup{diam}(\G)$ of $\G$. It equals the minimum number of edges in a walk in $\G$ that connects nodes $x$ and $y$. 
				
				Suppose that we are given $N$ elements $\mathbf{v}_{1},\ldots,\mathbf{v}_{m}$ in some vector space. When we say that we take their \textit{mean}, we refer to their sample mean $\bar{\mathbf{v}} = N^{-1}\sum_{i=1}^{N}\mathbf{v}_{i}$. When we say that we take a \textit{weighted average} of them, we refer to the expectation $\sum_{i=1}^{N} \mathbf{v}_{i}p_{i}$, where $(p_{1},\ldots, p_{N})$ is a probability distribution on the set of $N$ elements.

				%%%%

				\subsection{Homomorphisms between networks and motif sampling}
				\label{subsection:motif_sampling}

				The ability to sample from a complex data set according to a known probability distribution (e.g., a uniform distribution) is a crucial ingredient in dictionary-learning problems. For instance, in image-processing applications~\cite{elad2006image, mairal2008sparse, mairal2009non}, it is straightforward to uniformly randomly sample a $k \times k$ patch from an image. However, it is not straightforward to uniformly randomly sample a connected $k$-node subgraph of a network~\cite{bressan2020faster, kashtan2004efficient, wernicke2006efficient, leskovec2006sampling}. 
				To develop dictionary learning for networks, we use motif sampling, which was introduced recently in~\cite{lyu2023sampling}. In motif sampling, instead of directly sampling a connected subgraph, one samples a random function {that maps} the node set of a smaller network (i.e., a motif) to
				the node set of a target network {while preserving} adjacency relationships.
				One then uses the subgraph that is induced by the nodes in the image of the function. As we discuss below, such a function between networks is a homomorphism.

				Fix an integer $k \ge 1$ and a weight matrix $A_{F}:[k]^{2}\rightarrow [0,\infty)$, where we use the shorthand notation $[k] = \{1,\ldots,k\}$. 
				We use the term \textit{motif} for the corresponding network $F=([k],A_{F})$. A motif is a network, and we use motifs to sample from a given (and much larger) network. 
				The type of motif that particularly interests us is a \textit{$k$-chain}, for which $A_{F}=\one(\{(1,2),(2,3),\ldots,(k-1,k)\})$. A $k$-chain is a directed path with node set $[k]$. For simplicity, we refer to the $k$-chain motif with the corresponding network $F=([k],A_{F})$ as the \textit{$k$-chain motif $F = ([k],A_{F})$}. For a general $k$-node motif $F=([k],A_{F})$ and a network $\G=(V,A)$, we define the probability distribution $\pi_{F\rightarrow \G}$ on the set $V^{[k]}$ of all node maps (i.e., functions between node sets) $\mathbf{x}:[k]\rightarrow V$ by 
				\begin{equation}\label{eq:def_embedding_F_N}
					\pi_{F\rightarrow \G}( \mathbf{x} ) := \frac{1}{\mathtt{Z}}  \left( \prod_{i,j \in \{1, \ldots , k\}}  A(\mathbf{x}(i),\mathbf{x}(j))^{A_{F}(i,j)} \right)\,,
				\end{equation}  
				where $\mathtt{Z}=\mathtt{Z}(F,\G)$ is a normalization constant that we call the \textit{homomorphism density} of $F$ in $\G$~\cite{lovasz2012large}. A node map $\x:[k]\rightarrow V$ is a \textit{homomorphism} $F\rightarrow \G$ if $\pi_{F\rightarrow \G}(\x)>0$, which is the case if and only if  $A(\x(a),\x(b))>0$ for all $a,b\in [k]$ with $A_{F}(a,b)>0$ (with the convention that ${\zeta}^{0}=1$ for all ${\zeta}\in \mathbb{R}$). Informally, this means that if $(i,j)$ is a directed edge in the motif $F$, then $(\x(i),\x(j))$ is a directed edge in the network $\G$. 
				The term `motif sampling' refers to the problem of sampling a random homomorphism $\x: F \rightarrow \G$ according to the distribution \eqref{eq:def_embedding_F_N}. 
				
					To learn interpretable latent motifs, it is important to sample a homomorphism $\x: F \rightarrow \G$ such that $\x$ is injective. This ensures that
					the nodes $\x(1),\ldots,\x(k)$ in $V$ that correspond to nodes $1,\ldots,k$ in the motif $F$ through the homomorphism $\x$ are all distinct. 
					When $\x:F\rightarrow \G$ is an injective homomorphism, we write $\x:F\hookrightarrow \G$. 
					Using the subgraph of $\G$ that is induced by the node set $\{\x(1),\ldots,\x(k) \}$ when $\x$ is injective returns a $k$-node subgraph of $\G$. For convenience, we define the probability distribution 
					\begin{equation}\label{eq:def_embedding_F_N_inj}
						\pi_{F\hookrightarrow \G}( \mathbf{x} ) := C \,  \pi_{F\rightarrow \G}( \mathbf{x} )  \cdot \one(\textup{$\x(1),\ldots,\x(k)$ are distinct})\,,
					\end{equation}  
					where $C > 0$ is a normalization constant. 
					Injective motif sampling refers {to} the problem of sampling {a random} injective homomorphism $\x: F\hookrightarrow \G$ according to the distribution \eqref{eq:def_embedding_F_N_inj}. 
					The probability distribution \eqref{eq:def_embedding_F_N_inj} is well-defined as long as there exists an injective homomorphism $\x: F \rightarrow \G$. As a special case of interest, for a symmetric and binary motif $F$ and a network $\G$, the distributions $\pi_{F\rightarrow \G}$ and $\pi_{F\hookrightarrow \G}$ are the uniform distribution among all homomorphisms $F \rightarrow \G$ and among all injective {homomorphisms} $F\hookrightarrow \G$, respectively. That is, 
					\begin{align}\label{eq:def_embedding_F_N_uniform}
						\textup{$F$, $\G$ are symmetric and binary} \quad \Longrightarrow \quad \begin{matrix}
							\pi_{F\rightarrow \G} = \textup{Uniform}( \{ \x:F\rightarrow \G \})\,, \\
							\pi_{F\hookrightarrow \G} = \textup{Uniform}( \{ \x:F\hookrightarrow \G \}) \,,
						\end{matrix} 
					\end{align}
					which is the case for all of our examples in the main manuscript. In {Appendix}~\ref{section:MCMC}, we discuss three MCMC
					algorithms for motif sampling and propose corresponding algorithms for injective motif sampling by combining them with rejection sampling.

			%%%%
			
			\subsection{Mesoscale patches of networks}
			\label{subsection:mesoscale_patches}

			A homomorphism $F \rightarrow \G$ is a node map $V(F) \rightarrow V(\G)$ that maps the edges of a motif $F$ to edges of a network $\G$, so it maps $F$ onto a subgraph of $\G$. It thereby maps $F$ `into' $\G$. For each homomorphism $\x: F \rightarrow \G$ from a motif $F=([k],A_{F})$ into 
				a network $\G=(V,A)$, we define a $k \times k$ matrix 
				\begin{align} 
					A_{\mathbf{x}}(a,b) := A\big(\x(a), \x(b)\big) \quad \text{for all\,\, $a,b\in \{1,\ldots,k\}$}\,. \label{eq:def_A_F_patch}
				\end{align}
				We say that $A_{\x}$ in \eqref{eq:def_A_F_patch} is the \textit{mesoscale patch} of $\G$ that is induced by the homomorphism $\x:F\rightarrow \G$. The matrix $A_{\x}$
				is specified uniquely by the homomorphism $\x:F\rightarrow \G$ and the weight matrix $A$. Given a $k \times k$ matrix $B$ and a homomorphism $\x: F \rightarrow \G$, we say that the $(a,b)$ entries of $B$ are \textit{on-chain} if $A_{F}(a,b) > 0$ and are \textit{off-chain} otherwise. 
			The condition $A_{F}(a,b) > 0$ implies that $A(\x(a),\x(b)) > 0$ by the definition of the homomorphism $\x$, so the on-chain entries of $A_{\x}$ are always positive (and are always $1$ if $\G$ is unweighted). 
			However, the off-chain entries of $A_{\x}$ 
			are not necessarily positive, so they encode meaningful information about a network that one `detects' with the homomorphism $\x:F\rightarrow \G$. 
			As an illustration, suppose that $F$ is the $6$-chain motif and that $\G=(V,A)$ is an undirected and binary graph. For any homomorphism $\x:F\rightarrow \G$, we have 
			\begin{align}\label{eq:ex_patches}
				A_{\x}=
				\begin{bmatrix}
					0 & 1 & * & * &  * &  * \\
					1 & 0 & 1 & * &  * &  * \\
					* & 1 & 0 & 1 &  * &  * \\
					* & * & 1 & 0 &  1 &  * \\
					* & * & * & 1 &  0 &  1 \\
					* & * & * & * &  1 &  0 
				\end{bmatrix} \, ,
			\end{align}
			where each entry $*$ of $A_{\x}$ is either $0$ or $1$. In this example, the entries that we mark as $1$ are the on-chain entries of $A_{\x}$ and the other entries are off-chain entries.
			
			Let $\G_{\x}$ denote the induced subgraph of $\G$ whose node set is the image $\textup{Im}(\x_{t}) = \{\x(a)\,|\, a\in \{1,\ldots,k\} \}$ of the homomorphism $\x: F \rightarrow \G$. If $\x$ {has} $k$ distinct nodes in its image, then the weight matrix of $\G_{\x}$ is exactly the mesoscale patch $A_{\x}$. 
			However, this is not the case when $\x$ has fewer than $k$ distinct nodes. In that situation, we cannot interpret the mesoscale patch $A_{\x}$ as the weight matrix of the induced subgraph $\G_{\x}$ of $\G$. (For example, see Figure~\ref{fig:covid_dict_comparison} of the main manuscript.) This motivates us to sample an injective homomorphism $\x: F \rightarrow \G$ according to the distribution \eqref{eq:def_embedding_F_N_inj}, instead of according to the distribution \eqref{eq:def_embedding_F_N}.

			%%%%%%%

			\subsection{Problem formulation for network dictionary learning (NDL)}
			\label{subsection:NDL_formulation}
			
			The goal of the \textit{NDL problem} is to learn, for a fixed integer $r \ge 1$, a set of $r$ nonnegative matrices $\mathcal{L}_{1},\ldots,\mathcal{L}_{r}$, with size $k \times k$ and Frobenius norms of at most $1$, such that
			\begin{align}\label{eq:NDL_approx}
				A_{\x} \approx a_{1}(\x)\mathcal{L}_{1}+ \cdots + a_{r}(\x)\mathcal{L}_{r}  
			\end{align}
			for each injective homomorphism $\x:F\hookrightarrow \G$ for some coefficients $a_{1}(\x),\ldots ,a_{r}(\x)\ge 0$. 
			For each injective homomorphism $\x:F\rightarrow \G$, this implies that one can approximate the mesoscale patch $A_{\x}$ of $\G$ that is induced by $\x$ as {a}  linear combination of the $r$ matrices $\mathcal{L}_{1},\ldots,\mathcal{L}_{r}$. 
			We say that the tuple {$(\mathcal{L}_{1},\ldots,\mathcal{L}_{r})$} is a \textit{network dictionary} for $\G$, and we say that each $\mathcal{L}_{i}$ is a \textit{latent motif} of $\G$. We identify a network dictionary {$(\mathcal{L}_{1},\ldots,\mathcal{L}_{r})$} with the nonnegative matrix $W\in \R_{\ge 0}^{k^{2}\times r}$ whose $j^{\textup{th}}$ column is the vectorization of the $j^{\textup{th}}$ latent motif $\mathcal{L}_{j}$ for $j\in \{1, \ldots, r\}$. The choice of vectorization $\R^{k\times k}\rightarrow \R^{k^{2}}$ is arbitrary, but we use a column-wise vectorization in Algorithm~\ref{alg:vectorize}. {One can interpret each $\mathcal{L}_{i}$ as the $k$-node weighted network with node set $\{1,\ldots,k\}$ and weight matrix $\mathcal{L}_{i}$. See Figure~\ref{fig:img_ntwk_dict} of the main manuscript for plots of latent motifs as weighted networks.}

			For the latent motifs $\mathcal{L}_{i}$ to be interpretable as subgraphs of $\G$, 
			we require both their entries and
			the coefficients $a_{i}(\x)$ to be nonnegative. 
			The nonnegativity constraint on each latent motif $\mathcal{L}_{i}$ allows one to interpret each $\mathcal{L}_{i}$ as the weight matrix of a $k$-node network. Additionally, because the coefficients $a_{j}(\x)$ are also nonnegative, the approximate decomposition \eqref{eq:NDL_approx} implies that $a_{i}(\x)\mathcal{L}_{i}\lessapprox A_{\x}$. Therefore, if $a_{i}(\x)> 0$, any network structure (e.g., large-degree nodes, communities, and so on) in the latent motif $\mathcal{L}_{i}$ must also exist in $A_{\x}$. Therefore, one can consider the latent motifs as approximate $k$-node subgraphs $\G$ that exhibit `typical' network structures of $\G$ at scale $k$. In the spirit of Lee and Seung~\cite{lee1999learning}, one can view the latent motifs as `parts'\footnote{Lee and Seung~\cite{lee1999learning} discussed a similar nonnegative decomposition in which the $A_{\x}$ are images of faces. In that scenario, the learned factors capture parts of human faces (such as eyes, noses, and mouths).} of a network $\G$.
			
			As a more precise formulation of \eqref{eq:NDL_approx}, consider the stochastic optimization problem
			\begin{align}\label{eq:NDL_exact}
				\argmin_{\substack{\mathcal{L}_{1},\ldots,\mathcal{L}_{r}\in \R_{\ge 0}^{k\times k} \\ {\lVert \mathcal{L}_{1} \rVert_{F},\ldots,\lVert \mathcal{L}_{r} \rVert_{F}\le 1}}}   \E_{\x\sim \pi_{F\hookrightarrow \G}}\left[ \inf_{a_{1}(\x),\ldots ,a_{r}(\x)\ge 0} \left\lVert A_{\x} - \sum_{i=1}^{r} a_{i}(\x)\mathcal{L}_{i}  \right\rVert_{F}  \right]\,,
			\end{align}
			where {$\pi_{F\hookrightarrow \G}$} is the probability distribution that we defined in \eqref{eq:def_embedding_F_N_inj} and $\lVert \cdot\rVert_{F}$ denotes the matrix Frobenius norm.  The choice of the probability distribution {$\pi_{F\hookrightarrow \G}$} for the injective homomorphisms {$\x: F \hookrightarrow \G$} is natural because it becomes the uniform distribution on the set of all {injective} homomorphisms $F \hookrightarrow \G$ when the adjacency matrices of
			%for 
			$\G$ and $F$ are both unweighted. Exactly solving the NDL optimization problem \eqref{eq:NDL_exact} is computationally difficult because the objective function in it
			is non-convex and it is not obvious how to sample an injective homomorphism $F \hookrightarrow \G$ according to the distribution $\pi_{F\hookrightarrow \G}$ that we defined in \eqref{eq:def_embedding_F_N_inj}. In {Appendix}~\ref{section:NDL}, we give an algorithm for NDL that approximately solves \eqref{eq:NDL_exact}.

			%%%%%

			\subsection{Overview of our algorithms and their theoretical guarantees} 
			
			We overview our algorithms and their theoretical guarantees.
			Our main theoretical results (which are all novel) are Theorems~\ref{thm:NDL} and~\ref{thm:NDL2} for NDL and Theorems~\ref{thm:NR} and~\ref{thm:NR2} for NDR. 
			We summarize our algorithms and main results in Table~\ref{table:NDL_comparison_SI}, and we compare and contrast them with the results in~\cite{lyu2020online}.

			\begin{table}[htbp]
				\centering
				\begin{tabular}{c|cccc}
					\thickhline 
					\textbf{NDL} & Sampling  & Convergence & Efficient MCMC \\ 
					\hline
					\rule{0pt}{1.1\normalbaselineskip} Lyu et al.~\cite{lyu2020online} & $k$-walks  & Non-bipartite networks & \xmark \\[0.2cm]
					\rule{0pt}{1.1\normalbaselineskip}  $\begin{matrix} \textup{The present} \\ \textup{work} \end{matrix}$ & $\begin{matrix} \textup{$k$-paths} \\ \textup{(Alg.~\ref{alg:motif_inj}) } \end{matrix}$ & $\begin{matrix} \textup{Non-bipartite networks}  \\ \textup{(Thm.~\ref{thm:NDL}) } \end{matrix}$ \quad $\begin{matrix} \textup{Bipartite networks}  \\ \textup{(Thm.~\ref{thm:NDL2}) } \end{matrix}$   & $\begin{matrix} \checkmark\\ \textup{(Prop.~\ref{prop:approximate_pivot}) } \end{matrix}$ \\[0.1cm]
					\hline
					\thickhline 
				\end{tabular}% 
				\\[0.2cm]
				\begin{tabular}{c|ccccc}
					\thickhline 
					\textbf{NDR} & Sampling  &  \texttt{denoising} & Convergence & Error bound \\ 
					\hline 
					\rule{0pt}{1.2\normalbaselineskip} Lyu et al.~\cite{lyu2020online} & $k$-walks  & \texttt{F} & \xmark & \xmark \\[0.2cm]
					$\begin{matrix} \textup{The present work}  \end{matrix}$  & $\begin{matrix} \textup{$k$-walks} \\ \textup{$k$-paths} \end{matrix}$    &  {\texttt{F}, \texttt{T}} & $\begin{matrix} \checkmark\\  \textup{ (Thm.~\ref{thm:NR}\textbf{(i)}{--}\textbf{(ii)}) } \\ \textup{ (Thm.~\ref{thm:NR2}\textbf{(i)}{--}\textbf{(ii)}) }  \end{matrix}$ & $\begin{matrix} \checkmark\\ \textup{ (Thm.~\ref{thm:NR}\textbf{(iii)}) } \\ \textup{ (Thm.~\ref{thm:NR2}\textbf{(iii)}) }  \end{matrix}$  \\ 
					\hline
					\thickhline 
				\end{tabular}%
				\caption{A comparison of the algorithms and theoretical contributions of the present work {with those in}~\cite{lyu2020online}. {In the table for NDR, $\texttt{denoising}$ refers to the Boolean variable in {our} NDR algorithm (see Algorithm~\ref{alg:network_reconstruction}). {The special case of the} NDR algorithm with $k$-walk sampling (i.e., $\texttt{InjHom}=\texttt{F}$) and $\texttt{denoising}=\texttt{F}$ is
				 the network-reconstruction algorithm in~\cite{lyu2020online}.
				 }
				}
				\label{table:NDL_comparison_SI}
			\end{table}
			
			\medskip
			
			\begin{description}[leftmargin=0.7cm, topsep=0.2pt, itemsep=0.1cm]
				\item[Algorithm~\ref{alg:NDL}:] Given a network $\G$, the NDL algorithm (see Algorithm~\ref{alg:NDL}) computes a sequence $(W_{t})_{t\ge 0}$ of network dictionaries {(which take the form of $k^{2}\times r$ matrices)} of latent motifs.
				
				\vspace{0.1cm}
				\item[Algorithm~\ref{alg:network_reconstruction}:] Given a
				network $\G$, a network dictionary $W$, 
				the NDR algorithm (see Algorithm~\ref{alg:network_reconstruction}) computes a sequence of weighted networks $\G_{\textup{recons}}$.
				
				\vspace{0.1cm}
				\item[Theorem~\ref{thm:NDL}:] Given a non-bipartite network $\G$ and a choice of the parameters in Algorithm~\ref{alg:NDL}, we prove that the sequence $(W_{t})_{t\ge 0}$ of network dictionaries converges almost surely to the set of stationary points of the 
				{objective} function in \eqref{eq:NDL_exact}.

				\vspace{0.1cm}
				\item[Theorem~\ref{thm:NDL2}:] Given a bipartite network $\G$ and a choice of the parameters in Algorithm~\ref{alg:NDL}, we prove a convergence result that is analogous to the one in Theorem~\ref{thm:NDL}.

				\vspace{0.1cm}
				\item[Theorem~\ref{thm:NR}:] Given a non-bipartite target network $\G$ and a network dictionary $W$, we show that \textbf{(i)} the sequence of weighted reconstructed networks $\G_{\textup{recons}}$ that we 
				obtain using the NDR algorithm (see Algorithm~\ref{alg:network_reconstruction}) converges almost surely to some limiting network and \textbf{(ii)} we obtain a closed-form expression for the weight matrix of this limiting network. 
				 We also show that \textbf{(iii)} a suitable Jaccard reconstruction error between the original network $\G$ and the limiting reconstructed network satisfies 
					\begin{align}
						\textup{Jaccard reconstruction error} \le \frac{\textup{mesoscale approximation error} }{2(k-1)} \, ,
					\end{align}
					where $k$ denotes the mesoscale parameter (i.e., the number of nodes in a $k$-chain motif) and the mesoscale approximation error
					is the mean $L_{1}$ distance between the $k \times k$ mesoscale patches of $\G$ and their nonnegative linear approximations from the latent motifs in $W$.
				
				\vspace{0.1cm}
				\item[Theorem~\ref{thm:NR2}:] We show a convergence result that is {analogous} to the one in Theorem~\ref{thm:NR2} for a bipartite target network $\G$.

			\end{description}	
			
			%%%%

			%%%%
			
			\section{Markov-Chain Monte Carlo (MCMC) Motif-Sampling Algorithms}
			\label{section:MCMC}

			In {Appendix}~\ref{subsection:NDL_formulation}, we mentioned that one of the main difficulties in solving the optimization problem \eqref{eq:NDL_exact} is to directly sample an injective homomorphism $\x:F\hookrightarrow \G$ from the distribution $\pi_{F\hookrightarrow \G}$ (see \eqref{eq:def_embedding_F_N_inj}). To overcome this difficulty, we use (and extend to one new variant) the Markov-chain Monte Carlo (MCMC) algorithms that were introduced in~\cite{lyu2023sampling}. Although the algorithms in~\cite{lyu2023sampling} apply to networks with edge weights and/or node weights, we only use the simplified forms of them that we give in Algorithms~\ref{alg:pivot} and~\ref{alg:glauber}. Algorithm~\ref{alg:pivot} with the option $\texttt{AcceptProb} = \texttt{Approximate}$ is a novel algorithm of the present paper. Using these MCMC sampling algorithms, we generate a sequence $(\x_{t})_{t\ge 0}$ of homomorphisms $F\rightarrow \G$ such that the distribution of $\x_{t}$ converges to $\pi_{F\rightarrow \G}$ under some mild conditions on $\G$ and $F$~\cite[Thm. 5.7]{lyu2020online}.

			{\color{black} 
				Once we have 
				an iterative motif-sampling algorithm, we combine it with a standard rejection-sampling algorithm to shrink the support of the probability distribution $\pi_{F\rightarrow \G}$ to injective homomorphisms 
				$\x:F\hookrightarrow \G$. 
				(See, e.g.,~\cite{glasserman2004monte} for background information about rejection sampling.) The key idea is to ignore (i.e., `reject') the unwanted instances in the trajectory $(\x_{t})_{t\ge 0}$. In our case, the instances that we reject are the homomorphisms $\x_{t}$ that are not injective. That is, we reject situations in which $\x_{t}(1),\ldots,\x_{t}(k)$ are not all distinct. 
				In Algorithm~\ref{alg:motif_inj}, we state our algorithm for injective motif sampling.

				\begin{algorithm}
					\renewcommand{\thealgorithm}{IM}
					\begin{algorithmic}[1]
						\caption{\!\!. Injective MCMC motif sampling} \label{alg:motif_inj}
						\State \textbf{Input:} Network $\G=(V,A)$, motif $F=([k],A_{F})$, and homomorphism $\mathbf{x}:F\rightarrow \G$
						
						\vspace{0.1cm}
						\State \textbf{While:} $\mathbf{x}:F\rightarrow \G$ is injective (i.e., $\x'(1),\ldots,\x'(k)$ are distinct)
						\State \qquad Update $\x$ to a new homomorphism $F\rightarrow \G$ using either Algorithm~\ref{alg:pivot} or Algorithm~\ref{alg:glauber}
						
						\State \textbf{Output:} Injective homomorphism $\mathbf{x}:F\hookrightarrow \G$
						
					\end{algorithmic}
				\end{algorithm}

				Algorithm~\ref{alg:motif_inj} restricts the state space of the MCMC motif-sampling algorithms (see Algorithms~\ref{alg:glauber} and~\ref{alg:pivot}) {to} the subset of injective homomorphisms $F\hookrightarrow \G$. 
				By the strong Markov property, this restriction is a Markov chain. Therefore, Algorithm~\ref{alg:motif_inj} is an MCMC algorithm for the injective motif-sampling problem. (See Proposition~\ref{prop:inj_motif_conv} for details.) When there are only a few injective homomorphisms $F \hookrightarrow \G$ relative to the number of homomorphisms $F\rightarrow \G$, the rejection step (i.e., the while loop) in Algorithm~\ref{alg:motif_inj} may take a while to terminate. ({The number of rejections until termination} is inversely proportional to the probability that a random homomorphism under the probability distribution $\pi_{F\rightarrow \G}$ is injective.) 
				For example, this is the case when $\G$ is the network \dataset{Coronavirus PPI} and $F$ is a $k$-chain motif with $k \ge 21$. }

			We now give more details about the MCMC algorithms that we employ for non-injective motif sampling. In the \textit{pivot chain} (see Algorithm~\ref{alg:pivot} with $\texttt{AcceptProb} = \texttt{Exact}$), for each update $\x_{t}\mapsto \x_{t+1}$, the {\textit{pivot}} $\x_{t}(1)$ first performs a random-walk move on $\G$ (see \eqref{eq:RWkernel_G}) to move to a new node $\x_{t+1}(1)\in V$. It accepts this move with a suitable acceptance probability (see \eqref{eq:pivot_chain_acceptance_prob}) according to the Metropolis--Hastings algorithm (see, e.g.,~\cite[Sec. 3.2]{levin2017markov}), so the stationary distribution is	exactly the target distribution. 
			After the move $\x_{t}(1)\mapsto \x_{t+1}(1)$, we sample each $\x_{t+1}(i)\in V$ for $i \in \{2,3,\ldots,k\}$ successively from the conditional distribution \eqref{eq:pivot_conditional}. This ensures that the desired distribution $\pi_{F\rightarrow \G}$ in \eqref{eq:def_embedding_F_N} is a stationary distribution of the resulting Markov chain. In the \textit{Glauber chain} {(see Algorithm \ref{alg:glauber})}, we select one node $i\in [k]$ of $F$ uniformly at random, and we resample its location $\x_{t}(i)\in V(\G)$ at time $t$ to 
			$\x_{t+1}(i)\in V$ 
			from the conditional distribution \eqref{eq:marginal_distribution_glauber} (see Figure~\ref{fig:NDL_alg}\textbf{a}).
			See~\cite[Sec. 3.3]{levin2017markov} for discussions of the Metropolis--Hastings algorithm and Glauber-chain MCMC sampling.

			\begin{algorithm}
				\renewcommand{\thealgorithm}{MP}
				\begin{algorithmic}[1]
					\caption{\!\!. Pivot-Chain Update} \label{alg:pivot}

					\State \textbf{Input:} Symmetric network $\G=(V,A)$, a $k$-chain motif $F=([k],A_{F})$, and homomorphism $\mathbf{x}:F\rightarrow \G$
					\vspace{0.1cm}
					
					\State \textbf{Parameters:} $\mathtt{AcceptProb}\in \{\mathtt{Exact}, \mathtt{Approximate}\}$

					\vspace{0.1cm}
					\State \textbf{Do:}  $\x'\leftarrow \x$ 
					\State \qquad \textbf{If} $\sum_{c\in V}  A(\x(1),c)=0$: \textbf{Terminate}   
					
					\State \qquad \textbf{Else}:
					\State \quad \qquad Sample $\iota \in V$ at random from the distribution 
					
					\begin{align}\label{eq:RWkernel_G}
						p_{1}(w) = \frac{ A(\x(1),w) }{ \sum_{c\in V}  A(\x(1),c)  }\,\,, \quad w\in V
					\end{align}
					
					\State \quad\qquad Compute the acceptance probability $\alpha\in [0,1]$ by 
					\begin{align}\label{eq:pivot_chain_acceptance_prob}
						\alpha \leftarrow  
						\begin{cases}
							\min \left\{ \frac{ \sum_{c\in [n]} A^{k-1}(\iota,c) }{ \sum_{c\in [n]} A^{k-1}(\x(1),c) }\frac{\sum_{c\in V}  A(c,\x(1))}{\sum_{c\in V}  A(\x(1),c)},\,  1\right\}\,, & \text{if\quad  $\mathtt{AcceptProb}=\mathtt{Exact}$} \\[10pt]
							\min \left\{ \frac{\sum_{c\in V}  A(c,\x(1))}{\sum_{c\in V}  A(\x(1),c)},\,  1\right\}\,, &  \text{if\quad  $\mathtt{AcceptProb} = \mathtt{Approximate}$} 
						\end{cases}
					\end{align}
					
					\State \quad\qquad Sample $U\in [0,1]$ uniformly at random, independently of everything else
					
					\State \quad\qquad $\iota \leftarrow \x(1)$ if $U>\lambda$ and $\x'(1)\leftarrow \iota$ 
					
					\State \quad\qquad \textbf{For $i=2,3,\ldots,k$}:
					\State \quad\qquad \quad Sample $\x'(i)\in V$ from the distribution
					
					\begin{align}\label{eq:pivot_conditional}
						p_{i}(w) = \frac{A(\x(i-1), w)}{\sum_{c\in V} A(\x(i-1), c)}\,\,, \quad w\in V
					\end{align}

					\State \textbf{Output:} Homomorphism $\mathbf{x}':F\rightarrow \G$
				\end{algorithmic}
			\end{algorithm}

			\begin{algorithm}
				\renewcommand{\thealgorithm}{MG}
				\begin{algorithmic}[1]
					\caption{\!\!. Glauber-Chain Update} \label{alg:glauber}
					\State \textbf{Input:} Network $\G=(V,A)$, a $k$-chain motif $F=([k],A_{F})$, and homomorphism $\mathbf{x}:F\rightarrow \G$
					
					\vspace{0.1cm}
					\State \textbf{Do:} Sample $v\in [k]$ uniformly at random 
					\State \qquad Sample $z \in V$ at random from the distribution 
					\begin{align}\label{eq:marginal_distribution_glauber}
						p(w) = \frac{1}{Z}\left( \prod_{u\in [k]} A(\mathbf{x}(u), w)^{A_{F}(u,v)} \right) \left( \prod_{u\in [k]} A(w, \mathbf{x}(u))^{A_{F}(v,u)} \right)\,, \quad w \in V
					\end{align}
					\qquad where $Z=\sum_{c\in V} \left(\prod_{u\in [k]} A(\mathbf{x}(u), c)^{A_{F}(u,v)} \right) \left( \prod_{u\in [k]} A(c, \mathbf{x}(u))^{A_{F}(v,u)}\right)$ is the normalization constant
					\State \qquad Define a new homomorphism $\mathbf{x}':F\rightarrow \G$ by $\mathbf{x}'(w)=z$ if $w=v$ and $\mathbf{x}'(w)=\mathbf{x}(w)$ otherwise 
					
					\State \textbf{Output:} Homomorphism $\mathbf{x}':F\rightarrow \G$

				\end{algorithmic}
			\end{algorithm}

			Let $\Delta$ denote the maximum degree (i.e., number of neighbors) of the nodes of the network $\G=(V,A)$. We also say that the network $\G$ itself has a maximum degree of $\Delta$.
			The Glauber chain has an efficient local update (with a computational complexity of $O(\Delta)$). It converges quickly to the stationary distribution $\pi_{F\rightarrow \G}$ 
			{for} networks that are dense enough so that two homomorphisms that differ at one node have a probability of at least $1/(2\Delta)$ to coincide after a single Glauber-chain update. 
			See~\cite[Thm. 6.1]{lyu2023sampling} for a precise statement of this fact.

			The pivot chain (see Algorithm~\ref{alg:pivot} with $\texttt{AcceptProb} = \texttt{Exact}$) has more computationally expensive local updates than the Glauber chain. The pivot chain has a computational complexity of $O(\Delta^{k-1})$ (as discussed in~\cite[Remark 5.6]{lyu2023sampling}), but it converges as fast as a `lazy' random walk on a network. (In a lazy random walk, each move has a chance to be rejected; see~\cite[Thm. 6.2]{lyu2023sampling}.) In our computational experiments, we find that the Glauber chain is slow, especially for sparse networks (e.g., for \dataset{COVID PPI}, which has an edge density of $0.0010$, and \dataset{UCLA}, which has an edge density of $0.0037$) and that the pivot chain is too expensive to compute for chain motifs with $k \geq 21$. As a compromise, to simultaneously have low computational complexity and fast convergence (it
			is as fast as the standard random walk), we employ an approximate pivot chain, which is Algorithm~\ref{alg:pivot} with the option $\texttt{AcceptProb} = \texttt{Approximate}$. Specifically, we compute the acceptance probability $\alpha$ in \eqref{eq:pivot_chain_acceptance_prob} only approximately and thereby reduce the computational cost to $O(\Delta)$. Our compromise, which we discuss in the next paragraph, is that the stationary distribution of the approximate pivot chain may be slightly different from our target distribution $\pi_{F\rightarrow \G}$.

			Define a probability distribution $\hat{\pi}_{F\rightarrow \G}$ on the set of all node maps $\x:[k]\rightarrow V$ by
				\begin{align}\label{eq:approx_pivot_statioanary_dist}
					\hat{\pi}_{F\rightarrow \G}(\x) := \frac{\prod_{i=1}^{k} A(\x(i-1),\x(i))}{|V|\sum_{y_{2},\ldots,y_{k}\in V} A(\x(1),y_{2})\prod_{i=3}^{k}A(y_{i-1},y_{i})} \,.
				\end{align} 
				According to Proposition~\ref{prop:approximate_pivot}, the stationary distribution of the approximate pivot chain is \eqref{eq:approx_pivot_statioanary_dist}. The distribution \eqref{eq:approx_pivot_statioanary_dist} is different from the desired target distribution $\pi_{F\rightarrow \G}$. 
			Specifically, $\pi_{F\rightarrow \G}(\x)$ is proportional only to the numerator in \eqref{eq:approx_pivot_statioanary_dist}; the sum in the denominator of  \eqref{eq:approx_pivot_statioanary_dist} is a weighted count of the homomorphisms $\y: F\rightarrow \G$ for which $\y(1)=\x(1)$. Therefore, under $\hat{\pi}_{F\rightarrow \G}$, {we penalize the probability of each homomorphism $\x:F\rightarrow \G$ according} to the number of $k$-walks in $\G$ that start from $\x(1) \in V$. (The exact acceptance probability in \eqref{eq:pivot_chain_acceptance_prob} neutralizes this penalty.) It follows that $\hat{\pi}_{F\rightarrow \G}$ is close to $\pi_{F\rightarrow \G}$ when the $k$-step-walk counts that start from each node in $\G$ do not differ too much for different nodes. {For example, on degree-regular networks like lattices, such counts do not depend on the starting node, and it thus follows that $\hat{\pi}_{F\rightarrow \G}=\pi_{F\rightarrow \G}$.} Nevertheless, despite the potential discrepancy between $\pi_{F\rightarrow \G}$ and $\hat{\pi}_{F\rightarrow \G}$, the approximate pivot chain gives good results for the reconstruction and denoising experiments that we showed in Figures~\ref{fig:network_recons} and~\ref{fig:Figure4} of the main manuscript.

			%%%%

			\section{Algorithm for Network Dictionary Learning (NDL)} 
			\label{section:NDL}
			
			%%%%
			
			\subsection{Algorithm overview and statement}
			
			The essential idea behind our algorithm for NDL (see Algorithm~\ref{alg:NDL}) is as follows. Suppose that we compute all possible injective homomorphisms $\x_{1},\ldots,\x_{M}:F\hookrightarrow \G$ and their corresponding mesoscale patches $A_{\x_{t}}$ for $t \in \{1, \ldots, M\}$. These $M$ mesoscale patches of $\G$ form the data set 
			in which we apply a dictionary-learning algorithm. To do this, we column-wise vectorize each of these $k \times k$ matrices (using Algorithm~\ref{alg:vectorize}) and obtain a $k^{2}\times M$ data matrix $X$, and we then apply nonnegative matrix factorization (NMF)~\cite{lee1999learning} to obtain a $k^{2} \times r$ nonnegative matrix $W$ for some fixed integer $r \ge 1$ to yield an approximate factorization $X \approx WH$ for some nonnegative matrix $H$. From this procedure, we approximate each column of $X$ by the nonnegative linear combination of the $r$ columns of $W$ with coefficients that are given by entries of the $r^{\textup{th}}$ column of $H$. Therefore, if we let $\mathcal{L}_{i}$ be the $k \times k$ matrix that we obtain by reshaping the $i^{\textup{th}}$ column of $W$ (using Algorithm~\ref{alg:reshape}), then {$(\mathcal{L}_{1},\ldots,\mathcal{L}_{r})$} is an approximate solution of \eqref{eq:NDL_exact}. We give the precise meaning of `approximate solution' in Theorems~\ref{thm:NDL} and~\ref{thm:NDL2}.

			\begin{algorithm}
				\renewcommand{\thealgorithm}{{NDL}}
				\begin{algorithmic}[1]
					\caption{\!\!. Network Dictionary Learning (NDL)}\label{alg:NDL}
					
					\State \textbf{Input:} Network $\G=(V,A)$ 
					
					\vspace{0.1cm}
					\State \textbf{Parameters:}  {$F=([k],A_{F})$ (a {$k$-chain} motif)\,,\, $T\in \mathbb{N}$ (the number of iterations)\,,\, $N\in \mathbb{N}$ (the number of homomorphisms per iteration)\,,\, $r\in \mathbb{N}$ (the number of latent motifs)\,, \, $\lambda\ge 0$ ({the coefficient of an $L_{1}$-regularizer)}}
					
					\vspace{0.1cm}
					\State {\textbf{Options:} $\mathtt{MCMC}\in \{\mathtt{Pivot},\, \mathtt{PivotApprox},\, \mathtt{Glauber} \}$ }

					\vspace{0.1cm}
					\State \textbf{Requirement:} There exists at least one injective homomorphism $F\hookrightarrow \G$

					\vspace{0.1cm}
					\State \textbf{Initialization:} 
					\State \quad Sample a homomorphism $\x:F\rightarrow \G$ using the rejection sampling (see Algorithm~\ref{alg:rejection_motif}) 
					\vspace{0.1cm}
					\State \quad $W=$ matrix of size $k^{2} \times r$ with independent entries that we sample uniformly from $[0,1]$
					\State \quad $P_{0}=$ matrix of size $r \times r$ whose entries are $0$ 
					\State \quad $Q_{0}=$ matrix of of size $r \times k^{2}$ whose entries are $0$ 
					\vspace{0.1cm}
					\State \textbf{For $t=1,2,\ldots,T$:}
					\vspace{0.1cm}
					\State \quad \textit{MCMC update and sampling mesoscale patches}:
					\State \qquad Successively generate $N$ injective homomorphisms $\x_{N(t-1)+1}, \x_{N(t-1)+2},\ldots, \x_{Nt}$ {by applying Algorithm~\ref{alg:motif_inj} with}
					\begin{align}
						\text{Algorithm~\ref{alg:pivot} with $\mathtt{AcceptProb}=\mathtt{Exact}$}  &\qquad \text{if \quad $\mathtt{MCMC}=\mathtt{Pivot}$} \\ 
						\text{Algorithm~\ref{alg:pivot} with $\mathtt{AcceptProb}=\mathtt{Approximate}$}  &\qquad \text{if \quad $\mathtt{MCMC}=\mathtt{PivotApprox}$} \\ 
						\text{Algorithm~\ref{alg:glauber} with $\mathtt{AcceptProb}=\mathtt{Glauber}$}  &\qquad \text{if \quad $\mathtt{MCMC}=\mathtt{Glauber}$} 
					\end{align}
					
					\State \qquad \textbf{For $s = N(t-1)+1, \ldots, Nt$}:
					
					\State \qquad \quad  $A_{\mathbf{x}_{s}}\leftarrow$ $k\times k$ mesoscale patch of $\G$ that is induced by $\x_{s}$ (see \eqref{eq:def_A_F_patch})

					\State \qquad  \quad $X_{t}\leftarrow$ $k^{2}\times N$ matrix whose $j^{\textup{th}}$ column is $\mathtt{vec}(A_{\mathbf{x}_{\ell}})$ with  $\ell=N(t-1)+j$
					\Statex \qquad \qquad \qquad (where $\mathtt{vec}(\cdot)$ denotes the vectorization operator that we defined in Algorithm~\ref{alg:vectorize}) 
					
					\vspace{0.1cm}
					\State \quad \textit{Single iteration of online nonnegative matrix factorization}: 
					\begin{align}\label{eq:def_ONMF}
						\hspace{1cm}
						\begin{cases}
							H_{t} \leftarrow \argmin_{H\in \R_{\ge 0}^{r\times N}} \lVert X_{t} - W_{t-1}H \rVert_{F}^{2} + \lambda \lVert H \rVert_{1}\qquad (\text{using Algorithm~\ref{algorithm:spaser_coding}}) \\
							P_{t} \leftarrow (1-t^{-1})P_{t-1}+ t^{-1} H_{t}H_{t}^{T} \\
							Q_{t} \leftarrow (1-t^{-1})Q_{t-1}+ t^{-1} H_{t}X_{t}^{T} \\
							W_{t} \leftarrow \argmin_{W\in \mathcal{C}^{\textup{dict}} \subseteq \R_{\ge 0}^{k^{2}\times r}} \left(  \tr(W P_{t} W^{T})  - 2\,\tr(W Q_{t})\right) \qquad (\text{using Algorithm~\ref{algorithm:dictionary_update}})\,,
						\end{cases}
					\end{align}
					\qquad \qquad where $\mathcal{C}^{\textup{dict}} = \{ W\in \R_{\ge 0}^{k^{2}\times r} \,|\, \text{columns of $W$ have a Frobenius norm of at most $1$}\}$
					
					\State \textbf{Output:} Network dictionary $W_{T}\in \R_{\ge 0}^{k^{2}\times r}$
				\end{algorithmic}
			\end{algorithm}

			The scheme in the paragraph above requires one to store all $M$ mesoscale patches, entailing a memory requirement that is at least of order $k^{2}M$, where $M$ is the number of all possible injective homomorphisms $F\hookrightarrow \G$. Because $M$ grows with the number of nodes of $\G$, we need unbounded memory to handle arbitrarily large networks. To address this issue, Algorithm~\ref{alg:NDL} implements the above scheme in the setting of `online learning', where subsets (so-called `minibatches') of data arrive in a sequential manner and one does not store previous subsets of the data before processing new subsets. Specifically, at each iteration $t \in \{1,2,\ldots, T\}$, we process a sample matrix $X_{t}$ that is smaller than the full matrix $X$ and includes only $N\ll M$ mesoscale patches, where one can take $N$ to be independent of the network size. Instead of using a standard NMF algorithm for a fixed matrix~\cite{lee2001algorithms}, we use an `online' NMF algorithm~\cite{mairal2010online, lyu2020online} that one can use on sequences of matrices, where the intermediate dictionary matrices $W_{t}$ that we obtain by factoring the sample matrix $X_{t}$ typically improves as we iterate
			(see~\cite{mairal2010online, lyu2020online}).
			In Algorithm~\ref{alg:NDL}, we give a complete implementation of the NDL algorithm.

			\begin{figure*}[h]
				\centering
				\includegraphics[width=1 \linewidth]{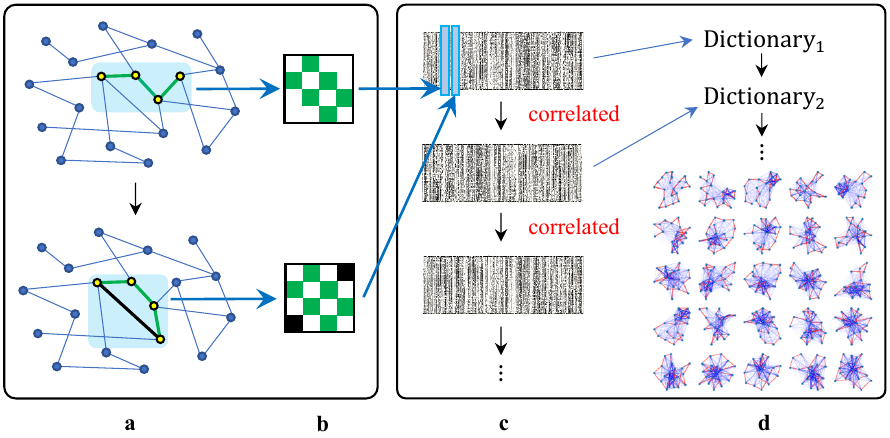}
				\caption{Illustration of our network dictionary learning (NDL) algorithm (see Algorithm~\ref{alg:NDL}). (\textbf{a}) Homomorphisms $\x_{t}: F\rightarrow \G$ from a $k$-chain motif into a target network $\G$ evolve as a Markov chain to yield a sequence of $k$-chain subgraphs (the green edges) in $\G$. (\textbf{b}) Each copy of the $k$-chain motif in $\G$ induces a $k$-node subgraph (i.e., the mesoscale patch $A_{\x_{t}}$ that we defined in \eqref{eq:def_A_F_patch}). (\textbf{c}) We form a sequence of matrices $X_{t}$ of size $k^{2}\times N$, where the $N$ columns of each $X_{t}$ are vectorizations of the $N$ consecutive $k \times k$ mesoscale patches in panel {\textbf{b}}. 
				The matrices $X_{1},X_{2},\ldots$ are correlated with each other because we sample their columns from the Markov chain $\x_{t}$.
				%\commHLnew{Is ``although'' here is the right connection word? Random objects a priori need not be independent so they shouldn't be expected to be independent.}	 
					 (\textbf{d}) Using an online nonnegative matrix factorization (NMF) 
						algorithm, we progressively learn the desired number of latent motifs as the data matrix of mesoscale patches $X_{t}$ arrives. 
						%{\bf map: "are random": from what probability distribution? We should reminder readers by pointing to the definition that specifies this}
						%{map: cosmetic: (1) in panel (b), please align the three vertical dots with the vertical arrow [two instances of this fix]; in (b), the words "Dictionary" should be vertically aligned; (3) reading this today, I am very confused by the label "correlation"; what do we mean? [this one may need a new sentence in the caption; we need to clarify it in some way, as this is very confusing]}
						%{\bf map: are the arrows in column (c) fully aligned with each other [relative horizontal spacing]? They look like they may be very slightly misaligned}
						%\commHLnew{Figure revised for alignment}
				}
				\label{fig:NDL_alg}
			\end{figure*}

			We now explain how our NDL algorithm works. It combines one of the three MCMC algorithms --- a pivot chain (in which we use Algorithm~\ref{alg:pivot} with $\mathtt{AcceptProb} = \mathtt{Exact}$), an approximate pivot chain (in which we use Algorithm~\ref{alg:pivot} with $\mathtt{AcceptProb} = \mathtt{Approximate}$), and a Glauber chain (in which we use Algorithm~\ref{alg:glauber}) --- for injective motif sampling that we presented in {Appendix}~\ref{section:MCMC} with the online NMF algorithm from~\cite{lyu2020online}. Suppose that we have {an undirected and unweighted graph} $\G=(V,A)$ and a $k$-chain motif $F=([k],A_{F})$. Furthermore, assume that we satisfy the requirement in Algorithm~\ref{alg:NDL} that there exists at least one injective homomorphism $F\hookrightarrow \G$. At each iteration $t \in \{1,2,\ldots,T\}$, the injective MCMC motif-sampling algorithm generates a sequence $\x_{s}: F \rightarrow \G$ of $N$ injective homomorphisms and corresponding mesoscale patches $A_{\x_{s}}$ (see Figure~\ref{fig:NDL_alg}\textbf{a}). We summarize this sequence in the $k^{2} \times N$ data matrix $X_{t}$. The online NMF algorithm in \eqref{eq:def_ONMF} learns a nonnegative factor matrix $W_{t}$ of size $k^{2}\times r$ by improving the previous factor matrix $W_{t-1}$ by using the new data matrix $X_{t}$. It is an `online' NMF algorithm because it factorizes a sequence $(X_{t})_{t \in \{1, \ldots, T\}}$ of data matrices, rather than a single matrix as in conventional NMF algorithms~\cite{lee2001algorithms}. As it proceeds, the algorithm only needs to store auxiliary matrices $P_{t}$ and $Q_{t}$ of fixed sizes $r\times r$ and $r\times k^{2}$, respectively; it does not need the previous data matrices $X_{1},\ldots, X_{t-1}$. Therefore, NDL is efficient in memory and scales well with network size. It is also applicable to time-dependent networks because of its online nature, although we do not study such networks in the present paper.

			In \eqref{eq:def_ONMF}, we solve convex optimization problems to find matrices $H_{t}\in \R^{r\times N}$ and $W_{t}\in \R^{k^{2}\times r}$.
			The subproblem in \eqref{eq:def_ONMF} of computing $H_{t}$ is a `coding problem'. Given two matrices $X_{t}$ and $W_{t-1}$, we seek to find a factor matrix (i.e., a `code matrix') $H_{t}$ such that $X_{t}\approx W_{t-1}H_{t}$. The parameter $\lambda \ge 0$ is an $L_{1}$-regularizer, which encourages $H_{t}$ to have a small $L_{1}$ norm. One can solve the coding problem efficiently by using Algorithm~\ref{algorithm:spaser_coding} or one of a variety of existing algorithms (e.g., layer-wise adaptive-rate scaling (LARS)~\cite{efron2004least}, LASSO~\cite{tibshirani1996regression}, or feature-sign search~\cite{lee2007efficient}). The second and third lines in \eqref{eq:def_ONMF} update the `aggregate matrices' $P_{t-1}\in \R^{r\times r}$ and $Q_{t-1}\in \R^{r\times k^{2}}$ by taking {a weighted} average {of them} with the new information $X_{t}H_{t}^{T}\in \R^{r\times r}$ and $H_{t}X_{t}^{T}$, respectively. We weight the old aggregate matrices by $1-t^{-1}$ and the new information by $t^{-1}$. By induction, 
			 $P_{t} = t^{-1}\sum_{s=1}^{t} H_{s}H_{s}^{T}$ and $Q_{t} = t^{-1}\sum_{s=1}^{t} H_{t}X_{t}^{T}$. We use the updated aggregate matrices, $P_{t}$ and $Q_{t}$, in the subproblem in \eqref{eq:def_ONMF} of computing $W_{t}$. 
		The subproblem in \eqref{eq:def_ONMF} of computing $W_{t}$ is a constrained quadratic problem; we can solve it using projected gradient descent (see Algorithm~\ref{algorithm:dictionary_update}). In all of our experiments, we take the compact and convex constraint set $\mathbb{R}_{\ge 0}^{k^{2}\times r}$ to be the set of $W\in \R_{\ge 0}^{k^{2}\times r}$ whose columns have a Frobenius norm of at most $1$ (as required in \eqref{eq:NDL_exact}).

			%%%%%%

			\subsection{Dominance scores of latent motifs}
			\label{subsection:LMD}

			In this subsection, we introduce a quantitative measurement of the `prevalence' of latent motifs in the network dictionary $W_{T}$ that we compute using NDL (see Algorithm~\ref{alg:NDL}) for a network $\G$.

				Given a network $\G$ and a $k$-chain motif, recall that the output of the NDL algorithm is a network dictionary $W_T$
				of $r$ latent motifs $\mathcal{L}_{1},\ldots,\mathcal{L}_{r}$ of size $k\times k$. Recall as well that the NDL algorithm yields data matrices $X_{1}, \ldots, X_{T}$ of size $k^{2} \times N$. Suppose that we have code matrices $H_{1}^{\star},\ldots,H_{T}^{\star}$ such that $X_{t} \approx W_{T}H_{t}^{\star}$ for all $t \in \{1,\ldots,T \}$. More precisely, we let 
				\begin{align}\label{eq:H_star}
					H_{t}^{\star} := \argmin_{H\ge 0} (\lVert X_{t} - W_{T}H \rVert^{2} + \lambda \lVert H \rVert_{1})\,, 
				\end{align}
				where we take the $\argmin$ over all $H\in \R_{\ge 0}^{k^{2}\times N}$ The columns of $H_{t}^{\star}$ encode how to nonnegatively combine the latent motifs in $W_{T}$ to approximate the mesoscale patches in $X_t \in \R_{\ge 0}^{k^{2}\times N}$,  {so the rows of $H_{t}^{\star}$ encode the linear coefficients of each latent motif in $W_{T}$ that we use to approximate the columns of $X_{t}$.} Consequently, the means of the Euclidean norms of the rows of $H_{t}$ for each $t\in \{1,\ldots,T\}$
					encode the mean prevalences in $\G$ of the latent motifs in $W_{T}$. This motivates us to consider the mean Gramian matrix~\cite{horn2012matrix}
					\begin{align}
						P_{T}^{\star} := \frac{1}{T} \sum_{t=1}^{T} H_{t}^{\star}(H_{t}^{\star})^{T}  \in \R^{r\times r}\,.
					\end{align}
					The square root of the diagonal entries of $P_{T}^{\star}$ yield the mean prevalences in $\G$ of the latent motifs in $W_{T}$. Accordingly, for each $i \in \{1,\ldots, r\}$,
									we define the \textit{dominance score} of the latent motif $\mathcal{L}_{i}$ to be $\sqrt{P_{T}^{\star}(i,i)}$.

			Computing $P_{T}^{\star}$ requires us to store the previous data matrices $X_{1},\ldots,X_{T}$ and to determine $H_{1}^{\star},\ldots,H_{T}^{\star}$ by solving \eqref{eq:H_star} for $t\in \{1,\ldots,T\}$. {This way of computing $P_{T}^{\star}$ is very expensive because of its extensive
			memory and computational requirements.} 
			To address this issue, we instead use the aggregate matrix $P_{T}$ that we compute as part of Algorithm~\ref{alg:NDL}. 
			We then do not require an
			extra computation. 
			Note that  
			\begin{align}
				P_{T} = \frac{1}{T} \sum_{t=1}^{T} H_{t}H_{t}^{T}\,,
			\end{align}
			where $H_{t}=\argmin_{H\ge 0} (\lVert X_{t} - W_{t-1}H \rVert^{2} + \lambda \lVert H \rVert_{1})\in \R_{\ge 0}^{r\times N}$ is the code matrix. {The matrix $P_{T}$ is an approximation of $P_{T}^{\star}$ because the defining equation of $H_{t}$ is the same as that of $H_{t}^{\star}$ in \eqref{eq:H_star} with $W_{T}$ replaced by $W_{t-1}$.} 
			The approximation error of using $P_{T}$ instead of $P_{T}^{\star}$
			vanishes as $T\rightarrow \infty$ under mild conditions. 
			Specifically, under the hypotheses of Theorems~\ref{thm:NDL} and~\ref{thm:NDL2}, the network dictionary $W_{t}$ converges almost surely to some limiting dictionary.
			It follows that $\lVert P_{T}^{\star} - P_{T} \rVert_{F}\rightarrow 0$ almost surely as $T \rightarrow \infty$.

			\subsection{Latent motifs of networks at various mesoscales}
			\label{subsection:various_mesoscales}

			As we discussed in Appendix~\ref{subsection:LMD}, we associate a scalar `dominance score' to each latent motif to measure its total contribution in our reconstruction of the sampled $k$-node subgraphs. 
			In Figure~\ref{fig:latent_motifs_multiscale_1}, we show the two most-dominant latent motifs (i.e., the two with the largest dominance scores) that we learn from each of 
			the example networks at various scales (specifically, for $k = 6$, $k = 11$, $k = 21$, and $k = 51$) when we use a dictionary with $r = 25$ latent motifs.

			\begin{figure*}[h]
				\centering
				\includegraphics[width= \linewidth]{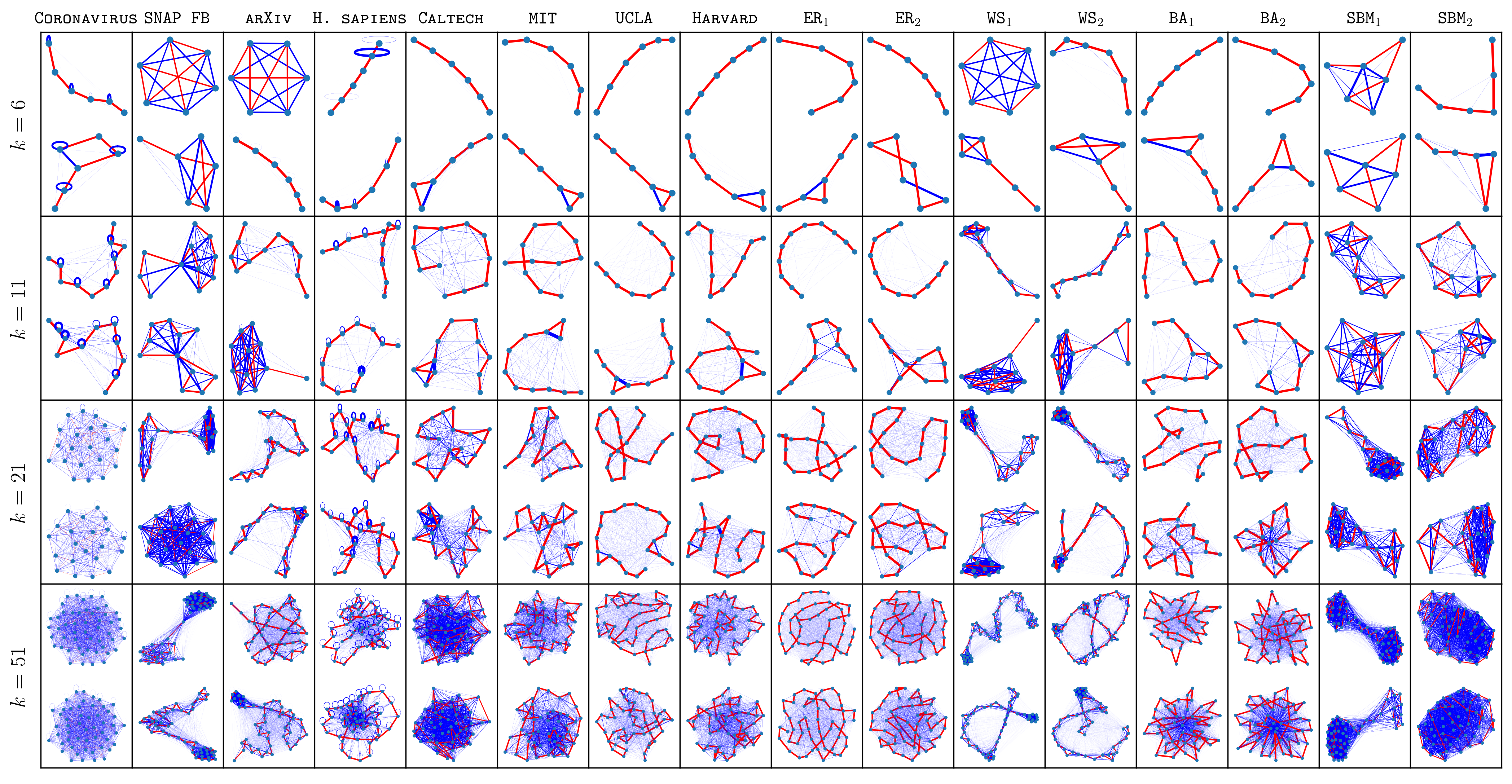}
				\vspace{-0.5cm}
				\caption{The latent motifs that we learn from our 16 networks (eight real-world networks and eight synthetic networks, which include two distinct instantiations of each of four random-graph models) at four different scales (specifically, for $k=6$, $k = 11$, $k = 21$, and $k = 51$), have distinct mesoscale structures in the networks. Using NDL, we learn network dictionaries of $r = 25$ latent motifs with $k$ nodes for each of the 16 networks.  For each network at each scale, we show the (top) first and (bottom) second most-dominant latent motif from each dictionary. See {Appendix}~\ref{subsection:LMD} 
					for details about how we measure latent-motif dominance. 
				}
				\label{fig:latent_motifs_multiscale_1}
			\end{figure*}

			For each network, as we increase the scale parameter $k$, various mesoscale structures emerge in the latent motifs in Figure~\ref{fig:latent_motifs_multiscale_1}. For instance, \dataset{SNAP FB}, \dataset{arXiv}, and \dataset{WS}$_{1}$ all have fully connected top (i.e., most-dominant) latent motifs at scale $k = 6$ but their second most-dominant latent motifs are distinct. In \dataset{SNAP FB}, \dataset{Caltech}, and \dataset{MIT} at scales $k \in \{6,11, 21\}$ and in the BA networks at scales $k\in \{11,21,51\}$, the two most-dominant latent motifs in Figure~\ref{fig:latent_motifs_multiscale_1} have nodes that are adjacent to many other nodes in the latent motif. Hubs (i.e., nodes that are adjacent to many other nodes) are characteristic of both BA networks (which have heavy-tailed degree distributions)~\cite{barabasi1999emergence} and most social networks (which typically have heavy-tailed degree distributions)~\cite{newman2018}. We also observe hubs in the network dictionaries of the latent motifs of the Facebook networks \dataset{UCLA} and  \dataset{Harvard} (see Figure~\ref{fig:all_dictionaries_1}).

			In Figure~\ref{fig:Figure_boxcompare} of the main manuscript, we saw that the community sizes (i.e., the numbers of nodes) in latent motifs reflect the community sizes of actual subgraphs in a network. 
		The type of community structure that we examine is different from typical network community structure. For example, consider the WS networks.
	The top latent motif of the network \dataset{WS}$_{1}$ at scale $k = 6$ is fully connected, but the top latent motif of \dataset{WS}$_{2}$ is not fully connected because of its larger rewiring probability. At larger scales (i.e., for larger $k$), both WS networks have latent motifs with multiple communities. The WS networks have locally densely connected nodes on a ring of nodes and random `shortcut' edges that can connect distant nodes of the ring. Therefore, when one samples a $k$-path uniformly at random, it is very likely to use at least one shortcut edge. When a $k$-path uses a shortcut edge, we expect the resulting 
induced subgraph to have two distinct densely connected communities. 
This local `community structure' in the WS networks is rather different than
standard types of community structure~\cite{porter2009communities,fortunato2016}. Although we do observe such community structure in subgraphs that are induced by $k$-paths 
(see, e.g., Figure~\ref{fig:subgraphs} of the main manuscript), this observation does not imply that the entire node set of the WS networks is partitioned into a few communities.
We also see the difference between our mesoscale structures and community structure by examining the latent motifs of the SBM networks at different scales in Figure~\ref{fig:latent_motifs_multiscale_1}. The SBM networks have three (equal-sized) communities by construction, but their latent motifs do not have three communities at any of the scales, because the
uniformly sampled $k$-paths do not always intersect with all three communities.
For example, the six $20$-paths from \dataset{SBM}$_{1}$ in Figure~\ref{fig:subgraphs} of the main manuscript intersect with only one or two of the network's planted communities.

				\subsection{Community sizes in subgraph samples and latent motifs}
				\label{sec:community_boxplot}
				
				By comparing the $21$-node latent motifs of \dataset{UCLA} and \dataset{Caltech} (we extract $r = 25$ of each) in Figure~\ref{fig:img_ntwk_dict} of the main manuscript, we observe that
				most latent motifs of \dataset{Caltech} have larger communities than those of \dataset{UCLA}. To what extent does the community structure of the latent motifs carry over to the subgraph samples in these networks? Latent motifs are $k$-node networks with nonnegative edge weights, so we can examine this question quantitatively by performing community detection using a standard approach on these $k$-node networks.

				%\vspace{-0.2cm}
				\begin{figure*}[h]
					\centering
					\includegraphics[width= \linewidth]{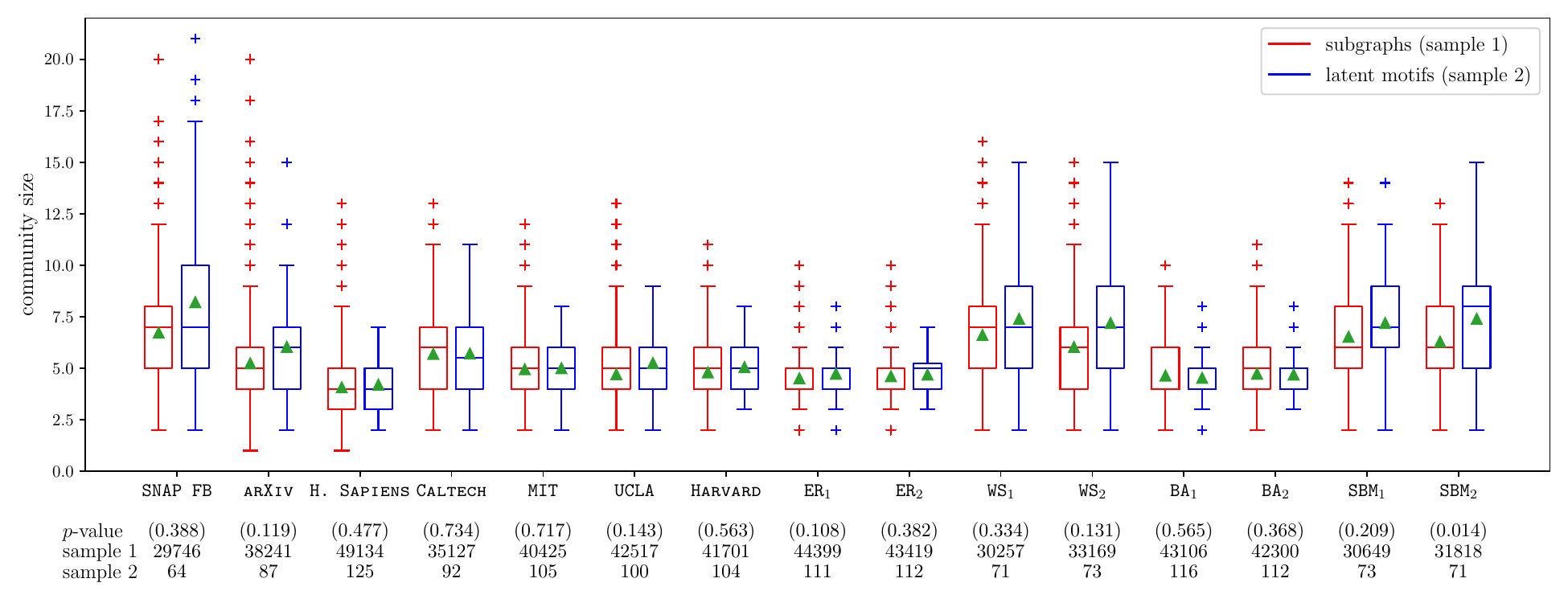}
					\vspace{-0.8cm}
					\caption{ 
						A comparison of box plots of community sizes for 10,000 
						sampled subgraphs that are induced by uniformly random paths with $k = 21$ nodes (in red) to corresponding box plots from $r = 25$ latent motifs of $k = 21$ nodes for various real-world and synthetic networks. We obtain communities of the subgraphs and latent motifs by using the Louvain modularity-maximization algorithm~\cite{blondel2008fast}. The triangles inside the boxes indicate the sample means. Under each network label, we show the $p$-value of Mood's median test \cite{brown1951median} and the number of samples in each population. 
					}
					\label{fig:Figure_boxcompare}
				\end{figure*}

				%\noindent 
				To detect communities, we use the locally greedy Louvain method for modularity maximization~\cite{blondel2008fast}. For most of our example networks, the community-size statistics of the learned latent motifs are close approximations of the corresponding statistics for the subgraph samples from the networks. 
				In Figure~\ref{fig:Figure_boxcompare}, we compare box plots of community sizes of 10,000-node subgraphs that are induced by uniformly randomly sampled $k$-paths to the corresponding box plots from community detection of $r = 25$ latent motifs of various networks. In our calculations, the median community sizes in the subgraphs and latent motifs differ by 2 for the network \dataset{SBM}$_{2}$; differ by 1 for \dataset{arXiv}, \dataset{WS}$_{2}$, and \dataset{SBM}$_{1}$; differ by $0.5$ for \dataset{Caltech}; and coincide for the other networks. Our experiments also demonstrate that there is no statistically significant difference between the medians of the two samples, except for \dataset{SBM}$_{2}$ at significance level $0.1$. See Appendix \ref{section:experimental_details} for more details.

				%%%%%%

			%%%%%%%
			
			\section{Algorithm for Network Denoising and Reconstruction (NDR)}
			\label{subsection:NR}
			
			%%%%%
			
			\subsection{Algorithm overview and statement}
			\label{subsection:NDR_1}

			The standard pipeline for image denoising and reconstruction~\cite{elad2006image, mairal2008sparse, mairal2009non} is to uniformly randomly sample a large number of $k \times k$ overlapping patches of an image and then average their associated approximations at each pixel to obtain a reconstructed version of the original image. (See the `Methods' section of the main manuscript for more details about image reconstruction.) A network analog of this pipeline proceeds as follows. Given a network $\G=(V,A)$, a $k$-chain motif 
			{$F = ([k], A_{F})$, and a network dictionary with latent motifs $(\mathcal{L}_{1},\ldots,\mathcal{L}_{r})$, we compute a weighted network $\G_{\textup{recons}}=(V,A_{\textup{recons}})$. To do this, we first uniformly randomly sample a large number $T$ of (not necessarily injective) homomorphisms $\x_{t}: F\rightarrow \G$ and determine the corresponding mesoscale patches $A_{{\x_{0}}},\ldots,A_{\x_{T}}$ using \eqref{eq:def_A_F_patch}. We then approximate each mesoscale patch $A_{\x_{t}}$ by a nonnegative linear combination $\hat{A}_{\x_{t}}$ of the latent motifs $\mathcal{L}_{i}$. Finally, for each ${x,y}\in V$, we define $A_{\textup{recons}}({x,y})$ as the mean of $\hat{A}_{\x_{t}}(a,b)$ over all $t=\{{0},\ldots,T\}$ and all $a,b\in \{1,\ldots,k\}$ such that $\x_{t}(a)={x}$ and $\x_{t}(b)={y}$.

			Our network denoising and reconstruction (NDR) algorithm (see Algorithm~\ref{alg:network_reconstruction}) uses the idea in the preceding paragraph. 
			Suppose that we have a network $\G=(V,A)$, a $k$-chain motif $F=([k],A_{F})$, and a network dictionary $W$ that consists of $r$ nonnegative $k \times k$ matrices $\mathcal{L}_{1},\ldots,\mathcal{L}_{r}$.  We provide two options to reconstruct $\G$. In one option ($\mathtt{InjHom} = \texttt{F}$), we use uniformly random homomorphisms from the distribution $\pi_{F\rightarrow \G}$ in \eqref{eq:def_embedding_F_N}. In the other option ($\mathtt{InjHom} = \texttt{T}$), we use only the injective homomorphisms, so
			 we instead use the distribution $\pi_{F\hookrightarrow \G}$ in \eqref{eq:def_embedding_F_N_inj}. The latter option has a larger computational cost, but it has better theoretical properties 
		 	for the NDL algorithm (see Algorithm~\ref{alg:NDL}). 
			 To sketch how the NDR algorithm with $\mathtt{InjHom} = \texttt{F}$ works, suppose that we sample homomorphisms $\x_{0}, \ldots,\x_{T}$ from the distribution $\pi_{F\rightarrow \G}$. For each $t\ge 0$, we approximate the mesoscale patch $A_{\x_{t}}$ (see \eqref{eq:def_A_F_patch}) by a nonnegative linear combination of latent motifs $\mathcal{L}_{i}$ and we then take the {mean} of the values of each entry $A(a,b)$ for all $t \in \{1, \ldots, T\}$. However, because sampling a homomorphism $\x_{t}:F\rightarrow \G$ from $\pi_{F\rightarrow \G}$ is not as straightforward as uniformly randomly sampling $k \times k$ patches of an image, we generate a sequence $(\x_{t})_{t\in \{0, \ldots,T\}}$ of homomorphisms using an MCMC motif-sampling algorithm (see Algorithms~\ref{alg:glauber} and~\ref{alg:pivot}). The NDR algorithm with $\mathtt{InjHom} = \texttt{T}$ works similarly, but it uses injective homomorphisms that are generated from the injective MCMC motif-sampling algorithm (see 
				Algorithm~\ref{alg:motif_inj}).

			\begin{algorithm}
				\renewcommand{\thealgorithm}{{NDR}}
				\begin{algorithmic}[1]
					\caption{\!\!. Network Denoising and Reconstruction (NDR)}\label{alg:network_reconstruction}
					
					\State \textbf{Input:} Network $\G=(V,A)$, network dictionary $W \in \mathbb{R}_{\ge 0}^{k^{2}\times r}$ 
					
					\vspace{0.1cm}
					\State \textbf{Parameters:}  {$F=([k],A_{F})$ (a {$k$-chain} motif)\,,\, $T\in \mathbb{N}$ (number of iterations)\,,\, $\lambda\ge 0$ (the coefficient of an $L_{1}$-regularizer)\,,\, $\theta\in [0,1]$ (an edge threshold) } 
					
					\vspace{0.1cm}
					\State \textbf{Options:} $\mathtt{denoising}\in \{\texttt{T}, \texttt{F}\}$\,,\,  $\mathtt{MCMC}\in \{\mathtt{Pivot},\, \mathtt{PivotApprox},\, \mathtt{Glauber} \}$\,,\, $\mathtt{InjHom}\in \{\texttt{T}, \texttt{F}\}$

					\vspace{0.1cm}
					\State \textbf{Requirement:} There exists at least one homomorphism $F\rightarrow \G$
					\vspace{0.1cm}
					\State \textbf{Initialization:} 
					\State \quad $A_{\textup{recons}}\,,\,A_{\textup{count}}:V^{2}\rightarrow \{0\}$ (matrices with $0$ entries) 
					\State \quad Sample a (not necessarily injective) homomorphism $\x_{0}:F \rightarrow \G$ using the rejection-sampling algorithm in Algorithm~\ref{alg:rejection_motif} 
					
					\vspace{0.1cm}
					\State \textbf{For $t=1,2,\ldots,T$:}
					\vspace{0.1cm}
					\State \quad \textit{MCMC update and mesoscale patch extraction}:
					\State \qquad $\x_{t}\leftarrow$ Updated homomorphism that we obtain by applying 
					\begin{align}
						\text{Algorithm~\ref{alg:pivot} with $\mathtt{AcceptProb}=\mathtt{Exact}$}  &\qquad \text{if \quad $\mathtt{MCMC}=\mathtt{Pivot}$} \\ 
						\text{Algorithm~\ref{alg:pivot} with $\mathtt{AcceptProb}=\mathtt{Approximate}$}  &\qquad \text{if \quad $\mathtt{MCMC}=\mathtt{PivotApprox}$} \\ 
						\text{Algorithm~\ref{alg:glauber} with $\mathtt{AcceptProb}=\mathtt{Glauber}$}  &\qquad \text{if \quad$\mathtt{MCMC}=\mathtt{Glauber}$} 
					\end{align}
					(If $\texttt{InjHom}=\mathtt{T}$, set $\x_{t}\leftarrow$ Updated injective homomorphism by applying Algorithm~\ref{alg:motif_inj} with the 
						specified MCMC algorithm.)
					\vspace{0.1cm}
					
					\State \qquad $A_{\mathbf{x}_{t}}\leftarrow $ $k\times k$ mesoscale patch of $\G$ that is induced by $\x_{t}$ (see \eqref{eq:def_A_F_patch})
					\State \qquad  $X_{t}\leftarrow $  $k^{2}\times 1$ matrix that we obtain by vectorizing $A_{\mathbf{x}_{t}}$ (using Algorithm~\ref{alg:vectorize})
					
					\State \quad \textit{Mesoscale reconstruction}: 
					\State \quad 	\vspace{-0.3cm}
					\begin{align}
						\begin{cases}
							\text{$\widetilde{X}_{t}\leftarrow X_{t}$ and $\widetilde{W}\leftarrow W$} & \text{if\quad $\mathtt{denoising}=\texttt{F}$} \\
							\text{$\widetilde{X}_{t}\leftarrow (X_{t})_{\textup{off}}$ and $\widetilde{W}\leftarrow (W)_{\textup{off}}$ using Algorithm~\ref{alg:off_chain}} & \text{if\quad $\mathtt{denoising}=\texttt{T}$}
						\end{cases}
					\end{align}
					\vspace{0.1cm}
					\State \qquad $H_{t}\leftarrow \argmin\limits_{H\in \mathbb{R}_{\ge 0}^{r\times 1}} (\|\widetilde{X}_{t} - \widetilde{W} H\|_F^2 + \lambda\lVert H\rVert_1)$ and $\hat{X_{t}}\leftarrow \widetilde{W}H_{t}$ 
					
					\State \label{line:mesoscale_recons} \!\!\qquad $\hat{A}_{\mathbf{x}_{t};W}\leftarrow$  $k\times k$ matrix that we obtain by reshaping the $k^{2}\times 1$ matrix $\hat{X}_{t}$ using Algorithm~\ref{alg:reshape}  
					\vspace{0.1cm}
					\State \quad \textit{Update 
					%global 
						{reconstruction}:}
					\State \qquad \textbf{For} $a,b \in \{1, \ldots, k\}$: 
					\State \qquad \quad \textbf{If}  $ \left( \text{$\texttt{denoising}=\texttt{F}$ or $A_{F}(a,b)=0$} \right)$:
					\begin{align}\label{eq:NR_update_step}
						A_{\textup{count}}(\x_{t}(a),\x_{t}(b))&\leftarrow A_{\textup{count}}(\x_{t}(a),\x_{t}(b))+1 \\
						j&\leftarrow A_{\textup{count}}(\x_{t}(a),\x_{t}(b)) \\
						A_{\textup{recons}}(\x_{t}(a),\x_{t}(b))&\leftarrow (1-j^{-1})A_{\textup{recons}}(\x_{t}(a),\x_{t}(b)) + j^{-1} \hat{A}_{\x_{t};W}(\x_{t}(a),\x_{t}(b))
					\end{align}
					
					\State \textbf{Output:} Reconstructed network $\G_{\textup{recons}} = (V, A_{\textup{recons}})$ 
				\end{algorithmic}
			\end{algorithm}

			\begin{algorithm}
				\renewcommand{\thealgorithm}{2a}
				\begin{algorithmic}
					\caption{\!\!. Off-Chain Projection}\label{alg:off_chain}
					\State \textbf{Input:} Matrix $Y \in \mathbb{R}^{k^{2}\times m}$\,,\, {$k$-chain} motif $F=([k], A_{F})$
					
					\vspace{0.1cm}
					\State \textbf{Do:} Let $Y'$ be a $k\times k\times m$ tensor that we obtain by reshaping each column of $Y$ using Algorithm~\ref{alg:reshape}  
					
					\State \qquad Let $Y''$ be a $k\times k\times m$ tensor that we obtain from $Y'$ by calculating
					\begin{align}
						Y''(a,b,c) = Y'(a,b,c)\one(A_{F}(a,b)=0) \quad \text{for all} \quad a,b \in \{1, \ldots, k\}\,\, \text{and} \,\, c \in \{1, \ldots, m\}
					\end{align}
					\State \qquad Let $Y_{\textup{off}}$ be a $k^{2}\times m$ matrix that we obtain from $Y''$ by vectorizing each of its slices using Algorithm~\ref{alg:vectorize}: $Y''[:,:,c]$ for all $c \in \{1, \ldots, m\}$
					
					\State \textbf{Output:} Matrix $(Y)_{\textup{off}}\in \mathbb{R}^{k^{2}\times m}$
				\end{algorithmic}
			\end{algorithm}

				For network reconstruction, it is important to sample homomorphisms $\x_{1},\ldots,\x_{T}:F\rightarrow \G$ that cover an entire network $\G$ (or at least a large portion of it). {{A node $x$ of $G$ is} `covered' by the homomorphisms $\x_{1},\ldots,\x_{T}$ if it is contained in the image of $\x_{t}$ for some $t \in \{1,\ldots,T\}$.} In Proposition~\ref{prop:inj_hom_coverage}, we show that one can cover all nodes of $\G$ by the images of injective homomorphisms $F\hookrightarrow \G$ if $2(k-1)\le \textup{diam}(\G)$ if $\G$ is symmetric and connected. However, even when this inequality is satisfied, we have to sample more homomorphisms using one of our MCMC motif-sampling algorithms (see Algorithms~\ref{alg:glauber} and~\ref{alg:pivot}) to cover the same portion of the network $\G$ than when we use all sampled homomorphisms. This gives a computational advantage to using $\mathtt{InjHom} = \texttt{F}$ instead of $\mathtt{InjHom} = \texttt{T}$ in our NDR algorithm.

				Despite the 	computational disadvantage of using $\mathtt{InjHom} = \texttt{T}$, this choice has a nice theoretical advantage. Recall that the NDL algorithm (see Algorithm~\ref{alg:NDL}) computes latent motifs $\mathcal{L}_{1},\ldots,\mathcal{L}_{r}$ from mesoscale patches $A_{\y_{1}},\ldots, A_{\y_{M}}$ of $\G$ for injective homomorphisms $\y_{t}:F\rightarrow \G$  for $t \in \{1,\dots, M\}$
				such that $\mathcal{L}_{1},\ldots,\mathcal{L}_{r}$
				give an approximate solution of \eqref{eq:NDL_exact}. 
				Consequently, when linearly approximating a mesoscale patch $A_{\x}$ of $\G$, these latent motifs $\mathcal{L}_{1},\dots,\mathcal{L}_{r}$ are less effective if the homomorphism $\x$ is non-injective than if $\x$ is injective.  We need to linearly approximate multiple mesoscale patches $A_{\x_{1}},\ldots,A_{\x_{T}}$ for homomorphisms $\x_{t}:F\rightarrow \G$ for $t \in \{1,\dots, T\}$, so we expect the reconstructed network that we obtain using Algorithm~\ref{alg:network_reconstruction} with only injective homomorphisms to be more accurate than when using all sampled homomorphisms. In Theorem~\ref{thm:NR}\textbf{(iii)}, we obtain an upper bound for the Jaccard reconstruction error (which we define in \eqref{eq:JD}). This upper bound is optimized by using the latent motifs that we obtain with the NDL algorithm using only injective homomorphisms. 
				Therefore, the NDR algorithm with the option $\mathtt{InjHom} = \texttt{T}$ has a theoretical advantage over the NDR algorithm with the option $\mathtt{InjHom} = \texttt{F}$.
			
			As in the first line of \eqref{eq:def_ONMF}, the problem of determining $H_{t}$ in line 15 of Algorithm~\ref{alg:network_reconstruction} is a standard convex problem, which one can solve by using Algorithm~\ref{algorithm:spaser_coding}. There are two variants of the NDR algorithm. The variant is specified by the Boolean variable $\mathtt{denoising}$. The NDR algorithm with $\mathtt{denoising} = \texttt{F}$ is identical to the network-reconstruction algorithm in~\cite{lyu2020online}, except for the thresholding step. The NDR algorithm with $\mathtt{denoising} = \texttt{T}$ is a new variant of NDR that we present in this paper for network denoising.
			
			%%%%%

			\subsection{Further discussion of the denoising variant of the NDR algorithm}
			\label{subsection:denoising}

			We now give a detailed discussion of Algorithm~\ref{alg:network_reconstruction} with $\texttt{denoising}=\texttt{T}$ for {network-denoising applications}. Recall that our network-denoising problem  is to reconstruct a true network $\G_{\textup{true}}=(V,A)$ from an observed network $\G_{\textup{obs}}=(V,A')$. The scheme that we used to produce Figure~\ref{fig:Figure4} is the following:
			\begin{enumerate}[label=\textbf{D.\arabic*}]
				\item Learn a network dictionary $W\in \R_{\ge 0}^{k^{2}\times r}$ from an observed network $\G_{\textup{obs}}=(V,A)$ using NDL (see Algorithm~\ref{alg:NDL}). \label{eq:NDL_classification_scheme0}
				
				\item Compute a reconstructed network $\G_{\textup{recons}}=(V,A_{\textup{recons}})$ using NDR (see Algorithm~\ref{alg:network_reconstruction}) with inputs $\G_{\textup{obs}}=(V,A)$ and $W$.  \label{eq:NDL_classification_scheme1}
				
				\item \label{classification_scheme}  Fix an edge threshold $\theta\in [0,1]$. If $\G_{\textup{obs}}$ is $\G_{\textup{true}}$ with additive (respectively, subtractive) noise, we classify each edge (respectively, nonedge) {$\{x,y\}$} as `positive' if and only if  $A_{\textup{recons}}(x,y)>\theta$.  \label{eq:NDL_classification_scheme2}
			\end{enumerate}

			As was discussed in~\cite[Remark 4]{lyu2020online}, a limitation of using NDR with $\texttt{denoising}=\texttt{F}$ for network denoising is that the meaning of successful classification for subtractive-noise cases is an `inversion' of its meaning for additive-noise cases. Specifically, Lyu et al.~\cite{lyu2020online} used network denoising with NDR for additive noise, but it is necessary to classify each nonedge $\{x,y\}$ as `positive' if $A_{\textup{recons}}(x,y) < \theta$. In the experiment in Figure~\ref{fig:denoising_caltech_comparison} of the main manuscript, we noted that NDR with $\texttt{denoising} = \texttt{F}$ may assign large weights to false edges and small weights to true edges. Our NDR algorithm with $\texttt{denoising} = \texttt{T}$ addresses this directionality issue and allows us to use the unified classification scheme above for both additive and subtractive noise.

			Using NDR with $\texttt{denoising} = \texttt{T}$ allows one to handle an issue that occurs when denoising additive noise for sparse real-world networks but does not arise in the image-denoising setting. Suppose that we obtain $\G_{\textup{obs}}$ by adding 
			false edges to a sparse unweighted network $\G_{\textup{true}}$. The on-chain entries of the mesoscale patches $A_{\x}$ are always equal to $1$. Therefore, the latent motifs that we learn from $\G_{\textup{obs}}$ have the same on-chain entries. (See, e.g., Figure~\ref{fig:img_ntwk_dict} of the main manuscript.) 
			Consequently, {linearly approximating} the mesoscale patches $A_{\x}$ of $\G_{\textup{obs}}$ {using} the latent motifs that we learn from $\G_{\textup{obs}}$ cannot distinguish between true and false on-chain entries. 
			Furthermore, because $\G_{\textup{obs}}$ is sparse, there are many fewer positive off-chain entries of $A_{\x}$ than on-chain entries of $A_{\x}$. Therefore, in a network reconstruction, linear approximations of $A_{\x}$ that use the latent motifs are likely to assign larger weights to on-chain entries of $A_{\x}$ than to off-chain entries. 
			The resulting reconstruction of $\G_{\textup{obs}}$ is thus similar to $\G_{\textup{obs}}$, and it is very hard to detect false edges in $\G_{\textup{obs}}$. Using the option $\texttt{denoising} = \texttt{T}$ prevents this issue by ignoring all on-chain entries both for each sampled mesoscale patch $A_{\x}$ and for each latent motif in the network dictionary $W$ that we use for denoising.

			%%%%%%%%%%%%%%%%%%%%
			
			\section{Experimental Details}
			\label{section:experimental_details}
			
			%%%%%

			\subsection{Figure~\ref{fig:anomaly_detection}}\label{subsection:anomaly_detection}

			The network in Figure \ref{fig:anomaly_detection}{\textbf{a}} has 100 nodes and 216 edges. 
			%{map: what do you mean by "observed" network?}
			Of these edges, 164 are from the original network (see Figure \ref{fig:anomaly_detection}\textbf{b}) and the remaining 51 are anomalous edges (see Figure \ref{fig:anomaly_detection}{\textbf{c}}) that we generate using the $G(N,p)$ Erd\H{o}s--R\'{e}nyi (ER) network model with the same node set and an edge probability of $p = 0.01$. 
			Specifically, we independently connect each pair of nonadjacent nodes by an anomalous edge with probability $0.01$. We learn the $r = 25$ latent motifs in Figure \ref{fig:anomaly_detection}{\textbf{d}}
			%were learned by
			using the NDL algorithm (see Algorithm~\ref{alg:NDL}) for a $k$-chain motif $F=([k], A_{F})$ at scale $k = 9$ for
			 $T=50$ iterations,
			$N = 100$ injective homomorphisms per iteration  (so each iteration consists of sampling $N$ injective homomorphisms and applying the online NMF update \eqref{eq:def_ONMF}), an $L_{1}$-regularizer with coefficient $\lambda=1$, and the MCMC motif-sampling algorithm $\mathtt{MCMC}=\mathtt{PivotApprox}$.
			
			%{map: because Figure 1 is the first figure, it seems like we should have that explicitness here; the first description is the one that should have the most information anyway; we may actually want to say that we're iterating the only NMF algorithm (perhaps this one makes sense to chat about to see what is best?)}
			%commHLnew{I've added some additional info, but we can also discuss to see what's the best.}

			To compute the weighted-network reconstruction of the observed network in Figure \ref{fig:anomaly_detection}{\textbf{e}}, we use the NDR algorithm (see Algorithm~\ref{alg:network_reconstruction}) with a $k$-chain motif $F=([k], A_{F})$ at scale $k = 9$ for $T=10^{5}$ iterations, 
			the 25 latent motifs in Figure \ref{fig:anomaly_detection}{\textbf{d}}, an $L_{1}$-regularizer with coefficient $\lambda=0$ (i.e., no regularization), the MCMC motif-sampling algorithm $\mathtt{MCMC}=\mathtt{PivotApprox}$, and $\mathtt{denoising}=\texttt{F}$.

		To evaluate our results, we split the data into training and test sets, with 50$\%$ of the 164 true edges and 50$\%$ of the 51 anomalous edges in each set. 
			 To maximize classification accuracy, we then determine {an} optimal threshold value $\theta$ to weight the edges in the reconstruction network. Specifically, we classify all edges in the test set as positive if their weight in the reconstruction network exceeded $\theta$, and otherwise negative. In Figure \ref{fig:anomaly_detection}{\textbf{f}}, we show all edges in the weighted reconstruction in Figure \ref{fig:anomaly_detection}{\textbf{e}}} whose weights are at most $\theta$.

			%%%

			\subsection{Figure~\ref{fig:subgraphs}}\label{subsection:fig_subgraph}

			In Figure~\ref{fig:subgraphs}, we show six subgraphs that are induced by approximately uniformly random samples of $k$-paths with $k = 20$ (red edges) from the networks \dataset{Caltech}, \dataset{UCLA}, \dataset{ER}$_{1}$, \dataset{BA}$_{2}$, \dataset{WS}$_{2}$, and \dataset{SBM}$_{1}$.  
			To sample such $k$-paths, we use Algorithm~\ref{alg:pivot} with $\mathtt{AcceptProb} = \mathtt{Approximate}$. The red edges in each subgraph designate edges in the sampled $k$-paths (i.e., on-chain edges), and the blue edges designate edges (the off-chain edges) that connect nonadjacent nodes on sampled paths. 
			
			%%%%%

			\subsection{Figure~\ref{fig:recons_illustration}}\label{subsection:recons_illustration}

			In Figure~\ref{fig:recons_illustration}, we illustrate our low-rank network-reconstruction process using two sets of latent motifs. 
			%{\bf map: the previous sentence says that there is one process because of the singular form of the word; by contrast, the sentence below says that there are two processes; therefore, these two sentences directly contradict each other; this needs to be reconciled.}
			For both sets of latent motifs %network-reconstruction processes,
			to compute the weighted reconstructions in \textbf{a4} and \textbf{b2}, we use 2500 $k$-paths (with $k = 7$) that we sample using the MCMC motif-sampling algorithm $\mathtt{MCMC} = \mathtt{PivotApprox}$.

			%{{\bf map: (1) this section and the one above have an issue: One of them says that it's about Figure 2 and then cites Figure 3 in the text; the other one does the reverse. At minimum, the wrong figures are cited. Are the descriptions in the correct section, or are they swapped as well?; (2) given that I am finding more issues with the figure labeling and descriptions in this Appendix, it would be good if you also do a careful additional double-check to see if there are any other issues that I haven't yet noticed}}
			%\commHLnew{I wasn't careful enough when I as adding details for Figure 3. I've double-checked other figure lables.}
			
			%%%%
			
			\subsection{Figure~\ref{fig:img_ntwk_dict}}\label{subsection:F1}
			
			In Figure~\ref{fig:img_ntwk_dict}, we illustrate latent motifs that we learn from the networks \dataset{UCLA} and \dataset{Caltech} and we compare these latent motifs to the elements of an image dictionary. The image in Figure~\ref{fig:img_ntwk_dict}\textbf{a} is from the collection \dataset{Die Graphik Ernst Ludwig Kirchners bis 1924, von Gustav Schiefler Band I bis 1916} (Accession Number 2007.141.9, Ernst Ludwig Kirchner, 1926). We use this image with permission from the National Gallery of Art in Washington, DC, USA.\footnote{{See \url{https://www.nga.gov/notices/open-access-policy.html} for the open-access policy of the National Gallery of Art.}} We use a $k$-chain {motif} 
				$F=([k], A_{F})$ and a scale $k=21$ for $T=100$ iterations, $N=100$ injective homomorphisms per iteration, $r=25$ latent motifs, an $L_{1}$-regularizer with coefficient $\lambda = 1$, and the MCMC motif-sampling algorithm $\mathtt{MCMC} = \mathtt{PivotApprox}$. The image dictionary for the artwork \dataset{Cycle} in Figure~\ref{fig:img_ntwk_dict} uses an algorithm that is similar to Algorithm~\ref{alg:NDL}, except that we uniformly randomly sample $21 \times 21$ square patches of the image instead of $k \times k$ mesoscale patches of a network.

			%%%%

			%%%%%
			
			\subsection{Figure~\ref{fig:network_recons}}
			\label{subsection:Fig3_SI}

			To generate Figure~\ref{fig:network_recons}, we first apply the NDL algorithm (see Algorithm~\ref{alg:NDL}) to each network in the figure to learn $r = 25$ latent motifs for a $k$-chain motif
			$F=([k], A_{F})$ at scale $k = 21$ for $T=100$ iterations, $N = 100$ injective homomorphisms per iteration, an $L_{1}$-regularizer with coefficient $\lambda=1$, and the MCMC motif-sampling algorithm $\mathtt{MCMC}=\mathtt{PivotApprox}$. For each self-reconstruction $X \leftarrow X$ (see the caption of Figure~\ref{fig:network_recons}), we apply the NDR algorithm (see Algorithm~\ref{alg:network_reconstruction}) to a $k$-chain {motif} 
			$F=([k], A_{F})$ at scale $k = 21$ for $T = \lfloor n\ln n \rfloor$ iterations (where $n$ is the number of nodes of the network), $r = 25$ latent motifs, an $L_{1}$-regularizer with coefficient $\lambda=0$ (i.e., no regularization), the MCMC motif-sampling algorithm $\mathtt{MCMC}=\mathtt{PivotApprox}$, and $\mathtt{denoising}=\texttt{F}$. For each cross-reconstruction $Y \leftarrow X$ (see the caption of Figure~\ref{fig:network_recons}), we apply the NDR algorithm (see Algorithm~\ref{alg:network_reconstruction}) to a $k$-chain motif with the corresponding network $F=([k], A_{F})$ at scale $k = 21$ for $T =(1 + 3 \cdot \one(n<1000)) \lfloor n\ln n \rfloor$ iterations (where $n$ is the number of nodes of the network), the edge threshold $\theta = 0.4$, an $L_{1}$-regularizer with coefficient $\lambda=1$, the MCMC motif-sampling algorithm $\mathtt{MCMC}=\mathtt{PivotApprox}$, and $\mathtt{denoising}=\mathtt{InjHom}=\mathtt{F}$. We use several choices of the number $r$ of latent motifs; we indicate them in the caption of Figure~\ref{fig:network_recons}.
		
	In the main manuscript, we made several claims based on the reconstruction accuracies in Figure~\ref{fig:network_recons} in conjunction with the latent motifs in Figures~\ref{fig:img_ntwk_dict},~\ref{fig:latent_motifs_multiscale_1},~\ref{fig:all_dictionaries_1}, and~\ref{fig:all_dictionaries_3}. We now justify these claims.

\begin{description}[leftmargin=0.7cm, topsep=0.2pt, itemsep=0.1cm]
\item[(1)] The mesoscale structures of the network \dataset{Caltech} are rather different than those of
\dataset{Harvard}, \dataset{UCLA}, and \dataset{MIT} at scale $k = 21$.
\begin{itemize}
	\item In Figure~\ref{fig:network_recons}\textbf{c}, we observe that the accuracy of the cross-reconstruction 
	$X\leftarrow Y$
	is consistently higher for $X\in \{\dataset{UCLA},\, \dataset{Harvard},\,  \dataset{MIT}\}$ than for $X=\dataset{Caltech}$ for all values of $r$. For instance, for $r=9$, we can reconstruct \dataset{UCLA} with more than 90\% accuracy and we can reconstruct \dataset{Harvard} and \dataset{MIT} with more than 80\% accuracy. However, the latent motifs that we learn from \dataset{Caltech} for $r = 9$ gives only about $80\%$ accuracy for reconstructing $\dataset{UCLA}$ and only about $70\%$ accuracy for reconstructing $\dataset{MIT}$ and $\dataset{UCLA}$. This indicates that the mesoscale structures  of $\dataset{Caltech}$ differ significantly from those of
	the other three universities' Facebook networks at scale $k = 21$. Indeed, from Figures~\ref{fig:img_ntwk_dict},~\ref{fig:latent_motifs_multiscale_1}, and~\ref{fig:all_dictionaries_1}, we see that the $r = 25$ latent motifs of \dataset{Caltech} at scale $k = 21$ have larger off-chain entries than those of   \dataset{UCLA}, \dataset{MIT}, and \dataset{Harvard}.
\end{itemize}
\vspace{0.1cm}
\item[(2)] The mesoscale structures of \dataset{Caltech} at scale $k = 21$ are higher-dimensional than those
of the other three 
universities' Facebook networks.
\begin{itemize}
	\item {Consider the cross-reconstructions 
		$\dataset{Caltech} \leftarrow Y$ for $Y \in \{\dataset{UCLA},\, \dataset{Harvard},\, \dataset{MIT}\}$
		in Figure~\ref{fig:network_recons}\textbf{b}.} 
	With $r = 9$, the latent motifs that we
	learn from \dataset{Caltech} have accuracies
	as low as $64$\%. By contrast, the accuracies are
	$80$\% or higher for the self-reconstructions $X \leftarrow X$ for the Facebook networks of the other universities. In other words, $r = 9$ latent motifs at scale $k = 21$ do not approximate the mesoscale structures of \dataset{Caltech} as well as those of the other three universities' Facebook networks. This indicates that the dimension of the mesoscale structures of \dataset{Caltech} at scale $k = 21$ is larger than those of the other three universities' Facebook networks.
\end{itemize}

\vspace{0.1cm}
\item[(3)] The network $\dataset{BA}_{2}$ is better than the networks $\dataset{ER}_{2}$, $\dataset{WS}_{2}$, and $\dataset{SBM}_{2}$ at capturing the mesoscale structures of \dataset{MIT}, \dataset{Harvard}, and \dataset{UCLA} at scale $k = 21$.  However, {for $r \in \{ 9,15,25,49\}$, the network} \dataset{SBM}$_{2}$ captures the mesoscale structures of \dataset{Caltech} better than all but one of the seven other networks in Figure~\ref{fig:network_recons}\textbf{b}. 
(The only exception is \dataset{Caltech} itself.)  
\begin{itemize}
	\item From the reconstruction accuracies for  
	$X\leftarrow Y$
	in Figures~\ref{fig:network_recons}\textbf{b},\textbf{c}, where 
	$Y$ is one of the {four synthetic networks (\dataset{ER$_2$}, \dataset{WS$_2$}, \dataset{BA$_2$}, and \dataset{SBM$_{2}$})}, we observe that the two BA networks have higher accuracies then the networks from the ER and WS models for $Y\in \{\dataset{UCLA}, \dataset{Harvard}, \dataset{MIT} \}$. This suggests that the mesoscale structures of $\dataset{UCLA}, \dataset{Harvard}$, and $\dataset{MIT}$ are more similar in some respects to those of {$\dataset{BA}_{2}$ than to those of $\dataset{ER}_{2}$,  $\dataset{WS}_{2}$, and $\dataset{SBM}_{2}$.} The latent motifs of \dataset{BA$_2$} in Figures~\ref{fig:latent_motifs_multiscale_1} and \ref{fig:all_dictionaries_3} at the scales $k \in \{11, 21\}$ have characteristics that we also observe in \dataset{UCLA}, \dataset{Harvard}, and \dataset{MIT}. (Specifically, they have nodes that are adjacent to many other nodes and off-chain entries that are much smaller --- and hence in lighter shades --- than the on-chain entries.) 
	By contrast, in Figure~\ref{fig:all_dictionaries_4}, we see that the latent motifs of \dataset{ER$_2$} have sparse but seemingly randomly distributed off-chain connections and that the %ones 
	latent motifs for \dataset{WS$_2$} have strongly interconnected communities of about $10$ nodes. 
	These patterns differ from the ones that we observe in the latent motifs for \dataset{UCLA}, \dataset{MIT}, and \dataset{Harvard} (see Figure~\ref{fig:all_dictionaries_1}).  
	For the claim about \dataset{SBM}$_{2}$, observe that the cross-reconstruction accuracy of $\dataset{Caltech} \leftarrow \dataset{SBM}_{2}$ in Figure~\ref{fig:network_recons} 
	is larger than those
	of all other reconstructions $\dataset{Caltech} \leftarrow {Y}$ except ${Y} = \dataset{Caltech}$. 
	Additionally, the theoretical lower bound of the Jaccard reconstruction accuracy for $\dataset{Caltech} \leftarrow \dataset{SBM}_{2}$ in Figure~\ref{fig:recons_bd_plot}\textbf{b}
	is larger than
	{the corresponding lower bounds for} all other
	$\dataset{Caltech} \leftarrow {Y}$ except ${Y} = \dataset{Caltech}$. 
\end{itemize}

\vspace{0.1cm}
\item[(4)] If we uniformly randomly sample a path with $k = 21$ nodes, we are more likely to obtain communities with $10$ or more nodes in an associated induced subgraph for \dataset{Caltech} than for \dataset{UCLA}, \dataset{Harvard}, and \dataset{MIT}.
\begin{itemize}
	\item This observation manifests
	directly in the box plots in Figure~\ref{fig:Figure_boxcompare} for the community sizes in the latent motifs and subgraphs that are induced by 
	$k$-paths. We can also indirectly justify this observation. From the reconstruction accuracies for 
	{$X\leftarrow Y$}
	in Figures~\ref{fig:network_recons}\textbf{b}--\textbf{e}, where  
	{$Y$}
	is one of the four synthetic networks (\dataset{ER$_2$}, \dataset{WS$_2$}, \dataset{BA$_2$}, and \dataset{SBM$_{2}$}), we observe that \dataset{WS$_2$} is better than the BA and ER networks at reconstructing \dataset{Caltech} but that it is one of the worst-performing networks for reconstructing the Facebook networks of the other three universities.
	In other words, the nonnegative linear combinations of the latent motifs of \dataset{WS$_{2}$}
	better approximate the mesoscale patches of \dataset{Caltech} than the mesoscale patches of \dataset{UCLA}, \dataset{Harvard}, and \dataset{MIT}. Recall that most latent motifs of \dataset{WS$_{2}$} at scale $k = 21$ have 
	communities with $10$ or more nodes. It seems that this community structure is more likely to occur in subgraphs that are induced by uniformly random samples of $k$-paths
	in \dataset{Caltech} with $k = 21$ nodes than from such samples in \dataset{UCLA}, \dataset{Harvard}, or \dataset{MIT}. 
\end{itemize}

\end{description}

		%%%%%%	

			\subsection{Figure~\ref{fig:deg_dist_recons_plot}}

				In Figure~\ref{fig:deg_dist_recons_plot}, we compare the degree distributions 
				and the mean local clustering coefficients 
				of the original and the reconstructed networks that use $r$ latent motifs at scale $k = 21$. We conduct this experiment for the five networks in Figure~\ref{fig:network_recons}\textbf{a}. 
				In Figure~\ref{fig:deg_dist_recons_plot}\textbf{a}, we use the 
				unweighted reconstructed networks for \dataset{Caltech} with $r \in \{9,16,25,64\}$ latent motifs that we used to compute the self-reconstruction accuracies in~\ref{fig:network_recons}\textbf{b}. 
				In Figures~\ref{fig:deg_dist_recons_plot}\textbf{b}--\textbf{e}, we use the unweighted reconstructed networks for \dataset{Coronavirus}, \dataset{H. sapiens}, \dataset{SNAP FB}, and \dataset{arXiv} with $r = 25$ latent motifs that we used to compute the self-reconstruction accuracies in Figures~\ref{fig:network_recons}\textbf{b}--\textbf{e}.

			%%%%%%		

			\subsection{Figure~\ref{fig:recons_bd_plot}}
			
				For each experiment 
				{$X\leftarrow Y$}
				in Figures~\ref{fig:recons_bd_plot}\textbf{b}--\textbf{e}, we plot
				\begin{align}\label{eq:fig_bd_plot}
					1 - \frac{ \E_{\x\sim \pi}[ \lVert A_{\x} - \hat{A}_{\x; W} \rVert_{1}  ]  }{2(k-1)}\,,
				\end{align}
				where $k = 21$ and $W$ is the network dictionary of $r = 25$ latent motifs in network 
				{$Y$}
				that we determine using our NDL algorithm (see Algorithm~\ref{alg:NDL}) 
				for a $k$-chain {motif} 
				$F=([k], A_{F})$ for $T=100$ iterations, $N = 100$ injective homomorphisms per iteration, an $L_{1}$-regularizer with coefficient $\lambda = 1$, and the MCMC motif-sampling algorithm $\mathtt{MCMC} = \mathtt{PivotApprox}$. The distribution $\pi$ is the stationary distribution $\hat{\pi}_{F\hookrightarrow \G}$ (see \eqref{eq:approx_pivot_statioanary_dist_inj}) of the injective {MCMC} motif-sampling algorithm (see Algorithm~\ref{alg:motif_inj} and Proposition~\ref{prop:inj_hom_coverage}). 
				For each mesoscale patch $A_{\x}\in \R^{k\times k}$, we compute the linear approximation $\hat{A}_{\x;W}$ (see line 16 of 
				{Algorithm~\ref{alg:network_reconstruction}}). 
				We approximate the expectation in the numerator of \eqref{eq:fig_bd_plot} using a Monte Carlo method.
				Specifically, from the convergence result in Proposition~\ref{prop:inj_motif_conv}, we have 
				\begin{align}\label{eq:monte_carlo_limit}
					\E_{\x\sim \pi}\left[ \lVert A_{\x} - \hat{A}_{\x; W} \rVert_{1}  \right] = \lim_{N\rightarrow \infty} \frac{1}{N} \sum_{t=1}^{N} \lVert A_{\x_{t}} - \hat{A}_{\x_{t}; W} \rVert_{1}\,, 
				\end{align}
				where $(\x_{t})_{t\ge 0}$ is a sequence of injective homomorphisms $F\hookrightarrow \G$ {that we sample} using the injective MCMC motif-sampling algorithm (see Algorithm~\ref{alg:motif_inj}). {We use the finite sample mean  $\frac{1}{N} \sum_{t=1}^{N} \lVert A_{\x_{t}} - \hat{A}_{\x_{t}; W} \rVert_{1}$ {with} $N=10^{4}$ as a proxy of the expectation in the left-hand side {of} \eqref{eq:monte_carlo_limit}. }

			%%%%%
			\subsection{Figure~\ref{fig:Figure4}}
			\label{subsection:fig_denoising_AUC}
			
			{\color{black}
				
				To generate Figure~\ref{fig:Figure4}, we first apply the NDL algorithm (see Algorithm~\ref{alg:NDL}) to each corrupted network in the figure to learn $r = 25$ latent motifs for a {$k$-chain motif} 
				$F=([k], A_{F})$ at scale $k = 21$ for $T = 400$ iterations, $N=\text{1,000}$ homomorphisms per iteration, an $L_{1}$-regularizer with coefficient $\lambda=1$, and the MCMC motif-sampling algorithm $\mathtt{MCMC}=\mathtt{PivotApprox}$. The NDR algorithm (see Algorithm~\ref{alg:network_reconstruction}) that we use to generate the results in Figure~\ref{fig:Figure4}
				uses {{$r \in \{2, 25\}$}} latent motifs for a $k$-chain motif with the corresponding network
				$F=([k], A_{F})$ at scale $k = 21$ for $T = 4 \times 10^5$ iterations for \dataset{H. sapiens} and $T = 2 \times 10^5$ iterations for all other networks, an $L_{1}$-regularizer with coefficient $\lambda = 1$, the MCMC motif-sampling algorithm $\mathtt{MCMC}=\mathtt{PivotApprox}$, {$\mathtt{InjHom}=\texttt{F}$, and $\mathtt{denoising}\in \{ \texttt{T}, \texttt{F} \}$.}

				For Figure~\ref{fig:Figure4}, we do not conduct the denoising experiment for \dataset{Coronavirus PPI} with $-50\%$ noise because the resulting network (with 1,536 nodes and 1,232 edges) cannot be connected. (To be connected, its spanning trees need to have 1,535 edges.)

				We implement several existing network-denoising methods --- the {\sc Jaccard index}, {\sc preferential attachment}, the {\sc Adamic--Adar index}, {\sc spectral embedding}, {\sc DeepWalk}, and {\sc node2vec} --- and compare the performance of our method to those of these existing approaches. 
				Let $\G = (V,A)$ be an original network and let $\G' = (V,A)$ be the associated corrupted network. For our experiments in Figure~\ref{fig:Figure4}, both of these networks are undirected and unweighted. In all cases, we obtain a network $\hat{\G} = (V,\hat{A})$ from $\G' = (V,A)$ without using
				$\G$. For each $x,y \in V$, we compute the `confidence score' $\hat{A}(x,y)$ that the node pair $(x,y)$ is an edge $\{x,y\}$ in the original 
				network $\G$. Let $N(x)$ denote the set of neighbors of node $x$ in $\G'$. (This set includes $x$ itself when there is a self-edge at $x$.)
				For the {\sc {Jaccard index}}, {\sc preferential attachment}, and the {\sc Adamic--Adar index}, we compute the confidence score $\hat{A}(x,y)$ 
				by calculating $|N(x)\cap N(y)|/ | N(x)\cup N(y)|$, $ |N(x)| \cdot |N(y)| $, and $\sum_{z \in N(x)\cap N(y)} 1/\ln |N(z)|$, respectively.
				The {\sc Adamic--Adar index} is not defined for nodes with self-edges, and the networks \dataset{arXiv}, \dataset{Coronavirus}, and \dataset{H. sapiens} have self-edges. Therefore, we do not include self-edges in the network-denoising experiments in Figure~\ref{fig:Figure4} (but we do not remove self-edges for any other experiment).
				
				For {\sc spectral embedding}, {\sc DeepWalk}, and {\sc node2vec}, we first obtain a $128$-dimensional vector representation of the nodes of a network; this is a so-called `node embedding' of the network. 
				We then use this node embedding to obtain vector representations of the edges using binary operations. (We use the Hadamard product; see~\cite{grover2016node2vec} for details.) 
				We then use logistic regression (but one can alternatively employ some other algorithm for binary classification)
				to attempt to detect the false edges. \textsc{Spectral Clustering} uses the top $128$ eigenvectors of the combinatorial Laplacian matrix of $\G'$ to learn vector embeddings of the nodes. (We obtain the combinatorial Laplacian matrix of $\G'$ by subtracting its adjacency matrix from the diagonal matrix of node degrees.)
				See~\cite{tang2011leveraging} for details. {\sc DeepWalk} and {\sc node2vec} first sample sequences of random walks on $\G'$ and then apply the popular
				word-embedding algorithm {\sc word2vec}~\cite{mikolov2013distributed}. In {\sc DeepWalk}, each random walk is a standard random walk on the network $\G'$.
				For {\sc node2vec}, we use the 16 choices of the `return parameter' $p$ and the `in--out parameter' $q$ with $(p,q) \in \{0.25, 0.5, 1, 2\}$.
				{In addition to the random walk-sampling in these two methods, we use the following common choices ({which are the same ones that} were made in~\cite{grover2016node2vec}) for both methods{. Using each node of $\G'$
						as a starting point, we independently sample 10 random walks of 80 steps, a 
						context window size of 10, one stochastic-gradient epoch (i.e., 1 pass through the training data), and 8 workers (i.e., 8 parallel threads for training).
						See~\cite{grover2016node2vec} for details.

						For all approaches, we first split the data into a
						25/25/50 split of training/validation/test sets. We then construct the network $\hat{\G} = (V,\hat{A})$.
						By varying the threshold parameter $\theta$, we construct a receiver-operating characteristic (ROC) curve that consists of points whose horizontal and vertical coordinates are the false-positive rates and true-positive rates, respectively. For denoising noise of type {${-}\textup{ER}$ (respectively, ${+}\textup{ER}$ and ${+}\textup{WS}$)}, we also infer an optimal value of $\theta$ for a 25\% validation set of nonedges (respectively, edges) in $\G'$ with known labels and then use that value of $\theta$ to compute classification measures (such as accuracy and precision) for the test set.

					}

			\subsection{Figure~\ref{fig:img_recons_ex}}\label{subsection:img_recons_ex}
			
				In Figure~\ref{fig:img_recons_ex}, we illustrate cross-reconstruction experiments for images using mesoscale patches of size $21 \times 21$. The image that we seek to reconstruct is \dataset{Woman with a Parasol - Madame Monet and Her Son} (Claude Monet, 1875), which we show in Figure~\ref{fig:img_recons_ex}\textbf{a}. The image in Figure~\ref{fig:img_recons_ex}\textbf{b} is a reconstruction of this image
				using the dictionary with 25 basis images of size $21 \times 21$ pixels in Figure~\ref{fig:img_recons_ex}\textbf{c}, where we choose the color of each pixel uniformly at random from all possible colors (which we represent as vectors in $[0,256]^{3}$ for red--green--blue (RGB) weights). The image in Figure~\ref{fig:img_recons_ex}\textbf{d} is a reconstruction of the image in Figure~\ref{fig:img_recons_ex}\textbf{a} using the dictionary with 25 basis images of size $21 \times 21$ pixels in Figure~\ref{fig:img_recons_ex}\textbf{e}. We learn this basis from the image in Figure~\ref{fig:img_recons_ex}\textbf{f} using NMF~\cite{lee1999learning}. The image in Figure~\ref{fig:img_recons_ex}\textbf{f} is from the collection \dataset{Die Graphik Ernst Ludwig Kirchners bis 1924, von Gustav Schiefler Band I bis 1916} (Accession Number 2007.141.9, Ernst Ludwig Kirchner, 1926). We use the images in Figures~\ref{fig:img_recons_ex}\textbf{a},\textbf{f} with permission from the National Gallery of Art in Washington, DC, USA. Their open-access policy is available at \url{https://www.nga.gov/notices/open-access-policy.html}.

			%%%%
			
			\subsection{Figure~\ref{fig:covid_dict_comparison}}
			
				In Figure~\ref{fig:covid_dict_comparison}, we compare $10^{4}$ subgraphs (we show 33 of them) that are induced by
				{approximately} uniformly random (\textbf{a}) $k$-paths and (\textbf{b}) $k$-walks for the network \dataset{Coronavirus PPI} with $k = 10$. To sample $k$-walks, we use an MCMC
				motif-sampling algorithm (see Algorithm~\ref{alg:pivot} with $\mathtt{AcceptProb} = \mathtt{Approximate}$). 
				To sample $k$-paths, we use the injective MCMC motif-sampling algorithm (see Algorithm~\ref{alg:motif_inj}).

				To obtain the latent motifs in Figure~\ref{fig:covid_dict_comparison}\textbf{c}, we use the NDL Algorithm (see Algorithm~\ref{alg:NDL}) with $\mathtt{MCMC} = \mathtt{PivotApprox}$. To obtain the latent motifs in Figure~\ref{fig:covid_dict_comparison}\textbf{d}, we instead use the NDL algorithm of Lyu et al.~\cite{lyu2020online}. 
				This algorithm is equivalent to our NDL algorithm (see Algorithm~\ref{alg:NDL}) if we use all homomorphisms from the $k$-chain motif $F$ to the network $\G$, rather than only the	injective ones as in Algorithm~\ref{alg:NDL}. For the experiments in Figures~\ref{fig:covid_dict_comparison}\textbf{c},\textbf{d}, we use a $k$-chain motif 
				$F=([k], A_{F})$ with $k = 21$, $T=100$ iterations, $N = 100$ injective homomorphisms per iteration (so we sample a total of $10^{4}$ injective homomorphisms), and an $L_{1}$-regularizer with coefficient $\lambda = 1$.

			\subsection{Figure~\ref{fig:denoising_caltech_comparison}}\label{subsection:fig_caltech_ddenoising}

			{In Figure~\ref{fig:denoising_caltech_comparison}, we show histograms of various statistics of true and false edges for denoising the network \dataset{Caltech} after corrupting it with $50$\% additive noise of type ${+}\textup{ER}$.
				
				We say that a $k$-walk $\x_{t}$ `visits' an edge between distinct nodes $x$ and $y$ if there exist indices $a$ and $b$ such that
				$1 \le a < b \le k$ with $\x_{t}(a) = x$ and $\x_{t}(b) = y$. We define the `distance' between nodes $x$ and $y$ along the $k$-walk $\x_{t}$ that visits the edge between them by the minimum value of $|i - j|$, where $i$ and $j$ are integers in $\{1,\ldots,k\}$ such that $\x_{t}(i)=x$ and $\x_{t}(j)=y$. (See the illustration in Figure~\ref{fig:denoising_caltech_comparison}\textbf{a}.) We then compute the mean of such distances between $x$ and $y$ for all $t\in \{1,\ldots,T\}$ for which
				$\x_{t}$ visits the edge between $x$ and $y$. In Figure~\ref{fig:denoising_caltech_comparison}\textbf{g}, we show the histogram of the mean distances along the $k$-walks $\x_{1},\ldots,\x_{T}$ between the two ends of true edges and the two ends of false edges. 
				
				We compute the $r = 25$ latent motifs in Figure~\ref{fig:denoising_caltech_comparison}\textbf{h} using NDL (see Algorithm~\ref{alg:NDL}) with  
				a $k$-chain motif with the corresponding network
				$F=([k], A_{F})$ for $T = 100$ iterations, $N = 100$ 
				injective homomorphisms per iteration (so we sample a total of $10^{4}$ injective homomorphisms), an $L_{1}$-regularizer with coefficient $\lambda=1$, and the injective MCMC motif-sampling algorithm $\mathtt{MCMC} = \mathtt{PivotApprox}$.

				For our reconstructions of the corrupted network \dataset{Caltech} in Figures~\ref{fig:denoising_caltech_comparison}\textbf{b}--\textbf{f}, which use the corresponding latent motifs in Figures~\ref{fig:denoising_caltech_comparison}\textbf{h}--\textbf{l}, respectively, we apply the NDR algorithm (see Algorithm~\ref{alg:network_reconstruction}) to a $k$-chain motif with the corresponding network
				$F=([k], A_{F})$ at scale $k = 21$ for $T = 2 \times 10^{5}$ iterations,
				$N = 100$ homomorphisms per iteration, $r = 25$ latent motifs,
				an $L_{1}$-regularizer with coefficient $\lambda = 0$ (i.e., no regularization), the MCMC motif-sampling algorithm $\mathtt{MCMC} = \mathtt{PivotApprox}$, $\mathtt{InjHom} = \texttt{F}$, and $\mathtt{denoising}=\texttt{F}$. 
			}

			%%%%%%

			%%%%%%%
			
			\subsection{Figure~\ref{fig:NDL_alg}}
			
			In Figure~\ref{fig:NDL_alg}, {we give a schematic illustration of the NDL algorithm (see Algorithm~\ref{alg:NDL})}. 
			
			\subsection{{Figure}~\ref{fig:latent_motifs_multiscale_1} }
			\label{subsection:dict_figures_SI}
			
			In this figure, we show latent motifs of the examined networks (which we described in the `Methods' section of the main manuscript) using Algorithm~\ref{alg:NDL} with various parameter choices. In each column of this figure, we use a $k$-chain {motif} 
			$F = ([k], A_{F})$ 
			for $T = 100$ iterations, $N = 100$ injective homomorphisms per iteration (so we sample a total of $10^{4}$ injective homomorphisms), an $L_{1}$-regularizer with coefficient $\lambda = 1$, and the injective MCMC motif-sampling algorithm $\mathtt{MCMC} = \mathtt{PivotApprox}$. 
			We specify the number $r$ of latent motifs and the scale $k$ in the caption of Figure~\ref{fig:latent_motifs_multiscale_1}.

			%%%%
			
						%%%%%
			
			\subsection{Figure~\ref{fig:Figure_boxcompare}}\label{subsection:fig_boxplot}

			In Figure~\ref{fig:Figure_boxcompare}, we compare box plots of the community sizes of 10,000 
			sampled subgraphs that are induced by approximately uniformly random paths of $k = 21$ 
			nodes (in red) to the corresponding box plots of $r = 25$ latent motifs of $k = 21$ nodes for various real-world and synthetic networks. To sample these paths, we use Algorithm~\ref{alg:pivot} with $\mathtt{AcceptProb}=\mathtt{Approximate}$.
			{(See the `Methods' section of the main manuscript.)
				We determine latent motifs using NDL (see Algorithm~\ref{alg:NDL}) with a $k$-chain motif
				$F=([k], A_{F})$ for $T = 100$ iterations, $N = 100$ injective homomorphisms per iteration (so we sample a total of $10^{4}$ injective homomorphisms), an $L_{1}$-regularizer with coefficient $\lambda = 1$, and the injective MCMC motif-sampling algorithm $\mathtt{MCMC} = \mathtt{PivotApprox}$. We determine the communities of the subgraphs and the latent motifs using the locally-greedy Louvain algorithm for modularity maximization~\cite{blondel2008fast}.

				We perform statistical testing to compare the community sizes of the subgraphs (`sample 1') and the latent motifs (`sample 2'). We select a uniformly random subset of sample 1 to match the size of
				sample 2 and perform Mood's median test~\cite{brown1951median} to obtain a $p$-value. We repeat this experiment 100 times for each network and report the mean value of the resulting 100 $p$-values in parentheses in Figure \ref{fig:Figure_boxcompare}. A sufficiently small $p$-value indicates that there is statistically significant evidence that the two samples come from populations with distinct medians.

			%%%%%%

					%%%%%%	
					
					\subsection{Figures~\ref{fig:Figure4__PRC_REC}, \ref{fig:Figure_PR_curve}, and \ref{fig:Figure_PR_curve_flipped}}
			
					One summarizes the result of a binary classification using combinations of four quantities: 
					$\text{TP}$ (true positives), which is the number of positives that are classified as positive; $\text{TN}$ (true negatives), which is the number of {negatives} that are classified as negative; $\text{FP}$ (false positives), which is the number of positives that are classified as negative; and $\text{FN}$ (false negatives), which is the number of negatives that are classified as positive. The total number of examples is the sum of these four quantities. Accuracy is $\frac{\text{TP} + \text{TN}}{\text{TP} + \text{TN} + \text{FP} + \text{FN}}$, precision is $\frac{\text{TP}}{\text{TP} + \text{FP}}$, recall is $\frac{\text{TP}}{\text{TP} + \text{FN}}$, negative predictive value (NPV) is $\frac{\textup{TN}}{\textup{TN}+\textup{FN}}$, and specificity is $\frac{\textup{TN}}{\textup{TN}+\textup{FP}}$. 
					By relabeling positives as negatives and negatives as positives, precision becomes NPV and recall becomes specificity. See \cite{parikh2008understanding} for a discussion of NPV and specificity.
					
					%{\bf map: need to deal with the use of the term "ROC"; I guess it should be that they should not be used for this term and should only be used for false-positive and false-negative rates? (I became confused by this based on the terminology that I saw employed in the first revision after the latest referee reports)}
					
					In Figure~\ref{fig:Figure4__PRC_REC}, we show the accuracy, precision, and recall scores of the network-denoising experiments in Figure~\ref{fig:Figure4} at a fixed threshold $\theta$. 
					In Figure~\ref{fig:Figure_PR_curve}, we show the dependence of the precision and recall scores of the network-denoising experiments in Figure~\ref{fig:Figure4} on the threshold $\theta$.  
					For denoising $+$WS and $+$ER noise,
					our approach yields larger AUCs for
					 the precision--recall curves than all of the other examined network-denoising methods.
					  % \commHLnew{If we are using terms like "precision--recall ROC curves" here, then I think we should also use that in the captions of Fig 16 and 17. I've modified them in blue.}
					 Our approach performs competitively  
					 for denoising $-$ER noise, except for the network \dataset{H. Sapiens}.
	  In Figure~\ref{fig:Figure_PR_curve_flipped}, we show the dependence of the NPV and specificity scores of the network-denoising experiments in Figure~\ref{fig:Figure4} on the threshold $\theta$. 
	  Our approach yields larger AUCs for the
	  NPV--specificity curves than all of the other examined
	  methods for denoising $+$WS and $+$ER noise, except for $+$ER for the network \dataset{Caltech}. 
	  For denoising 
	  $-$ER noise, our approach does not seem to be particularly effective
	 % seems to be %conservative 
	 % commHL{not very sensitive} 
	  at detecting unobserved edges. It is outperformed by other methods for \dataset{SNAP FB} {(by all of them except {\sc preferential attachment})}, \dataset{arXiv} {(by all of them except {\sc preferential attachment} and {\sc spectral embedding})}, and \dataset{H. Sapiens} (by all of them except {\sc preferential attachment} and {\sc spectral embedding}).

					%%%%%

					\subsection{Figures~\ref{fig:all_dictionaries_1},~\ref{fig:all_dictionaries_2},~\ref{fig:all_dictionaries_3},~\ref{fig:all_dictionaries_4},~\ref{fig:all_dictionaries_rank_1},~\ref{fig:all_dictionaries_rank_2},~\ref{fig:all_dictionaries_rank_3}, and~\ref{fig:all_dictionaries_rank_4}}

					In these figures, we show latent motifs of the examined networks (which we described in the `Methods' section of the main manuscript) using Algorithm~\ref{alg:NDL} with various parameter choices. For each network, we use a $k$-chain motif with the corresponding network $F=([k], A_{F})$ for $T=100$ iterations, $N = 100$ injective homomorphisms per iteration (so we sample a total of $10^{4}$ injective homomorphisms), an $L_{1}$-regularizer with coefficient $\lambda = 1$, and the injective MCMC motif-sampling algorithm $\mathtt{MCMC} = \mathtt{PivotApprox}$.  We specify the number $r$ of latent motifs and the scale $k$ in the caption of each figure.

					%%%%%
					
					\section{Convergence Analysis}
					\label{section:convergence}
					
						In this section, we give rigorous convergence guarantees for our main algorithms for NDL (see Theorems~\ref{thm:NDL} and~\ref{thm:NDL2}) and NDR (see Theorems~\ref{thm:NR} and~\ref{thm:NR2}). All of these results are novel.
						Lyu et al.~\cite{lyu2020online} proposed a network-reconstruction algorithm that
						corresponds to our NDR algorithm~\ref{alg:network_reconstruction} with the choice $\texttt{denoising} = \texttt{InjHom}=\texttt{F}$. 
						They did not do any theoretical analysis of this network-reconstruction algorithm.		
						Our most significant theoretical contribution is our guarantees about the NDR algorithm (see Algorithm~\ref{alg:network_reconstruction}). Specifically, in Theorems~\ref{thm:NR} and~\ref{thm:NR2}, we establish convergence, exact formulas, and error bounds of the reconstructed networks for all four choices $(\texttt{denoising}, \texttt{InjHom}) \in \{\texttt{T}, \texttt{F}\}^{2}$ for both non-bipartite and bipartite networks. The most interesting aspect of these results is our bound for the 
						Jaccard reconstruction error in terms of the mesoscale approximation error divided by the number of nodes in the subgraphs at that mesoscale (see Theorem~\ref{thm:NR}\textbf{(iii)}). 
						Roughly speaking, this result guarantees that one can accurately reconstruct a network
						if one has a dictionary of latent motifs that can accurately approximate the subgraphs in the network at a fixed mesoscale.
						We illustrate this result with supporting experiments in Figure~\ref{fig:recons_bd_plot} of the main manuscript.
						A crucial feature of our proof of Theorem~\ref{thm:NR}\textbf{(iii)} is our use of an
						explicit formula for the weight matrix of the limiting reconstructed network as the number of iterations that we use for network reconstruction tends to infinity.

						In~\cite[Corollary 6.1]{lyu2020online}, Lyu et al.~presented a convergence guarantee for the original NDL algorithm in~\cite{lyu2020online} for non-bipartite networks $\G$. 
						The key difference between our NDL algorithm (see Algorithm~\ref{alg:NDL}) and the NDL algorithm in~\cite{lyu2020online} is that we employ
						$k$-path motif sampling but they employ
						$k$-walk motif sampling. 
						This results in different objective functions to minimize. (See \eqref{eq:NDL_exact} for our objective function.)} Therefore, 
						\cite[Corollary 6.1]{lyu2020online} does not apply to our NDL algorithm (see Algorithm~\ref{alg:NDL}). In Theorem,~\ref{thm:NDL}, we establish a convergence result for our NDL algorithm. A key step in the proof of this result is guaranteeing convergence of the injective MCMC motif-sampling algorithm (see Algorithm~\ref{alg:motif_inj}) for non-bipartite networks (see Proposition~\ref{prop:inj_motif_conv}).
						In Theorem~\ref{thm:NDL2}, we extend the convergence results for our NDL algorithm to bipartite networks. Convergence for bipartite networks was not established} in~\cite{lyu2020online} even for the original NDL algorithm. 
						The key technical difficulty for bipartite networks
						is that Markov chains that are generated by the MCMC motif-sampling algorithms in Algorithms \ref{alg:pivot} and \ref{alg:glauber} are not irreducible, so the main convergence results for online NMF
						in~\cite[Thm. 1]{lyu2020online} are not applicable. 
						Our proof of Theorem~\ref{thm:NDL2} uses a careful coupling argument between two reducible
						classes of Markov chains.

					Let $F=([k],A_{F})$ be the network corresponding to a $k$-chain motif, and let $G = (V,A)$ be a network. Let $\Omega \subseteq V^{[k]}$ denote the set of all homomorphisms (which do not have to be injective) $\x: F \rightarrow \G$. Algorithm~\ref{alg:NDL} generates three stochastic sequences. The first one is the sequence $(\x_{t})_{t\ge 0}$ of injective homomorphisms $F\hookrightarrow \G$ that we obtain from the injective 
					MCMC motif-sampling algorithm (see Algorithm~\ref{alg:rejection_motif}). 
					The second one is the sequence $(X_{t})_{t\ge 0}$ of $k^{2} \times N$ data matrices whose columns encode $N$ mesoscale patches of $\G$. More precisely, for each $\y_{1},\ldots,\y_{n} \in \Omega$, we write $\Psi(\y_{1},\ldots,\y_{n}) \in \mathbb{R}_{\ge 0}^{k^{2}\times N}$ for the $k^{2} \times N$ matrix whose $i^{\textup{th}}$ column is the vectorization (using Algorithm~\ref{alg:vectorize}) of the corresponding $k \times k$ mesoscale patch $A_{\y_{i}}$ of $\G$ (see \eqref{eq:def_A_F_patch}). For each $\y_{0} \in \Omega$, define 
					\begin{align}\label{eq:def_X_N}
						X^{(N)}(\y_{0}):=\Psi(\y_{1},\ldots,\y_{N})\in \mathbb{R}_{\ge 0}^{k^{2}\times N}\,,
					\end{align}
					where we generate $\y_{1},\ldots, \y_{N}$ {iteratively using the injective MCMC motif-sampling algorithm (see Algorithm~\ref{alg:motif_inj}), which we initialize with the homomorphism $\y_{0}:F\rightarrow \G$.
						It then follows that $X_{t} = X^{(N)}(\x_{Nt})$ for each $t \ge 1$, where $\x_{Nt}$ is the injective homomorphism $F\hookrightarrow \G$ that we obtain after $Nt$ applications of Algorithm~\ref{alg:motif_inj}. For the third (and final) sequence that we generate using Algorithm~\ref{alg:NDL}, let $(W_{t})_{t \ge 0}$ denote the sequence of dictionary matrices, where we define each $W_{t}=W_{t}(\x_{0})$ via \eqref{eq:def_ONMF} with an initial homomorphism $\x_{0}:F\rightarrow \G$ that we sample using Algorithm~\ref{alg:rejection_motif}.

						%%%%
						
						\subsection{Convergence of the MCMC algorithms}
						\label{subsection:convergence_approx_pivot}

						In this subsection, we establish the convergence properties of the MCMC algorithms (see Algorithms~\ref{alg:pivot} and~\ref{alg:glauber}) for sampling a $k$-walk according to the target distribution $\pi_{F\rightarrow \G}$ {(see \eqref{eq:def_embedding_F_N}).}
						Recall that this {distribution} {is} the uniform distribution on the set of all homomorphisms $F\rightarrow \G$ when the motif $F$ and the network $\G$ are both symmetric and unweighted (see \eqref{eq:def_embedding_F_N_uniform}).

						\begin{prop}\label{prop:exact_MCMC}
							Fix a network $\G = (V,A)$ and a $k$-chain motif $F=([k], A_{F})$. 
							Let $(\x_{t})_{t\ge 0}$ denote a sequence of homomorphisms $\x_{t}: F \rightarrow \G$ that we generate using the exact pivot chain (in which we use Algorithm~\ref{alg:pivot} with $\mathtt{AcceptProb}=\mathtt{Exact}$) or the Glauber chain (in which we use Algorithm~\ref{alg:glauber}). Suppose that 
							\begin{description}[leftmargin=0.7cm, topsep=0.2pt, itemsep=0.1cm]
								\item{\textup{(a)}} The weight matrix $A$ is `bidirectional' (i.e., $A(x,y)>0$ implies that $A(y,x)>0$ for all $x,y \in V$) and that the binary network $(V, \one(A>0))$ is connected and non-bipartite.
							\end{description} 
							It then follows that $(\x_{t})_{t\ge 0}$ is an irreducible and aperiodic Markov chain with the unique stationary distribution $\pi_{F\rightarrow \G}$ that we defined in \eqref{eq:def_embedding_F_N}.		
						\end{prop}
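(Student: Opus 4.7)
The plan is to verify the four required properties---Markov property, irreducibility, aperiodicity, and stationarity of $\pi_{F\rightarrow \G}$---separately for each of the two chains, exploiting the fact that hypothesis \textup{(a)} asserts connectedness and non-bipartiteness of the underlying undirected graph $(V, \one(A>0))$. The Markov property is immediate from the construction: in both algorithms, the new homomorphism $\x'$ is produced from $\x$ using fresh randomness alone, so $(\x_t)_{t\ge 0}$ is time-homogeneous Markovian on the state space $\Omega$ of all homomorphisms $F\rightarrow \G$.

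For the Glauber chain, I would identify Algorithm~\ref{alg:glauber} as a standard systematic-scan Gibbs sampler for the joint distribution $\pi_{F\rightarrow \G}$. A direct computation shows that the distribution in \eqref{eq:marginal_distribution_glauber} is precisely the conditional distribution of $\x(v)$ under $\pi_{F\rightarrow \G}$ given the values of $\x$ at all nodes $u\neq v$; this is a mechanical verification starting from the product form \eqref{eq:def_embedding_F_N} and specializing $A_{F}$ to the $k$-chain. Invariance of $\pi_{F\rightarrow \G}$ then follows from the standard Gibbs-sampler argument. For the pivot chain with $\mathtt{AcceptProb}=\mathtt{Exact}$, I would decompose a single step into two substeps: the Metropolis--Hastings pivot move $\x(1)\mapsto \ell$ with proposal kernel \eqref{eq:RWkernel_G} and acceptance probability \eqref{eq:pivot_chain_acceptance_prob}, followed by the sequential conditional update of $\x(2),\ldots,\x(k)$ via \eqref{eq:pivot_conditional}. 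The marginal of $\x(1)$ under $\pi_{F\rightarrow \G}$ is proportional to $\sum_c A^{k-1}(\x(1),c)$, so the Metropolis--Hastings ratio gives exactly the expression in \eqref{eq:pivot_chain_acceptance_prob}; this verifies detailed balance for the first substep. The second substep samples $\x(2),\ldots,\x(k)$ from their conditional distribution given $\x(1)$, which again preserves $\pi_{F\rightarrow \G}$, so the composition does as well.

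For irreducibility, I would use hypothesis \textup{(a)} in an explicit path-construction argument. Because $A$ is bidirectional and $(V,\one(A>0))$ is connected, any two nodes of $V$ can be joined by a walk in either direction in the underlying unweighted graph, so every step of such a walk carries strictly positive $A$-weight. For the Glauber chain, I would show that by resampling coordinates one at a time (in any order), one can convert a given homomorphism $\x$ to any target $\y\in \Omega$: each intermediate configuration remains in $\Omega$ because the bidirectional connectivity allows a single coordinate to be moved along an edge of $\G$. For the pivot chain, I would argue that a sufficient number of accepted pivot moves followed by the conditional resampling step can realize any desired $\y$. Aperiodicity follows by exhibiting, in each case, a positive-probability self-loop: in the Glauber chain, resampling coordinate $v$ to its current value has positive probability; in the pivot chain, the acceptance step has positive probability of rejecting the proposal and leaving $\x(1)$ fixed, while the conditional resampling can reproduce the remaining coordinates with positive probability. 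Uniqueness of the stationary distribution then follows from irreducibility on a finite state space.

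The main obstacle I anticipate is the detailed-balance computation for the exact pivot chain. The acceptance probability mixes the walk-count factor $\sum_c A^{k-1}(\ell,c)/\sum_c A^{k-1}(\x(1),c)$ with the degree-ratio factor $\sum_c A(c,\x(1))/\sum_c A(\x(1),c)$, and one must carefully separate the role of the pivot move from the role of the subsequent conditional resampling to see that the combined transition kernel admits $\pi_{F\rightarrow \G}$ as stationary. The bidirectionality hypothesis is used here to ensure that all the ratios appearing in \eqref{eq:pivot_chain_acceptance_prob} and \eqref{eq:pivot_conditional} are well-defined; and the non-bipartite hypothesis, while not strictly needed for stationarity, is what rules out period-$2$ behavior of the chain on bipartite $\G$ and is therefore required for aperiodicity.
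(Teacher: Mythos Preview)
Your proposal is correct and follows essentially the same approach as the paper: identify the Glauber chain as a Gibbs sampler for $\pi_{F\rightarrow \G}$, and for the exact pivot chain decompose each step into a Metropolis--Hastings move on the pivot (with target marginal $\pi^{(1)}(x_1)\propto \sum_c A^{k-1}(x_1,c)$) followed by conditional resampling of $\x(2),\ldots,\x(k)$. The paper's own proof is in fact mostly a citation to \cite{lyu2019sampling} together with a short sketch of the pivot-chain stationary computation, so your outline is more detailed than what the paper actually presents; one small slip is that Algorithm~\ref{alg:glauber} selects the coordinate $v$ uniformly at random, so it is a \emph{random-scan} Gibbs sampler rather than a systematic-scan one.
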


						\begin{proof}
								This proposition was proved rigorously in~\cite[Thms. 5.7 and 5.8]{lyu2023sampling}. In the present paper, we sketch the proof for the exact pivot chain to illustrate the main idea behind the 
								acceptance probability in \eqref{eq:pivot_chain_acceptance_prob}. 
								The trajectory of the first node $\x_{t}(1)$ of the homomorphism $\x$ gives a standard random walk on the network $\G$ that is modified by the Metropolis--Hastings algorithm (see, e.g.,~\cite[Sec. 3.2]{levin2017markov}) so that it has the following marginal distribution as its unique stationary distribution: 
								\begin{equation}
									\pi^{(1)}(x_{1}) = \frac{\sum_{x_{2},\ldots,x_{k}\in [n]} \prod_{i=2}^{k} A(x_{i-1},x_{i})}{\mathtt{Z}} \,, 
								\end{equation} 
								where {the denominator $\mathtt{Z}=\mathtt{Z}(F,\G)$ is the normalization constant that we call the `homomorphism density' of $F$ in $\G$ (see~\cite{lovasz2012large})} {and} the numerator is proportional to the 
								probability of sampling $x_{2},\ldots,x_{{k}} \in [n]$ {for $\x_{t}(2),\ldots,\x_{t}(k)$.}
								Fix a homomorphism $\x:F\rightarrow \G$ with $\x(i) = x_{i}$ for $i=1,\ldots,k$. 
								We then obtain 
									\begin{align}
										\P(\x_{t}(1)=x_{1},\ldots,\x_{t}(k)=x_{k}) &= \P\left( \x_{t}(1)=x_{1} \right)	\P( \x_{t}(2)=x_{2},\ldots,\x_{t}(k)=x_{k} \,|\, \x_{t}(1)=x_{1})  \\
										&\approx \pi^{(1)}(x_{1}) \frac{\prod_{i=2}^{k} A(x_{i-1},x_{i}) }{\sum_{y_{2},\ldots,y_{k}\in [n]} \prod_{i=2}^{k} A(y_{i-1},y_{i})  } \\
										&=  \frac{ \prod_{i=2}^{k} A(x_{i-1},x_{i})  }{ \mathtt{Z} } = \pi_{F\rightarrow \G}(\x)\,, 
									\end{align}
								where the approximation in the second line above becomes exact as $t\rightarrow \infty$.  
						\end{proof}

						We now prove Proposition~\ref{prop:approximate_pivot}, which guarantees the convergence of the approximate pivot chain and gives an explicit formula for its unique stationary distribution.

						\begin{prop}\label{prop:approximate_pivot}
							Fix a network $\G=(V,A)$ and {a} $k$-chain motif 
							{$F = ([k], A_{F})$}. 
							Let $(\x_{t})_{t\ge 0}$ denote a sequence of homomorphisms $\x_{t}:F\rightarrow \G$ that we generate using the approximate pivot chain (in which we use Algorithm~\ref{alg:pivot} with $\mathtt{AcceptProb}=\mathtt{Approximate}$). Suppose that 
							\begin{description}[leftmargin=0.7cm, topsep=0.2pt, itemsep=0.1cm]
								\item{\textup{(a)}} The weight matrix $A$ is bidirectional (i.e., $A({x,y})>0$ implies that $A({y,x})>0$ for all ${x,y} \in V$) and that the undirected and unweighted graph $(V, \one(A>0))$ is connected and non-bipartite.
							\end{description}
							It then follows that $(\x_{t})_{t\ge 0}$ is an irreducible and aperiodic Markov chain with the unique stationary distribution $\hat{\pi}_{F\rightarrow \G}$ that we defined in \eqref{eq:approx_pivot_statioanary_dist}.		
						\end{prop}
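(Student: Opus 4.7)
The plan is to verify three properties of the chain $(\x_t)_{t\ge 0}$ on the finite state space $\Omega := \{\x: F\rightarrow \G\}$ of homomorphisms: (i) the Markov property; (ii) irreducibility and aperiodicity; (iii) stationarity of $\hat{\pi}_{F\rightarrow \G}$. Together with finiteness of $\Omega$, these yield the conclusion via the standard ergodic theorem for finite Markov chains.

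The Markov property is immediate from Algorithm~\ref{alg:pivot}: the distribution of $\x_{t+1}$ given the history $(\x_0,\ldots,\x_t)$ depends only on $\x_t$ and the independent random choices at step $t+1$. For irreducibility, I would use hypothesis (a): since $(V,\one(A>0))$ is connected, from any homomorphism $\x\in \Omega$ one can reach any other $\x'\in \Omega$ in finitely many pivot-chain steps with positive probability. The positivity of the acceptance probability $\alpha(v) = \min(\sum_c A(c,v)/\sum_c A(v,c),\,1)$ follows from bidirectionality of $A$, and the subsequent random-walk resampling can produce any $k$-walk starting at the updated pivot with positive probability because every proposed edge has positive weight. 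Aperiodicity follows from the existence of a positive self-loop probability---whenever the MH proposal is rejected (which occurs with probability $1-\alpha(v)>0$ for some $v$) and the resampling happens to return the original walk---together with non-bipartiteness of $(V,\one(A>0))$, which rules out any period greater than $1$.

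The main obstacle is (iii). My approach is to write the transition kernel as a product $K(\x,\x') = P_1(\x(1),\x'(1)) \cdot Q(\x'(2),\ldots,\x'(k) \mid \x'(1))$, where $P_1$ is the Metropolis--Hastings kernel on the pivot and $Q$ is the resampling kernel for the remaining coordinates, and then verify the global balance equation $\sum_{\x} \hat{\pi}(\x) K(\x,\x') = \hat{\pi}(\x')$ by exploiting the factorization of the target. Writing $\hat{\pi}_{F\rightarrow \G}(\x') = \hat{\pi}_1(\x'(1)) \cdot \hat{\pi}_{2:k}(\x'(2),\ldots,\x'(k) \mid \x'(1))$ with $\hat{\pi}_1 \equiv 1/|V|$ and $\hat{\pi}_{2:k}(\cdot \mid v) \propto \prod_{i=2}^k A(\x(i-1),\x(i))/\sigma(v)$, where $\sigma(v) := \sum_{y_2,\ldots,y_k\in V} A(v,y_2)\prod_{i=3}^k A(y_{i-1},y_i)$, the key identity reduces to showing that the bias of the random-walk proposal for $P_1$ toward high-degree vertices is exactly compensated for by the walk-count factor $1/\sigma(v)$ baked into $\hat{\pi}$, so that the chain's marginal on the pivot equilibrates to the uniform distribution on $V$. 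The particular form of the approximate acceptance probability in~\eqref{eq:pivot_chain_acceptance_prob} is chosen precisely to realize this compensation, and verifying this match---along with the corresponding telescoping of the per-step normalizers $\sum_c A(\x'(i-1),c)$ into the global walk-count $\sigma(\x'(1))$---is the delicate technical step of the proof. Once this is done, uniqueness of the stationary distribution and convergence in distribution are standard consequences of irreducibility and aperiodicity on the finite state space $\Omega$.
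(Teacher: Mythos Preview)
Your plan shares the paper's key structural insight---the pivot $\x_t(1)$ evolves as a Metropolis--Hastings chain (random-walk proposal, acceptance $\alpha$) whose target is the uniform distribution on $V$, and the tail $(\x_t(2),\ldots,\x_t(k))$ is resampled afresh from the new pivot---but completes the argument by a different route. You propose to verify global balance for the given candidate $\hat{\pi}_{F\rightarrow\G}$ via the factorization $K=P_1\cdot Q$; the paper instead \emph{derives} $\hat{\pi}_{F\rightarrow\G}$ by an ergodic argument, decomposing the trajectory along return times of the pivot to a fixed node $x_1$, applying the strong law of large numbers to the i.i.d.\ tail resamplings at those return times, and then invoking the Markov-chain ergodic theorem to identify the limiting empirical measure with the stationary law. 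Both completions are standard and equally valid; the return-time argument has the minor advantage of producing the explicit form of $\hat{\pi}_{F\rightarrow\G}$ from the dynamics rather than requiring it as an ansatz. One caveat on your aperiodicity step: when $A$ is symmetric the approximate acceptance in~\eqref{eq:pivot_chain_acceptance_prob} satisfies $\alpha\equiv 1$, so MH rejection never occurs and cannot supply a self-loop; aperiodicity of the pivot (and hence of the full chain) must rest entirely on non-bipartiteness of $(V,\one(A>0))$, as the paper's argument does.
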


						\begin{proof}
							We follow the proof of~\cite[Thm. 5.8]{lyu2023sampling}. Let $P: V^{2} \rightarrow [0,1]$ be a matrix with entries
							\begin{align}
								P(x,y) := \frac{A(x,y)}{\sum_{c\in V} A(a,c)}\,, \quad \text{$x,y\in V$}\,. 
							\end{align}
							The matrix $P$ is the transition matrix of the standard random walk on the network $\G$. By hypothesis (a), $P$ is irreducible and aperiodic. 
							Using a result in~\cite[Ch. 9]{levin2017markov}, it has the unique stationary distribution 
							\begin{align}
								\pi^{(1)}(v) := \sum_{c\in V} A(v,c) / \sum_{c,c'\in V} A(c,c')\,.
							\end{align} 
							The approximate pivot chain generates a move $\x_{t}(1)\mapsto \x_{t+1}(1)$ of the pivot according to the distribution $P(\x_{t}(1),\cdot)$. We accept this move 
							independently of everything else with the approximate acceptance probability $\alpha$ in \eqref{eq:pivot_chain_acceptance_prob}. 
							If we always accept each move of the pivot, then the pivot performs a random walk on $\G$ with the unique stationary distribution $\pi^{(1)}$. We compute the acceptance probability $\alpha$ using the Metropolis--Hastings algorithm (see~\cite[Sec. 3.3]{levin2017markov}), and we thereby modify the stationary distribution of the pivot from $\pi^{(1)}$ to the uniform distribution on $V$. (See the discussion in~\cite[Sec. 5]{lyu2023sampling}.) Therefore, $(\x_{t}(1))_{t\ge 0}$ is an irreducible and aperiodic Markov chain on $V$ that has the uniform distribution as its unique stationary distribution. Because we sample the locations $\x_{t+1}(i)\in V$ of the subsequent nodes $i \in \{2,3,\ldots,k$\} independently, conditional on the location $\x_{t+1}(1)$ of the pivot, it follows that the approximate pivot chain $(\x_{t})_{t\ge 0}$ is also an irreducible and aperiodic Markov chain with a unique stationary distribution, which we denote by $\hat{\pi}_{F\rightarrow \G}$. 
							
							To determine the stationary distribution $\hat{\pi}_{F\rightarrow \G}$, we decompose $\x_{t}$ into the return times of the pivot $\x_{t}(1)$ to a fixed node $x_{1}\in V$ in $\G$. Specifically, let $\tau(j)$ be the $j^{\textup{th}}$ return time of $\x_{t}(1)$ to $x_{1}$. By the independence of sampling $\x_{t}$ over $\{2,\ldots,k \}$ for each $t$, the strong law of large numbers yields  
							\begin{align}
								&\lim_{M\rightarrow \infty}\frac{1}{M} \sum_{j=1}^{M}\one(\x_{\tau(j)}(2)=x_{2},\, \ldots, \x_{\tau(j)}(k)=x_{k}) \\
								&\qquad =  \frac{\prod_{i=2}^{k} A(x_{i-1},x_{i}) }{\sum_{ y_{2},\ldots,y_{k}\in V} \prod_{i=2}^{k} A(x_{1},y_{2})A(y_{2},y_{3})\cdots A(y_{k-1},y_{k})}\,.
							\end{align} 
							For each fixed homomorphism $\x:F\rightarrow \G$, which maps $i\mapsto x_{i}$, we use the Markov-chain ergodic theorem (see, e.g.,~\cite[Theorem 6.2.1 and Example 6.2.4]{Durrett} or~\cite[Theorem 17.1.7]{meyn2012markov}) to obtain 
							\begin{align}
								\hat{\pi}_{F\rightarrow \G}(\x) &= \lim_{N\rightarrow \infty} \frac{1}{N} \sum_{t=0}^{N} \one(\x_{t}=\x) \\
								&=  \lim_{N\rightarrow \infty} \frac{\sum_{t=0}^{N}   \one(\x_{t}=\x) }{\sum_{t=0}^{N} \one(\x_{t}(1)=x_{1})}  \, \frac{\sum_{t=0}^{N} \one(\x_{t}(1)=x_{1})}{N}\\
								&=\mathbb{P}\left( \x_{t}(2)=x_{2},\ldots,\x_{t}(k)=x_{k} \,\bigg| \, \x_{t}(1)=x_{1}\right) \, \pi^{(1)}(x_{1})\\ 
								&=\frac{\prod_{i=1}^{k}A(x_{i-1},x_{i}) }{\sum_{ y_{2},\ldots,y_{k}\in V} \prod_{i=2}^{k} A(x_{1},y_{2})A(y_{2},y_{3}) \cdots A(y_{k-1},y_{k})} \frac{1}{|V|}\,.
							\end{align} 
							This proves the assertion. 
						\end{proof}
						
						{\color{black} 
							Finally, we establish the following asymptotic convergence result for the injective MCMC motif-sampling algorithm in Algorithm~\ref{alg:motif_inj}.

							\begin{prop}[Convergence of injective motif sampling]\label{prop:inj_motif_conv}
								Fix a network $\G = (V,A)$ and a $k$-chain motif 
								$F = ([k], A_{F})$. Suppose that $\G$ has at least one $k$-path. Let $(\x_{t})_{t\ge 0}$ denote a sequence of injective homomorphisms $\x_{t}:F\hookrightarrow \G$ that we generate using Algorithm~\ref{alg:motif_inj}. Suppose that 
								\begin{description}[leftmargin=0.7cm, topsep=0.2pt, itemsep=0.1cm]
									\item{\textup{(a)}} The weight matrix $A$ is bidirectional (i.e., $A({x,y})>0$ implies that $A({y,x})>0$ for all ${x,y} \in V$) and that the undirected and unweighted graph $(V, \one(A>0))$ is connected and non-bipartite.
								\end{description}
								%\medskip \medskip
								The following statements hold:
								\begin{description}[leftmargin=0.7cm, topsep=0.2pt, itemsep=0.1cm]
									\item[(i)] If we use Algorithm~\ref{alg:glauber} or Algorithm~\ref{alg:pivot} with $\mathtt{AcceptProb} = \mathtt{Exact}$ in Algorithm~\ref{alg:motif_inj}, then $(\x_{t})_{t\ge 0}$ is an irreducible and aperiodic Markov chain with the unique stationary distribution $\pi_{F\hookrightarrow \G}$ that we defined in \eqref{eq:def_embedding_F_N_inj}.
									
									\item[(ii)] If we use Algorithm~\ref{alg:pivot} with $\mathtt{AcceptProb}=\mathtt{Approximate}$ in Algorithm~\ref{alg:motif_inj}, then $(\x_{t})_{t\ge 0}$ is an irreducible and aperiodic Markov chain with the unique stationary distribution 
									$\hat{\pi}_{F\hookrightarrow \G}$ that is defined by 
									\begin{align}\label{eq:approx_pivot_statioanary_dist_inj}
										\hat{\pi}_{F\hookrightarrow \G}(\x) = C' \hat{\pi}_{F\rightarrow \G}(\x) \, \one( \textup{$\x(1),\ldots,\x(k)$ are distinct} ) \,,
									\end{align}
									where $\hat{\pi}_{F\rightarrow \G}$ is the probability distribution on the set of homomorphisms $F\rightarrow \G$ in \eqref{eq:approx_pivot_statioanary_dist}.
								\end{description}
							\end{prop}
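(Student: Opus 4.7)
The plan is to view Algorithm~\ref{alg:motif_inj} as producing the \emph{trace} of the underlying motif-sampling Markov chain on the finite state space $\Omega$ of all homomorphisms $F\to \G$ onto the subset $S \subseteq \Omega$ of injective homomorphisms $F \hookrightarrow \G$. Concretely, if $(\y_t)_{t\ge 0}$ denotes the chain obtained by iterating a single update of Algorithm~\ref{alg:pivot} or Algorithm~\ref{alg:glauber}, and $0 = \tau_0 < \tau_1 < \tau_2 < \cdots$ are its successive hitting times of $S$, then the output sequence of Algorithm~\ref{alg:motif_inj} equals $\x_t := \y_{\tau_t}$. By the strong Markov property applied at each $\tau_t$, the sequence $(\x_t)_{t\ge 0}$ is itself a Markov chain on $S$. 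The assumption that $\G$ has at least one $k$-path ensures $S \ne \emptyset$, which combined with the irreducibility of the underlying chain on $\Omega$ (established below) guarantees that each $\tau_t$ is almost surely finite, so the trace is well-defined.

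With this identification, both parts of the proposition reduce to a standard theorem about traces of finite irreducible aperiodic (equivalently, ergodic) Markov chains: if $(\y_t)_{t\ge 0}$ is ergodic on finite $\Omega$ with stationary distribution $\pi$, and $S$ is any nonempty subset, then the trace on $S$ is also ergodic, with stationary distribution equal to $\pi(\cdot)/\pi(S)$ restricted to $S$. Under hypothesis \textup{(a)}, Proposition~\ref{prop:exact_MCMC} supplies the required ergodicity of $(\y_t)_{t\ge 0}$ in case \textbf{(i)}, with stationary distribution $\pi_{F\rightarrow \G}$; Proposition~\ref{prop:approximate_pivot} does the same in case \textbf{(ii)}, with stationary distribution $\hat{\pi}_{F\rightarrow \G}$. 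Irreducibility of the trace chain is immediate from the strong Markov property together with the irreducibility of $(\y_t)_{t\ge 0}$: for any $\x, \x' \in S$, a positive-probability path from $\x$ to $\x'$ in the underlying chain yields a positive-probability transition sequence of the trace between the same two states. The main technical point is aperiodicity of the trace, which I would establish either by combining ergodicity of $(\y_t)$ on $\Omega$ (which yields $P^n(\x,\x) > 0$ for all sufficiently large $n$ at any $\x \in S$) with a direct construction of positive-probability excursions returning to $\x$ in coprime numbers of trace steps, or by appealing to the general fact that the trace of a finite irreducible aperiodic Markov chain inherits aperiodicity on any nonempty subset of the state space.

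Finally, I would match the renormalized trace stationary distributions with those asserted in the proposition. In case \textbf{(i)}, the renormalization of $\pi_{F\rightarrow \G}$ over $S$ is precisely the formula \eqref{eq:def_embedding_F_N_inj} that defines $\pi_{F\hookrightarrow \G}$, so the unique stationary distribution of the trace coincides with $\pi_{F\hookrightarrow \G}$. In case \textbf{(ii)}, the renormalization of $\hat{\pi}_{F\rightarrow \G}$ over $S$ is precisely the formula \eqref{eq:approx_pivot_statioanary_dist_inj} that defines $\hat{\pi}_{F\hookrightarrow \G}$, yielding the claim. I expect the main obstacle to be the aperiodicity verification described in the previous paragraph, since irreducibility and the stationary-distribution identification both follow cleanly from the strong Markov property combined with a straightforward normalization argument; once aperiodicity is secured, the rest of the proof reduces to bookkeeping the indicator $\one(\x(1), \ldots, \x(k) \text{ are distinct})$ and the appropriate renormalization constant against the definitions in \eqref{eq:def_embedding_F_N_inj} and \eqref{eq:approx_pivot_statioanary_dist_inj}.
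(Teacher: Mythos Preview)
Your proposal is correct and takes essentially the same approach as the paper: the paper's proof is a one-line appeal to ``standard Markov-chain theory'' together with Propositions~\ref{prop:exact_MCMC} and~\ref{prop:approximate_pivot}, and your trace-chain argument via the strong Markov property is precisely the standard theory being invoked (indeed, the paper already remarks just after Algorithm~\ref{alg:motif_inj} that ``by the strong Markov property, this restriction is a Markov chain''). Your write-up simply spells out the details that the paper leaves implicit.
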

							
							\begin{proof}
								This assertion follows from standard Markov-chain theory (see, e.g.,~\cite{levin2017markov}) in conjunction with Propositions~\ref{prop:exact_MCMC} and~\ref{prop:approximate_pivot}. 
							\end{proof}
							
						}

						%%%%%%%

						\subsection{Convergence of the NDL algorithm}

						Recall the problem statement for NDL in \eqref{eq:NDL_exact}. Informally, we seek to learn $r$ latent motifs $\mathcal{L}_{1},\ldots,\mathcal{L}_{r} \in \R_{\ge 0}^{k \times k}$ to minimize the {expectation of the error of approximating the mesoscale patch $A_{\x}$ by a nonnegative combination of the motifs $\mathcal{L}_{i}$}, where $\x:F\hookrightarrow \G$ is a random injective homomorphism that we sample from the distribution $\pi_{F\hookrightarrow \G}$ \eqref{eq:def_embedding_F_N_inj}. We reformulate this problem as a matrix-factorization problem that
						generalizes \eqref{eq:NDL_exact}. Let $\mathcal{C}^{\textup{dict}}$ denote the set of all matrices $W \in \R_{\ge 0}^{k^{2}\times r}$ whose columns have a Frobenius norm of at most $1$. The matrix-factorization problem is then
						\begin{align}\label{eq:NDL_exact_minibatch}
							\argmin_{W\in \mathcal{C}^{\textup{dict}} \subseteq \R_{\ge 0}^{k^{2}\times r}}   \bigg( f(W):= \E_{\x\sim \pi_{F\hookrightarrow \G}}\left[\ell(X^{(N)}(\x), W)  \right] \bigg)\,,
						\end{align}
						where we define the loss function 
						\begin{align}\label{eq:def_loss}
							\ell(X,W) := \inf_{H \in \mathbb{R}^{r\times N}_{\ge 0} }  \lVert X - WH  \rVert_{F}^{2} + \lambda \lVert H \rVert_{1}\,, \quad \text{$X\in \R^{k^{2}\times N}$\,, $W\in \R^{k^{2}\times r}$}\,.
						\end{align}
						The parameters $N \in \mathbb{N}$ and $\lambda \ge 0$ appear in Algorithm~\ref{alg:NDL}. The former is the number of homomorphisms that we sample at each iteration of Algorithm~\ref{alg:NDL}, and the latter is the coefficient of an $L_{1}$-regularizer that we use to find the code matrix $H_{t}$ in \eqref{eq:def_ONMF}. If
						$N = 1$ and $\lambda = 0$, then 
						problem \eqref{eq:NDL_exact_minibatch} is equivalent to 
						problem \eqref{eq:NDL_exact} because $X^{(1)}(\x)$ and the columns of $W$ are vectorizations (using Algorithm~\ref{alg:vectorize}) of the mesoscale patch $A_{\x}$ and the latent motifs $\mathcal{L}_{1},\ldots,\mathcal{L}_{r}$, respectively.

						The objective function $f$ in the optimization problem \eqref{eq:NDL_exact_minibatch} for NDL is non-convex, so it is 
							generally difficult to find a global optimum of $f$. 
							However, local optima are often good enough for practical applications (such as in image restoration~\cite{elad2006image, mairal2008sparse}). We find that this is also the case for our network-denoising problem (see Figure~\ref{fig:Figure4}). 
							Theorems~\ref{thm:NDL} and~\ref{thm:NDL2} guarantee that our NDL algorithm (see Algorithm~\ref{alg:NDL}) finds a sequence $(W_{t})_{t\ge 0}$ of dictionary matrices such that almost surely $W_{t}$ is asymptotically a local optimum of the objective function $f$.

							To make precise statements {about asymptotic convergence of the NDL algorithm to a local optimum of the objective function in \eqref{eq:NDL_exact},
								we consider measures of the local optimality of
								non-convex constrained optimization problems. Suppose that we have a differentiable objective function $g:\R^{p} \rightarrow \R$ for some integer $p \ge 1$. Fix a parameter set $\Param \subseteq \R^{p}$. We say that $\param^{*}\in \Param$ is a \textit{stationary point} of $g$ in $\Param$ if 
								\begin{align}\label{eq:stationary}
									\inf_{\param\in \Param} \, \langle \nabla g(\param^{*}) ,\,  \param - \param^{*} \rangle \ge 0 \,,
								\end{align}	
								where $\langle \cdot,\, \cdot \rangle$ denotes the dot project on $\R^{p}$.
								If $\param^{*}$ is a stationary point of $g$ in $\Param$ and it is in the interior of $\Param$, then $\lVert \nabla g(\param^{*}) \rVert=0$. }

						We are now ready to state the convergence result for our NDL algorithm (see Algorithm~\ref{alg:NDL}) for non-bipartite networks.

						\begin{theorem}[Convergence of the NDL Algorithm for Non-Bipartite Networks]\label{thm:NDL}
							Let $F=([k],A_{F})$ be a $k$-chain motif, and let $G = (V,A)$ be a network that satisfies the following properties:
							\begin{description}[leftmargin=0.7cm, topsep=0.2pt, itemsep=0.1cm]
								\item{(a)} The weight matrix $A$ is bidirectional (i.e., $A({x,y})>0$ implies that $A({y,x})>0$ for all ${x,y} \in V$) and the binary network $(V, \one(A > 0))$ is connected and non-bipartite.
								
								\item{(b)} For all $t \ge 0$, there exists a unique solution $H_{t}$ in \eqref{eq:def_ONMF}.
								
								\item{(c)} For all $t \ge 0$, the eigenvalues of the positive semidefinite matrix $A_{t}$ that is defined in \eqref{eq:def_ONMF} are at least as large as some constant $\kappa_{1}>0$.
							\end{description}
							
							Let $(W_{t})_{t\ge 0}$ denote the sequence of dictionary matrices that we generate using Algorithm~\ref{alg:NDL}. The following statements hold: 
							\begin{description}[leftmargin=0.7cm, topsep=0.2pt, itemsep=0.1cm]
								\item[(i)] For $\mathtt{MCMC}\in \{ \mathtt{Pivot}, \mathtt{Glauber}\}$, it is almost sure true
								as $t \rightarrow \infty$ that $W_{t}$ converges to the set of stationary points of the objective
								function $f$ that we defined in \eqref{eq:NDL_exact_minibatch}. Furthermore, if $f$ has finitely many stationary points in $\mathcal{C}^{\textup{dict}}$, it is then the case that $W_{t}$ converges to a single stationary point of $f$ almost surely as $t \rightarrow \infty$.

								\vspace{0.1cm}
								\item[(ii)] For $\mathtt{MCMC}=\mathtt{PivotApprox}$, it is almost surely true as $t\rightarrow \infty$ that $W_{t}$ converges to the set of stationary points of the objective function
								\begin{align}\label{eq:def_f_hat}
									\hat{f}(W):= \E_{\x\sim \hat{\pi}_{F\hookrightarrow \G}}\left[\ell(X^{(N)}(\x), W)  \right]\,, 
								\end{align}
								where the distribution $\hat{\pi}_{F\hookrightarrow \G}$ is defined in \eqref{eq:approx_pivot_statioanary_dist_inj} and the loss function $\ell$ is defined in \eqref{eq:def_loss}.
								 Furthermore, if $\hat{f}$ has finitely many stationary points in $\mathcal{C}^{\textup{dict}}$, it is then the case that $W_{t}$ converges to a single stationary point of $\hat{f}$ almost surely as $t\rightarrow \infty$.
							\end{description}
						\end{theorem}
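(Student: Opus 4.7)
The strategy is to cast the NDL algorithm as an instance of online nonnegative matrix factorization with data generated by a Markov chain and then invoke the general convergence theorem for online NMF with Markov-chain samples from~\cite{lyu2020online}. The argument follows the same template as~\cite[Cor. 6.1]{lyu2020online}, but with the key modification that we must use the injective motif-sampling chain from Algorithm~\ref{alg:motif_inj} rather than the plain motif-sampling chain, and verify that the relevant stationary distribution on the restricted state space matches the objective function in \eqref{eq:NDL_exact_minibatch} (respectively \eqref{eq:def_f_hat}).

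First, I would identify the relevant Markov chain. The sampling step of Algorithm~\ref{alg:NDL} produces injective homomorphisms $(\x_t)_{t\ge 0}$ via Algorithm~\ref{alg:motif_inj}. By Proposition~\ref{prop:inj_motif_conv}, under hypothesis (a) this is an irreducible aperiodic Markov chain on the finite set of injective homomorphisms $F\hookrightarrow\G$, with unique stationary distribution $\pi_{F\hookrightarrow\G}$ when $\mathtt{MCMC}\in\{\mathtt{Pivot},\mathtt{Glauber}\}$ and $\hat{\pi}_{F\hookrightarrow\G}$ when $\mathtt{MCMC}=\mathtt{PivotApprox}$. Consequently, the minibatch data matrices $X_t=X^{(N)}(\x_{Nt})$, viewed as a deterministic function of the underlying chain sampled at gap $N$, inherit the ergodicity and mix to the correct induced stationary distribution. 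Here the fact that we rejection-sample only injective homomorphisms is crucial: it restricts the state space exactly to the support of $\pi_{F\hookrightarrow\G}$ (or $\hat{\pi}_{F\hookrightarrow\G}$) that appears in the definition of $f$ (or $\hat{f}$).

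Second, I would check that Algorithm~\ref{alg:NDL} performs the online-NMF iteration of~\cite{mairal2010online,lyu2020online} on this data stream: the coding step, the surrogate updates $P_t,Q_t$, and the dictionary update in \eqref{eq:def_ONMF} are exactly the standard updates with loss $\ell(X,W)$ from \eqref{eq:def_loss}. The hypotheses of the convergence theorem in~\cite{lyu2020online} reduce to: (i) the sample stream is generated by an irreducible, aperiodic, finite-state Markov chain (established above); (ii) the loss $W\mapsto \ell(X,W)$ is continuously differentiable with locally Lipschitz gradient and the code-minimizer $H^*(X,W)$ is unique almost surely — these follow from the boundedness of mesoscale patches, the compactness of $\mathcal{C}^{\textup{dict}}$, and hypothesis (b); and (iii) the accumulated Hessian-like matrices $P_t$ have eigenvalues bounded below uniformly in $t$, which is precisely hypothesis (c) and which ensures that the dictionary subproblem is uniformly strongly convex and the iterates $W_t$ are stable. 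With these verifications, the theorem yields that $W_t$ converges almost surely to the set of stationary points (in the sense of \eqref{eq:stationary} over $\mathcal{C}^{\textup{dict}}$) of the expected-loss objective, which is $f$ in case (i) and $\hat{f}$ in case (ii).

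Finally, the refinement to convergence to a single stationary point when the stationary-point set is finite is a standard topological consequence of the fact that the limit set of $(W_t)$ is almost surely nonempty, compact, and connected (since $\lVert W_t-W_{t-1}\rVert\to 0$ from the $t^{-1}$-weighted averaging in the $P_t,Q_t$ updates): a connected subset of a finite set must be a singleton. The main technical obstacle is verifying condition (ii) on the loss — uniqueness of the code minimizer and regularity of the gradient of $\ell$ in $W$ — uniformly over the random data, which is why hypothesis (b) is imposed; once this and the lower eigenvalue bound (c) are secured, the remainder of the argument is a direct appeal to the existing online-NMF machinery adapted to injective motif sampling via Proposition~\ref{prop:inj_motif_conv}.
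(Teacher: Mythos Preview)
Your proposal is correct and follows essentially the same route as the paper: reduce to the online-NMF convergence theorem of \cite{lyu2020online} after establishing via Proposition~\ref{prop:inj_motif_conv} that the injective motif-sampling chain is irreducible and aperiodic with the appropriate stationary distribution, then deduce single-point convergence from $\lVert W_t - W_{t-1}\rVert_F = O(1/t)$ when the stationary set is finite. The one place the paper is slightly more careful is in noting that $(X_t)_{t\ge 1}$ by itself is \emph{not} a Markov chain (its transition law depends on the last homomorphism $\x_{Nt}$, not just on the patch matrix $X_t$), and so it explicitly passes to the augmented chain $\overline{X}_t=(X_t,\x_{Nt})$ and applies \cite[Thm.~4.1]{lyu2020online} to $X_t=\varphi(\overline{X}_t)$; your phrase ``deterministic function of the underlying chain sampled at gap $N$'' should be read in this augmented sense.
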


						\begin{remark}
							\textup{
								Assumptions (a)--(c) in Theorem~\ref{thm:NDL} are all reasonable and are easy to satisfy. Assumption (a) is satisfied if $\G$ is undirected, unweighted, and connected, which is the case for all of the networks that we study in the present paper. Assumptions (b) and (c) are standard assumptions in the study of online dictionary learning~\cite{mairal2010online, mairal2013stochastic, lyu2020online}. For instance, (b) is a common assumption in methods such as layer-wise adaptive-rate scaling (LARS)~\cite{efron2004least} that aim to find good solutions to problems of the form \eqref{eq:def_loss}. 
								See~\cite[Sec. 4.1]{mairal2010online} and~\cite[Sec. 4.1]{lyu2020online} for more detailed discussions of these assumptions. 
							}
						\end{remark}
						
						\begin{remark}
							\textup{		
								It is also possible to slightly modify both the optimization problem \eqref{eq:NDL_exact_minibatch} and our NDL algorithm so that Theorem~\ref{thm:NDL} holds for the modified problem and the algorithm without needing to assume (b) and (c). The modified problem is
								\begin{align}\label{eq:NDL_exact_minibatch_quadratic}
									\argmin_{W\in \mathcal{C}^{\text{dict}} \subseteq \R_{\ge 0}^{k^{2}\times r}}   \left( \E_{\x\sim \pi}\left[\inf_{H{\in}  \mathbb{R}^{r\times N}_{\ge 0} }  \lVert X - WH  \rVert_{F}^{2} + \lambda \lVert H \rVert_{1} + {\kappa'} \lVert H\rVert_{F}^{2} + {\lambda'}\lVert W \rVert_{F}^{2} \right] \right)\,,
								\end{align}
								where $\pi=\hat{\pi}_{F\rightarrow \G}$ if $\mathtt{MCMC} = \mathtt{PivotApprox}$ and $\pi=\pi_{F\rightarrow \G}$ otherwise. 
								The problem \eqref{eq:NDL_exact_minibatch_quadratic} is the same as problem \eqref{eq:NDL_exact_minibatch} with additional quadratic penalization terms for both $H$ and $W$ in the loss function $\ell$ (see \eqref{eq:def_loss}). By contrast, consider the modification of the NDL algorithm (see Algorithm~\ref{alg:NDL}) in which the objective function for $H_{t}$ in Algorithm~\ref{eq:def_ONMF} has the additional term $\kappa' \lVert H\rVert_{F}^{2}$ and we replace $P_{t}$ in Algorithm \ref{eq:def_ONMF} by $P_{t} + {\kappa'} I$. When $\lambda' > 0$ and $\kappa' > 0$, the modified objective function for $H_{t}$ is strictly convex, so $H_{t}$ is unique.
								Therefore, $H_t$ satisfies the uniqueness condition (b) in Theorem~\ref{thm:NDL}. Additionally, the smallest eigenvalue of each matrix $P_{t}$ that we compute using the modified NDL algorithm has a lower bound of $\kappa_{1}$ for all $t$, so it satisfies condition (c) in Theorem~\ref{thm:NDL}. One can then show that all parts of Theorem~\ref{thm:NDL} for the modified problem \eqref{eq:NDL_exact_minibatch_quadratic} and the modified NDL algorithm hold without assumptions (b) and (c).
								The argument, which we omit, is almost identical to the {proof of} Theorem~\ref{thm:NDL}.
							}
						\end{remark}

						\begin{proof}[\textbf{Proof of Theorem~\ref{thm:NDL}}]
							The proof of the first part of \textbf{(i)} is almost identical to the proof of~\cite[Corollary 6.1]{lyu2020online}.  Because Algorithm~\ref{alg:motif_inj} uses
							rejection sampling in addition to the pivot and the Glauber chains, we need to ensure that the following statements hold. First, we need the sequence of injective homomorphisms $(\x_{t})_{t\ge 0}$ that we sample using Algorithm~\ref{alg:motif_inj} to be 
									an irreducible and aperiodic Markov chain on the set of injective homomorphisms $F\hookrightarrow \G$. Second, we need $(\x_{t})_{t\ge 0}$ to have a unique stationary distribution that coincides with $\pi_{F\hookrightarrow \G}$ (see \eqref{eq:def_embedding_F_N_inj}). 
									We proved both of these statements in Proposition~\ref{prop:inj_motif_conv}. 
							
							To prove the first part of \textbf{(ii)}, we use the same essential argument as in the proof of~\cite[Corollary 6.1]{lyu2020online}. However, because our assertion is for the approximate pivot chain from the present article (see Algorithm~\ref{alg:pivot} with $\texttt{AcceptProb} = \texttt{Approximate}$), we need to use Proposition~\ref{prop:approximate_pivot} (instead of~\cite[Prop. 5.8]{lyu2023sampling}) to establish irreducibility and convergence of the associated Markov chain. The proofs of the second parts of both \textbf{(i)} and \textbf{(ii)} are identical for exact and approximate pivot chains.

							We give a detailed proof of \textbf{(ii)}. Let $\pi = \hat{\pi}_{F{\hookrightarrow} \G}$ if $\mathtt{MCMC}=\mathtt{PivotApprox}$ and $\pi=\pi_{F{\hookrightarrow }\G}$ {{for} $\mathtt{MCMC}\in \{ \mathtt{Pivot}, \mathtt{Glauber} \}$} {(see \eqref{eq:def_embedding_F_N} and \eqref{eq:approx_pivot_statioanary_dist}).} We define $(\x_{t})_{t\ge 0}$, $(X_{t})_{t\ge 1}$, and $(W_{t})_{t\ge 0}$ as before. We use a general convergence result for online NMF for Markovian data~\cite[Theorem 4.1]{lyu2020online}. 
							
							The matrices $X_{t}\in \R_{\ge 0}^{k^{2}\times N}$ that we compute in line 15 of Algorithm~\ref{alg:NDL} do not necessarily form a Markov chain because the forward evolution of the Markov chain depends both on the induced mesoscale patches and on the actual homomorphisms $(\x_{s})_{N(t-1)<s\le Nt}$. 
							However, if one considers the sequence $\overline{X}_{t}:=(X_{t}, \x_{Nt})$, then $\overline{X}_{t}$ forms a Markov chain. Specifically, the distribution of $X_{t+1}$ given $X_{t}$ depends only on $\x_{Nt}$ and $A$. Indeed, $\x_{Nt}$ and $A$ determine the distribution of the homomorphisms $(\x_{s})_{Nt<s \le N(t+1)}$, which in turn determine the $k^{2}\times N$ matrix $X_{t+1}$.
							
							 With assumption (a),~\cite[Theorems 5.7 and 5.8]{lyu2023sampling} and Proposition~\ref{prop:inj_motif_conv} imply that the Markov chain $(\x_{t})_{t\ge 0}$ of {injective} homomorphisms $F\hookrightarrow \G$ is a finite-state Markov chain that is irreducible and aperiodic with a unique stationary distribution $\pi$ {(see \eqref{eq:def_embedding_F_N_inj})}. This implies that the $N$-tuple of homomorphisms $(\x_{s})_{N(t-1)<s\le Nt}$ also yields a finite-state, irreducible, and aperiodic chain with a unique stationary distribution. Consequently, the Markov chain $(\overline{X}_{t})_{t\ge 0}$ 
							 is also a finite-state, irreducible, and aperiodic chain with a unique stationary distribution. In this setting, one can regard Algorithm~\ref{alg:NDL} as the online NMF algorithm in~\cite{lyu2020online} for the input sequence $X_{t}=\varphi(\overline{X}_{t})${, with} $t\ge 1$, where $\varphi(X, \x)=X$ is the projection onto the first coordinate. Because $\overline{X}_{t}$ takes only finitely-many values, the range of $\varphi$ is bounded. This verifies all hypotheses of~\cite[Theorem 4.1]{lyu2020online}, so the first part of \textbf{(ii)} follows.  
							
							Now suppose that there are finitely many stationary points $W_{1}^{*},\ldots,W_{m}^{*}$ of $\hat{f}$ in $\mathcal{C}^{\textup{dict}}$. Because $\lVert  W_{t-1} - W_{t} \rVert_{F}=O(1/t)$ (see~\cite[Prop. 7.5]{lyu2020online}), the first part of \textbf{(ii)} (which we proved above) implies that $W_{t}$ converges to $W_{i}^{*}$ almost surely for some unique index $i\in \{1,\ldots, m\}$.
						\end{proof}

						Our second convergence result for the NDL algorithm is similar to Theorem~\ref{thm:NDL}, but it concerns the case in which the network $\G$ is bipartite. Suppose that $\G$ is bipartite and that $F$ is a $k$-chain motif. Let $V_{1}\cup V_{2}=V$ be a bipartition of $\G$. {Let $\Omega_{0}$ denote the set of injective homomorphisms $F\hookrightarrow \G$.} We can define a subset $\Omega_{1}\subseteq {\Omega_{0}}$ of injective homomorphisms $F \hookrightarrow \G$ by $\x \in \Omega_{1}$ if and only if $\x(1)\in V_{1}$. Let $\Omega_{2}={\Omega_{0}}\setminus \Omega_{1}$. Because $\G$ is bipartite, neither the pivot chain nor the Glauber chain is
						irreducible as Markov chains with {the} state space $\Omega$. {Consequently, the injective motif-sampling {chains with the pivot chain} or the Glauber chain (see Algorithm~\ref{alg:motif_inj}) are not irreducible on the state space $\Omega_{0}$.}
						However, they are irreducible when we restrict them to each $\Omega_{i}$ with a unique stationary distribution $\pi^{(i)}_{F{\hookrightarrow} \G}$.
						(See the proof of~\cite[Theorem 5.7]{lyu2023sampling}.) 
						More explicitly, we compute
						\begin{align}\label{eq:def_embedding_F_N_bipartite}
							\pi^{(i)}_{F{\hookrightarrow} \G}( \mathbf{x} ) &:= \mathbb{P}_{\y\sim \pi_{F{\hookrightarrow} \G}}\left( \y=\x\,|\, \y\in \Omega_{i} \right) \\ 
							&=\frac{1}{\mathtt{Z}_{i}}  \left( \prod_{j \in \{2, \ldots, k\}}
							A(\mathbf{x}(j-1),\mathbf{x}(j))  \right) {\one(\textup{$\x$ is injective})} = \frac{{\mathtt{Z}_{0}}}{\mathtt{Z}_{i}} \pi_{F{\hookrightarrow} \G}(\x)\,, \quad i \in \{1,2\}\,,
						\end{align}  
						where 
						\begin{align}
							\mathtt{Z}_{i}&=\sum_{\x\in \Omega_{i}} \prod_{j \in \{{2}, \ldots, k\}}  A(\mathbf{x}({j-1}),\mathbf{x}(j)) {\one(\textup{$\x$ is injective})}, \\
							\mathtt{Z}_{0}&=\sum_{\x\in \Omega_{0}} \prod_{j \in \{{2}, \ldots, k\}}  A(\mathbf{x}({j-1}),\mathbf{x}(j)) {\one(\textup{$\x$ is injective})}.
						\end{align}
						We define two associated {conditional} expected loss functions 
						\begin{align}\label{def:expected_loss_bipartite}
							f^{(i)}(W) &:= \E_{\x\sim \pi^{(i)}_{{F\hookrightarrow \G}}}\left[\ell(X^{(N)}(\x),W)\right] \notag \\
							&=\E_{\x\sim \pi_{{F\hookrightarrow \G}}}\left[\ell(X^{(N)}(\x),W)\,\bigg| \, \x\in \Omega_{i}\right]\,, \quad i \in \{1,2\}\,,
						\end{align}
						where the equality follows from the first equality in \eqref{eq:def_embedding_F_N_bipartite}. Because the Markov chain $(\x_{t})_{t\ge 0}$ stays in $\Omega_{i}$ if it is initialized in $\Omega_{i}$, the 
						conditional expected loss functions $f^{(i)}$ are the natural objective function to minimize (instead of the 
						{expected} loss function $f$ in 
						\eqref{eq:NDL_exact_minibatch}). Similarly, for the approximate pivot chain, we define the distributions $\hat{\pi}_{{F\hookrightarrow \G}}^{(i)}$ and the {conditional} expected loss functions $\hat{f}^{(i)}$ as follows. For the probability distribution
						$\hat{\pi}_{{F\hookrightarrow \G}}$ (see \eqref{eq:approx_pivot_statioanary_dist}), let
						\begin{align}
							\hat{\pi}^{(i)}_{{F\hookrightarrow \G}}( \mathbf{x} ) &:= \mathbb{P}_{\y\sim \hat{\pi}_{{F\hookrightarrow \G}}}\left( \y=\x\,|\, \y\in \Omega_{i} \right)\\
							&=\frac{1}{|V_{i}|}  \left( \frac{\prod_{j \in \{2, \ldots, k\}} A(\mathbf{x}(j-1),\mathbf{x}(j)) }{\sum_{y_{2},\ldots,y_{k}\in V} A(\x(1),y_{2})\prod_{i = 3}^{k}A(y_{i-1},y_{i})}\right) = \frac{|V|}{|V_{1}|} \hat{\pi}_{{F\hookrightarrow \G}}(\x)\,, \label{eq:def_embedding_F_N_bipartite_approximate}\\
							f^{(i)}(W) &:= \E_{\x\sim \hat{\pi}^{(i)}_{{F\hookrightarrow \G}}}\left[\ell(X^{(N)}(\x),W)\right]\\ 
							&=\E_{\x\sim \hat{\pi}_{{F\hookrightarrow \G}}}\left[\ell(X^{(N)}(\x),W)\,\bigg| \, \x\in \Omega_{i}\right]\,, \quad i \in \{1,2\}\,. \label{def:expected_loss_bipartite_approx}
						\end{align}

						We now state our second convergence result for the NDL algorithm. This result is for bipartite networks. A convergence result that is analogous to the one in Theorem~\ref{alg:NDL} holds for the associated conditional expected loss function. 
						This implies that one can initialize homomorphisms in each $\Omega_{i}$ and compute sequences $W_{t}^{(i)}$ (with $i \in \{1,2\}$) of dictionary matrices to learn stationary points of both associated conditional expected loss functions $f^{(i)}$. However, when a $k$-chain motif $F=([k], A_{F})$ has an even number of nodes, one only needs to compute one sequence of dictionary matrices, because one can obtain the other sequence by the algebraic operation of taking a `mirror image' of a given square matrix. More precisely, define a map $\mathtt{Flip}:\R^{k^{2}\times r}\rightarrow \R^{k^{2}\times r}$ that maps $W \mapsto \overline{W}$ with
						the $j^{\textup{th}}$ column $\overline{W}(:,j)$ of $\overline{W}$ defined by 
						\begin{align}\label{eq:def_flip}
							\bar{X}(:,j):=\mathtt{vec}\circ \mathtt{rev} \circ \mathtt{reshape}(W(:,j))\,, \quad  j \in \{1, \ldots, r\}\,, 
						\end{align}
						where $W(:,j)$ denotes the $j^{\textup{th}}$ column of $W$, the operator $\mathtt{reshape}: \mathbb{R}^{k^{2}}\rightarrow \mathbb{R}^{k\times k}$ is the reshaping operator that we define in Algorithm~\ref{alg:reshape}, $\mathtt{rev}$ maps a $k \times k$ matrix $K$ to the $k \times k$ matrix {$(\bar{K}_{ab})_{a,b \in \{1, \ldots, k\}}$
						with entries $\bar{K}_{ab} = K(k-a+1, k-b+1)$}, and $\mathtt{vec}$ denotes the vectorization operator in Algorithm~\ref{alg:vectorize}. Applying $\mathtt{Flip}$ twice gives the identity map.

						\begin{theorem}[Convergence of the NDL Algorithm for Bipartite Networks]\label{thm:NDL2}
							Let $F=([k],A_{F})$ be a
							$k$-chain motif, and let $G = (V,A)$ be a network that satisfies the 
							the following properties:
							\begin{description}[leftmargin=0.7cm, topsep=0.2pt, itemsep=0.1cm]
								\item{(a')} $A$ is symmetric and the undirected and unweighted graph $(V, \one(A>0))$ is connected and bipartite. 
								
								\item{(b)} For all $t\ge 0$, there exists a unique solution $H_{t}$ in \eqref{eq:def_ONMF}.
								\item{(c)} For all $t\ge 0$, the eigenvalues of the positive semidefinite matrix $A_{t}$ in \eqref{eq:def_ONMF} are at least as large as some constant $\kappa_{1}>0$.
							\end{description}
							Let $(W_{t})_{t\ge 0}$ denote the sequence of dictionary matrices that we generate using Algorithm~\ref{alg:NDL}. 
							The following statements hold:
							%We then have the following properties:
							\begin{description}[leftmargin=0.7cm, topsep=0.2pt, itemsep=0.1cm]
								\item[(i)] Suppose that $\mathtt{MCMC}\in \{ \mathtt{Pivot}, \mathtt{Glauber} \}$. For each $i\in \{1,2\}$, conditional on $\x_{0}\in \Omega_{i}$, the sequence {of dictionary matrices $W_{t}$} converges almost surely as $t\rightarrow \infty$ to the set of stationary points of the associated conditional expected loss function $f^{(i)}$ 
								in \eqref{def:expected_loss_bipartite}. If \,$\mathtt{MCMC} = \mathtt{PivotApprox}$, then the same statement holds with $f^{(i)}$ replaced by the function $\hat{f}^{(i)}$ 
								in \eqref{def:expected_loss_bipartite_approx}. {If we also assume that $f^{(i)}$ 
									(respectively, $\hat{f}^{(i)}$), 
									with $i\in \{1,2\}$, has
									only finitely many stationary points in $\mathcal{C}^{\textup{dict}}$, it then follows that $W_{t}$ converges almost surely to a single stationary point of $f^{(i)}$ (respectively, $\hat{f}^{(i)}$)  as $t\rightarrow \infty$.}

								\vspace{0.1cm}
								\item[(ii)] Suppose that $\mathtt{MCMC} = \mathtt{Glauber}$ in Algorithm~\ref{alg:NDL} and that $k$ is even. Assume that $\x_{0}\in \Omega_{1}$. It then follows that, almost surely as $t\rightarrow\infty$, there is simultaneous convergence of
								$W_{t}$ to the set of stationary points of $f^{(1)}$ and convergence of $\overline{W}_{t}$ to the set of stationary points of $f^{(2)}$.
								Moreover, $f^{(1)}(W_{t}) = f^{(2)}(\overline{W_{t}})$ for all $t\ge 0$. If we also assume that $f^{(i)}$, with $i\in \{1,2\}$, has only finitely many stationary points in $\mathcal{C}^{\textup{dict}}$, it then follows that 
								$W_{t}$ converges to a stationary point of $f^{(1)}$ as $t \rightarrow \infty$ and $\overline{W}_{t}$ converges to a stationary point of $f^{(2)}$ as $t \rightarrow \infty$.
							\end{description}
						\end{theorem}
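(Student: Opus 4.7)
The plan is to reduce part (i) to the convergence analysis of Theorem~\ref{thm:NDL} applied to a restricted state space, and to establish part (ii) via a reversal symmetry argument. For part (i), my first step would be to show that when $\G$ is bipartite, conditional on $\x_0 \in \Omega_i$, the Markov chain on injective homomorphisms generated by Algorithm~\ref{alg:motif_inj} can be analyzed on the restricted state space $\Omega_i$. I would adapt the arguments of Propositions~\ref{prop:exact_MCMC}, \ref{prop:approximate_pivot}, and~\ref{prop:inj_motif_conv} to this restricted setting to obtain irreducibility and aperiodicity with unique stationary distribution $\pi^{(i)}_{F\hookrightarrow \G}$ (or $\hat{\pi}^{(i)}_{F\hookrightarrow \G}$ in the approximate-pivot case). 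Once this is in place, the online NMF convergence result~\cite[Thm.~4.1]{lyu2020online} would apply verbatim with $f$ and $\pi$ in \eqref{eq:NDL_exact_minibatch} replaced by $f^{(i)}$ and $\pi^{(i)}_{F\hookrightarrow \G}$ (respectively their hat-versions), yielding almost sure convergence of $W_t$ to the set of stationary points of $f^{(i)}$ (resp.\ $\hat{f}^{(i)}$). The refinement to a single limit under the finite-stationary-point hypothesis would follow from $\lVert W_t - W_{t-1}\rVert_F = O(1/t)$, exactly as at the end of the proof of Theorem~\ref{thm:NDL}.

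For part (ii), the plan is to establish the identity $f^{(1)}(W) = f^{(2)}(\overline{W})$ for every $W \in \mathcal{C}^{\textup{dict}}$, from which all three claims of (ii) follow. I would define the reversal map $R: \Omega_1 \to \Omega_2$ by $R(\x)(i) := \x(k-i+1)$; when $k$ is even and $V = V_1 \cup V_2$ is the bipartition, the parity alternation along the chain forces $\x(k) \in V_2$ whenever $\x(1) \in V_1$, so $R$ is a well-defined bijection. A direct calculation would then give $A_{R(\x)}(a,b) = A(\x(k-a+1), \x(k-b+1)) = A_{\x}(k-a+1, k-b+1)$, so after column-wise vectorization the data matrices satisfy $\Psi(R(\x_1),\ldots,R(\x_N)) = \mathtt{Flip}(\Psi(\x_1,\ldots,\x_N))$ in the sense of \eqref{eq:def_flip}. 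Next, I would couple two Glauber chains, one started at $\x_0 \in \Omega_1$ and one at $R(\x_0) \in \Omega_2$, by using the same uniformly chosen index $v \in [k]$ in the first chain and index $k-v+1$ in the second (uniformity on $[k]$ is invariant under this map), together with identical auxiliary randomness in the resampling step. The palindromic structure of the $k$-chain motif (edges incident to $v$ in $F$ correspond bijectively to edges incident to $k-v+1$ under $i\mapsto k-i+1$) together with the symmetry of $A$ from hypothesis (a') would show that the conditional kernel \eqref{eq:marginal_distribution_glauber} agrees under this coupling, whence $\x'_t = R(\x_t)$ almost surely for all $t$.

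Combining these ingredients, the measure-preservation $\pi^{(1)}(\x) = \pi^{(2)}(R(\x))$ (which follows from $Z_1 = Z_2$ via the $R$-bijection in \eqref{eq:def_embedding_F_N_bipartite}) and the orthogonality $\ell(X, W) = \ell(\mathtt{Flip}(X), \mathtt{Flip}(W))$ together would give $f^{(1)}(W) = \E_{\x \sim \pi^{(1)}}[\ell(X^{(N)}(\x), W)] = \E_{\y \sim \pi^{(2)}}[\ell(X^{(N)}(\y), \overline{W})] = f^{(2)}(\overline{W})$. Since $\mathtt{Flip}$ is a linear involution that preserves $\mathcal{C}^{\textup{dict}}$, the first-order condition \eqref{eq:stationary} transforms cleanly, so stationary points of $f^{(1)}$ map to stationary points of $f^{(2)}$. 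The per-step equality $f^{(1)}(W_t) = f^{(2)}(\overline{W}_t)$ is then immediate, and convergence of $\overline{W}_t$ to the stationary-point set of $f^{(2)}$ would follow from part (i) applied to $W_t$. The hard part will be the rigorous verification of the coupling of the Glauber resampling kernel under $R$: the product in \eqref{eq:marginal_distribution_glauber} must match term-by-term after reversal, which works only because the chain motif is palindromic and $A$ is symmetric. This is also why the statement is specific to Glauber and even $k$: the pivot chain singles out index $1$ asymmetrically in \eqref{eq:pivot_conditional} and breaks the coupling, and for odd $k$ the map $R$ sends $\Omega_1$ into itself rather than into $\Omega_2$, so no $\Omega_1 \leftrightarrow \Omega_2$ symmetry is available.
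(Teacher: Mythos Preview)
Your proposal is correct and follows essentially the same approach as the paper: part (i) is reduced to the non-bipartite argument on the restricted state space $\Omega_i$, and part (ii) is handled via the reversal map $R(\x)(j)=\x(k-j+1)$, a Glauber-chain coupling with paired indices $v$ and $k-v+1$, and the identity $f^{(1)}(W)=f^{(2)}(\overline{W})$. The only minor difference is that the paper deduces convergence of $\overline{W}_t$ by identifying it with the dictionary sequence $W_t'$ produced from the coupled chain started in $\Omega_2$ (so that part (i) applies directly to $W_t'$), whereas you argue abstractly that $\mathtt{Flip}$ carries stationary points of $f^{(1)}$ to those of $f^{(2)}$; both routes are valid.
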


						\begin{proof}

							We first prove \textbf{(i)}. 
							Fix ${i}\in \{1,2\}$ and recall the conditional stationary distribution $\pi^{(i)}_{{F\hookrightarrow \G}}$ in
							\eqref{eq:def_embedding_F_N_bipartite}. Conditional on $\x_{0}\in \Omega_{i'}$, the Markov chain $(\x_{t})_{t\ge 0}$ of injective homomorphisms is irreducible and aperiodic with a unique stationary distribution $\pi^{(i')}_{{F \hookrightarrow \G}}$. Recall that the conclusion of Theorem~\ref{thm:NDL} holds as long as the underlying Markov chain is irreducible. Therefore, $W_{t}$ converges almost surely to the set of stationary points of the associated conditional expected loss function $f^{(i')}$ 
							in \eqref{def:expected_loss_bipartite}. The same argument verifies the case with $\mathtt{MCMC} = \mathtt{PivotApprox}$.

							We now verify \textbf{(ii)}. We first establish some notation and claims. Define
							$\mu_{i} := \pi^{(i)}_{{F\hookrightarrow \G}}$ and suppose that $k$ is even. For each homomorphism $\x:F\rightarrow \G$, we define a map $\overline{\x}:[k]\rightarrow V$ by 
							\begin{align}
								\overline{\x}(j):=\x(k-j+1) \quad \text{for all \, $j \in \{1, \ldots, k\}$}\,.
							\end{align} 
							Note that $\x$ is injective if and only if {$\overline{\x}$} is injective.
							For even $k$, we have that $\x\in \Omega_{1}$ if and only if $\overline{\x} \in \Omega_{2}$. Because $A$ is symmetric, it follows that 
							\begin{align}
								\prod_{j=1}^{k-1} A(\x(j),\x(j+1))=\prod_{j=1}^{k-1} A(\x(j+1),\x(j)) = \prod_{j=1}^{k-1} A(\overline{\x}(j),\overline{\x}(j+1)) \,.
							\end{align}
							Therefore, $\mathtt{Z}_{1}=\mathtt{Z}_{2}=\mathtt{Z}_{0}/2$. Consequently, for each $\x \in \Omega_{1}$, \eqref{eq:def_embedding_F_N_bipartite} implies that 
							\begin{align}\label{eq:of_thm1_bipartite_1}
								\mu_{1}(\x) = \mu_{2}(\overline{\x}) = 2\pi_{{F\hookrightarrow \G}}(\x)\,.
							\end{align}

							Consider two Glauber chains, $({\z}_{t})_{t\ge 0}$ and $({\z}_{t}')_{t\ge 0}$, where ${\z}_{0}=\y$ and ${\z}_{0}'=\overline{\y}$. We evolve these two Markov chains using a common source of randomness so that individually they have Glauber-chain trajectories; 
							additionally,
							$\z_{t}'=\overline{\z_{t}}$ for all $t \ge 0$. We prove this by an induction on $t$. The claim clearly holds for $t = 0$. Suppose that $\z_{t}' = \overline{\z_{t}}$ for some $t \ge 0$. We want to show that $\z_{t+1}' = \overline{\z_{t+1}}$.
							For the update ${\z}_{t}\mapsto {\z}_{t+1}$ and ${\z}_{t}'\mapsto {\z}_{t+1}'$, we choose a node $v\in [k]$ uniformly at random and sample $z\in V$ according to the conditional distribution \eqref{eq:marginal_distribution_glauber}. We define 
							\begin{align}
								&\text{${\z}_{t+1}(v) = z$ \, and \, ${\z}_{t+1}(u)={\z}_{t}(u)$\,\,  for \, $u\ne v$}\,, \\
								&\text{${\z}'_{t+1}(k-v+1) = z$ \, and \, ${\z}'_{t+1}(u)={\z}_{t}(u)$ \,\,for\, $u\ne k-v+1$}\,.  
							\end{align}
							The update ${\z}_{t}\mapsto {\z}_{t+1}$ follows the Glauber-chain update in Algorithm~\ref{alg:glauber}. Additionally,
							${\z}'_{t+1}=\overline{{\z}}_{t+1}$ because 
							\begin{align}
								{\z}'_{t+1}(k-v+1) &= z = {\z}_{t+1}(v) = \overline{{\z}}_{t+1}(k-v+1)\,, \\
								{\z}'_{t+1}(u) &= {\z}_{t}'(u)=\overline{{\z}_{t}}(u) = {\z}_{t}(k-u+1) = {\z}_{t+1}(k-u+1) \\
								&=\overline{{\z}}_{t+1}(u) \,\,\, \text{for\, $u\ne k-v+1$\,.}
							\end{align}
							Finally, we need to verify that ${\z}_{t}'\mapsto {\z}_{t+1}'$ also follows the Glauber-chain update in Algorithm~\ref{alg:glauber}. It suffices to check that $z\in V$ has the same distribution as ${\z}'_{t+1}(k-v+1)$. Because $v$ is uniformly distributed on $[k]$, so is $k-v+1$. The distribution of $z\in V$ is determined by 
							\begin{align}
								p(z) \propto 
								\begin{cases}
									A(z, {\z}_{t}(2))) = A(z, \overline{{\z}_{t}}(k-1)))\,, & \text{if \quad$v=1$} \\
									\begin{matrix} \hspace{-2.65cm}A({\z}_{t}(v-1), z)A(z, {\z}_{t}(v+1)) \\  \qquad \qquad = A(\overline{{\z}_{t}}(k-v), z)A(z, \overline{{\z}_{t}}(k-v+2)) \end{matrix} \, \,, & \text{if \quad$v \in \{2, \ldots, k-1\}$}\\
									A({\z}_{t}(k-1), z)) = A(\overline{{\z}_{t}}(2), z)\,, & \text{if \quad $v = k$}\,.
								\end{cases}
							\end{align}
							Because ${\z}_{t}'=\overline{{\z}_{t}}$, it follows that $z$ follows
							the conditional distribution \eqref{eq:marginal_distribution_glauber} of ${\z}'_{t+1}(k-v+1)$, as desired.

							For the two Glauber chains, ${\z}_{t}$ and ${\z}_{t}'$, we observe that
							\begin{align}\label{eq:bipartite_flip_even}
								\overline{X^{(N)}(\y) } = X^{(N)}(\overline{\y})
							\end{align}
							almost surely. This result follows from the facts that ${\z}_{t}'=\overline{{\z}_{t}}$ for all $t\ge 0$ and $\mathtt{rev}(A_{{\z}}) = A_{\overline{{\z}}}$ for all ${\z}\in \Omega$. (See \eqref{eq:def_A_F_patch} for the definition of $A_{{\z}}$.) From this, we note that 
							\begin{align}\label{eq:def_conditional_loss_flipped}
								f^{(1)}(W) &= \E_{{\z}\sim \pi}\left[\ell(X^{(N)}({\z}), W) \,\bigg|\, {\z}\in \Omega_{1}\right] \\
								&= \E_{{\z} \sim \pi }\left[\ell\left(\overline{X^{(N)}({\z})}, \overline{W}\right) \,\bigg|\, \overline{{\z}}\in \Omega_{2}  \right] \\
								&=\E_{{\z}\sim \pi}\left[\ell\left(X^{(N)}(\overline{{\z}}), \overline{W}\right) \,\bigg|\, \overline{{\z}}\in \Omega_{2}  \right] \\
								&=\E_{{\z}\sim \pi}\left[\ell(X^{(N)}({\z}), \overline{W})   \,\bigg|\, {\z}\in \Omega_{2} \right] = f^{(2)}(\overline{W})\,.
							\end{align}
							The first and the last equalities use the second equality in \eqref{eq:def_embedding_F_N_bipartite}. The second equality uses the fact that $\ell(X,W)=\ell(\overline{X},\overline{W})$. The third equality follows from \eqref{eq:bipartite_flip_even}. The fourth equality follows from the change of variables $\overline{{\z}}\mapsto {\z}$ and the fact that ${\z}\sim \pi$ if and only if $\overline{{\z}}\sim \pi$ (see \eqref{eq:of_thm1_bipartite_1}) 
							
								We are now ready to prove \textbf{(ii)}. The first part of \textbf{(ii)} follows immediately from \textbf{(i)} and the above construction of 
								Glauber chains ${\z}_{t}$ and ${\z}_{t}'$ that satisfy ${\z}_{t}'=\overline{{\z}_{t}}$ for all $t\ge 0$. {Recall that the Markov chain of injective homomorphisms $\x_{t}$ is a subsequence of $\z_{t}$. Additionally, recall that $\z$ is injective if and only if $\overline{\z}$ is injective. Therefore, $\overline{\x}_{t}$ is the same subsequence of $\overline{z}$. That is, there exist integers $t_{l}$ (with $l \ge 1$) such that $\x_{l} = \z_{t_{l}}$} and $\overline{\x}_{l} = \overline{\z_{t_{l}}}$.
								Let $W_{t} = W_{t}({\x}_{0})$ and $W_{t}' = W_{t}'({\x}_{0}')$ denote the sequences of dictionary matrices that we compute using Algorithm~\ref{alg:NDL} with initial (not necessarily injective) homomorphisms $\x_{0}$ and $\x_{0}'$, respectively. Suppose that $\x_{0} \in \Omega_{1}$, from which we see that $\x_{0}' = \overline{\x_{0}}\in \Omega_{2}$. By \textbf{(i)}, $W_{t}$ and $W_{t}'$ converge almost surely to the set of stationary points of the associated conditional expected loss functions $f^{(1)}$ and $f^{(2)}$, respectively. We complete the proof of the first part of \textbf{(ii)} by observing that, almost surely,
								\begin{align}\label{eq:dict_flipped_bipartite}
									W_{t}'=\overline{W_{t}} \quad \text{for all \,\,$t\ge 0$}\,.
								\end{align}
								The second part of \textbf{(ii)} follows immediately from \eqref{eq:def_conditional_loss_flipped}.  
							
							We still need to verify \eqref{eq:dict_flipped_bipartite}. 
							All $k \times k$ mesoscale patches $A_{\overline{\x_{t}}} = \mathtt{ref}(A_{\x_{t}})$ have a reversed row and column ordering relative to their original ordering, so $k\times k$ latent motifs that we train on such matrices also have this reversed ordering of rows and columns. 
							One can check this claim by induction on $t$ together with \eqref{eq:bipartite_flip_even} and the uniqueness assumption (b). 
							We omit the details.  
						\end{proof}

						%%%%%
						
						\subsection{{Convergence and reconstruction {guarantees of our} NDR algorithm}}
						\label{sec:SI_NDR}
						
						{\color{black}
							
							We prove various theoretical guarantees for our NDR algorithm (see Algorithm~\ref{alg:network_reconstruction}) in Theorem~\ref{thm:NR}. Specifically, we show that the reconstructed network that we obtain using Algorithm~\ref{alg:network_reconstruction} at iteration $t$ converges almost surely to some limiting network as $t\rightarrow \infty$, and we give a closed-form expression of the limiting network. {We also derive an upper bound of
							the 
							{reconstruction error}. 
							Roughly, we state the bound as follows:
								\begin{align}
									\textup{Jaccard reconstruction error} \le \frac{\textup{{mesoscale} approximation error} }{2(k-1)} \,,
								\end{align}
								where $k$ denotes the number of nodes in a $k$-chain motif. The parameter $k$ is effectively a 
								scale parameter. In \eqref{eq:NR_Jaccard_bound}, we give a precise statement of the above bound.}

							Before stating our mathematical results, we first introduce some notation. Fix a network $\G=(V,A)$, a $k$-chain motif 
							$F = ([k], A_{F})$, and a homomorphism $\x:F\rightarrow \G$. 
							In this discussion, we do not assume that a homomorphism $F \rightarrow \G$ is injective, as we use all sampled homomorphisms for our network reconstruction using our NDR algorithm (see Algorithm~\ref{alg:network_reconstruction}), unlike in our NDL algorithm (see Algorithm~\ref{alg:NDL}) for learning latent motifs. 
							Let $\mathtt{denoising}$ denote the Boolean variable in Algorithm~\ref{alg:network_reconstruction}. For each matrix $B:V^{2}\rightarrow [0,\infty)$ and a node map $\x:[k]\rightarrow V$, define the $k\times k$ matrix $B_{\x}$ by 
							\begin{align}
								B_{\x}(a,b) := B(\x(a), \x(b))  \quad \text{for all \,\, $a,b \in \{1, \ldots, k\}$}\,.
							\end{align} 
							If $B=A$, then $B_{\x}=A_{\x}$ is the same as
							the mesoscale patch of $\G$ that is induced by $\x$ (see \eqref{eq:def_A_F_patch}). Additionally, given a network $\G=(V,A)$, a $k$-chain motif $F=([k], A_{F})$, a homomorphism $\x:F\rightarrow \G$, and a nonnegative matrix $W\in \R_{\ge 0}^{k^{2}\times r}$, let $\hat{A}_{\x;W}$ denote the $k \times k$ matrix that we defined in line 16 of Algorithm~\ref{alg:network_reconstruction}. 
							This matrix depends on the Boolean variable $\mathtt{denoising}$. Recall that $\hat{A}_{\x;W}$ is a nonnegative linear approximation of $A_{\x}$ that uses $W$.

						We consider
						the event $(x,y)\overset{\x} \hookleftarrow (a, b)$ using the following indicator function:
							\begin{align}\label{eq:def_indicator_visit}
								\one\bigg(\text{$(x,y)\overset{\x}{\hookleftarrow} (a, b)$}\bigg) &:= \one(\x(a)=x, \,\x(b)=y)  \\
								&\hspace{1cm} \times \one\left(\begin{matrix} \text{$\mathtt{InjHom}=\texttt{F}$} \\ \text{or $\x$ is injective} \end{matrix}\right)\,  \one\left( \begin{matrix} \text{$\mathtt{denoising}=\texttt{F}$} \\ \text{or $A_{F}(a,b)=0$} \end{matrix}\right) \,.
							\end{align}
							For each homomorphism $\x:F\rightarrow \G$ and $x,y \in V$, we say that the pair $(x,y)$ is \textit{visited by $(a,b)$ through $\x$} whenever the indicator on the left-hand side of \eqref{eq:def_indicator_visit} is $1$. Additionally,
							\begin{align}\label{eq:def_N_pq}
								N_{xy}(\x) := \sum_{a,b \in \{1, \ldots, k\}} \one\bigg(\text{$(x,y)\overset{\x}{\hookleftarrow} (a, b)$}\bigg)
							\end{align} 
							is the total number of visits to $(x,y)$ through $\x$. When $N_{xy}(\x)>0$, we say that the pair $(x,y)$ is \textit{visited by $\x$}. In Algorithm~\ref{alg:network_reconstruction}, observe that both $A_{\textup{count}}(x,y)$ and $A_{\textup{recons}}(x,y)$ change at iteration $t$ if and only if $N_{xy}(\x_{t})>0$. {Let}
							\begin{align}\label{eq:def_Omega_pq}
								\Omega_{xy} := \left\{ \x:F\rightarrow \G\,\big|\, N_{xy}(\x)>0\right\}  
							\end{align}
							{denote} the set of all homomorphisms $\x:F\rightarrow \G$ that visit the pair $(x,y)$.

							Suppose that $\hat{G}=(V,\hat{A})$ is a reconstructed network for $\G=(V,A)$. {Fix a probability distribution $\pi$ on the set of homomorphisms $F\rightarrow \G$. For a matrix $Q:V^{2}\rightarrow \R$, define the weighted $L_{1}$ norm 
								\begin{align}
									\lVert Q \rVert_{1,\pi} := \sum_{x,y\in V} |Q(x,y)|  \, \E_{\x\sim \pi}[ N_{xy}(\x) ] \,.
								\end{align}
								We define the following two quantities:
								\begin{align}\label{eq:JD}
									{\JD_{\pi}(\G,\hat{\G}) := \frac{ \lVert A - \hat{A} \rVert_{1,\pi } }{   \lVert A \lor \hat{A} \rVert_{1,\pi}  }\,, \quad   \JD(\G,\hat{\G}) } := \frac{ \lVert A - \hat{A} \rVert_{1 } }{   \lVert A \lor \hat{A} \rVert_{1}  }\,,
								\end{align}
								where $A \lor \hat{A}$ is defined as $(A\lor \hat{A})(x,y) = A(x,y)\lor \hat{A}(x,y)=\max\{ A(x,y), \hat{A}(x,y) \}$ for $x,y\in V$. 
								We refer to $\JD_{\pi}(\G, \hat{\G})$ as the \textit{Jaccard distance} between $\G=(V,A)$ and $\hat{G} = (V,\hat{A})$ with respect to $\pi$. 
									We refer to $\JD(\G,\hat{\G})$ as the \textit{unweighted Jaccard distance} between $\G=(V,A)$ and $\hat{G}=(V,\hat{A})$.

								To make sense of the definitions in \eqref{eq:JD}, consider the special case in which the weight matrices $A$ and $\hat{A}$ are both symmetric and binary with $0$ diagonal entries (i.e., no self-edges). We also assume that $k = 2$ and that $\pi$ is the uniform distribution on the set of homomorphisms $F\rightarrow \G$.
								 We then have that 
								\begin{equation} \label{above}
									\JD_{\pi}(\G, \hat{\G}) = \JD(\G, \hat{\G}) = 1 - \frac{ \textup{{number of} edges in both $\G$ and $\hat{\G}$}}{\textup{{number of} edges in $\G$ {or} 
											$\hat{\G}$}} \,.
								\end{equation}
								The fraction in the right-hand side of \eqref{above} 
								equals the Jaccard index for the edge sets of $\G$ and $\hat{\G}$. We used this Jaccard index as a reconstruction-accuracy measure in Figure~\ref{fig:network_recons} of the main manuscript. Consequently, it is reasonable to view the Jaccard distances $\JD_{\pi}(\G, \hat{\G})$ and $\JD(\G, \hat{\G})$, which coincide in this case, as the `Jaccard reconstruction error' of reconstructing $\G$ as $\hat{\G}$.}

							In the following proposition, we show that, under mild conditions, {the Jaccard distance $\JD_{\pi}(\G, \G')$ with respect to $\pi$ is close to the unweighted Jaccard distance $\JD(\G, \G')$}  if the weights $\E_{\x\sim \pi}[ N_{xy}(\x) ]$ do not vary much for node pairs {$(x,y)\in V^{2}$} with $|\Omega_{xy}| \ge 1$.

							\begin{prop}\label{prop:Jaccard_distance}
								Let $F=([k],A_{F})$ be {a} $k$-chain motif, and fix connected and symmetric networks $\G=(V,A)$ and $\hat{G}=(V,\hat{A})$ {with} the same node set $V$. Let $\pi$ be a probability distribution on the set of
								homomorphisms $F\rightarrow \G$. Suppose that the following conditions hold:
								\begin{description}[leftmargin=0.7cm, topsep=0.1cm, itemsep=0.1cm]
									\item{(a)} If $x,y\in V$ {satisfies}
									$|\Omega_{xy}| \ge 1$, {it follows that} $\hat{A}(x,y)=0$; 
									
									\item{(b)} Consider a homomorphism $\x:F\rightarrow \G$ {with} $\pi(\x)>0$. 
								\end{description}
								Let $\rho$ denote the maximum value of the ratio $\E_{\x\sim \pi}[N_{xy}(\x)]/\E_{\x\sim \pi}[N_{x'y'}(\x)]$ for $x,x',y,y'\in V$ {and suppose that $\E_{\x\sim \pi}[N_{x'y'}(\x)] \neq 0$.}
								{We {then} have that 
									\begin{align}
										\rho^{-1} {\JD(\G, \hat{\G})}  \le 	\JD_{\pi}(\G, \hat{\G}) \le  \rho {\JD(\G, \hat{\G})}\,.
									\end{align}
								}
							\end{prop}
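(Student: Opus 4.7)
The plan is to reduce the two-sided inequality to an elementary sandwich estimate for ratios of positive weighted sums. The key observation is that, setting $w_{xy} := \E_{\x\sim\pi}[N_{xy}(\x)]$, both the weighted Jaccard distance $\JD_\pi$ and the unweighted Jaccard distance $\JD$ are ratios of sums $\sum_{x,y} f(x,y)$ and $\sum_{x,y} f(x,y)\, w_{xy}$ with the same nonnegative integrand $f$ and identical effective support. Once this is established, the claimed inequalities reduce to the trivial fact that, for any nonnegative quantities $u_{xy}, v_{xy}$ that are nonzero on a common set $S$ of pairs, and any positive weights $w_{xy}$ on $S$ with $(\max_{S} w_{xy})/(\min_{S} w_{xy}) \le \rho$, we have
\begin{equation}
	\rho^{-1} \frac{\sum_{x,y} u_{xy}}{\sum_{x,y} v_{xy}} \;\le\; \frac{\sum_{x,y} u_{xy}\, w_{xy}}{\sum_{x,y} v_{xy}\, w_{xy}} \;\le\; \rho \, \frac{\sum_{x,y} u_{xy}}{\sum_{x,y} v_{xy}}.
\end{equation}

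First, I would verify the support property. Using hypotheses (a) and (b), I claim that for every pair $(x,y) \in V^{2}$ with $A(x,y)>0$ or $\hat A(x,y)>0$, one has $w_{xy}>0$, and conversely any pair with $w_{xy}=0$ contributes nothing to either the numerator or the denominator of $\JD$ or $\JD_\pi$. Indeed, hypothesis (b) guarantees the existence of at least one homomorphism of positive $\pi$-mass, and hypothesis (a), read in the contrapositive direction (pairs never visited by any homomorphism in $\Omega_{xy}$ carry no weight in $\hat A$, and by symmetry of the argument in $A$), forces the relevant support of both networks to lie inside $\{(x,y) : |\Omega_{xy}|\ge 1\}$. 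Since $|\Omega_{xy}|\ge 1$ together with $\pi$ assigning positive mass to at least one homomorphism in $\Omega_{xy}$ gives $w_{xy}>0$, the two sums $\lVert Q\rVert_{1}$ and $\lVert Q\rVert_{1,\pi}$ are supported on the same finite set $S \subseteq V^{2}$ for $Q \in \{A-\hat A,\ A \lor \hat A\}$.

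Next, I would apply the pointwise sandwich $w_{\min} \le w_{xy} \le w_{\max}$ for $(x,y)\in S$, where $w_{\min}$ and $w_{\max}$ denote the minimum and maximum of $w_{xy}$ over $S$. By definition of $\rho$, we have $w_{\max}/w_{\min} \le \rho$. Multiplying the nonnegative quantities $|A(x,y)-\hat A(x,y)|$ and $(A\lor\hat A)(x,y)$ by the pointwise bounds on $w_{xy}$ and summing gives
\begin{equation}
	w_{\min} \lVert A - \hat A \rVert_{1} \;\le\; \lVert A - \hat A \rVert_{1,\pi} \;\le\; w_{\max} \lVert A - \hat A \rVert_{1},
\end{equation}
and analogously for $\lVert A \lor \hat A \rVert$. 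Taking the ratio in the appropriate direction and using $w_{\max}/w_{\min}\le\rho$ yields the two stated inequalities.

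The only non-routine step is the support-matching lemma in the first paragraph: the weighted $L_{1}$ norm and the unweighted $L_{1}$ norm must range over the same set of active pairs, as otherwise a pair with $w_{xy}=0$ but $A(x,y)>0$ would make $\JD_\pi=0$ while $\JD>0$, breaking the inequality entirely. Hypotheses (a) and (b) are exactly what is needed to rule this out, and checking that they are used correctly (and that no pair escapes into the zero-weight set) is the main thing to get right; everything else is bookkeeping with the ratio inequality above.
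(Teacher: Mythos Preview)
Your proposal is correct and follows essentially the same approach as the paper: establish that the weighted and unweighted $L_1$ norms are supported on the same set of node pairs, then apply the sandwich bounds $w_{\min}\le w_{xy}\le w_{\max}$ to numerator and denominator separately and divide. The paper makes your support-matching step concrete via the shortest-path distance in $\G$: if $d_\G(x,y)>k$ then no $k$-walk visits $(x,y)$, so $A(x,y)=0$ automatically (an edge would force distance $1$) and $\hat A(x,y)=0$ by hypothesis~(a); if $d_\G(x,y)\le k$ then connectedness and symmetry of $\G$ furnish a homomorphism in $\Omega_{xy}$, and hypothesis~(b) (every homomorphism has positive $\pi$-mass) gives $w_{xy}>0$. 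Your phrase ``by symmetry of the argument in $A$'' is the one spot that needs this unpacking---it is not symmetric with the $\hat A$ case but rather uses the $k$-chain structure and connectedness of $\G$---though you rightly flag the support-matching lemma as the only non-routine step.
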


							\begin{proof}
								Consider two nodes $x,y\in V$, and let $d_{\G}(x,y)$ denote the {shortest-path distance between them} in $\G$. {This distance is} the minimum number of edges in a walk in $\G$ that connects $x$ and $y$. 
								If $d_{\G}(x,y)>k$, we then have that $A(x,y)=\hat{A}(x,y)=0$. Moreover, because $\G$ is symmetric, $d_{\G}(x,y)\le k$ implies that there exists a homomorphism $\x:F\rightarrow \G$. By condition (b), it follows that $\E_{\x\sim \pi } \left[ N_{xy}(\x) \right] > 0$ if $d_{\G}(x,y)\le k$. {Let} $c:= \min_{x,y\in V,\, |\Omega_{pq}| \ge 1 } \E_{\x\sim \pi}[N_{xy}(\x)]$. We then have 
								\begin{align} 
									c \, \lVert A - \hat{A} \rVert_{1}   &=  	\sum_{x,y\in V,\, d_{\G}(x,y)\le k }   \left|  A(x,y) - \hat{A}(x,y)  \right| c  \\ 
									&\le  	\sum_{x,y\in V,\, d_{\G}(x,y)\le k } \left|  A(x,y) - \hat{A}(x,y)  \right| \, \E_{\x\sim \pi}[N_{xy}(\x)]  \\ 
									&=  	\sum_{x,y\in V } \left|  A(x,y) - \hat{A}(x,y)  \right| \, \E_{\x\sim \pi}[N_{xy}(\x)]  = 	\lVert A - \hat{A} \rVert_{1,\pi} \,.
								\end{align}
								If we instead take $C:= \max_{x,y\in V,\, |\Omega_{pq}| \ge 1 } \E_{\x\sim \pi}[N_{xy}(\x)]$, we obtain  
								\begin{align} 
									C \lVert A - \hat{A} \rVert_{1}   &=  	\sum_{x,y\in V,\, d_{\G}(x,y)\le k }   \left|  A(x,y) - \hat{A}(x,y)  \right|  C  \\ 
									&\ge  	\sum_{x,y\in V,\, d_{\G}(x,y)\le k } \left|  A(x,y) - \hat{A}(x,y)  \right| \, \E_{\x\sim \pi}[N_{xy}(\x)]  = \lVert A - \hat{A} \rVert_{1,\pi} \,.
								\end{align}
								Therefore, $C \lVert A - \hat{A} \rVert_{1} \ge 	\lVert A - \hat{A} \rVert_{1,\pi}$.
								Moreover, by using a similar argument, we obtain $c	\lVert A \lor \hat{A} \rVert_{1} \le \lVert A - \hat{A} \rVert_{1,\pi} \le C \lVert A \lor \hat{A} \rVert_{1}$. We then let $\rho=C/c$ to complete the proof.
							\end{proof}

							%%%%%%%%%%%	

							{
								\begin{prop}\label{prop:inj_hom_coverage}
									Let $G=(V,A)$ be a symmetric and connected network, and let $F=([k],A_{F})$ be a
									$k$-chain motif. Suppose that $2(k-1)\le \textup{diam}(\G)$. For each node $x\in V$, it then follows that there exists an injective homomorphism $\x:F\hookrightarrow \G$ {with} $x\in \{ \x(1),\dots,\x(k) \}$. 
								\end{prop}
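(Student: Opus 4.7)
The plan is to use the diameter hypothesis to exhibit, for each node $x\in V$, a geodesic in $\G$ of length at least $k-1$ starting from $x$, and then truncate this geodesic to obtain the desired injective homomorphism $F\hookrightarrow \G$.

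First, since $\G$ is symmetric and connected, the shortest-path diameter $\textup{diam}(\G)$ is well-defined and is achieved by some pair of nodes: there exist $u,v\in V$ such that $d_{\G}(u,v)=\textup{diam}(\G)\ge 2(k-1)$. Fix an arbitrary node $x\in V$. By the triangle inequality for the shortest-path metric,
\begin{align}
d_{\G}(x,u) + d_{\G}(x,v) \;\ge\; d_{\G}(u,v) \;\ge\; 2(k-1)\,,
\end{align}
so at least one of $d_{\G}(x,u)$ and $d_{\G}(x,v)$ is at least $k-1$. Without loss of generality, assume $d_{\G}(x,u)\ge k-1$.

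Next, let $(y_{1},y_{2},\ldots,y_{m+1})$ be a shortest path from $x=y_{1}$ to $u=y_{m+1}$, where $m=d_{\G}(x,u)\ge k-1$. A standard observation about shortest paths (any repetition of a node would yield a strictly shorter walk, contradicting minimality) guarantees that the nodes $y_{1},\ldots,y_{m+1}$ are pairwise distinct. Because $m+1\ge k$, the truncation $(y_{1},y_{2},\ldots,y_{k})$ consists of $k$ distinct nodes of $\G$ with $A(y_{i},y_{i+1})>0$ for all $i\in\{1,\ldots,k-1\}$. Defining $\x:[k]\rightarrow V$ by $\x(i)=y_{i}$ then yields an injective homomorphism $\x:F\hookrightarrow \G$ from the $k$-chain motif into $\G$, and $x=\x(1)\in\{\x(1),\ldots,\x(k)\}$ by construction.

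There is no serious obstacle; the only slightly subtle point is verifying that the truncated prefix of a shortest path is itself a path with distinct nodes, but this is immediate from the minimality of shortest-path length. The role of the hypothesis $2(k-1)\le\textup{diam}(\G)$ is precisely to ensure, via the triangle inequality applied to a diameter-realizing pair, that from \emph{every} node $x$ we can walk out at least $k-1$ steps along a geodesic without revisiting nodes.
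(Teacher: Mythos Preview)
Your proof is correct and uses essentially the same ingredients as the paper's: from any node $x$, find a geodesic of length at least $k-1$ emanating from $x$ and truncate it to its first $k$ nodes. The paper frames this as a proof by contradiction (assuming no $k$-path through $x$ exists, it shows every geodesic from $x$ has fewer than $k$ nodes, and then concatenates two such geodesics through $x$ to bound the diameter strictly below $2(k-1)$), whereas you argue directly by applying the triangle inequality to a diameter-realizing pair $u,v$ to locate a single endpoint far enough from $x$. The two arguments are logically equivalent; your direct version is slightly cleaner and avoids the minor bookkeeping the paper does with two separate geodesics.
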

								
								\begin{proof}
									Our proof proceeds by contradiction.
									Suppose that there is a node $x \in V$ 
									for which there does not {exist} an injective homomorphism $\x:F\hookrightarrow \G$ with
									$x\in \{ \x(1),\dots,\x(k) \}$. We will show that this implies that $\textup{diam}(\G) < 2(k-1)$, which contradicts the hypothesis.

									Fix two nodes $y, y' \in V$. Consider a walk $(y_{1},\dots,y_{\bar{a}})$ from $x$ to $y$ (with $y_{1} = x$ and $y_{\bar{a}} = y$) and another walk $(z_{1},\dots,z_{\bar{b}})$ from $z$ to $x$  (with $z_{1} = y'$ and $z_{\bar{b}} = x$). We choose these walks so that the integers $\bar{a}, \bar{b} \ge 1$ are as small as possible. By minimality, we 
									 know that these walks are paths (i.e., all $y_{i}$ are distinct and all $z_{i}$ are distinct). If $\bar{a} \ge k$, then $(y_{1},\ldots,y_{k})$ gives a $k$-path, so $\x_{y}:[k]\rightarrow V$ with $\x_{y}(i)=y_{i}$ for $i\in \{1,\ldots,k\}$ is an injective homomorphism $F \hookrightarrow \G$ and $\x_{y}(1) = y_{1} = x$. This contradicts our assumption about the node $x$ that there is no injective homomorphism $\x:F\hookrightarrow \G$ with
									$x \in \{ \x(1),\ldots,\x(k) \}$. 
									Therefore, $\bar{a} < k$. By a similar argument, we know that $\bar{b} < k$. Consequently, $(z_{1},\ldots,z_{\bar{b}},y_{1},\ldots,y_{\bar{a}})$ is a $(\bar{b} + \bar{a})$-path from $y'$ to $y$. This implies that $d_{\G}(z,y) \le \bar{b} + \bar{a} < 2(k-1)$. Because the nodes $y$ and $y'$ are arbitrary, this implies that $\textup{diam}(\G) < 2(k-1)$, which contradicts the hypothesis. This completes the proof.
								\end{proof}
							}
							
						}

						We now state and prove our main theoretical result about our NDR algorithm (see Algorithm~\ref{alg:network_reconstruction}).

						\begin{theorem}[{Guarantees of the} NDR Algorithm (see Algorithm~\ref{alg:network_reconstruction}) for Non-Bipartite Networks]\label{thm:NR}
							Let $F=([k],A_{F})$ be a $k$-chain motif, and fix a network $\G = (V,A)$ and a network dictionary $W\in \mathbb{R}^{k^{2}\times r}$. We use Algorithm~\ref{alg:network_reconstruction} with inputs $\G$, $F$, and $W$ and the parameter value $T = \infty$. Let $\hat{\G}_{t} = (V, \hat{A}_{t})$ denote the network that we reconstruct at iteration $t$, and suppose that $\G$ satisfies assumption (a) of Theorem~\ref{thm:NDL}. Let {\color{black}
								\begin{align}
									\pi := \begin{cases}
										\pi_{F\rightarrow \G} & \textup{if $\mathtt{MCMC} \in \{ \mathtt{Glauber}, \mathtt{pivot}  \}$ and $\mathtt{InjHom} = \mathtt{F}$} \\
										\pi_{F\hookrightarrow \G} & \textup{if $\mathtt{MCMC} \in \{ \mathtt{Glauber}, \mathtt{pivot}  \}$ and $\mathtt{InjHom} = \mathtt{T}$ } \\
										\hat{\pi}_{F\rightarrow \G} & \textup{if $\mathtt{MCMC} = \mathtt{PivotApprox}$ and $\mathtt{InjHom} = \mathtt{F}$} \\
										\hat{\pi}_{F\hookrightarrow \G} & \textup{if $\mathtt{MCMC} = \mathtt{PivotApprox}$ and $\mathtt{InjHom} = \mathtt{T}$} {\,.} \\
									\end{cases}
								\end{align}
							}
							The following statements hold:
							\vspace{0.1cm}
							\begin{description}[leftmargin=0.7cm, topsep=0.2pt, itemsep=0.1cm]
								\item[(i)] (Convergence of the network reconstruction)
								The network $\hat{\G}_{t}$ converges almost surely to some limiting network $\hat{\G}_{\infty}=(V, \hat{A}_{\infty})$ in the sense that 
								\begin{align}
									\lim_{t\rightarrow \infty} \hat{A}_{t}(x,y) = \hat{A}_{\infty}(x,y)\in [0,\infty) \quad \text{almost surely for all $x,y \in V$}\,. 
								\end{align}

								\item[(ii)] {(Limiting reconstructed network)} Let $\hat{A}_{\infty}$ denote the limiting matrix in \textbf{\textup{(i)}}.  For each $x,y \in V$, we then have that 
								\begin{align}\label{eq:formula_limiting_recons_network}
									\hat{A}_{\infty}(x,y) = 
									\sum_{\y\in \Omega_{xy}}\left[ \sum_{a,b \in \{1, \ldots, k\}} \hat{A}_{\y;W}(a,b) \,\one\bigg(\text{$(x,y)\overset{\y}{\hookleftarrow} (a, b)$}\bigg)\right] \,\frac{ \pi(\y)}{ \E_{\x\sim \pi}\left[ N_{xy}(\x) \right]} \,,
								\end{align}
								where we regard the right-hand side to be {$0$} when $|\Omega_{xy}| = 0$.
								
								\item[(iii)]  {(Bounds for the Jaccard reconstruction error) Suppose that the range of $A$ is contained in $\{0\}\cup [1,\infty)$ and assume that $\mathtt{denoising}=\texttt{F}$.  Let $\hat{A}_{\infty}$ denote the limiting matrix in \textbf{\textup{(i)}}. {We then have that}
										\begin{align}\label{eq:NR_Jaccard_bound}
											\rho^{-1}  {\textup{JD}(\G,\hat{\G})}  \le 	\textup{JD}_{\pi}(\G,\hat{\G}) \le \frac{\mathbb{E}_{\x\sim \pi}\left[ \lVert A_{\x} - \hat{A}_{\x;W} \rVert_{1} \right] }{ 2(k-1)} \,,
										\end{align}
										where the constant $\rho > 0$  is as in Proposition~\ref{prop:Jaccard_distance}.} 
							\end{description}
						\end{theorem}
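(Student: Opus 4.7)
The three statements are naturally proved in the order (i), (ii), (iii), with (i) and (ii) derived in tandem from a single application of the Markov-chain ergodic theorem, and (iii) following from (ii) by the triangle inequality together with a lower bound for the denominator in $\textup{JD}_{\pi}$. The key observation is that Algorithm~\ref{alg:network_reconstruction} updates each reconstructed weight by a simple running-average rule. Let $T_{t}(x,y)$ denote the value of $A_{\textup{count}}(x,y)$ after $t$ iterations and let $S_{t}(x,y)$ denote the accumulated sum $\sum_{s=1}^{t}\sum_{a,b}\hat{A}_{\x_{s};W}(a,b)\,\mathbf{1}((x,y)\overset{\x_{s}}{\hookleftarrow}(a,b))$. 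Then by unrolling the update in~\eqref{eq:NR_update_step} one gets $\hat{A}_{t}(x,y)=S_{t}(x,y)/T_{t}(x,y)$ whenever $T_{t}(x,y)>0$, and $\hat{A}_{t}(x,y)=0$ otherwise.

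\textbf{Proof of (i) and (ii).} By Propositions~\ref{prop:exact_MCMC},~\ref{prop:approximate_pivot}, and~\ref{prop:inj_motif_conv}, and assumption (a), the sequence $(\x_{t})_{t\ge 0}$ produced by Algorithm~\ref{alg:network_reconstruction} is an irreducible, aperiodic finite-state Markov chain with unique stationary distribution $\pi$. The Markov-chain strong law of large numbers (e.g.,~\cite[Thm.~17.1.7]{meyn2012markov}) applied to the bounded functionals
\[
\phi_{xy}(\x) := N_{xy}(\x), \qquad \psi_{xy}(\x) := \sum_{a,b\in[k]}\hat{A}_{\x;W}(a,b)\,\mathbf{1}\bigl((x,y)\overset{\x}{\hookleftarrow}(a,b)\bigr)
\]
yields $T_{t}(x,y)/t\to \mathbb{E}_{\pi}[\phi_{xy}]$ and $S_{t}(x,y)/t\to \mathbb{E}_{\pi}[\psi_{xy}]$ almost surely. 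When $\mathbb{E}_{\pi}[\phi_{xy}]>0$, dividing gives the explicit formula~\eqref{eq:formula_limiting_recons_network}; when $\mathbb{E}_{\pi}[\phi_{xy}]=0$, the indicator $\mathbf{1}((x,y)\overset{\x}{\hookleftarrow}(a,b))$ vanishes $\pi$-almost surely for all $(a,b)$, hence $\Omega_{xy}$ contributes nothing and the limit is $0$ (also equal to $\hat{A}_{t}(x,y)$ for every $t$, so convergence is trivial). This establishes both (i) and (ii) simultaneously.

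\textbf{Proof of (iii).} The right inequality of~\eqref{eq:NR_Jaccard_bound} is the substantive bound; the left one is Proposition~\ref{prop:Jaccard_distance}. For the numerator, multiply~\eqref{eq:formula_limiting_recons_network} through by $\mathbb{E}_{\pi}[N_{xy}(\x)]$ and use the identity $A(x,y)\mathbf{1}((x,y)\overset{\y}{\hookleftarrow}(a,b))=A_{\y}(a,b)\mathbf{1}((x,y)\overset{\y}{\hookleftarrow}(a,b))$ to write
\[
\bigl(A(x,y)-\hat{A}_{\infty}(x,y)\bigr)\mathbb{E}_{\pi}[N_{xy}(\x)] = \sum_{\y}\pi(\y)\sum_{a,b}\bigl(A_{\y}(a,b)-\hat{A}_{\y;W}(a,b)\bigr)\mathbf{1}\bigl((x,y)\overset{\y}{\hookleftarrow}(a,b)\bigr).
\]
Taking absolute values, applying the triangle inequality, summing over $x,y\in V$, and interchanging summation orders, the inner sum $\sum_{x,y}\mathbf{1}((x,y)\overset{\y}{\hookleftarrow}(a,b))$ is at most $1$ since the indicator pins $x=\y(a)$ and $y=\y(b)$. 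This collapses to $\lVert A-\hat{A}_{\infty}\rVert_{1,\pi}\le \mathbb{E}_{\x\sim\pi}[\lVert A_{\x}-\hat{A}_{\x;W}\rVert_{1}]$. For the denominator, because $\hat{A}_{\infty}\ge 0$ and $A\ge 0$, we have $\lVert A\vee\hat{A}_{\infty}\rVert_{1,\pi}\ge \lVert A\rVert_{1,\pi}=\mathbb{E}_{\x\sim\pi}[\lVert A_{\x}\rVert_{1}]$. Since $\G$ is symmetric and the range of $A$ lies in $\{0\}\cup[1,\infty)$, every on-chain entry $A_{\x}(i,i{+}1)$ and $A_{\x}(i{+}1,i)$ for $i\in\{1,\dots,k{-}1\}$ is at least $1$, giving $\lVert A_{\x}\rVert_{1}\ge 2(k-1)$ deterministically. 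Combining these two bounds yields the desired inequality.

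\textbf{Main obstacle.} The subtlety is the bookkeeping around the indicator $\mathbf{1}((x,y)\overset{\x}{\hookleftarrow}(a,b))$ when the options $\mathtt{InjHom}$ and $\mathtt{denoising}$ are toggled: one must verify that the same ergodic argument and telescoping identity go through in all four combinations. With $\mathtt{InjHom}=\texttt{T}$ the chain is restricted to injective homomorphisms and the stationary distribution is $\pi_{F\hookrightarrow\G}$ or $\hat{\pi}_{F\hookrightarrow\G}$, but Proposition~\ref{prop:inj_motif_conv} supplies the irreducibility we need. With $\mathtt{denoising}=\texttt{F}$, which is the hypothesis of (iii), the on-chain entries are always counted and the $2(k-1)$ lower bound holds; a final mild technical point is that when $\mathbb{E}_{\pi}[N_{xy}(\x)]=0$ the ratio in~\eqref{eq:formula_limiting_recons_network} is formally $0/0$, and one needs the convention stated after that formula for the closed form to be well-defined.
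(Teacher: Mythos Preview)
Your proof is correct and follows essentially the same approach as the paper. Parts (i) and (ii) are identical in spirit: both unroll the running-average update to a ratio of partial sums and invoke the Markov-chain ergodic theorem, with irreducibility supplied by Propositions~\ref{prop:exact_MCMC}, \ref{prop:approximate_pivot}, and \ref{prop:inj_motif_conv}.

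For part (iii), your argument is in fact slightly cleaner than the paper's. The paper proves the more general bound~\eqref{eq:NDR_bound_gen_pf} for an arbitrary comparison network $\mathcal{G}'=(V,B)$ and then lower-bounds the denominator $\lVert B\vee\hat A_\infty\rVert_{1,\pi}$ through a somewhat involved chain of manipulations that routes through the indicator $\mathbf 1((B\vee\hat A_\infty)>0)$ and the identity~\eqref{eq:pf_NDR_identity}. You instead observe directly that $\lVert A\vee\hat A_\infty\rVert_{1,\pi}\ge\lVert A\rVert_{1,\pi}=\mathbb E_\pi[\lVert A_{\x}\rVert_1]\ge 2(k-1)$, which is shorter and arguably more transparent. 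One small wording fix: you write ``since $\mathcal G$ is symmetric,'' but assumption (a) only gives that $A$ is \emph{bidirectional}. This is still enough: $A(\x(i),\x(i{+}1))>0$ forces $A(\x(i{+}1),\x(i))>0$, and the range hypothesis $\{0\}\cup[1,\infty)$ then gives both entries $\ge 1$, yielding $\lVert A_{\x}\rVert_1\ge 2(k-1)$ as claimed.
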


						\begin{proof}

							Let $\x$ denote a random homomorphism $F\rightarrow \G$ with distribution $\pi$, and let $\P$ and $\mathbb{E}$ denote the associated probability measure and expectation, respectively. 
							
							We first verify \textbf{(i)} and \textbf{(ii)} simultaneously. Let $(\x_{t})_{t\ge 0}$ denote the Markov chain that we generate during the reconstruction process (see Algorithm~\ref{alg:network_reconstruction}). 
							Each $\x_{t}$ is an injective homomorphism $F\hookrightarrow \G$ if $\mathtt{InjHom} = \mathtt{T}$, and each $\x_{t}$
							is a homomorphism $F \rightarrow \G$ that may or may not be injective if $\mathtt{InjHom} = \mathtt{F}$. 
							We fix $x,y \in V$ and let 
							\begin{align}
								M_{t} = \sum_{s=1}^{t} \sum_{a,b \in \{1, \ldots, k\}}  \one\bigg(\text{$(x,y)\overset{\x_{s}}{\hookleftarrow} (a, b)$}\bigg) = \sum_{s=1}^{t} N_{xy}(\x_{s})\,,
							\end{align}
							where we defined the indicator $\one\big(\text{$(x,y)\overset{\x}{\hookleftarrow} (a, b)$}\big)$ in \eqref{eq:def_indicator_visit}. If $M_{t}=0$, then $\hat{A}_{t}(x,y)=0$. Suppose that $M_{t}\ge 1$. 
							The key observation is that 
							\begin{align}\label{eq:hatA_t_formula}
								\hat{A}_{t}(x,y) &= \frac{1}{M_{t}} \sum_{s=1}^{t} \sum_{a,b \in \{1, \ldots, k\}} \hat{A}_{\x_{s};W}(a,b)\one\bigg(\text{$(x,y)\overset{\x_{s}}{\hookleftarrow} (a, b)$}\bigg)\\
								&= \sum_{a,b \in \{1, \ldots, k\}} \frac{1}{M_{t}} \sum_{s=1}^{t} \sum_{\y\in \Omega_{xy}}  \hat{A}_{\y;W}(a,b)\one\bigg(\text{$(x,y)\overset{\x_{s}}{\hookleftarrow} (a, b)$}\bigg) \one (\x_{s}=\y) \\
								&= \sum_{\y\in \Omega_{xy}}  \sum_{a,b \in \{1, \ldots, k\}}   \hat{A}_{\y;W}(a,b)\one\bigg(\text{$(x,y)\overset{\y}{\hookleftarrow} (a, b)$}\bigg) \frac{t}{M_{t}} \,\frac{1}{t} \sum_{s=1}^{t} \one(\x_{s}=\y)\,.
							\end{align}
							With assumption (a), the Markov chain $(\x_{t})_{t\ge 0}$ of homomorphisms $F \rightarrow \G$ is irreducible and aperiodic with the unique stationary distribution $\pi$ (see \eqref{eq:def_embedding_F_N}).
							By the Markov-chain ergodic theorem (see, e.g.,~\cite[Theorem 6.2.1 and Example 6.2.4]{Durrett} or~\cite[Theorem 17.1.7]{meyn2012markov}), it follows that 
							\begin{align}
								\lim_{t\rightarrow \infty}	\frac{t}{M_{t}}  \frac{1}{t} \sum_{s=1}^{t} \one(\x_{s}=\y) =  \frac{\P\left( \x=\y \right)}{ \E\left[ N_{xy}(\x) \right] }\,.
							\end{align}
							By the definition of the probability distribution $\pi$, we have that $\pi(\x) > 0$ for all injective homomorphism $\x:F\hookrightarrow \G$. Therefore, $\E_{\x\sim \pi }[N_{xy}(\x)]>0$ if and only if $|\Omega_{xy}|\ge 1$.   This proves both \textbf{(i)} and \textbf{(ii)}.

							{\color{black}
								We now prove \textbf{(iii)}. {Conditions (a) and (b) of Proposition~\ref{prop:Jaccard_distance} are satisfied for $\hat{G}=\hat{G}_{\infty}$ 
									with the assumed choice of $\pi$.
									Therefore, the first inequality in \eqref{eq:NR_Jaccard_bound} follows immediately from Proposition~\ref{prop:Jaccard_distance}. To verify the second inequality, we prove a slightly more general result. Let $\G'=(V,B)$ be a network with the same node set $V$ as $\G=(V,A)$. Assume that $B$ is symmetric and that its range is contained in $\{0\}\cup [1,\infty)$.} For each $a,b \in \{1, \ldots, k\}$, {define} the {indicator function}
								\begin{align}\label{eq:def_1ab}
									\one_{ab}:= \one\left( \begin{matrix} \text{$\mathtt{denoising}=\texttt{F}$} \\ \text{or $A_{F}(a,b)=0$} \end{matrix}\right)\,.
								\end{align}
								We will show that 
								\begin{align}\label{eq:NDR_bound_gen_pf}
									\frac{ \lVert B - \hat{A}_{\infty} \rVert_{1, \pi } }{   \lVert B \lor \hat{A}_{\infty} \rVert_{1, \pi}  } \le \frac{1}{2(k-1)}  \sum_{\y\in \Omega} \sum_{a,b \in \{1, \ldots, k\}}  \left| B_{\y}(a,b)\mathbb{1}_{ab}-\hat{A}_{\y;W}(a,b)\mathbb{1}_{ab}\right|\pi(\y) \,.
								\end{align}
								If $\mathtt{denoising}=\texttt{F}$, the right-hand side of \eqref{eq:NDR_bound_gen_pf} reduces to $\frac{1}{2(k-1)}\E_{\x\sim \pi}[ \lVert B_{\x} - \hat{A}_{\x;W} \rVert_{1} ]$,  so \textbf{(iii)} is a special case of \eqref{eq:NDR_bound_gen_pf}.

								To verify \eqref{eq:NDR_bound_gen_pf}, we first claim that 
								\begin{align}\label{eq:NDR_bound_recons}
									\lVert B - \hat{A}_{\infty} \rVert_{1,\pi}  \le 
									\sum_{\y\in \Omega} \sum_{a,b \in \{1, \ldots, k\}}  \left| B_{\y}(a,b)-\hat{A}_{\y;W}(a,b)\right|\pi(\y) \,.
								\end{align}
								For each $a,b \in \{1, \ldots, k\}$ and $x,y \in \{1, \ldots, n\}$, let $\Omega_{ab\rightarrow xy}$ denote the set of 
								homomorphisms $\x:F\rightarrow \G$ such that $\one\big(\text{$(x,y)\overset{\x}{\hookleftarrow} (a, b)$}\big)=1$. By changing the order of the sums,
								we rewrite the formula in \eqref{eq:formula_limiting_recons_network} as 
								\begin{align}
									\hat{A}_{\infty}(x,y) = \sum_{a,b \in \{1, \ldots, k\}} \sum_{\y\in \Omega_{ab\rightarrow xy}}   \hat{A}_{\y;W}(a,b) \frac{\P \left( \x=\y \right)}{\mathbb{E}\left[ N_{xy}(\x) \right]}\,.
								\end{align}
								Additionally, observe that 
								\begin{align}
									\E\left[ N_{xy}(\x) \right] &= {\E}\left[  \sum_{a,b \in \{1, \ldots, k\}} \one(\x(a)=x, \,\x(b)=y) \, {\one\left( \begin{matrix} \text{$\mathtt{InjHom}=\texttt{F}$} \\ \text{or $\x$ is injective} \end{matrix}\right)\, }    \right] \\
									&=\sum_{a,b \in \{1, \ldots, k\}}\sum_{\y\in \Omega_{ab\rightarrow xy}} \P\left( \x=\y\right)\,. 
								\end{align}
								We now calculate 
								\begin{align}
									&\sum_{x,y\in V} \left| B(x,y) - \hat{A}_{\infty}(x,y)   \right| \mathbb{E}\left[ N_{xy}(\x) \right] \\
									&\qquad \qquad =  \sum_{x,y\in V} \left| B(x,y)\, \mathbb{E}\left[ N_{xy}(\x) \right] -  \sum_{a,b \in \{1, \ldots, k\}} \sum_{\y\in \Omega_{ab\rightarrow xy}} \hat{A}_{\y;W}(a,b) \, \P \left( \x=\y \right) \right| \\
									&\qquad\qquad=  \sum_{x,y\in V} \left|\sum_{a,b \in \{1, \ldots, k\}} \sum_{\y\in \Omega_{ab\rightarrow xy}} \left(B(x,y)-\hat{A}_{\y;W}(a,b)\right) \P\left( \x=\y \right)  \right| \\
									&\qquad\qquad\le \sum_{x,y\in V} \sum_{a,b \in \{1, \ldots, k\}}  \sum_{\y\in \Omega_{ab\rightarrow xy}} \left| B(\y(a),\y(b))-\hat{A}_{\y;W}(a,b)\right| \P\left( \x=\y \right)   \\
									&\qquad\qquad= \sum_{x,y\in V}\sum_{a,b \in \{1, \ldots, k\}} \sum_{\y\in \Omega}  \left| B(\y(a),\y(b))  -\hat{A}_{\y;W}(a,b) \right| \\
									&\hspace{7cm} \times \P\left( \x=\y \right)  \mathbb{1}(\y(a)=x, \, \y(b)=y)\\
									&\qquad\qquad= \sum_{\y\in \Omega} \P\left( \x=\y \right)  \sum_{a,b \in \{1, \ldots, k\}}  \left| B_{\y}(a,b)-\hat{A}_{\y;W}(a,b)\right| \sum_{x,y\in V} \mathbb{1}(\y(a)=x, \, \y(b)=y) \\
									&\qquad\qquad= \sum_{\y\in \Omega} \sum_{a,b \in \{1, \ldots, k\}}  \left| B_{\y}(a,b)-\hat{A}_{\y;W}(a,b)\right|\pi(\y)\,,
								\end{align}
								where 
								the indicator $\one_{ab}$ is defined in \eqref{eq:def_1ab}.
								This verifies the claim \eqref{eq:NDR_bound_recons}.

								It now suffices to show that 
								\begin{align}\label{eq:NDR_bd_bottom}
									\lVert B \lor \hat{A}_{\infty} \rVert_{1,\pi} \ge 2 \lVert A_{F}  \rVert_{1} = 2(k-1) \,,
								\end{align}
								where the equality 
								uses the fact that $F$ is a $k$-chain motif. For each $a,b\in \{1,\dots,k\}$ and a homomorphism $\x:F\rightarrow \G$, observe that
								\begin{align}\label{eq:pf_NDR_identity}
									(A_{F}(a,b)+A_{F}(b,a))  	\sum_{x,y\in V}     B(x,y)    \sum_{\y\in \Omega_{ab\rightarrow xy}} \one( \x=\y ) \ge A_{F}(a,b)+A_{F}(b,a) \,. 
								\end{align}
								The inequality \eqref{eq:pf_NDR_identity} uses the fact that $\x$ is a homomorphism. Therefore, $A_{F}(a,b)+A_{F}(b,a)>0$ and $\{ \x(a), \x(b) \} = \{x,y\}$ implies that $B(x,y)+B(y,x)>0$. Because {we assume that $B$ is symmetric} 
								and that the range of $B$ is contained in $\{0\}\cup [{1}, \infty)$, 
								it follows that $B(x,y)\ge 1$ and thus verifies the inequality \eqref{eq:pf_NDR_identity}.

								Recall the notation $\Omega_{ab\rightarrow xy}$ below the inequality \eqref{eq:NDR_bound_recons}, and observe that 
									\begin{align}
										\lVert B \lor \hat{A}_{\infty} \rVert_{1,\pi} &=\sum_{x,y\in V} \left| B(x,y) \lor \hat{A}_{\infty}(x,y)  \right| \E_{\x\sim \pi}\left[  N_{xy}(\x) \right]   \\
										& =  \E_{\x\sim \pi}\left[  	\sum_{x,y\in V}  \left| B(x,y) \lor \hat{A}_{\infty}(x,y)  \right| N_{xy}(\x) \right]   \\
										& =  \E_{\x\sim \pi}\left[  	\sum_{x,y\in V}   \left( B(x,y) \lor \hat{A}_{\infty}(x,y)  \right)  \sum_{a,b \in \{1, \ldots, k\}} \sum_{\y\in \Omega_{ab\rightarrow xy}} \one( \x=\y )  \right] \\
										& \ge  \E_{\x\sim \pi}\left[    \sum_{a,b \in \{1, \ldots, k\}} 	\sum_{x,y\in V}     B(x,y)    \sum_{\y\in \Omega_{ab\rightarrow xy}} \one( \x=\y )  \right] \\
										& \ge  \E_{\x\sim \pi}\left[    \sum_{a,b \in \{1, \ldots, k\}}  (A_{F}(a,b)+A_{F}(b,a)) 	\sum_{x,y\in V}     B(x,y)    \sum_{\y\in \Omega_{ab\rightarrow xy}} \one( \x=\y )  \right] \\
										& =   \E_{\x\sim \pi}\left[    \sum_{a,b \in \{1, \ldots, k\}} (A_{F}(a,b)+A_{F}(b,a))  \right] =2 \lVert A_{F} \rVert_{1}\,. 
									\end{align}
									For the last inequality, we have used the fact that $(A_{F}(a,b)+A_{F}(b,a)) \in \{ 0,1 \}$. This proves \eqref{eq:NDR_bd_bottom}, as desired. } 
						\end{proof}

						{
							\begin{remark}
								\normalfont 
									Suppose that the original network $\G=(V,A)$ is binary. The bound on the Jaccard reconstruction error in Theorem~\ref{thm:NR}\textup{\textbf{(iii)}} is for a direct comparison between the weighted reconstructed network $\hat{G}=(V,\hat{A})$ and the original binary network $\G=(V,A)$. In Figure~\ref{fig:network_recons} of the main manuscript, we instead used binary reconstructed networks $\hat{G}(\theta):=(V, \one(\hat{A} > \theta))$ that we obtain by thresholding the edge weights $\hat{A}(x,y)$ at some threshold $\theta\in (0,1)$. By modifying the argument in the proof of Theorem~\ref{thm:NR}\textup{\textbf{(iii)}, we obtain a similar bound for} the Jaccard reconstruction error for the thresholded reconstructed network $\hat{G}(\theta):=(V, \one(\hat{A} > \theta)  )$. We now sketch the argument.

									Let $\theta' := \min\{1-\theta, \, \theta\}$. For $\tilde{a} \in \{0,1\}$ and $\hat{a} \in [0,1]$, we obtain the inequalities 
									\begin{align}
										\theta' | \tilde{a} - \one(\hat{a}>\theta)  | \le | ( \tilde{a} - \hat{a}) \one(|\tilde{a}-\hat{a} |>  \theta' )  |  \le |  \tilde{a} - \hat{a} | \,.
									\end{align}
									Because $A: V^{2}\rightarrow \{0,1\}$ and $\hat{A}:V^{2}\rightarrow [0,1]$, it follows that 
									\begin{align}
										\theta' \lVert A - \one(\hat{A}>\theta)  \rVert_{1,\pi} \le \lVert ( A - \hat{A}) \one(|A-\hat{A} |>  \theta' )   \rVert_{1,\pi} \le \lVert A - \hat{A} \rVert_{1,\pi} \,.
									\end{align}
									By modifying the argument in the proof of Theorem~\ref{thm:NR}\textup{\textbf{(iii)}}, one can show that 
									\begin{align}\label{eq:NDR_recons_bound_binary_rmk}
										\theta'  \JD_{\pi}(\G, \hat{\G}(\theta)) \le \frac{ \lVert ( A - \hat{A}) \one(|A-\hat{A} |>  \theta' )   \rVert_{1}  }{ \Vert A \rVert_{1,\pi}}  \overset{(*)}{\le}  \frac{ \lVert  A - \hat{A}     \rVert_{1}  }{ \Vert A \rVert_{1,\pi}} \le \frac{	\mathbb{E}_{\x\sim \pi}\left[ \lVert A_{\x} - \hat{A}_{\x;W} \rVert_{1} \right] }{ 2(k-1)} \,.
									\end{align}
									By using Proposition~\ref{prop:Jaccard_distance}, one can also deduce that
									\begin{align}\label{eq:NR_Jaccard_bound_binary}
										{\JD(\G, \hat{\G}(\theta) )} \le (\rho/\theta') \, \frac{	\mathbb{E}_{\x\sim \pi}\left[ \lVert A_{\x} - \hat{A}_{\x;W} \rVert_{1} \right] }{ 2(k-1)} \,.
									\end{align}
									The inequality \eqref{eq:NR_Jaccard_bound_binary} gives a bound for the unweighted Jaccard distance between the original binary network $\G$ and the thresholded binary reconstructed network $\hat{\G}(\theta)$. 
										However, the bound \eqref{eq:NR_Jaccard_bound_binary} 
										is not sharp because of
											the possibly large constant $\rho/\theta'$ (which is at least $2.5$ for $\theta = 0.4$). 
											For instance, in Figure~\ref{fig:recons_bd_plot}\textbf{c}, we see that if $\G$ is \dataset{UCLA} and $W$ is the network dictionary of $r = 9$ latent motifs of \dataset{UCLA}, then the right-hand side of \eqref{eq:NR_Jaccard_bound_binary} is at least $2.5 \times 0.2 = 0.5$ for $\theta = 0.4$. However, in Figure~\ref{fig:network_recons}\textbf{c}, we see the empirical value of the left-hand side of \eqref{eq:NR_Jaccard_bound_binary} is about $0.05$. 
											To obtain some insight into this discrepancy, observe that the second inequality in \eqref{eq:NDR_recons_bound_binary_rmk} (which we mark with $(*)$) becomes very crude if many entries of $A$ and $\hat{A}$ do not differ by more than $\theta$, which appears to be the case in our numerical computations.

							\end{remark}
						}

						\begin{remark}
							\normalfont 
							We discuss two implications of Theorem~\ref{thm:NR}\textbf{(iii)}.

							The first implication is that we expect a network dictionary that tends to be effective at approximating the mesoscale patches of a network 
							to also be effective at approximating the entire network. 
							Suppose that $\mathtt{denoising} = \texttt{F}$ in Theorem~\ref{thm:NR}\textbf{(iii)}. 
							Recall that the columns of $W$ encode $r$ latent motifs $\mathcal{L}_{1},\ldots,\mathcal{L}_{r}\in \R_{\ge 0}^{k\times k}$ (see {Appendix}~\ref{subsection:NDL_formulation}). According to {\eqref{eq:NR_Jaccard_bound}, we have a perfect reconstruction} $\G = \hat{\G}_{\infty}$ if the right-hand side of {\eqref{eq:NR_Jaccard_bound} is} $0$. This is the case if $\sup_{\x:F\rightarrow \G} \ell(\mathtt{vec}(A_{\x}),W)=0$, {which implies} that $W$ can perfectly approximate all mesoscale patches $A_{\x}$ of $\G$. However, the right-hand side of \eqref{eq:NR_Jaccard_bound} can still be small if the worst-case approximation error $\sup_{\x:F\rightarrow \G} \ell(\mathtt{vec}(A_{\x}),W)$ is large but the expected approximation error $\E_{\x\sim \pi}\left[ \ell( \mathtt{vec}(A_{\x}), W)\right]$ is small (i.e., when $W$ is effective at approximating most of the mesoscale patches).

							How can we find a network dictionary $W$ that minimizes the right-hand side of \eqref{eq:NR_Jaccard_bound}? Although it is difficult to find a globally optimal network dictionary $W$ that minimizes the non-convex objective function on the right-hand side of \eqref{eq:NR_Jaccard_bound}, Theorems~\ref{thm:NDL} and~\ref{thm:NDL2} guarantee that our NDL algorithm (see Algorithm~\ref{alg:NDL}) always finds a locally optimal network dictionary. From these theorems, we know that the NDL algorithm with $N = 1$ computes a network dictionary $W$ that is approximately a local optimum of the expected loss function 
							\begin{align}\label{eq:NDR_bound_remark_4}
								f(W) = \E_{\x\sim \pi}\left[ \ell( \mathtt{vec}(A_{\x}), W)\right]\,,
							\end{align}
							where {$\pi=\hat{\pi}_{F\hookrightarrow \G}$} if $\mathtt{MCMC}=\mathtt{PivotApprox}$ and {$\pi=\pi_{F\hookrightarrow \G}$} {if $\mathtt{MCMC}\in \{ \mathtt{Glauber}, \mathtt{pivot}  \}$.}
							The function $f$ in \eqref{eq:NDR_bound_remark_4} is {similar to the one in the upper bound in} \eqref{eq:NR_Jaccard_bound}. 
							In our experiments, we find
							%found 
							that our NDL algorithm produces network dictionaries that are efficient at minimizing the reconstruction error. {See Figure~\ref{fig:network_recons} and the left-hand sides of \eqref{eq:NR_Jaccard_bound} and \eqref{eq:NR_Jaccard_bound_binary}.}

							The second implication is that reconstructing a corrupted network using a network dictionary that is trained from an uncorrupted network will yield a network that is similar to the uncorrupted network. Consider an uncorrupted network $\G'=(V,B)$ and a corrupted network $\G=(V,A)$. Additionally, suppose that we have trained the network dictionary $W$ for the uncorrupted network $\G'$, but that we use it to reconstruct the corrupted network $\G$. Even if $\hat{A}_{\x;W}$ is a nonnegative linear approximation of the $k \times k$ matrix $A_{\x}$ of a mesoscale patch of the corrupted network $\G$, it may be close to the corresponding mesoscale patch $B_{\x}$ of the uncorrupted network $\G'$ because we use the network dictionary $W$ that we learned from the uncorrupted network $\G'$.  Theorem~\ref{thm:NR}\textbf{(iii)} guarantees that the network $\hat{\G}_{\infty}$ that we reconstruct for the corrupted network $\G$ using the uncorrupted-network dictionary $W$ is close to the uncorrupted network $\G'$.    
						\end{remark}

						\begin{remark}
							\normalfont

							The update step (see line 17) for 
							the reconstruction in Algorithm~\ref{alg:network_reconstruction} indicates that we loop over all node pairs $(a,b)$ in a $k$-chain motif  
							and that we update the weight of the edge {$\{ \x_{t}(a), \x_{t}(b) \}$} in the reconstructed network using the homomorphism $\x:F\rightarrow \G$.

							There may be multiple node pairs $(a,b)$ in $F$ that contribute to the edge {$\{x,y\}$} in the reconstructed network because $\x_{t}(a) = x$ and $\x_{t}(b) = y$ can occur for multiple choices of $(a,b)$. 
							The output of this update step does not depend on the ordering of $a,b \in \{1,\ldots,k\}$, as one can see from the expressions in \eqref{eq:hatA_t_formula}.

							One can also consider the following alternative update step for the reconstruction. 
							In this alternative, we first choose two nodes, $x$ and $y$, of the reconstructed network in the image $\{\x_{t}(j)\,|\, j \in \{1,\ldots,k\}\}$ of the homomorphism $\x_{t}$ and average over all pairs $(a,b)\in \{1, \ldots, k\}^{2}$ such that $(x,y)$ is visited by $(a,b)$ through $\x_{t}$. We then update the weight of $(x,y)$ in the reconstructed network with this mean contribution from $\x_{t}$. Specifically, for each $a,b \in \{1, \ldots, k\}$, let $\mathbb{1}\big(\text{$(x,y)\overset{\x_{t}}{\hookleftarrow} (a, b)$}\big)$ denote the indicator in \eqref{eq:def_indicator_visit} and let $N_{xy}(\x_{t})\ge 0$ denote the number of visits of $\x_{t}$ to $(x,y)$ (see \eqref{eq:def_N_pq}). 
							We can then replace lines 17--19 in Algorithm~\ref{alg:network_reconstruction} with the following lines: 
							\begin{description}[leftmargin=0.7cm, topsep=0.2pt, itemsep=0.1cm]
								\item{\textit{Alternative update for reconstruction}:}
								\vspace{0.1cm}
								\item{} \quad \textbf{For} $x,y\in V$ such that $N_{xy}(\x_{t})>0$:
								\item{} \qquad $\displaystyle \widetilde{A}_{\x_{t};W}(x,y)\leftarrow \frac{\sum_{1\le a,b\le k} \hat{A}_{\mathbf{x}_{t};W}(a,b) \mathbb{1}\big(\text{$(x,y)\overset{\x}\hookleftarrow (a, b)$}\big) }{\sum_{1\le a,b\le k} \mathbb{1}\big(\text{$(x,y)\overset{\x}{\hookleftarrow} (a, b)$}\big)}$\,, \quad $j\leftarrow A_{\textup{count}}(x,y)+1$  
								\vspace{0.1cm}
								\item{} \qquad $A_{\textup{recons}}(x,y)\leftarrow (1 - j^{-1})A_{\textup{recons}}(x,y) + j^{-1} \widetilde{A}_{\x_{t};W}(x,y)$\,.
							\end{description}
							
							\medskip

							For the alternative NDR algorithm that we just described, we can establish a convergence result that is similar to Theorem~\ref{thm:NR} using a similar argument as the one in our proof of Theorem~\ref{thm:NR}. 
							Specifically, \textbf{(i)} holds for the alternative NDR algorithm, so there exists a limiting reconstructed network. In 
							\textbf{(ii)}, the formula for the limiting reconstructed network is now
							\begin{align}
								\hat{A}_{\infty}(x,y) = \sum_{\y\in \Omega_{xy}} \widetilde{A}_{\y;W}(x,y)\, \P_{\x\sim \pi}\left( \x=\y \,\big|\, \x\in \Omega_{xy} \right) \quad \text{for all} \quad \text{$x,y \in V$}\,,
							\end{align}
							where $\Omega_{xy}$ is the set of all homomorphisms that visit $(x,y)$ (see \eqref{eq:def_Omega_pq}). In particular, if $\G$ is an undirected and unweighted graph, then 
							\begin{align}
								\hat{A}_{\infty}(x,y) = \frac{1}{|\Omega_{xy}|}\sum_{\y\in \Omega_{xy}} \widetilde{A}_{\y;W}(x,y)  
								\quad \text{for all} \quad \text{$x,y \in V$}\,.
							\end{align}
							In the proof of \textbf{(iii)}, 
							the same error bounds hold with $\E_{\x\sim \pi}[N_{xy}(\x)]$ replaced by $\P_{\x\sim \pi}\left( \x\in  \Omega_{xy} \right)$. We omit the details of the proofs of the above statements for this alternative NDR algorithm. 
						\end{remark}

						We now discuss the convergence results of Algorithm~\ref{alg:network_reconstruction} for a bipartite network $\G$. Recall our notation and our discussion of bipartite networks above Theorem~\ref{thm:NDL2}. Additionally, for  bipartite networks, recall that there exist disjoint subsets $\Omega_{1}$ and $\Omega_{2}$ of the set $\Omega$ of all homomorphisms $F\rightarrow \G$ such that (1) $\Omega = \Omega_{1} \cup \Omega_{2}$ and (2) the Markov chain $(\x)_{t\ge 0}$ restricted to each $\Omega_{i}$ (with $i \in \{1,2\}$) is irreducible but is not irreducible on the set $\Omega$.

						\begin{theorem}[Convergence of our NDR Algorithm (see Algorithm~\ref{alg:network_reconstruction}) for Bipartite Networks]\label{thm:NR2}
							
							Let $F=([k],A_{F})$ be {a} $k$-chain motif, and let $\G=(V,A)$ be a network that satisfies assumption
							(a') in Theorem~\ref{thm:NDL2}. Let $\hat{\G}_{t}=(V, \hat{A}_{t})$ denote the network that we reconstruct using Algorithm~\ref{alg:network_reconstruction} at iteration $t$ {with} a fixed network dictionary $W\in \mathbb{R}^{k^{2}\times r}$. Fix $i \in \{1,2\}$ and an initial (not necessarily injective) homomorphism $\x_{0}\in \Omega_{i}$. Let $\pi=\hat{\pi}_{F\rightarrow \G}$ if \,$\mathtt{MCMC} = \mathtt{PivotApprox}$ and $\pi=\pi_{F\rightarrow \G}$ for $\mathtt{MCMC}\in \{ \mathtt{Glauber}, \mathtt{pivot}  \}$. The following properties hold:
							\begin{description}[leftmargin=0.7cm, topsep=0.2pt, itemsep=0.1cm]
								\item[(i)] (Convergence of the network reconstruction)
								The network $\hat{\G}_{t}$ converges almost surely to some limiting network $\hat{\G}_{\infty}=(V, \hat{A}_{\infty})$ in the sense that 
								\begin{align}
									\lim_{t\rightarrow \infty} \hat{A}_{t}(x,y) = \hat{A}_{\infty}(x,y) \quad \text{almost surely for all $x,y \in V$}\,. 
								\end{align}
								
								\item[(ii)--{(iii)}] The same statements as in statements \textup{\textbf{(ii)}--{\textbf{(iii)}}} of Theorem~\ref{thm:NR} hold with the expectation $\E_{\x\sim \pi}$ replaced by the conditional expectation $\E_{\x\sim \pi}[ \cdot \,|\, \x\in \Omega_{i}]$.
								
								\vspace{0.1cm}
								\item[{(iv)}] The results in \textup{{\textbf{(i)}}--{\textbf{(iii)}}} do not depend on $i \in \{1,2\}$ if $k$ is even. 
								
							\end{description}
						\end{theorem}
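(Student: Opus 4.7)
The plan is to reduce Theorem~\ref{thm:NR2} to Theorem~\ref{thm:NR} by exploiting the fact that, for a bipartite network $\G$, while the MCMC sampling chain $(\x_{t})_{t\ge 0}$ is not irreducible on the full set $\Omega$ of homomorphisms $F\rightarrow \G$, its restriction to each class $\Omega_{i}$ is irreducible and aperiodic with unique stationary distribution $\pi^{(i)}_{F\hookrightarrow \G}$ (respectively $\hat{\pi}^{(i)}_{F\hookrightarrow \G}$ when $\mathtt{MCMC}=\mathtt{PivotApprox}$), as discussed immediately above Theorem~\ref{thm:NDL2}. Because Algorithm~\ref{alg:network_reconstruction} advances $\x_{t}$ by a single MCMC move at every step, conditional on $\x_{0}\in \Omega_{i}$ the entire trajectory remains in $\Omega_{i}$, so the restricted chain is the only object we need to analyze.

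For parts \textbf{(i)} and \textbf{(ii)}, I would copy the proof of Theorem~\ref{thm:NR}\textbf{(i)}--\textbf{(ii)} verbatim, replacing the stationary distribution $\pi$ by $\pi^{(i)}$ and the state space $\Omega$ by $\Omega_{i}$. The only probabilistic input is the Markov-chain ergodic theorem, which applies directly to the restricted chain and yields $t^{-1}\sum_{s=1}^{t}\one(\x_{s}=\y)\to \pi^{(i)}(\y)$ almost surely for each $\y\in \Omega_{i}$; substituting this into the identity \eqref{eq:hatA_t_formula} gives the analogue of \eqref{eq:formula_limiting_recons_network} with $\Omega_{xy}$ replaced by $\Omega_{xy}\cap \Omega_{i}$ and normalization $\E_{\x\sim \pi}[\,\cdot\,|\,\x\in \Omega_{i}]$. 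For part \textbf{(iii)}, the algebraic manipulation in the proof of Theorem~\ref{thm:NR}\textbf{(iii)} does not use full irreducibility: it uses only that $\pi^{(i)}$ sums to $1$ and that every node pair we reconstruct is visited by at least one homomorphism in the support of $\pi^{(i)}$, both of which persist on $\Omega_{i}$. Hence the bound in \eqref{eq:NR_Jaccard_bound} carries over with the conditional expectation in place of $\E_{\x\sim \pi}$.

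For part \textbf{(iv)}, the plan is to adapt the flip-coupling argument from the proof of Theorem~\ref{thm:NDL2}\textbf{(ii)}. When $k$ is even and $A$ is symmetric, the map $\y\mapsto\overline{\y}$ defined by $\overline{\y}(j):=\y(k-j+1)$ is a bijection $\Omega_{1}\to\Omega_{2}$ satisfying $\pi^{(1)}(\y)=\pi^{(2)}(\overline{\y})$, $N_{xy}(\y)=N_{xy}(\overline{\y})$, and $A_{\overline{\y}}=\mathtt{rev}(A_{\y})$. The visit relation transforms as $(x,y)\overset{\y}{\hookleftarrow}(a,b)$ if and only if $(x,y)\overset{\overline{\y}}{\hookleftarrow}(k-a+1,k-b+1)$, so one can attempt to match the sums defining $\hat{A}_{\infty}^{(1)}(x,y)$ and $\hat{A}_{\infty}^{(2)}(x,y)$ term by term after relabeling the inner indices. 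The main obstacle will be that, for a fixed input dictionary $W$, the optimal mesoscale reconstructions $\hat{A}_{\y;W}$ and $\hat{A}_{\overline{\y};W}$ are computed independently as the best nonnegative linear approximations of $A_{\y}$ and $\mathtt{rev}(A_{\y})$ respectively, and they need not satisfy $\hat{A}_{\overline{\y};W}=\mathtt{rev}(\hat{A}_{\y;W})$ unless $W$ is invariant under the flip $\mathtt{Flip}$ of \eqref{eq:def_flip}. Resolving this will most likely require interpreting \textbf{(iv)} modulo the action of $\mathtt{Flip}$ on dictionaries, exactly as the relation \eqref{eq:dict_flipped_bipartite} does in the bipartite NDL setting of Theorem~\ref{thm:NDL2}\textbf{(ii)}.
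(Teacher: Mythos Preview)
Your approach for parts \textbf{(i)}--\textbf{(iii)} is exactly the paper's: the paper's proof consists of the single sentence ``The proofs of statements \textbf{(i)}--\textbf{(iii)} are identical to those for Theorem~\ref{thm:NR},'' which is precisely your reduction---restrict the chain to $\Omega_{i}$, invoke irreducibility and aperiodicity there, and rerun the ergodic-theorem argument behind \eqref{eq:hatA_t_formula} with $\pi$ replaced by the conditional distribution $\pi^{(i)}$.

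For part \textbf{(iv)}, the paper's proof is equally terse: ``Statement \textbf{(iv)} follows from a similar argument as in the proof of Theorem~\ref{thm:NDL2}\textbf{(ii)} by constructing coupled Markov chains $(\x_{t})_{t\ge 0}$ and $(\x_{t}')_{t\ge 0}$ such that $\x_{t}'=\overline{\x_{t}}$ for all $t\ge 0$.'' This is the same flip-coupling you propose. The obstacle you identify---that for a \emph{fixed} dictionary $W$ one generally has $\hat{A}_{\overline{\y};W}\ne \mathtt{rev}(\hat{A}_{\y;W})$, since the latter equals $\hat{A}_{\overline{\y};\overline{W}}$---is a genuine gap that the paper's one-line argument does not address. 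The coupling in Theorem~\ref{thm:NDL2}\textbf{(ii)} yields \eqref{eq:dict_flipped_bipartite}, i.e.\ $W_{t}'=\overline{W_{t}}$, not $W_{t}'=W_{t}$; the analogous conclusion here would be that the $\Omega_{2}$-reconstruction with $W$ coincides with the $\Omega_{1}$-reconstruction with $\overline{W}$, not with $W$. Your suggested resolution (interpret \textbf{(iv)} modulo the $\mathtt{Flip}$ action on dictionaries, or restrict to flip-invariant $W$) is the natural way to make the statement precise, and goes beyond what the paper actually writes down.
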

						
						\begin{proof}
							The proofs of statements \textbf{(i)}--{\textbf{(iii)}} are identical to those for Theorem~\ref{thm:NR}. Statement {\textbf{(iv)}} follows from a similar argument as in the proof of Theorem~\ref{thm:NDL2}\textbf{(ii)} by constructing coupled Markov chains $(\x_{t})_{t\ge 0}$ and $(\x_{t}')_{t\ge 0}$ such that $\x_{t}'=\overline{\x_{t}}$ for all $t\ge 0$. 
						\end{proof}

						\begin{remark}
							\normalfont
							In Theorem~\ref{thm:NR2}\textbf{(ii)}, let $\hat{G}_{\infty}^{(i)}=(V, \hat{A}_{\infty}^{(i)})$, with $i \in \{1,2\}$, denote the limiting reconstructed network for $\G$ conditional {on} {initializing the Markov chain} in the subset $\Omega_{i}${.} When $k$ is even, Theorem~\ref{thm:NR2}\textbf{(iv)} implies that $\hat{\G}_{\infty}^{(1)}=\hat{\G}_{\infty}^{(2)}$. When $k$ is odd, we run the NDR algorithm (see Algorithm~\ref{alg:network_reconstruction}) twice with the Markov chain initialized {once in $\Omega_{1}$ and {once} in $\Omega_{2}$}. We then define the network $\hat{\G}_{\infty} = (V, (\hat{A}_{\infty}^{(1)}+\hat{A}_{\infty}^{(2)})/2)$ whose weight matrix is the mean of those of the two limiting reconstructed networks $\hat{\G}^{(i)}_{\infty}$ for $i \in \{1,2\}$. We obtain a similar error bound as in Theorem~\ref{thm:NR2}\textbf{(iii)} for the mean limiting reconstructed network $\hat{\G}_{\infty}$. In practice, one can obtain a sequence of reconstructed networks that converges to $\hat{\G}_{\infty}$ by reinitializing the Markov chain every $\tau$ iterations of the reconstruction procedure for any fixed $\tau$. 
						\end{remark}
						
						%%%%%%
						
						%%%%%%

						\section{Auxiliary Algorithms}
						\label{subsection:aux_alg}

						We now present auxiliary algorithms for solving various subproblems of Algorithms~\ref{alg:NDL} and~\ref{alg:network_reconstruction}. Let $\Pi_{S}$ denote the projection operator onto a subset $S$ of a 
						space. For each matrix $A$, let $[A]_{\bullet i}$ (respectively, $[A]_{i\bullet}$) denote the $i^{\textup{th}}$ column (respectively, $i^{\textup{th}}$ row) of $A$.

						\begin{algorithm}
							\renewcommand{\thealgorithm}{A1}
							\caption{\!\!. Coding}
							\label{algorithm:spaser_coding}
							\begin{algorithmic}[1]
								\State \textbf{Input:} Data matrix $X \in  \mathbb{R}^{d\times d'}$, dictionary matrix $W\in \mathbb{R}^{d\times r}$
								\State \textbf{Parameters:} $T\in \mathbb{N}$ (the number of iterations), $\lambda>0$ (the coefficient of an $L_{1}$-regularizer),
								\Statex \hspace{2.3cm} $\mathcal{C}^{\textup{code}}\subseteq \mathbb{R}^{r\times d'}$ (a convex constraint set of codes)
								
								\State \textbf{For $t=1,\ldots,T$:}
								\State \qquad \textbf{Do:} 
								\begin{align}\label{eq:algorithm_H}	
									H \leftarrow \Pi_{\mathcal{C}^{\textup{code}}}\left( H - \frac{1}{\tr(W^{T}W)}(W^{T}W H - W^{T}X + \lambda \mathbf{1}_{d\times d'})  \right)\,,
								\end{align}			
								\qquad where ${\mathbf{1}_{d\times d'}}\subseteq \mathbb{R}^{d\times d'}$ is the matrix with all $1$ entries

								\State \textbf{Output:} $H\in \mathcal{C}^{\textup{code}}\subseteq \R^{r\times d'}$
							\end{algorithmic}
						\end{algorithm}

						\begin{algorithm}
							\renewcommand{\thealgorithm}{A2}
							\caption{\!\!. Dictionary-Matrix Update}
							\label{algorithm:dictionary_update}
							
							\begin{algorithmic}[1]
								\State {\textbf{Input:} Previous dictionary matrix $W_{t-1} \in \R^{k^{2}\times r}$, previous aggregate matrices $(P_{t}, Q_{t})\in \R^{r\times r}\times \R^{r\times N}$}
								\State {\textbf{Parameters:} $\mathcal{C}^{\textup{dict}}\subseteq \R^{k^{2}\times r}$ (compactness and convexity constraint for dictionary matrices)},
								\Statex \hspace{2.3cm} {$T\in \mathbb{N}$ (the number of iterations)}
								
								\State \textbf{For $t=1,\ldots,T$:} 
								\State \qquad $W\leftarrow W_{t-1}$
								\State \qquad \textbf{For $j=1,2,\ldots,N$:}
								\begin{align}\label{eq:dictionary_column_update}
									W(:,j) \leftarrow \Pi_{\mathcal{C}^{\textup{dict}}} \left( W(:,j)  - \frac{1}{A_{t}(j,j)+1} (W P_{t}(:,j) - Q_{t}^{T}(:,j) ) \right)
								\end{align}
								\State \textbf{Output:} $W_{t}=W\in \mathcal{C}^{\textup{dict}}\subseteq \R^{k^{2}\times r}_{\ge 0}$
							\end{algorithmic}
						\end{algorithm}

						\begin{algorithm}
							\renewcommand{\thealgorithm}{A3}
							\begin{algorithmic}[1]
								\caption{\!\!. Rejection Sampling of Homomorphisms}\label{alg:rejection_motif}
								\State \textbf{Input:} Network $\G=(V,A)$, a $k$-chain motif $F=([k],A_{F})$
								
								{\qquad ($\triangleright$ This algorithm works for all motifs, but we specialize it to $k$-chain motifs.)}
								
								\vspace{0.1cm}
								\State \textbf{Requirement:} There exists at least one homomorphism $F\rightarrow \G$
								
								\vspace{0.1cm}
								\State \textbf{Repeat:} Sample a sequence $\x = [\x(1),\x(2),\ldots,\x(k)]\in V^{[k]}$ 
								such that $\x(1),\ldots,\x(k)$ are independent and they each are sampled uniformly from $V$

								\State \qquad \textbf{If}  $\prod_{i,j \in \{1, \ldots, k\}} A(\mathbf{x}(i), \mathbf{x}(j))^{A_{F}(i,j)} > 0$
								\State \qquad \qquad \textbf{Return} $\x:F\rightarrow \G$ and \textbf{Terminate}
								\State \textbf{Output:} Homomorphism $\x:F\rightarrow \G$
							\end{algorithmic}
						\end{algorithm}

						\begin{algorithm}
							\renewcommand{\thealgorithm}{A4}
							\begin{algorithmic}[1]
								\caption{\!\!. Vectorization}\label{alg:vectorize}
								\State \textbf{Input:} Matrix $X \in \R^{k_{1} \times k_{2} }$
								\vspace{0.1cm}
								
								\State \textbf{Output:} Matrix $\texttt{vec}(X) := Y\in \R^{k_{1}k_{2}\times 1}$\,, where 
								\begin{align}
									Y(k_{2}(j-1)+i,1) = X(i,j)\quad  \text{for all $i \in \{1, \ldots, k_{1}\}$ and $j \in \{1, \ldots, k_{2}\}$}
								\end{align}
								
							\end{algorithmic}
						\end{algorithm}

						\begin{algorithm}
							\renewcommand{\thealgorithm}{A5}
							\begin{algorithmic}[1]
								\caption{\!\!. Reshaping}\label{alg:reshape}
								\State \textbf{Input:} Matrix $X \in \R^{k_{1}k_{2} \times 1}$, a pair $(k_{1},k_{2})$ of integers
								\vspace{0.1cm}
								
								\State \textbf{Output:} Matrix $\texttt{reshape}(X) := Y\in \R^{k_{1}\times k_{2}}$\,, where 
								\begin{align}
									Y(i,j) = X(k_{2}(j-1)+i,1)\quad  \text{for all $i \in \{1, \ldots, k_{1}\}$   and   $j \in \{1, \ldots, k_{2}\}$}
								\end{align}
								
							\end{algorithmic}
						\end{algorithm}

						%%%%%%
						
						\newpage

						\hspace{3cm}

						%%%%%%

						\section{Additional Figures}
						\label{subsection:additional_figures}

						In Figures \ref{fig:Figure4__PRC_REC}, \ref{fig:Figure_PR_curve}, and \ref{fig:Figure_PR_curve_flipped}, we show additional binary-classification measures
						for the network-denoising experiments in Figure \ref{fig:Figure4}. In Figures \ref{fig:all_dictionaries_1}--\ref{fig:all_dictionaries_rank_4}, we show latent motifs of 
						all networks we consider in this work at various choices of parameters.

						\begin{figure*}[h]
							\centering
							\includegraphics[width=1\linewidth]{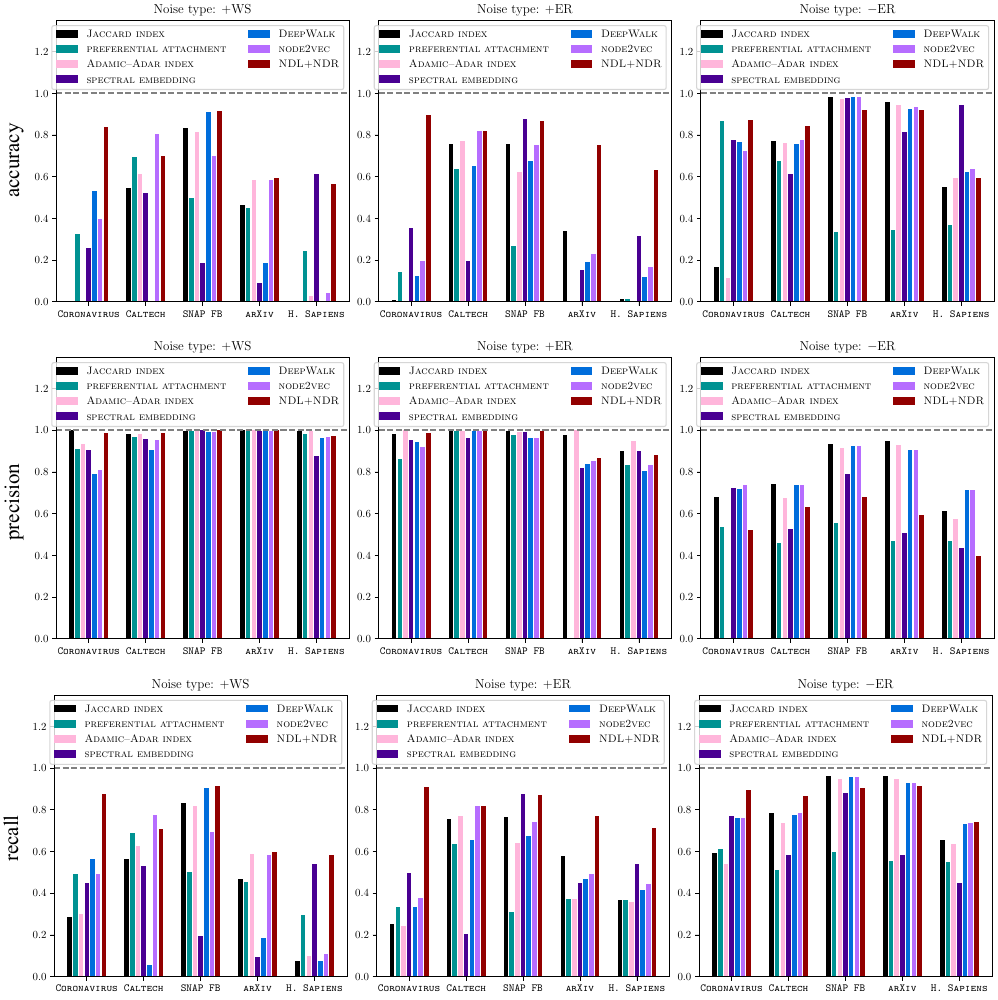}
							\caption{The accuracy, precision, and recall scores for the network-denoising experiments in Figure~\ref{fig:Figure4} of the main manuscript. 
									{For convenience, we recall the definitions of these quantities, which one can use for binary classification. 
										One summarizes the result of a binary classification using combinations of four quantities: 
										$\text{TP}$ (true positives), which is the number of positives that are classified as positive; $\text{TN}$ (true negatives), which is the number of {negatives} that are classified as negative; $\text{FP}$ (false positives), which is the number of positives that are classified as negative; and $\text{FN}$ (false negatives), which is the number of {negatives} that are classified as positive. The total number of examples is the sum of these four quantities. Accuracy is $\frac{\text{TP} + \text{TN}}{\text{TP} + \text{TN} + \text{FP} + \text{FN}}$, precision is $\frac{\text{TP}}{\text{TP} + \text{FP}}$, and recall is $\frac{\text{TP}}{\text{TP} + \text{FN}}$.}  
							}
							\label{fig:Figure4__PRC_REC}
						\end{figure*}

						\newpage

						\begin{figure*}[h]
							\centering
							\includegraphics[width=1\linewidth]{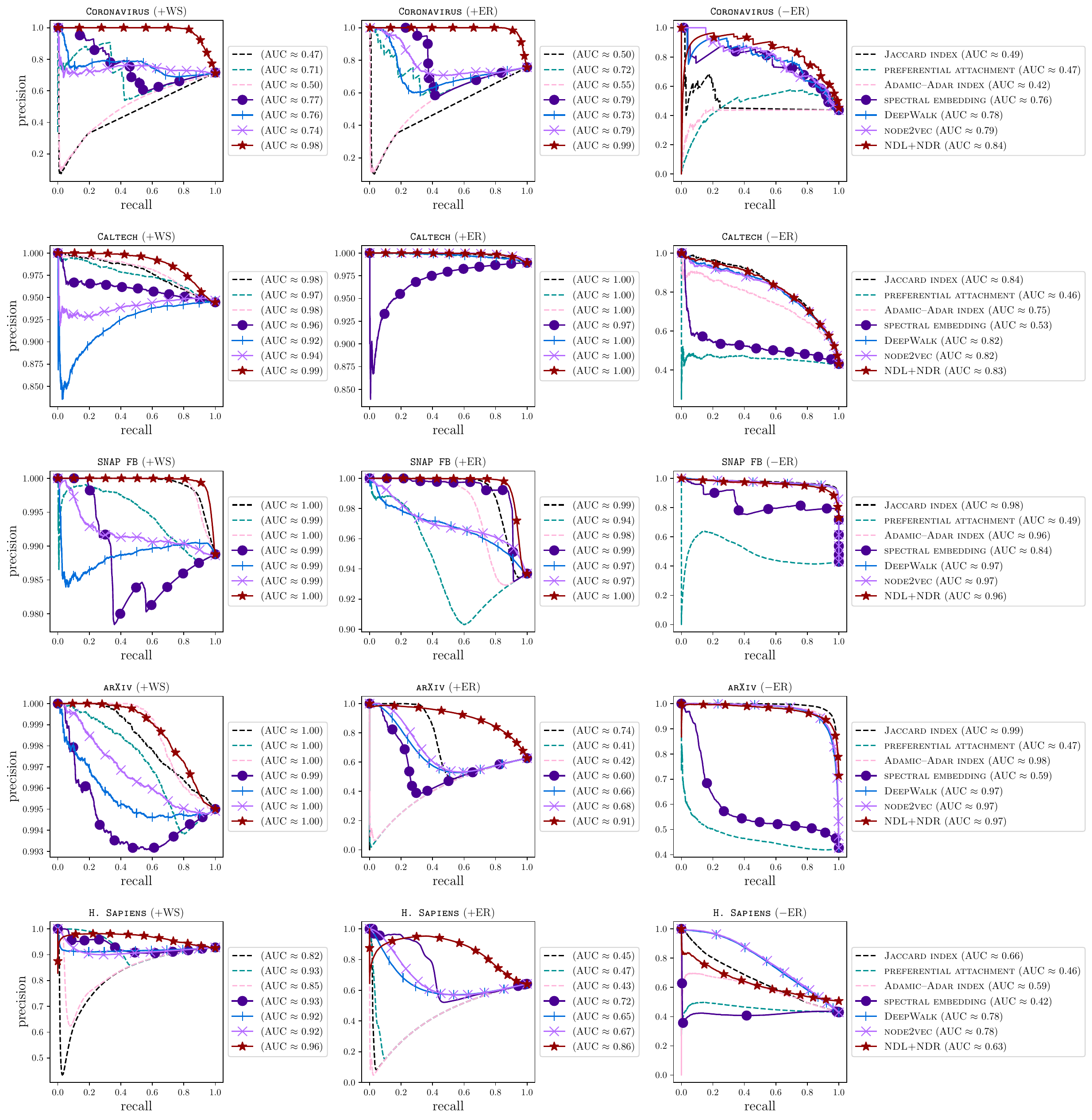}
							\caption{The  curves of precision versus recall for the network-denoising experiments in Figure~\ref{fig:Figure4} of the main manuscript. See the caption of Figure \ref{fig:Figure4__PRC_REC} for the definition of precision and recall. 
							}
							\label{fig:Figure_PR_curve}
						\end{figure*}

						\newpage
						
							\begin{figure*}[h]
							\centering
							\includegraphics[width=1\linewidth]{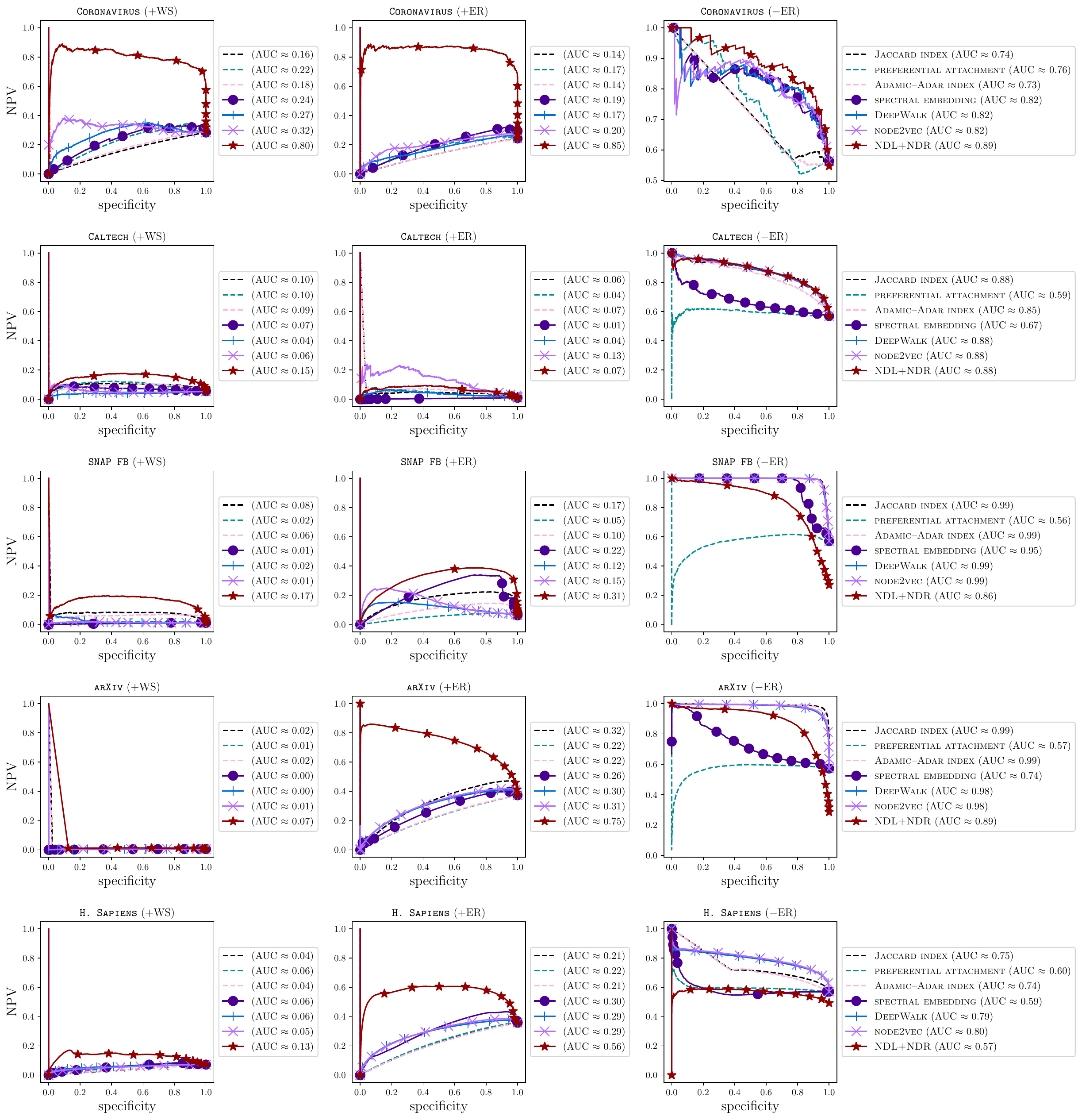}
							\caption{The  curves of negative predictive value (NPV) versus specificity
							  for the network-denoising experiments in Figure~\ref{fig:Figure4} of the main manuscript. The NPV is $\frac{\textup{TN}}{\textup{TN}+\textup{FN}}$ and specificity is {$\frac{\textup{TN}}{\textup{TN} + \textup{FP}}$}. See \cite{parikh2008understanding} for a discussion of NPV and specificity. 
							}
							\label{fig:Figure_PR_curve_flipped}
						\end{figure*}

						\newpage

						\begin{figure*}[h]
							\centering
							\includegraphics[width=1 \linewidth]{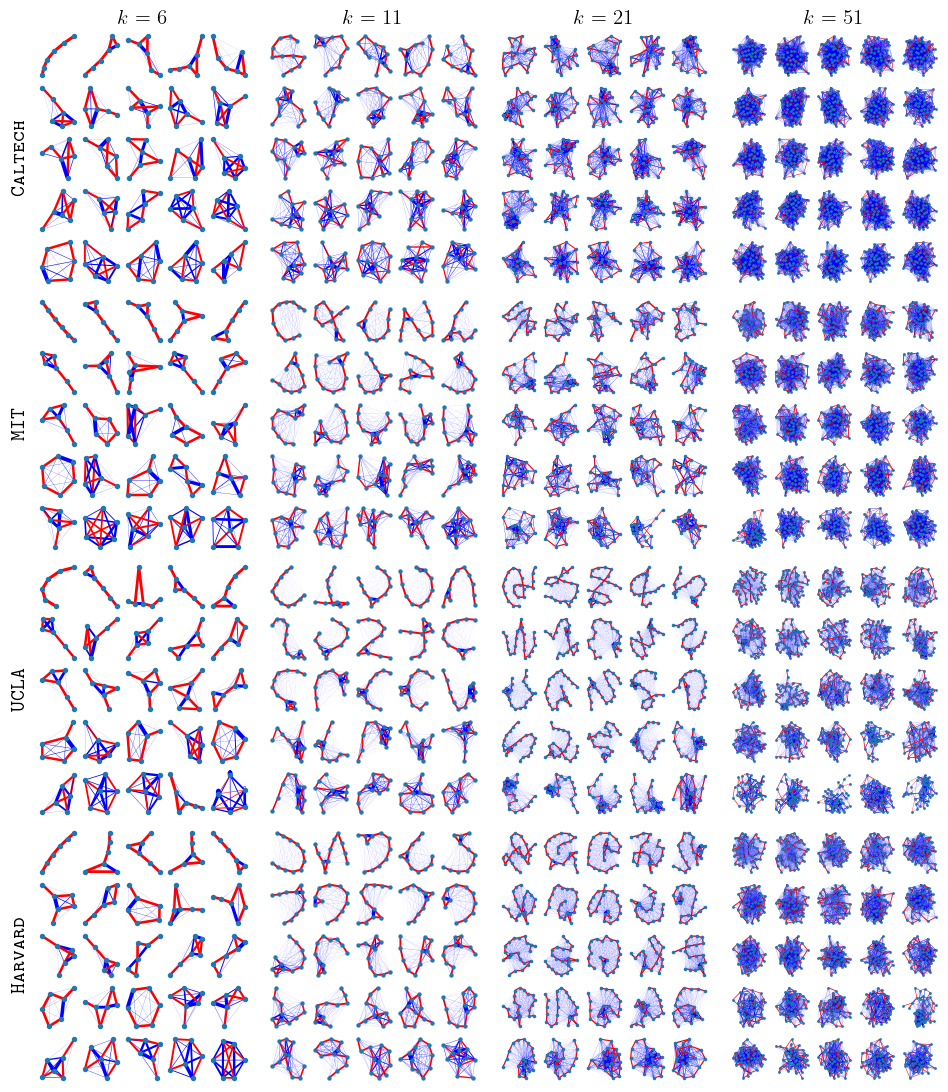}
							\caption{The $r = 25$ latent motifs at scales $k = 6$, $k = 11$, $k = 21$, and $k = 51$ that we learn from the networks \dataset{Caltech}, \dataset{MIT}, \dataset{UCLA}, and \dataset{Harvard}. See {Appendix}~\ref{section:experimental_details} for the details of these experiments. 
							%{\bf map: cosmetic: I am not sure if I am imagining this, but the vertical axis labels ('Harvard', etc) look a bit blurry to me, even at 200\% magnification; please check this for all figures of this type; for this one, maybe I am imagining things? [I suspect that this one is ok, but it seemed that way, so I figure that it makes sense to mention it just in case] [MAP, 7/08/23: did you get a chance to check these things?]}
							}
							\label{fig:all_dictionaries_1}
						\end{figure*}

						\newpage

						\begin{figure*}[h]
							\centering
							\includegraphics[width=1 \linewidth]{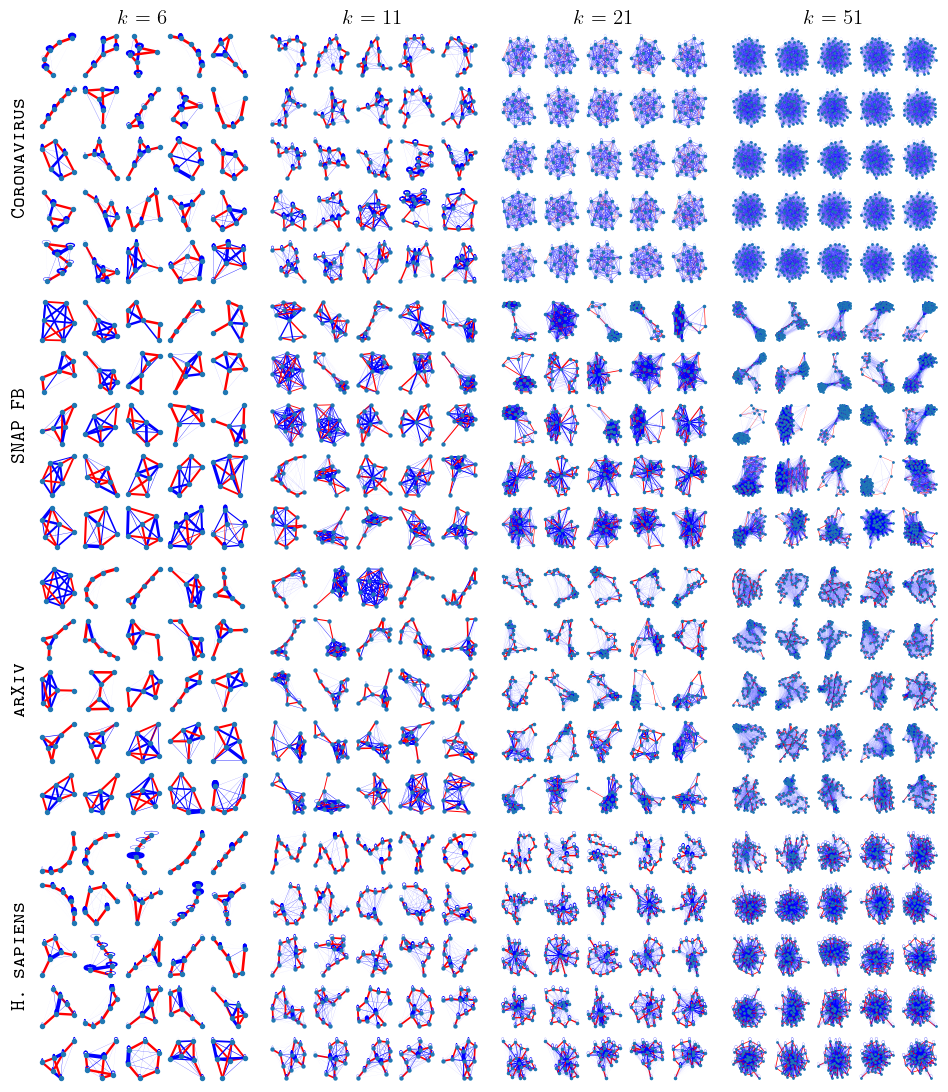}
							\caption{The $r = 25$ latent motifs at scales $k = 6$, $k = 11$, $k = 21$, and $k = 51$ that we learn from the networks \dataset{Coronavirus PPI}, \dataset{SNAP Facebook}, \dataset{arXiv ASTRO-PH}, and \dataset{Homo sapiens PPI}. See {Appendix}~\ref{section:experimental_details} for the details of these experiments.} 
							\label{fig:all_dictionaries_2}
						\end{figure*}

						\newpage
						
						\begin{figure*}[h]
							\centering
							\includegraphics[width=1 \linewidth]{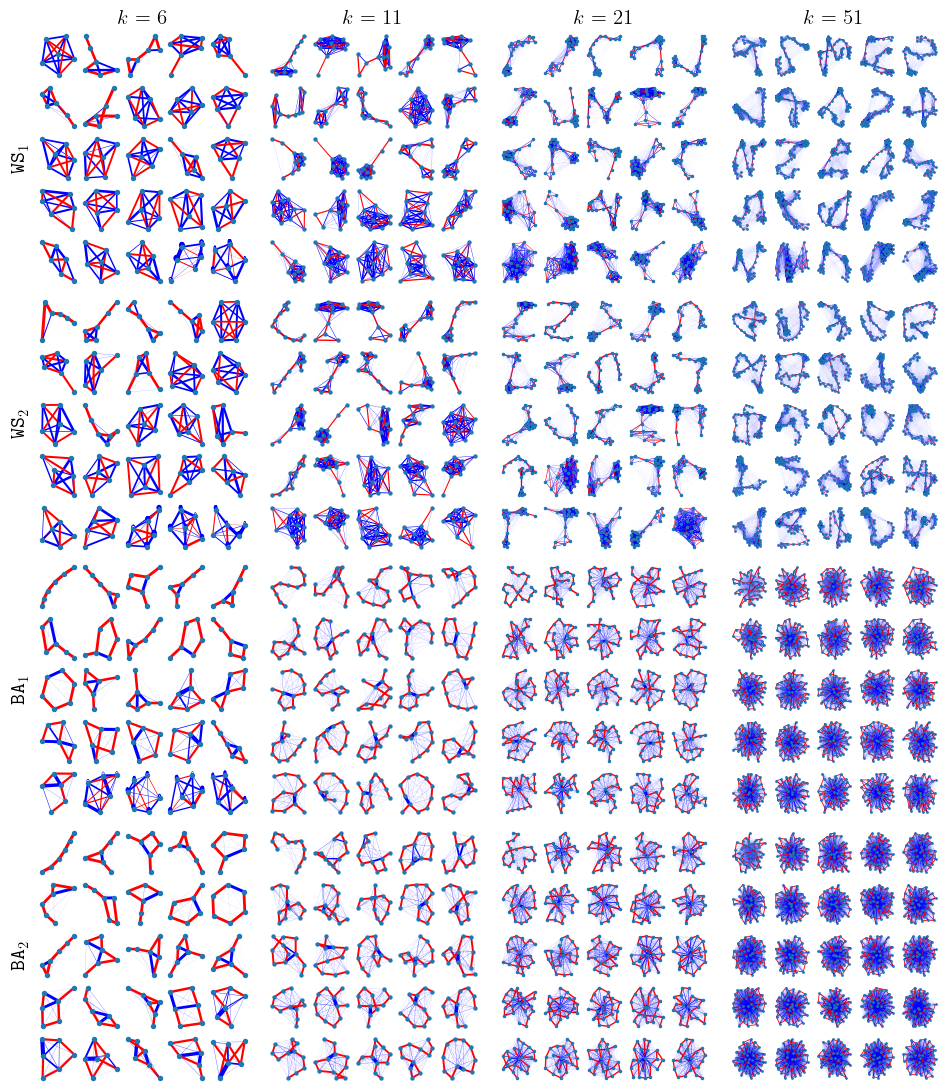}
							\caption{The $r = 25$ latent motifs at scales $k = 6$, $k = 11$, $k = 21$, and $k = 51$ that we learn from the networks \dataset{WS$_1$}, \dataset{WS$_2$}, \dataset{BA$_1$}, and \dataset{BA$_2$}. See {Appendix}~\ref{section:experimental_details} for the details of these experiments.} 
							\label{fig:all_dictionaries_3}
						\end{figure*}

						\newpage
						
						\begin{figure*}[h]
							\centering
							
							\includegraphics[width=1 \linewidth]{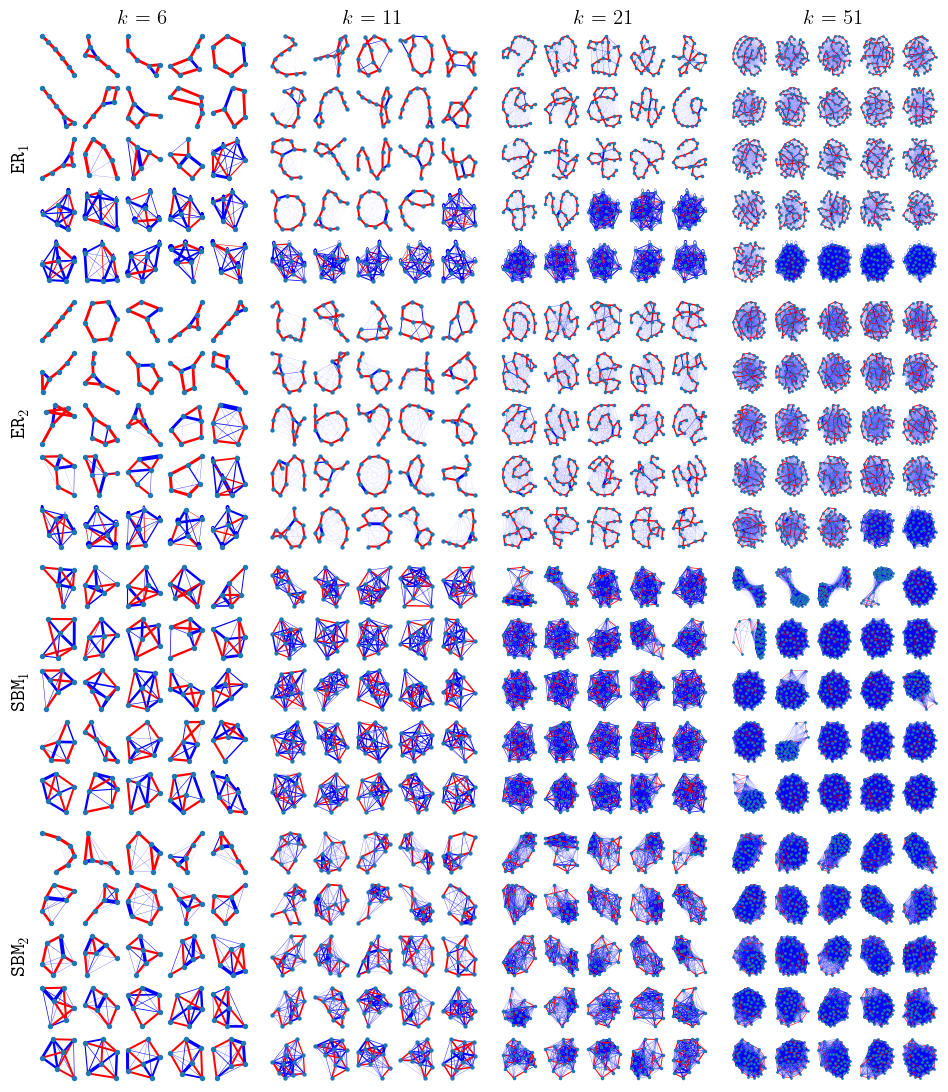}
							\caption{The $r = 25$ latent motifs at scales $k = 6$, $k = 11$, $k = 21$, and $k = 51$ that we learn from the networks \dataset{ER$_1$}, \dataset{ER$_2$}, \dataset{SBM$_1$}, and \dataset{SBM$_2$}. See {Appendix}~\ref{section:experimental_details} for the details of these experiments.}
							\label{fig:all_dictionaries_4}
						\end{figure*}

						\newpage
						
						\begin{figure*}[h]
							\centering
							\includegraphics[width=1 \linewidth]{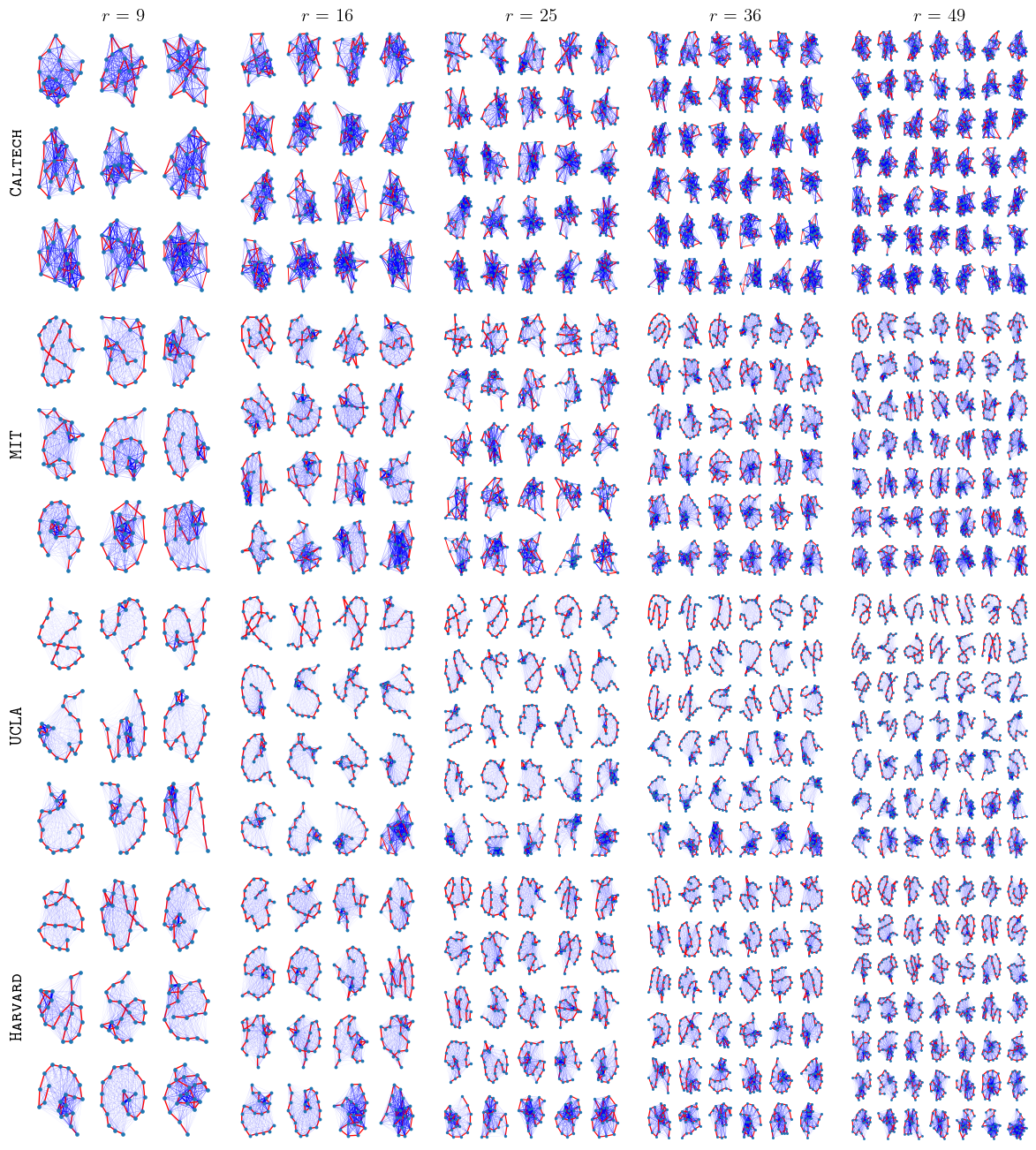}
							\caption{The $r \in \{9,16,25,36,49\}$ latent motifs at scale $k = 21$ that we learn from the networks \dataset{Caltech}, \dataset{MIT}, and \dataset{UCLA}. The $r = 25$ column is identical to the $k=21$ column in Figure~\ref{fig:all_dictionaries_1}. See {Appendix}~\ref{section:experimental_details} for the details of these experiments.
							}
							
							\label{fig:all_dictionaries_rank_1}
						\end{figure*}

						\newpage
						
						\begin{figure*}[h]
							\centering
							\includegraphics[width=1 \linewidth]{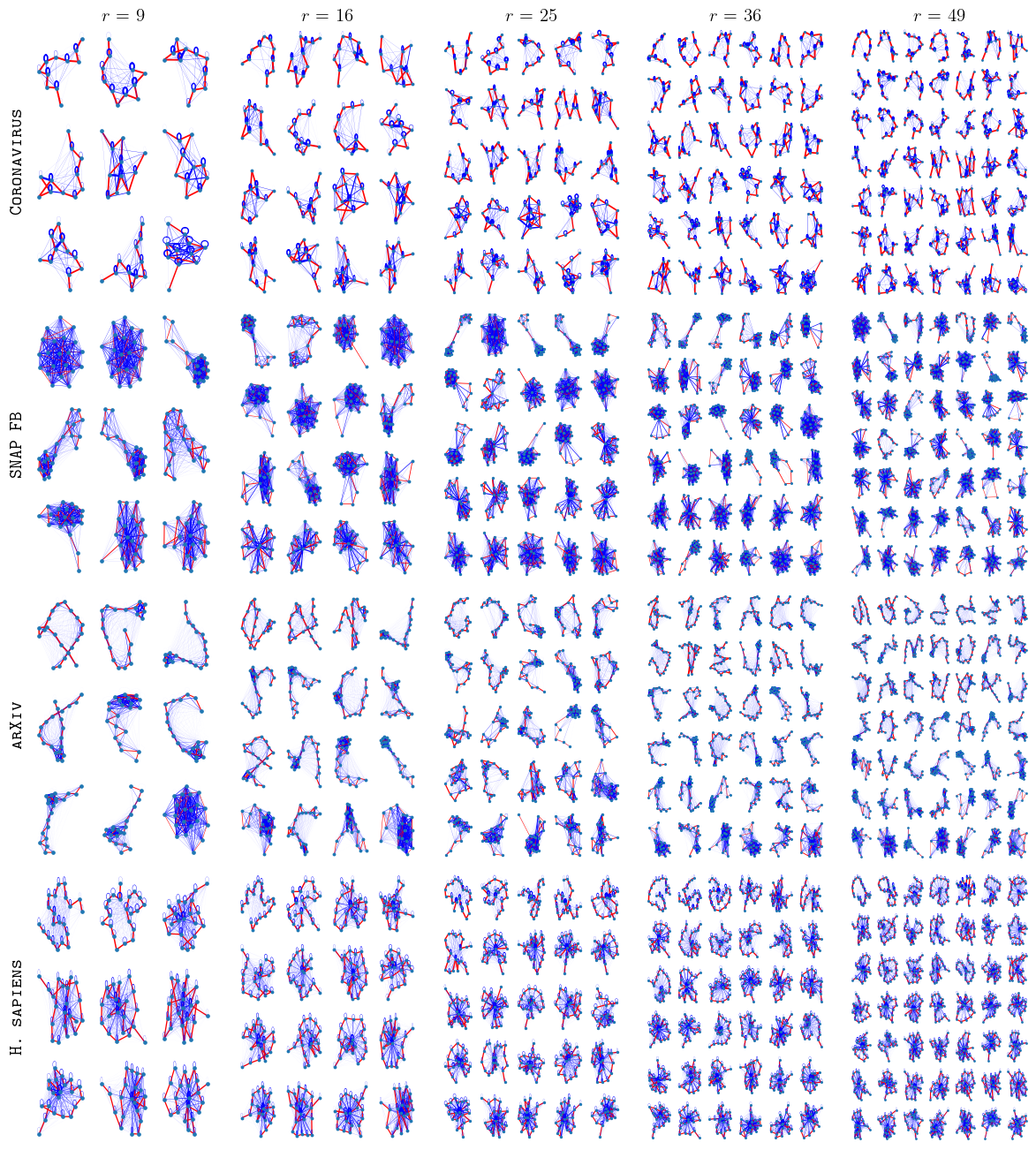}
							\caption{The $r \in \{9,16,25,36,49\}$ latent motifs 
							that we learn from the networks \dataset{SNAP Facebook}, \dataset{arXiv ASTRO-PH}, and \dataset{Homo sapiens PPI} at scale $k = 21$ and from \dataset{Coronavirus PPI} at scale $k = 11$. The $r = 25$ for column is identical to the $k = 21$ column in Figure~\ref{fig:all_dictionaries_2}, except for \dataset{Coronavirus PPI}.
									See {Appendix}~\ref{section:experimental_details} for the details of these experiments. 
							}
							
							\label{fig:all_dictionaries_rank_2}
						\end{figure*}
						
						\newpage
						
						\begin{figure*}[h]
							\centering
							\includegraphics[width=1 \linewidth]{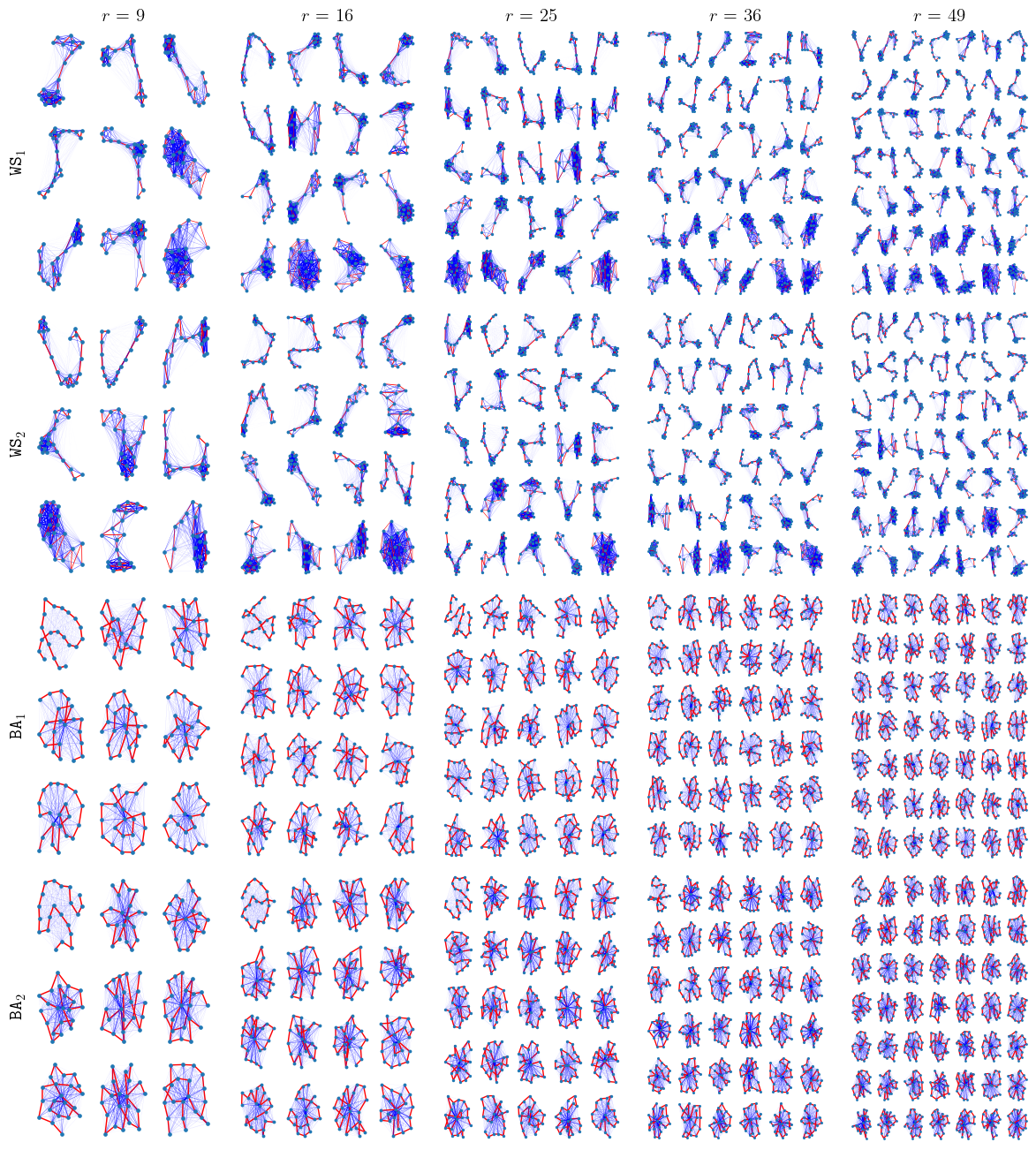}
							\caption{The $r \in \{9,16,25,36,49\}$ latent motifs 
							at scale $k = 21$ that we learn from the networks \dataset{WS$_1$}, \dataset{WS$_2$}, \dataset{BA$_1$}, and \dataset{BA$_2$}. The $r = 25$ column is identical to the $k=21$ column in Figure~\ref{fig:all_dictionaries_3}. See {Appendix}~\ref{section:experimental_details} for the details of these experiments.} 
							\label{fig:all_dictionaries_rank_3}
						\end{figure*}

						\newpage
						\begin{figure*}[h]
							\centering
							
							\includegraphics[width=1 \linewidth]{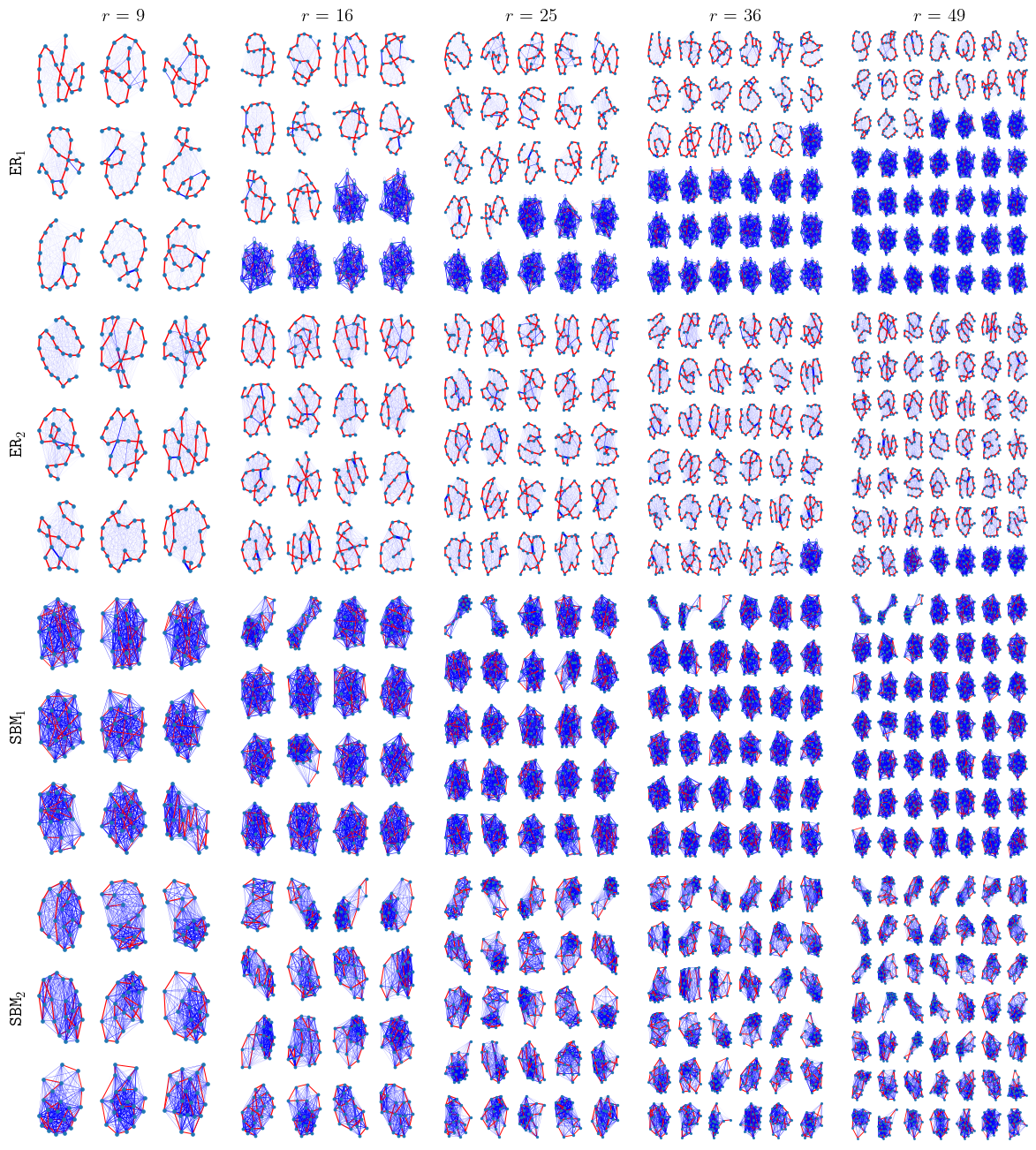}
							\caption{The $r \in \{9,16,25,36,49\}$ latent motifs
							at scale $k = 21$ that we learn from the networks \dataset{ER$_1$}, \dataset{ER$_2$}, \dataset{SBM$_1$}, and \dataset{SBM$_2$}. The $r = 25$ column is identical to the $k = 21$ column in Figure~\ref{fig:all_dictionaries_3}. See {Appendix}~\ref{section:experimental_details} for the details of these experiments.}
							\label{fig:all_dictionaries_rank_4}
						\end{figure*}

						\newpage 
					
						\clearpage
						
						\printbibliography[keyword = {appendix}]

					\end{document}